\Crefname{algocf}{Algorithm}{Algorithms}
\Crefname{claim}{Claim}{Claims} 
\Crefname{problem}{Problem}{Problems}
\Crefname{fact}{Fact}{Facts}
\Crefname{observation}{Observation}{Observations}
\title{Matching Composition and \\Efficient Weight Reduction in Dynamic Matching}
\author{
Aaron Bernstein\thanks{
        New York University,
        \texttt{bernstei@gmail.com}. Supported by Sloan Fellowship, Google Research Fellowship,  NSF Grant 1942010, and Charles S. Baylis endowment at NYU.
    } \and
Jiale Chen\thanks{
  Stanford University,
  \texttt{jialec@stanford.edu}. Supported by a Lawrence Tang Graduate Fellowship, a Microsoft Research Faculty Fellowship, and NSF CAREER Award CCF-1844855.
} \and
Aditi Dudeja\thanks{
  University of Salzburg,
  \texttt{aditi.dudeja@plus.ac.at}. This work is supported by the Austrian Science Fund (FWF): P 32863-N. This project has received funding from the European Research Council (ERC) under the European Union’s Horizon 2020 research and innovation programme (grant agreement No 947702).
} \and
Zachary Langley\thanks{
  Rutgers University,
  \texttt{zach.langley@rutgers.edu}.
} \and
Aaron Sidford\thanks{
  Stanford University,
  \texttt{sidford@stanford.edu}.
    Supported by a Microsoft Research Faculty Fellowship, NSF CAREER Grant CCF-1844855, NSF Grant CCF-1955039, and a PayPal research award.
} \and
Ta-Wei Tu\thanks{
        Stanford University,
        \texttt{taweitu@stanford.edu}.
        Supported by a Stanford School of Engineering Fellowship, a Microsoft Research Faculty Fellowship, and NSF CAREER Award CCF-1844855.
    }}
\date{}
\declaretheorem[numberwithin=section,refname={Theorem,Theorems},Refname={Theorem,Theorems}]{theorem}
\declaretheorem[numberlike=theorem]{lemma}
\declaretheorem[numberlike=theorem]{corollary}
\declaretheorem[numberlike=theorem]{claim}
\declaretheorem[numberlike=theorem]{fact}
\declaretheorem[numberlike=theorem,style=definition]{definition}
\declaretheorem[numberlike=theorem]{observation}
\newmdtheoremenv{theo}{Theorem}
\newcommand{\bx}{\boldsymbol{x}}
\newcommand{\bw}{\boldsymbol{w}}
\newcommand{\eps}{\varepsilon}
\renewcommand{\epsilon}{\varepsilon}
\newcommand{\A}{\mathcal{A}}
\newcommand{\I}{\mathcal{I}}
\newcommand{\M}{\mathcal{M}}
\newcommand{\N}{\mathbb{N}}
\newcommand{\R}{\mathbb{R}}
\newcommand{\U}{\mathcal{U}}
\newcommand{\Z}{\mathbb{Z}}
\newcommand{\MWM}{\textrm{MWM}}
\newcommand{\defeq}{\stackrel{\mathrm{{\scriptscriptstyle def}}}{=}}
\DeclareMathOperator{\poly}{poly}
\DeclareMathOperator{\supp}{supp}
\begin{document}

\begin{titlepage}
  \maketitle \pagenumbering{roman}
  \setcounter{tocdepth}{3}
  
  \begin{abstract}		
  We consider the foundational problem of maintaining a $(1-\varepsilon)$-approximate maximum weight matching (MWM) in an $n$-node dynamic graph undergoing edge insertions and deletions. We provide a general reduction that reduces the problem on graphs with a weight range of $\mathrm{poly}(n)$ to $\mathrm{poly}(1/\varepsilon)$ at the cost of just an additive $\mathrm{poly}(1/\varepsilon)$ in update time. This improves upon the prior reduction of Gupta-Peng (FOCS 2013) which reduces the problem to a weight range of $\varepsilon^{-O(1/\varepsilon)}$ with a multiplicative cost of $O(\log n)$.

When combined with a reduction of Bernstein-Dudeja-Langley (STOC 2021) this yields a reduction from dynamic $(1-\varepsilon)$-approximate MWM in bipartite graphs with a weight range of $\mathrm{poly}(n)$ to dynamic $(1-\varepsilon)$-approximate maximum cardinality matching in bipartite graphs at the cost of a multiplicative $\mathrm{poly}(1/\varepsilon)$ in update time, thereby resolving an open problem in [GP'13; BDL'21]. Additionally, we show that our approach is amenable to MWM problems in streaming, shared-memory work-depth, and massively parallel computation models. We also apply our techniques to obtain an efficient dynamic algorithm for rounding weighted fractional matchings in general graphs. Underlying our framework is a new structural result about MWM that we call the ``matching composition lemma'' and new dynamic matching subroutines that may be of independent interest.

  \end{abstract}

  \newpage
  \tableofcontents
  \newpage
\end{titlepage}
\newpage
\pagenumbering{arabic}

\section{Introduction}

The \emph{maximum matching problem} is foundational in graph algorithms and has numerous applications. A \emph{matching} is a set of vertex-disjoint edges in an (undirected) graph. In \emph{unweighted graphs}, $G= (V,E)$, the problem, known as \emph{maximum cardinality matching (MCM)},  is to find a matching with the maximum number of edges (also known as the matching's \emph{size}). More generally, in \emph{weighted graphs}, $G = (V,E,w)$, where each $e \in E$ has weight $w(e) > 0$, the problem, known as \emph{maximum weighted matching (MWM)}, is to find a matching $M$ of maximum \emph{weight}, i.e., $\sum_{e \in M} w(e)$. %
For simplicity, throughout the paper, we assume that each $w(e) \in [1,W]$ for $W = \poly(n)$.

In the standard \emph{static} or \emph{offline} version of the maximum matching problem, it was recently shown how to compute maximum matchings in unweighted and weighted bipartite graphs in almost-linear time \cite{ChenKLPGS22,BrandCLPGSS23} (when the edge weights are integer). 
Though this almost resolves the complexity of the problem in the standard static, full-memory, sequential model of computation, the complexities of the problem in alternative models of computation such as dynamic, streaming, and parallel models are yet to be determined. The problem has been studied extensively in these models and there are conditional lower bounds that rule out efficient algorithms for exact maximum matching in certain settings (see, for example, \cite{HenzingerKNS15} for dynamic and \cite{GuruswamiO16} for streaming). 

Consequently, there has been work on efficiently computing \emph{approximately} maximum matchings. There is a range of approximation quality versus efficiency trade-offs studied (see e.g., \cite{BernsteinS15,BernsteinS16,GrandoniLSSS19,Wajc20,BhattacharyaKS23dynamic1}). We focus on the gold standard of computing (multiplicative) $(1-\eps)$-approximate matching, that is a matching of size or weight that is at least $(1-\epsilon)$ times the maximum. While there are algorithms that compute $(1-\eps)$-approximate maximum matchings in many computational models, there is often a significant gap between the bounds for weighted and unweighted graphs; many state-of-the-art results are limited to unweighted graphs. However, in dynamic, streaming, and parallel computational models, there is no clear indication of a fundamental separation between the computational complexity of the two cases. It is plausible that the existing gap could be due to a relative lack of techniques for working with weighted matchings.

Closing the gap between the
state of the art for weighted and unweighted matchings is an important open problem.
Several works have made progress on this problem by developing meta-algorithms that convert any algorithm for unweighted matching into an algorithm for weighted matching in a black-box fashion (albeit with potential loss in approximation quality and efficiency). 
The central focus of this paper is to provide improved meta-algorithms and new tools that more efficiently reduce weighted matching to unweighted matching. We motivate, develop, and introduce our results through the prominent dynamic matching problem (which we introduce next), though we also obtain results for streaming and parallel settings.

\paragraph{Dynamic Weighted Matching}
In the dynamic matching problem, a graph undergoes a sequence of adversarial updates, and the algorithm must
(explicitly) maintain an (approximate) maximum matching in the graph after each update.\footnote{Some papers instead maintain a data structure that can answer queries about a maximum matching, e.g.,~\cite{CharikarS18}. The low-recourse transformation we propose in~\cref{sec:low-recourse} can convert certain algorithms that maintain the matchings implicitly to ones that maintain them explicitly.
For the simplicity of the statement, unless stated otherwise, the dynamic matching algorithms in the paper should maintain a matching explicitly.
} The goal is to minimize the update time of the algorithm, which is the time needed to process a single update. In the most general \textit{fully dynamic} model, each update either inserts an edge into or deletes an edge from the graph. We also consider two natural, previously studied, \emph{partially dynamic} models, including the \emph{incremental} model, where each update can only insert an edge, and the \emph{decremental} model, where each update can only delete an edge. 

Over the past decades,
meta-algorithms for reducing dynamic matching on weighted to unweighted graphs have been developed (with different approximation and update time trade-offs). The first general reduction is by Stubbs and Williams~\cite{StubbsW17}, who show that any dynamic $\alpha$-approximate MCM algorithm can be converted to a dynamic $(1/2-\eps)\alpha$-approximate MWM algorithm with (multiplicative) $\poly(\log n,\eps^{-1})$ overhead in the update time.

More recently, the
state of the art was achieved by a result of Bernstein, Dudeja, and Langley~\cite{BernsteinDL21}.
This paper 
reduces the approximation error for weighted matching to $(1-\eps) \alpha$ in bipartite graphs and $(2/3-\eps)\alpha$ in non-bipartite graphs with $\eps^{-\Theta(1/\eps)}\log n$ multiplicative overhead in the update time. \cite{BernsteinDL21} crucially relies on different general reduction of Gupta and Peng~\cite{GuptaP13}, which reduces weighted matching in a general (not necessarily bipartite) graph with large weights to one with small weights---concretely, from real values in $[1,W]$ to integers in $\{1,\dots, \eps^{-O(1/\eps)}\}$---at the cost of an extra $(1-\eps)$-approximation factor and $O(\log n)$ multiplicative overhead. %

All of these reductions mentioned incur a multiplicative overhead of only $O_{\eps}(\poly(\log{n}))$ to the update time, where we use $O_{\eps}(\cdot)$ to hide factors depending on $\epsilon$. However, the dependence of $1/\eps$ in update time overhead in previous reductions for $(1-\eps)$-approximate MWM~\cite{GuptaP13,BernsteinDL21} are all exponential.
Consequently, even for $\eps = 1/O(\log n)$, an accuracy decaying slowly with increases in the graph's size, the algorithms may no longer achieve non-trivial update times.

\paragraph{Our Contribution}
Our first contribution is the following weighted-to-unweighted reduction in bipartite graphs, which settles the open problem of \cite{GuptaP13, BernsteinDL21} for bipartite graphs.
Interestingly, this reduction, along with the prior reductions in \cite{GuptaP13,StubbsW17,BernsteinDL21}, are \emph{partially dynamic preserving}, i.e., if the input unweighted matching algorithm is incremental or decremental, then the resulting weighted matching algorithm is also incremental or decremental respectively.

\begin{theorem}[Informal version of \Cref{red:WeightedtoUnweighted:new}]
\label{thm:main-unweighted-informal}
Given any dynamic $(1-\eps)$-approximate MCM algorithm in $n$-node $m$-edge \textbf{bipartite} graphs with update time $\U(n,m,\eps)$, there is a transformation which produces a dynamic $(1-O(\eps))$-approximate MWM algorithm for $n$-node
bipartite graphs with amortized update time $\U(\poly(1/\eps)\cdot n,\poly(1/\eps)\cdot m,\eps) \cdot \poly(1/\eps)$.
This transformation is partially dynamic preserving.
In non-bipartite graphs, the approximation ratio for weighted matching becomes $2/3-O(\eps)$. Moreover, if the unweighted algorithm is deterministic, then so is the weighted algorithm.
\end{theorem}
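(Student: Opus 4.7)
The plan is to prove the theorem by composing two reductions in sequence. First, I would reduce any dynamic $(1-\eps)$-approximate MWM instance with weights in $[1,\poly(n)]$ to one with integer weights in a $\poly(1/\eps)$ range, at an additive $\poly(1/\eps)$ overhead in update time---this is the paper's advertised main technical contribution. Second, I would feed the resulting bounded-weight instance into the Bernstein--Dudeja--Langley reduction~\cite{BernsteinDL21}, which turns MWM on bounded weights into MCM, with multiplicative overhead that is polynomial in the diameter of the weight range. Shrinking the range from $\eps^{-\Theta(1/\eps)}$ (as achieved by Gupta--Peng) to $\poly(1/\eps)$ is precisely what converts the previous $\eps^{-\Theta(1/\eps)}$ blow-up into the claimed $\poly(1/\eps)$ blow-up.

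For the weight-shrinking step, I would begin by snapping each edge weight to the nearest integer power of $(1+\eps)$, costing only a $(1-\eps)$ factor in approximation and yielding $O(\eps^{-1}\log n)$ buckets. The harder task is to collapse these to $\poly(1/\eps)$ buckets while remaining $(1-O(\eps))$-accurate. Rather than following Gupta--Peng and keeping only a sparse set of exponentially separated buckets---which forces the range to be $\eps^{-\Theta(1/\eps)}$---I would partition all buckets into $\poly(1/\eps)$ geometrically spaced groups and, for each group, maintain a local $(1-\eps)$-approximate matching using a dynamic subroutine specialized to a small weight range. I would then invoke the ``matching composition lemma'' highlighted in the abstract to glue these local matchings into a single near-optimal matching without incurring a $(1-\eps)$ multiplicative loss per group. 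Once the reduced instance is in hand, running BDL on top of an MCM oracle with update time $\U(n,m,\eps)$ yields the stated update time $\U(\poly(1/\eps)\cdot n,\poly(1/\eps)\cdot m,\eps)\cdot \poly(1/\eps)$, where the vertex and edge blow-ups come from BDL's gadget construction; the non-bipartite weakening to $2/3-O(\eps)$ is inherited directly from BDL, and determinism plus partial-dynamic preservation follow because both constituent reductions enjoy those properties.

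The main obstacle is the matching composition lemma itself. A naive union of per-group optima is not even a matching, while taking any maximal subset can lose an $\Omega(\eps)$ multiplicative factor per group, which is unaffordable whenever there are $\omega(1/\eps)$ groups. One must therefore argue, via a careful exchange or augmenting-path argument, that the local matchings can be combined with only an aggregate $(1-O(\eps))$ loss across many overlapping weight scales---bounding this total damage is the crux of the proof. An additional technical hurdle is that the gluing must be carried out efficiently under updates rather than merely existentially, which is presumably where the ``new dynamic matching subroutines'' announced in the abstract come into play; together with the composition lemma they must ensure that the amortized cost of re-composing the per-group matchings stays within the additive $\poly(1/\eps)$ budget.
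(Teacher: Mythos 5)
Your high-level plan is the same as the paper's: reduce weights to a $\poly(1/\eps)$ range using the matching composition lemma, then hand off to the BDL unfolding to get unweighted bipartite MCM. You also correctly pinpoint the composition lemma as the crux, and the order of composition and the accounting of overhead are essentially right. However, there is one genuine gap in how you describe the weight-reduction step, and it is precisely the gap the paper is built around.

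You write that you would \emph{partition} the $O(\eps^{-1}\log n)$ buckets into ``geometrically spaced groups'' of width $\poly(1/\eps)$ and then glue the per-group approximate MWMs. If the groups are a partition (disjoint), this cannot work: the paper proves (\cref{claim:counterexample of conjecture}, Appendix~B) that for any disjoint weight partition in which union-of-local-optima preserves a $(1-\delta)$ fraction of $\mu_w(G)$, some interval must have width $\exp(\Omega(1/\delta))$. The key structural move the paper makes is to let neighboring weight classes \emph{overlap}: each local matching $M_i$ is computed not on $G_{[\ell_i,r_i)}$ but on the padded interval $G_{[\eps\ell_i,\,r_i/\eps)}$, and the matching composition lemma (\cref{lemma:matching composition lemma}) is stated and proved for these overlapping classes, via the matching substitution lemma (\cref{lemma:matching substitution lemma}). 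Your sketch never introduces the padding, so the gluing step you describe would be blocked by the paper's own lower bound. (A secondary slip: there are $\Theta(\log_{1/\eps}W)=\Theta(\log n/\log(1/\eps))$ groups, not $\poly(1/\eps)$ of them; each group has aspect ratio $\poly(1/\eps)$.)

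The rest of your argument---that BDL's unfolding converts integer weights in $[1,W]$ to an MCM instance on $nW$ vertices and $mW$ edges with a multiplicative $W$ overhead, so that first shrinking $W$ to $\poly(1/\eps)$ replaces Gupta--Peng's $\eps^{-\Theta(1/\eps)}$ blowup with $\poly(1/\eps)$---is exactly the paper's route, and your remark that ``the gluing must be carried out efficiently under updates'' correctly anticipates the need for the aggregation subroutines (the locally greedy census matching, the degree-two dynamic MWM, and the low-recourse transformation) that the paper develops to avoid extra $\log n$ and $\poly(1/\eps)$ factors. To make your argument complete you would need to (i) replace the disjoint groups with overlapping (padded) ones and (ii) actually prove the composition/substitution lemmas, since the exchange argument is genuinely nontrivial once intervals overlap and the greedy ``higher weight wins'' resolution used by Gupta--Peng no longer applies.
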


Beyond improving the exponential dependence on $1/\eps$ to polynomial, \Cref{thm:main-unweighted-informal} also eliminates the $O(\log{n})$ factors in the update-time
of \cite{BernsteinDL21}.
Therefore, \Cref{thm:main-unweighted-informal} implies that in dynamic bipartite graphs, $(1-\eps)$-approximate MWM shares the same complexity as $(1-\eps)$-approximate MCM, up to $\poly(1/\eps)$ factors.
\vspace{1.5ex}

In the case of general graphs, we make substantial progress towards reducing weighted matching to unweighted matching. Indeed, a crucial ingredient of our algorithm is a reduction from large weights to small ones, which applies to non-bipartite graphs as well.

\begin{theorem}[Informal version of \Cref{thm:partial reduction:ultimate}]
\label{thm:main-small-weights-informal}
Given any dynamic $(1-\eps)$-approximate MWM algorithm in $n$-node $m$-edge general (possibly non-bipartite) graphs with weights in $[1, W]$ with update time $\U(n,m,W,\eps)$, there is a transformation which produces a dynamic $(1-O(\eps))$-approximate MWM algorithm for $n$-node
general graphs with amortized update time $\U(\poly(1/\eps)\cdot n,\poly(1/\eps)\cdot m, \poly(1/\eps), \eps) \cdot \poly(\log(1/\eps))+\poly(1/\eps)$.
This transformation is partially dynamic preserving.
\end{theorem}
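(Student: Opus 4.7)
The plan is to reduce $(1-\eps)$-approximate dynamic MWM on graphs with weight range $[1, W]$ (with $W = \poly(n)$) to the same problem on graphs with weight range $[1, \poly(1/\eps)]$. I would first discretize by rounding each edge weight down to the nearest power of $(1+\eps)$, losing only a $(1-\eps)$ factor in MWM quality, and then partition the edges into super-classes $G_1, \ldots, G_L$ each spanning a multiplicative weight range of $\poly(1/\eps)$. This yields $L = O(\log n / \log(1/\eps))$ super-classes, and an edge insertion or deletion in $G$ affects exactly one super-class.

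The central structural tool to establish is a \emph{matching composition lemma}: if $M_j$ is a $(1-\eps)$-approximate MWM of $G_j$ for each $j$, then the union $\bigcup_j M_j$ contains a $(1-O(\eps))$-approximate MWM of $G$. I would prove this by an exchange argument starting from an optimal MWM $M^*$ of $G$: for each $j$, $M^* \cap G_j$ is a matching in $G_j$, and the symmetric-difference components of $M_j \triangle (M^* \cap G_j)$ are alternating paths and cycles confined to $G_j$, so one can modify $M^*$ within each super-class to use only edges in $M_j$ while losing only a $(1-\eps)$ factor of the weight $w(M^* \cap G_j)$ per super-class. Summing these local modifications across $j$ and resolving cross-super-class vertex conflicts by keeping heavier edges (paying for the dropped ones with the $\poly(1/\eps)$ weight gap between super-classes) would then yield a matching in $\bigcup_j M_j$ of weight $(1-O(\eps)) \cdot w(M^*)$.

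For the dynamic implementation, I would maintain $M_j$ for each super-class using the given small-weight MWM algorithm on a rescaled copy of $G_j$ with weights in $[1, \poly(1/\eps)]$. To avoid the $O(L) = O(\log n / \log(1/\eps))$ multiplicative overhead and achieve the claimed $\poly(\log(1/\eps))$ factor, the $M_j$'s would not be maintained and combined independently; instead, all super-classes would be bundled into a single auxiliary graph $G'$ with $\poly(1/\eps) \cdot n$ vertices, $\poly(1/\eps) \cdot m$ edges, and weights in $[1, \poly(1/\eps)]$. The graph $G'$ uses per-vertex copies indexed by participating super-classes together with gadget edges enforcing that at most one copy of each original vertex is effectively matched, so that by the composition lemma the MWM in $G'$ projects back to a $(1-O(\eps))$-approximate MWM of $G$. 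A single call to the given small-weight MWM algorithm on $G'$ then suffices, with $\poly(\log(1/\eps))$ bookkeeping per update and an additive $\poly(1/\eps)$ for structural recomputation.

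I expect the main obstacle to be the matching composition lemma itself: an alternating path in the optimal MWM may span many super-classes, and the exchange argument must tolerate this while yielding a $(1-O(\eps))$ guarantee. A secondary obstacle is controlling the blowup of $G'$ to only $\poly(1/\eps) \cdot n$ (rather than $O(n \cdot L)$) and the overhead to $\poly(\log(1/\eps))$; this likely requires a structural insight that each vertex is ``relevant'' to only $\poly(\log(1/\eps))$ super-classes simultaneously---an insight that would have to be made robust enough to survive the amortized maintenance of the super-class decomposition under edge updates, accounting for the $\poly(1/\eps)$ additive term in the final bound.
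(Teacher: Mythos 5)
Your proposed matching composition lemma, as stated, is false, and the paper explicitly shows why. You partition $[1,W]$ into \emph{disjoint} super-classes $G_1,\dots,G_L$ each of width $\poly(1/\eps)$ and claim that $(1-\eps)$-approximate MWMs $M_j$ on each $G_j$ have $\mu_w(\bigcup_j M_j)\geq(1-O(\eps))\mu_w(G)$. The paper (Section 3.2 and Appendix B, Claim B.1) constructs a graph for which \emph{any} disjoint weight partition achieving a $(1-\delta)$-approximation must contain a class of width $\exp(\Omega(1/\delta))$: the counterexample stacks $1.5$-wide three-edge path gadgets, and each time a partition boundary ``separates'' a gadget, a worst-case $(1-\eps)$-approximate $M_i$ loses a constant fraction of that gadget's weight. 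Your exchange argument cannot fix this: if a path in $M^*\triangle(\bigcup_j M_j)$ straddles two adjacent classes, the straddling edges have comparable weight (since the classes tile $[1,W]$ with no gap), so ``keeping heavier edges'' at a conflict drops weight of the same order, and no geometric decay is available. Your appeal to a ``$\poly(1/\eps)$ weight gap between super-classes'' is the crux of the confusion: a partition of $[1,W]$ has no gaps. The paper's central structural innovation is precisely to replace gaps with \emph{overlaps}: the matching $M_i$ is computed on the \emph{padded} class $G_{[\eps\ell_i, r_i\eps^{-1})}$, which overlaps its neighbors, and the matching substitution lemma (Lemma 3.4) uses the padding to find low-weight ``cut points'' near the boundary of each class at which a long alternating component can be severed at $O(\eps)$ relative cost. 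You correctly flag the alternating-path-spanning-many-classes issue as the main obstacle, but overlapping padding is the missing idea that resolves it, and without it the lemma you want to prove is not true.

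Your dynamic implementation strategy is also different from the paper's and, as described, does not follow from the composition lemma. You propose a single auxiliary graph $G'$ with per-vertex copies and gadget edges ``enforcing that at most one copy of each original vertex is effectively matched,'' and invoke the composition lemma to project back. But the composition lemma speaks to a \emph{union} of independently computed per-class matchings, not to one global matching on a gadget graph; if $G'$ truly enforces one matched copy per vertex, solving MWM on $G'$ is essentially solving MWM on $G$ with discretized weights and you have gained nothing. Moreover, your secondary conjecture that each vertex is ``relevant to only $\poly(\log(1/\eps))$ super-classes simultaneously'' has no basis: a vertex can have incident edges across the entire weight range. The paper instead runs $O(\log W/\log(1/\eps))$ independent instances on the padded classes and aggregates them cheaply: a \emph{locally greedy census matching} (Algorithm 1, achieving $O(k/\log n)=O(1)$ update time via word-packed bit tricks when $k=O(\log n)$) merges the odd- and even-indexed classes, which are $(1/\eps)$-spread; a degree-two MWM data structure (Algorithm 2) merges the resulting $M_{\mathrm{odd}}\cup M_{\mathrm{even}}$ in $O(1/\eps)$ time per change; and a dedicated low-recourse transformation (Theorem 4.19) bounds the recourse propagated into these aggregators by $\poly(\log W)/\eps$, which is what turns the overhead from multiplicative into additive $\poly(1/\eps)$.
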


\Cref{thm:main-small-weights-informal} removes the exponential dependence on $1/\eps$ in \cite{GuptaP13} and again incurs no $\log n$ factors in the update-time overhead (whereas \cite{GuptaP13} incurs $\log n$). %

Additionally, our framework leads to a dynamic weighted \emph{rounding} algorithm with a polylogarithmic dependence on $W$, improving over that of \cite{ChenST23} (which depends linearly on $W$).
Here, a weighted rounding algorithm maintains an \emph{integral} matching supported on a dynamically changing \emph{fractional} one while approximately preserving its weight (see \cref{def:rounding}).
This
shows that dynamic \emph{integral} matching, weighted or not, is equivalent to dynamic \emph{fractional} matching (up to $\poly(\log n, 1/\eps)$ terms).
For example, as discussed in \cite{ChenST23}, this leads to a decremental $(1-\eps)$-approximate MWM algorithm in weighted general graphs with update time $\poly(\log n, 1/\eps)$.
Previously, rounding algorithms with $\poly(\log n,1/\eps)$ update times were only known for \emph{unweighted graphs}, bipartite~\cite{ArarCCSW18,Wajc20,BhattacharyaK21,BhattacharyaKSW24} and general~\cite{ChenST23,Dudeja24}.

\begin{theorem}[Informal version of~\cref{thm:weighted-rounding}]
  There is a dynamic weighted rounding algorithm with $\widetilde O(\poly(1/\eps))$ update time.
\end{theorem}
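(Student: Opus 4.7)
The plan is to prove the theorem by combining two ingredients: (i) a reduction from weighted rounding on graphs with large weights to weighted rounding on graphs with small weights, following the same template as the reduction of \Cref{thm:main-small-weights-informal} but applied to the rounding problem instead of MWM; and (ii) an existing rounding algorithm that has a polynomial (rather than linear) dependence on the weight range $W$ on the reduced instance, provided by~\cite{ChenST23}.

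First, I would state and prove a rounding analogue of the matching composition lemma that powers \Cref{thm:main-small-weights-informal}. Concretely, given a dynamically changing fractional matching $\bx$ on weights in $[1,W]$, I would bucket the support of $\bx$ into $O(\log_{1+\eps} W) = O(\eps^{-1}\log W)$ geometric weight classes and, within the framework of the matching composition lemma, show that the fractional matching on each weight class, together with the certificates produced by the weight reduction machinery, can be combined in a way that (a) the vertex-disjointness of the rounded integral matching is preserved across classes and (b) the total rounded weight is a $(1-O(\eps))$-approximation of $\sum_e w(e)\bx_e$. The key observation to carry over is that the reductions in the paper are oblivious to whether the underlying subroutine maintains an integral MWM or rounds a fractional matching---what matters is the structural decomposition of the graph into pieces on which we can call a black-box subroutine. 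This lets me invoke the same recursive/bucketing structure used for \Cref{thm:main-small-weights-informal} to reduce the effective weight range from $W$ to $\poly(1/\eps)$ with only a $\poly(\log(1/\eps))$ multiplicative and $\poly(1/\eps)$ additive overhead in update time.

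For the base case, I would invoke the weighted rounding algorithm of~\cite{ChenST23}, whose update time scales as $\poly(\log n, 1/\eps, W)$. On the reduced instance where $W = \poly(1/\eps)$, this update time becomes $\poly(\log n, 1/\eps) = \widetilde{O}(\poly(1/\eps))$. Composing the reduction of the previous paragraph with this base case then yields the claimed bound, since the reduction's overhead is also $\poly(\log(1/\eps), 1/\eps)$ and does not reintroduce a $\log n$ dependence beyond what the base case already carries.

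The main obstacle I expect is verifying that the rounding analogue of the reduction behaves correctly on fractional inputs. Unlike MWM, the input here is itself a dynamically changing object (the fractional matching), so one must ensure that the intermediate fractional matchings handed to the subroutines on each weight class are still valid fractional matchings (fractional degree at most one at each vertex) and that their aggregate recourse, when the input $\bx$ changes by a single coordinate, remains bounded by $\poly(1/\eps)$. This requires an argument analogous to the low-recourse machinery referenced in the paper, showing that an update to $\bx_e$ propagates to only $\poly(1/\eps)$ changes across buckets and recursion levels; once this is in place, the update-time accounting is routine, and the approximation factor follows from a telescoping $(1-\eps)$-loss per level combined with the base-case guarantee of~\cite{ChenST23}.
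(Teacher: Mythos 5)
Your proposal and the paper's proof take genuinely different routes, and I believe yours has a gap.

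The paper does \emph{not} apply weight reduction to the rounding problem. Instead, it first invokes the \emph{unweighted fractional sparsifier} of~\cite{ChenST23} (\cref{lemma:unweighted-sparsifier}) separately on each of the $O(\log W)$ constant-width weight classes to maintain a subgraph $H \subseteq G$ of maximum degree $\widetilde O(\eps^{-2})$ that \emph{still supports a near-optimal fractional matching}, so $\mu_w(H) \ge (1 - O(\eps))\bw^\top\bx$. At that point the rounding problem has been converted to an ordinary dynamic MWM problem on a low-degree graph, and the paper applies \cref{thm:low-degree}---which is the aspect-ratio reduction framework (\cref{corollary:partial reduction:low-recourse}) wrapped around GP13's low-degree MWM algorithm (\cref{fact:low-degree-mwm})---as the base case. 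The rounding algorithm of~\cite{ChenST23} never appears as a subroutine; only their sparsifier does.

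The gap in your route is the ``rounding analogue of the matching composition lemma.'' The composition lemma derives from \cref{lemma:matching substitution lemma}, whose per-class error term is $\mu_w\bigl(G_{[\eps\ell_i, r_i\eps^{-1})}\bigr) - w(T_i)$. If you let $T_i$ be a per-class rounding of $\bx$, all you control is $w(T_i) \ge (1-\eps)\,\bw^\top\bx_i$, which only bounds the error against $\bw^\top\bx_i$, \emph{not} against $\mu_w$ of the class restricted to $\supp(\bx)$. These can be wildly different: take $n$ vertex-disjoint edges of the same weight $w$ each carrying fractional value $1/n$; then $\bw^\top\bx_i = w$, the per-class rounding need only have weight $\approx w$, yet $\mu_w\bigl(\supp(\bx)_{[\eps\ell_i, r_i\eps^{-1})}\bigr) = n w$, so the substitution-lemma error term is $\Theta(n w)$ and the lower bound on $w(M)$ becomes vacuous. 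So a per-class rounding is strictly weaker than a per-class approximate MWM, and the composition machinery does not transfer without a new structural argument whose target is $\bw^\top\bx$ rather than $\mu_w$; the proposal asserts such an analogue exists but gives no proof, and the obvious adaptation fails. The paper sidesteps this entirely by sparsifying the \emph{support} first (so the per-class object it hands to the composition framework really is a per-class approximate MWM of the sparse subgraph), rather than trying to compose per-class roundings.
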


\paragraph{Techniques} Our key technical contribution is \cref{thm:main-small-weights-informal}, which removes the $\eps^{-O(1/\eps)}$ factor in an analogous result of \cite{GuptaP13}.
To reduce edge weights in weighted graph $G = (V,E, w)$ with vertices $V$, edges $E$, and edge weights $w$, both our reduction and the one in \cite{GuptaP13} define edge sets $E_i \subseteq E$, where each edge set is defined solely as a function of weights.
The algorithm then computes an \emph{arbitrary} $(1-\eps)$-approximate MWM $M_i$ in each $E_i$ and shows that the $M_i$ can be combined to compute an approximate MWM $M$ for the entire graph.
Consequently, the edge sets, $E_i$, are chosen to satisfy the following two properties: 
\begin{enumerate} 
\item Within each $E_i$, the ratio $\rho$ of the maximum to minimum edge weight is small (we call this the \emph{width} of the interval). This is the crux of our weight-reduction because a simple scaling and rounding approach yields the requested $M_i$ using only an algorithm for integer weights in $\{1, \ldots, \lceil\rho/\eps\rceil\}$.%
\item $\mu_w(M_1 \cup ... \cup M_k) \geq (1-\eps)\mu_w(G)$, where $\mu_w(\cdot)$ is the weight of the MWM in the input graph or edge set. This property is to ensure that the $M_i$ can be combined to obtain a $(1-\epsilon)$-approximate matching.  %
\end{enumerate}

In the algorithm of \cite{GuptaP13}, the intervals were disjoint; in fact, they were well-separated. This made it easy to prove property 2 above via a simple greedy combination. We show, however, that any set of disjoint intervals $E_i$ that satisfies property 2 must have, in the worst case, width $\rho \geq \exp(1/\eps)$. (See~\cref{sec:appendix:counterexample} for more details.)

To bypass this barrier, we allow for overlapping intervals. This forces us to use a different and more involved analysis, as the analysis of \cite{GuptaP13} crucially relied on disjointedness.
At the outset, it is not clear that this is even enough.
In fact, as we discuss below, it is only narrowly suffices in that our analysis crucially relies on the matchings $M_i$ being $(1-\eps)$-approximate, rather than $\alpha$-approximate for some constant $\alpha < 1$.

Our key technical contribution consists of two new structural properties of weighted matching, which we call matching \emph{composition} and \emph{substitution} lemmas (see \cref{lemma:matching substitution lemma,lemma:matching composition lemma}). On a high level, these two lemmas show that as long as the weight classes $E_i$ overlap slightly, a width of $\poly(1/\eps)$ is sufficient to ensure the second property above.

Using the above idea inside the framework of \cite{BernsteinDL21} with a simple greedy aggregation algorithm in~\cite{GuptaP13} immediately yields a weaker version of \cref{thm:main-unweighted-informal} (with $\poly(1/\eps) \cdot O(\log n)$ multiplicative overhead and a slightly worse dependence on $1/\eps$).
We further optimize the greedy aggregation in~\cite{GuptaP13} to remove the $\log n$ factor, improve the analysis of \cite{BernsteinDL21}, and propose a low-recourse transformation to remove several $\poly(1/\eps)$ factors.

\paragraph{A Limitation of Our Reductions} Although our results successfully remove the exponential dependence on $1/\eps$ in prior work, they also have a limitation. The work of \cite{GuptaP13} reduces $\alpha(1-\eps)$-approximate matching with general weights to $\alpha$-approximate matching with small edges weights; crucially, it works for any $\alpha \leq 1$. By contrast, our \cref{thm:main-small-weights-informal} requires $\alpha = (1-\eps)$. As a result, our \cref{thm:main-unweighted-informal} also requires $\alpha = (1-\eps)$, while the analogous result of \cite{BernsteinDL21} works for any $\alpha$.
Consequently, there are several applications related to smaller $\alpha$, for example, $\alpha = 1/(2-\sqrt{2})$ (see \cite{Behnezhad23,BhattacharyaKSW23dynamic2,AzarmehrBR24}) or $\alpha = 2/3$ (see \cite{BernsteinS15,BernsteinS16}) that benefit from \cite{BernsteinDL21} but not from our reduction. Notably, there are also conditional lower bounds ruling out efficient $(1-\eps)$-approximate matching algorithms in various models, specifically the fully dynamic~\cite{Liu24} and the single-pass streaming model~\cite{Kapralov21}.
Nonetheless, $(1-\eps)$-approximate matching is
a well-studied regime, and, as we show in \Cref{sec:applications}, our reduction improves the state of the art in multiple computational models. 

Perhaps surprisingly, this limitation is inherent to the general framework discussed above. More precisely, consider a scheme which picks edge sets $E_i = \{e \mid \ell_i \leq w(e) \leq r_i\}$, computes an arbitrary $\alpha$-approximate matching in each $M_i$, and then shows that $\mu_w(M_1 \cup \ldots \cup M_k) \geq (\alpha-\eps)\cdot \mu_w(G)$. Our key contribution is to show that for
$\alpha = (1-O(\eps))$, there exist suitable edge sets with width $\poly(1/\eps)$. However, our sets do not necessarily work for a fixed constant $\alpha < 1$; in fact, we show that for $\alpha < 1$, there are graphs for which \emph{any} suitable edge sets necessarily have width $\exp(1/\eps)$ 
(as what was done in \cite{GuptaP13}). 
Generalizing our result to all work for all $\alpha$ would thus require a different approach. See \cref{appendix:lower-bound} for more details.

\paragraph{Overview of the Paper}
In the remainder of the introduction, in \Cref{sec:intro-other-models}  we give an informal overview of our results in other computational models and in \Cref{sec:intro-applications} we discuss concrete applications. Thereafter, we state preliminaries in \Cref{sec:preliminaries}. In \Cref{sec:overview}, we give a technical overview including our main structural lemma that implies our reduction framework. We then show our main technical lemma in \Cref{sec:structural lemma} which helps to prove the main structural lemma, and state our framework in \Cref{sec:framework} in detail. We state applications of our framework in \Cref{sec:applications}. We conclude with open problems in \cref{sec:intro-open-problems}.

\subsection{Additional Computational Models}
\label{sec:intro-other-models}
\paragraph{Our Results}
Although the above theorems were written for dynamic models, our techniques are general and apply to a variety of different models. In \cref{sec:applications} we state formal reductions in a variety of models; here, we simply state the main takeaways. We show that 
analogs of the results in \cref{thm:main-unweighted-informal} and \cref{thm:main-small-weights-informal} also apply to the semi-streaming, massively parallel computing (MPC) with $O(n\log n)$ space per machine, and the parallel work-depth models.
We suspect they apply to other models as well, but in this paper, we focus on these four.

In the case of semi-streaming and MPC, the reduction when applied to existing algorithms, leaves the number of passes/rounds the same (up to a constant factor), but increases the space requirement by $\log n\cdot\poly(1/\eps)$. On the other hand, in the parallel work-depth model, the work increases by a factor of $\log n\cdot\poly(1/\eps)$, and the depth increases by an additive $\log^2 n$ factor. Despite these overheads, we can improve many of the state of the arts in these models.

Later in this section, we discuss these improvements and our contributions in more detail.

\paragraph{Contrast to Previous Work}
As in the case of dynamic algorithms, when we turn to other models our \cref{thm:main-unweighted-informal} and \cref{thm:main-small-weights-informal} achieve the same reductions as \cite{BernsteinDL21} and \cite{GuptaP13} respectively, except that we reduce their multiplicative overhead of $\eps^{-O(1/\eps)}$ to $\poly(1/\eps)$. There is also a different reduction of Gamlath, Kale, Mitrovic, and Svensson \cite{GamlathKMS19}, which works in both the streaming and MPC models, but not in the dynamic model. \cite{GamlathKMS19} has the advantage of reducing the most general case of weighted non-bipartite matching to the simplest case of unweighted bipartite matching, but it has exponential dependence on $1/\eps$ and it increases the number of \emph{passes/rounds} by a $\eps^{-O(1/\eps)}$ factor (which is generally considered a bigger drawback than the space increase). 

Similar to the dynamic models, our reductions have a somewhat narrower range of application than those of \cite{BernsteinDL21} and \cite{GuptaP13} even in MPC, parallel, and streaming models because ours do not work for general approximation factors: they only reduce a $(1-\eps)$-approximation to a $(1-\Theta(\eps))$-approximation. There is also a second, more minor drawback, which is that our reduction in \cref{thm:main-small-weights-informal} works in a slightly narrower range of models than the corresponding reduction of \cite{GuptaP13}. For example, the reduction of \cite{GuptaP13} applies to algorithms that only maintain the approximate \emph{size} of the maximum matching (see \cite{BhattacharyaKSW23dynamic2,Behnezhad23}), whereas our reduction only applies to algorithms that maintain the actual matching. But for the most part, our reduction and that of \cite{GuptaP13} apply to the same set of models.

\subsection{Applications}
\label{sec:intro-applications}

In this subsection, we give an informal overview of some of the implications of our reductions. For a more formal statement of the results, we refer the reader to \Cref{sec:applications}.

\paragraph{Applications to Bipartite Graphs}
Since our reduction from weighted to unweighted matching is black-box, it immediately improves upon the state of the art for weighted matching in a wide variety of computational models. Many of these results which achieve the state of the art were obtained by plugging existing unweighted algorithms into the reduction of \cite{BernsteinDL21}, and hence incur a multiplicative overhead of $\eps^{-O(1/\eps)}$. Plugging in our \Cref{red:WeightedtoUnweighted:new} reduces the multiplicative overhead to $\poly(1/\eps)$. In particular, our reduction obtains weighted analogs of the following \emph{unweighted bipartite} results. %
\begin{enumerate}
\item\label{item:GP13} A fully dynamic algorithm for maintaining a $(1-\eps)$-approximate
MCM in $O(\sqrt{m}\poly(1/\eps))$ time per update \cite{GuptaP13}.
\item\label{item:cec} A decremental $(1-\eps)$-approximate MCM algorithm with update time $\poly(\log n,\eps^{-1})$ \cite{BernsteinGS20,JambulapatiJST22}.
\item\label{item:Liu} A fully dynamic algorithm for maintaining a $(1-\eps)$-approximate MCM with update time $O(\poly(\eps^{-1})\cdot \frac{n}{2^{\Omega(\sqrt{\log n})}})$ \cite{Liu24}.
\item\label{item:Liu-offline} A fully dynamic \emph{offline} algorithm for maintaining a $(1-\eps)$-approximate MCM with update time $O(n^{0.58}\poly(\eps^{-1}))$; in the offline model, the entire sequence is known to the algorithm in advance \cite{Liu24}.

\item\label{item:inc} An incremental $(1-\eps)$-approximate MCM algorithm with update time $\poly(\eps^{-1})$ \cite{BlikstadK23}.
\item\label{item:streaming} An $O(\eps^{-2})$-pass, $O(n)$ space streaming algorithm for $(1-\eps)$-approximate MCM \cite{AssadiLT21}
\item\label{item:mpc} An $O(\eps^{-2}\cdot \log \log n)$-round, $O(n)$ space per machine, MPC algorithm for $(1-\eps)$-approximate MCM \cite{AssadiLT21}.

\end{enumerate}
Our reduction extends all of the above results to work in weighted graphs: the multiplicative overhead is only $\poly(1/\eps)$ in the dynamic model, as well as a $O(\log n)$ factor in some of the other models. Before our work, the weighted versions of \ref{item:GP13}, \ref{item:Liu}, and \ref{item:Liu-offline} had a multiplicative overhead of $\eps^{-O(1/\eps)}\cdot \log W$. For others mentioned on the list, separate weighted versions were known (see \cite{BhattacharyaKS23,LiuKK23}), but had worse dependence on either $\log n$ factors or $\eps^{-1}$ factors. Our main contribution here is to remove these overheads and, equally importantly, to streamline existing research by removing the need for a separate weighted algorithm.

Note that there are additional results on streaming approximate matching algorithm which obtain improved pass dependencies on $\epsilon$ at the cost of $\poly(\log n)$ factors \cite{AhnGuha11a,AhnG18,AssadiJJST22,Assadi24}. %
Our reduction does not improve the state of the art here for such methods. Therefore, we focus on algorithms that have pass complexities that only depend on $\epsilon$.

\paragraph{Applications to Non-Bipartite Graphs} Similar to \Cref{red:WeightedtoUnweighted:new}, our aspect-ratio reduction in \Cref{thm:partial reduction:ultimate} also works as a black-box. Each of the results below was initially obtained by applying the reduction of \cite{GuptaP13} to a weighted matching algorithm with a large dependence on $W$. By plugging in our reduction, we reduce the $\eps$-dependence in all of them from $\eps^{-O(1/\eps)}$ to $\poly(1/\eps)$.  
\begin{enumerate}
    \item A fully dynamic algorithm for maintaining a $(1-\eps)$-approximate MWM in general graphs in $\sqrt{m}\cdot \eps^{-O(1/\eps)}\cdot \log W$ time \cite{GuptaP13}.
    \item A decremental algorithm for maintaining a $(1-\eps)$-approximate MWM in general graphs in $\poly(\log n) \cdot \eps^{-O(1/\eps)}$ update time \cite{ChenST23}.
    \item A $\poly(\log n) \cdot \eps^{-O(1/\eps)}$ update time algorithm for rounding $(1-\eps)$-approximate weighted fractional matchings in general graphs to $(1-\Theta(\eps))$-approximate integral matchings \cite{ChenST23}.
\end{enumerate}

\newcommand{\Gint}[1]{G_I}

\section{Preliminaries}
\label{sec:preliminaries}

\paragraph{General Notation}
For positive integer $k$, we let $[k] \defeq \{1, \dots, k\}$.
For sets $S$ and $T$, we let $S \oplus T \defeq (S \setminus T) \cup (T \setminus S)$ denote their symmetric difference.

\paragraph{Graphs and Matchings}
Throughout this work, $G = (V, E)$ denotes an undirected graph, and $w: E \to \mathbb{R}_{>0}$ is an edge weight function.
The \emph{weight ratio} of $G$ is $\max_e w(e) / \min_{f} w(f)$.
A \emph{matching} $M \subseteq E$ is a set of vertex-disjoint edges.
The weight of a matching $M$, denoted $w(M)$, is the sum of the weights of the edges in the matching: $w(M) \defeq \sum_{e \in M} w(e)$.
We denote the maximum value of $w(M)$ achieved by any matching $M$ on $G$ by $\mu_w(G)$.
For $\alpha \in [0, 1]$, an $\alpha$-approximate MWM of $G$ is a matching $M$ such that $w(M) \ge \alpha \cdot \mu_w(G)$. The following result states that we can compute an $(1 - \eps)$-approximate MWM very efficiently.

\begin{theorem}[{\cite[Theorem 3.12]{DuanP14}}]\label{thm:DP14}
    On an $m$-edge general weighted graph, a $(1-\eps)$-approximate MWM can be computed in time $O(m\log(\eps^{-1})\eps^{-1})$.
\end{theorem}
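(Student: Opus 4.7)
The plan is to combine weight scaling with Hopcroft--Karp-style short augmenting path search in an approximate primal--dual framework, following Gabow--Tarjan and refined by Duan--Pettie. First I would normalize so that $\max_e w(e) = 1$ and discard all edges with $w(e) < \eps/n$, losing at most $\eps$ in the objective since $\mu_w(G) \ge 1$. Rounding each remaining weight down to a multiple of $\eps/n$ and rescaling then yields integer weights in $[1,N]$ with $N = O(n/\eps)$ while costing only an additional $(1-O(\eps))$ factor. So the problem reduces to an approximate MWM on a polynomially-bounded integer-weight instance.

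Next I would set up the primal--dual procedure. Maintain vertex duals $y(v) \ge 0$ and odd-set (blossom) duals $z(B) \ge 0$ together with the feasibility invariant
\[
  yz(uv) \;\defeq\; y(u) + y(v) + \sum_{B \ni uv} z(B) \;\ge\; w(uv) \qquad \text{for all } uv \in E,
\]
and the relaxed complementary slackness $yz(e) - w(e) \le \delta$ for every matched edge $e$. The algorithm runs in scales: $\delta$ is halved per scale, the duals are adjusted by a corresponding constant, and then a Hopcroft--Karp-type search finds a maximal vertex-disjoint collection of (nearly) tight augmenting paths. A standard length-versus-count argument gives $O(1/\sqrt{\eps})$ searches per scale before no short augmenting path of length $O(1/\eps)$ remains; at that point, LP duality applied to $(y,z)$ together with the accumulated slack on $M$ bounds $\mu_w(G) - w(M)$ by $O(\eps) \cdot \mu_w(G)$.

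The main obstacle is twofold. First, in general graphs one must contract and expand blossoms on the fly while preserving approximate slackness; I would implement each phase via an Edmonds-style routine backed by disjoint-set union so that each phase costs $O(m)$ (up to inverse-Ackermann) time, and verify that the shrink/expand operations compose cleanly with the approximate slackness invariant. Second, and more delicate, is compressing the $\log(n/\eps)$ factor coming from the naive weight scaling down to $\log(\eps^{-1})$. To do this I would bucket edges by geometric weight class $[(1+\eps)^i,(1+\eps)^{i+1})$ and argue that only the top $O(\log(\eps^{-1}))$ buckets need be tracked to within a $(1-\eps)$ factor of $\mu_w(G)$, with contributions from lower buckets absorbed into the additive error budget. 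Combined with the $O(m)$-per-phase cost and the $O(\eps^{-1})$ total number of (scale, search) pairs, this would yield the target running time of $O(m \eps^{-1} \log(\eps^{-1}))$.
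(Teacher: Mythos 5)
The paper states this result purely as a citation of Duan--Pettie (their Theorem~3.12) and gives no proof of its own, so your sketch has to stand on its own against the cited algorithm. The overall shape you describe is the right one: normalize and round weights, maintain vertex and blossom duals under a relaxed complementary-slackness condition, halve the slack parameter $\delta$ over a sequence of scales, and augment along nearly-tight paths using a disjoint-set structure to manage blossoms. Two of the quantitative steps, however, do not survive scrutiny.

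First, the claim that a ``length-versus-count argument gives $O(1/\sqrt{\eps})$ searches per scale before no short augmenting path of length $O(1/\eps)$ remains'' mislocates the square root. The Hopcroft--Karp invariant is that after $k$ blocking phases the shortest augmenting path has length greater than $k$; pushing that length past $\Omega(1/\eps)$ therefore costs $\Theta(1/\eps)$ phases, not $\Theta(1/\sqrt{\eps})$. The $\sqrt{n}$ in the classical exact bound arises from a \emph{second} count --- at most $O(\sqrt{n})$ further augmentations once the path length exceeds $\sqrt{n}$ --- which has no analogue here because you stop early. Your own arithmetic also does not cohere: $O(\log\eps^{-1})$ scales times $O(\eps^{-1/2})$ searches times $O(m)$ per search would give $O(m\,\eps^{-1/2}\log\eps^{-1})$, which matches neither your stated ``$O(\eps^{-1})$ total (scale, search) pairs'' nor the target bound. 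Second, the bucketing step is broken as written: buckets $[(1+\eps)^i,(1+\eps)^{i+1})$ have ratio $1+\eps$, so the top $O(\log\eps^{-1})$ of them span a total weight ratio of only $(1+\eps)^{O(\log\eps^{-1})}=1+O(\eps\log\eps^{-1})$, and discarding everything below them can throw away nearly all of $\mu_w(G)$, not an $O(\eps)$ fraction. Duan--Pettie obtain $\log\eps^{-1}$ rather than $\log(nW)$ scales by truncating the Gabow--Tarjan scaling loop once the slack is small relative to the current matching weight and by a careful amortization of search and dual-adjustment work inside each scale; the saving is not produced by a weight-threshold edge filter of the kind you describe.
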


\paragraph{Weight Intervals}
For $I \subseteq \R$, we denote by $\Gint{I}$ the subgraph of $G$ restricted to edges $e$ such that $w(e) \in I$. A set of (disjoint) weight intervals $[\ell_1,r_1),\dots,[\ell_k,r_k)\subseteq \R$ has
\emph{weight gap} $\delta$ if $\ell_{i+1} \geq \delta\cdot r_i$ for all $i \in [k - 1]$ and we call such a set of weight classes \emph{$\delta$-spread}. We also say that the set of intervals is \emph{$\delta$-wide} if $r_i \ge \delta \cdot \ell_i$ for all $i \in [k]$.
If $\ell_{i+1} = r_i$ and the intervals cover $[1, W]$, then we say the intervals are a \emph{weight partition}.

\paragraph{Computational Model}
We work in the standard Word-RAM model in which arithmetic operations over $\Theta(\log n)$-bit words can be performed in constant time.

\section{Technical Overview}\label{sec:overview}

Here we give an overview of our framework. For comparison and motivation, in \Cref{sec:overview:gp13} we first introduce a result of \cite{GuptaP13}, which provides a deterministic framework that is partially
dynamic preserving for
dynamic approximate MWM algorithms to reduce the weight range to $\eps^{-O(1/\eps)}$ with an overhead of $O(\log W)$. 
In~\cref{sec:overview:gaps-to-overlaps},
we explain the difficulty of reducing to $\poly(1/\eps)$ weight range using~\cite{GuptaP13}.
Motivated by this, in \cref{sec:overview:composition}, we introduce our key technical innovation, the
matching composition lemma~(\cref{lemma:matching composition lemma}), that allows us to bypass the barrier.
We then show that this lemma naturally induces our algorithmic framework that reduces the weight range down to $\poly(1/\eps)$.
Finally, in \cref{sec:overview:further-improvements}, we overview several further improvements that we made to shave off 
$\log n$ and $1/\eps$ factors from the running time.

\subsection{Weight Reduction Framework of Gupta--Peng}
\label{sec:overview:gp13}

The reduction framework of \cite{GuptaP13} works as follows. First, it groups edges geometrically by their weights so that the weight ratio of each group is $\Theta(1/\eps)$. It then deletes one group for every $\Theta(1/\eps)$ consecutive groups and merges the remaining consecutive groups. We refer to the merged groups as \emph{weight classes}; note that they are $\Theta(1/\eps)$-spread and have weight ratio $\eps^{-\Theta(1/\eps)}$. For each of these weight classes, a $(1-\eps)$-approximate MWM is maintained. Because the weight classes are $\Theta(1/\eps)$-spread, a simple greedy aggregation~\cite{AnandBGS12,GuptaP13,StubbsW17} of the $(1-\eps)$-approximate MWMs on each weight class leads to a $(1-O(\eps))$-approximation of $\mu_w(G)$.
As a result, this reduction reduces general approximate MWM to the problem of maintaining an approximate MWM inside a weight class with weight ratio $\eps^{-\Theta(1/\eps)}$.

More formally, the algorithm of \cite{GuptaP13} assigns all edges $e$ with weight $w_e\in[\eps^{-i},\eps^{-(i+1)})$ to the $i$th group. Let $G^{(j)}$ denote the graph obtained by deleting all groups $i$ such that $i\equiv j\bmod{\lceil \eps^{-1}\rceil}$. An averaging argument shows that $\max_{j} \mu_w(G^{(j)})\geq (1-O(\eps))\mu_w(G)$. So it suffices to maintain
a $(1-O(\epsilon))$-approximate MWM on each $G^{(j)}$ and return the one with maximum weight.

To do so, \cite{GuptaP13} merges all groups between neighboring deletions to form weight classes in $G^{(j)}$.
Those weight classes have weight ratio $\eps^{-\Theta(1/\eps)}$ and are $\Theta(1/\eps)$-spread.~\cite{GuptaP13} proceed by maintaining a $(1-\eps)$-approximate MWM $M^{(j)}_k$ on each weight class $[\ell^{(j)}_k,r^{(j)}_k)$; by scaling down appropriately, maintaining each $M^{(j)}_k$ requires maintaining a $(1-\eps)$-approximate MWM in a graph with edges in range $[1,\eps^{-\Theta(1/\eps)}]$, as desired. The authors of \cite{GuptaP13} then greedily aggregate the $M_i$ into a single matching $M$ by checking the $M_i$ in descending order of weight range and including in $M$ any edge that is not adjacent to existing edges in $M$. The $\Theta(1/\eps)$ weight gap between weight classes ensures that for each edge $e$ in the final matching $M$, the total weight of edges in $\bigcup M_i$ that were not included in $M$ because of $e$ is at most $O(\eps)\cdot w_e$. Since $(1-O(\eps))\mu_w(G)\leq (1-O(\eps))\mu_w(G^{(j)})\leq \sum_{i}w(M_i)$,
the greedy aggregation keeps a $1-O(\eps)$ fraction of the weight in $\sum_i w(M_i)$ thus is a $(1-O(\eps))$-approximate MWM.

\subsection{Disjoint Weight Classes Require Exponential Width}\label{sec:overview:gaps-to-overlaps}

The $\Theta(1/\eps)$ weight gap plays an important role in~\cite{GuptaP13} because it ensures
that $\mu_w(\bigcup M_i) \geq (1-O(\eps))\mu_w(G)$,
while also allowing for efficient greedy aggregation.
But as long as we try to maintain weight classes that are $1/\eps$-spread, it seems hard to reduce the weight ratio all the way down to $\poly(1/\eps)$. This is because it would require deleting a constant fraction of the initial weight groups (the ones of weight ratio $\Theta(1/\eps)$), so the MWM on the remaining graph $G^{(j)}$ would have a much smaller weight than $\mu_w(G)$. Indeed, we rule out the possibility of a broader family of methods that works with non-overlapping weight classes (which includes all methods that create weight gaps) by answering the following question in the negative.

\begin{restatable}{question}{WeightPartitionConjecture}\label{conjecture:weight partition}
    For any graph $G$, is there a weight partition $[\ell_1,r_1),\dots,[\ell_k,r_k)$ 
    such that $r_i/\ell_i\leq \poly(1/\eps)$ holds for all $i$ and given any set of $(1-\eps)$-approximate MWM $M_i$ on each $G_{[\ell_i,r_i)}$ we have
    \[\mu_w(M_1\cup M_2\cup\dots\cup M_k)\geq (1-O(\eps))\cdot \mu_w(G)?\]
    \label{question:conjecture}
\end{restatable}

To see why methods creating $\Theta(1/\eps)$ weight gaps are a special case of the weight partition allowed in \Cref{question:conjecture}, note that given any partition with gaps, we can naturally define a weight partition by letting each weight gap be its own weight class; if a large matching exists after deleting the edges in the gaps, it still exists when we keep those edges.

We give a counterexample (see~\cref{claim:counterexample of conjecture}) that answers~\cref{conjecture:weight partition} in the negative, even when the weight partition can be chosen depending on the input graph (recall that \cite{GuptaP13} chose the weight partition up front, oblivious to the structure of the input graph).
To see why this is the case, consider the gadget shown in \cref{fig:no overlapping} below.
Fix a partition $\mathcal{P}$ of $[1, W]$ into weight classes.
Observe that if this partition ``separates'' the gadget, i.e., some weight class $i$ in $\mathcal{P}$ contains only the weight $1$ but not $1.5$ (and the other weight class $j$ contains $1.5$), then it leads to an overall loss larger than $(1-\eps)$.
More concretely, if the matching $M_i$ in class $i$ restricted to the gadget contains the edge $bc$ (and not $ab$), then $\MWM(M_i \cup M_j) = 1.5$ while the entire gadget contains a matching $\{ab, cd\}$ of weight $2.5$.
The final counterexample we construct then contains multiple copies of the gadget with different weights and argues that any weight partition $\mathcal{P}$ with $r_i/\ell_i \leq \poly(1/\eps)$ for all $i$ must ``separate'' sufficiently many gadgets.
Consequently, it cannot preserve $(1-\eps)$-approximation.

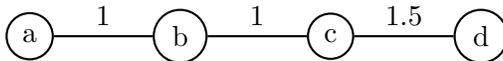
\begin{figure}[htbp]
  \centering
    \begin{tikzpicture}[auto,node distance=2cm, thick,main node/.style={circle,draw}]
      \node[main node] (a) {a};
      \node[main node] (b) [right of=a] {b};
      \node[main node] (c) [right of=b] {c};
      \node[main node] (d) [right of=c] {d};
      \draw (a) -- node {$1$} (b);
      \draw (b) -- node {$1$} (c);
      \draw (c) -- node {$1.5$} (d);
    \end{tikzpicture}
    \caption{Gadget for answering \cref{conjecture:weight partition} in the negative}
    \label{fig:no overlapping}
\end{figure}

\subsection{Leveraging Weight Overlaps: the Matching Composition Lemma}\label{sec:overview:composition}

How can we bypass this barrier?
Let us take a closer look at \cref{fig:no overlapping}.
In the gadget above, there are two possible $M_i$ for the weight class $i$ that contains the weight $1$: either it contains the edge $ab$ or $bc$.
As we have discussed above, the ``bad'' case is when $M_i$ contains $bc$ instead of $ab$, in which case the edge $bc$ will be ``kicked out'' by the edge $cd$ in $M_j$ and results in a weight loss.
Notice also that the weight loss affects the final approximation ratio \emph{when the weights of $bc$ and $cd$ are relatively close} (as depicted in \cref{fig:no overlapping})---if instead of $1.5$, the weight of $cd$ is changed to at least $1/\eps$, then the weights of $ab$ and $bc$ are negligible compared to $cd$ (up to an $\eps$ fraction), and it is okay if somehow $M_i$ contains $bc$ and it is ``kicked out'' by $cd$.
Therefore, to fix the issue, for any two weight values $w_1$ and $w_2$ that are close enough (in particular, $\eps \lesssim w_1/w_2 \lesssim 1/\eps$), we should have a weight class that contains both of them.
On the other hand, it is fine for $w_1$ and $w_2$ to not be in any weight class together if they are far apart.

Based on the observation, we see that if the weight classes are disjoint, then there will always be two close-enough weight values that are separated by the partition.
As a result, instead of creating weight gaps, we should leverage \emph{overlaps} between adjacent weight classes.
More concretely, we compute the approximate MWM for each weight class based on the information within the class and \emph{also} edges of neighboring weight classes. In other words, we enlarge the intervals in which we compute the matchings slightly. Perhaps surprisingly, we show that having an overlap of $\poly(1/\eps)$ allows us to bypass the above barrier completely, which we prove the following key technical lemma.

\begin{restatable}[Matching Composition Lemma]{lemma}{MatchingComposition}\label{lemma:matching composition lemma}
    Let $\eps\leq 1/6$ and $G$ be a graph, and consider a $\delta$-wide weight partition $[\ell_1,r_1),[\ell_2,r_2), \dots, [\ell_k,r_k)$. If $M_i$ is an arbitrary $(1-\eps)$-approximate MWM on $G_{[\eps\ell_i,r_i \eps^{-1})}$ for all $i \in [k]$, then
    \[\mu_w(M_1\cup M_2\cup\dots\cup M_k)\geq (1-O(\eps \cdot \log_{\delta}(1/\eps)))\cdot \mu_w(G).\]
\end{restatable}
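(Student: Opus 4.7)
First, fix an optimum MWM $M^*$ of $G$ and decompose it along the partition as $M^*_i := M^* \cap E_{[\ell_i, r_i)}$, so $w(M^*) = \sum_i w(M^*_i)$. For each $i$, the set $M^* \cap E_{[\eps \ell_i, r_i/\eps)}$ is a valid matching in $G_{[\eps\ell_i, r_i/\eps)}$, so the $(1-\eps)$-approximation hypothesis on $M_i$ gives
\[
w(M_i) \;\geq\; (1-\eps)\, w(M^* \cap E_{[\eps\ell_i, r_i/\eps)}) \;\geq\; (1-\eps)\, w(M^*_i).
\]
Thus each $M_i$ captures most of the ``local'' weight of $M^*$; the obstruction is that different $M_i$'s may share endpoints, so $\bigcup_i M_i$ need not be a matching.

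The plan is to assemble a single matching $N \subseteq \bigcup_i M_i$ by iteratively invoking the matching substitution lemma (\cref{lemma:matching substitution lemma}), processing the weight classes in order (say from $i = 1$ up to $i = k$). Starting from $N_0 := \emptyset$, at step $i$ I would combine the running matching $N_{i-1}$ with $M_i$ via substitution to produce $N_i \subseteq M_1 \cup \cdots \cup M_i$. The substitution guarantee should let $N_i$ absorb most of the $M^*$-weight newly covered by the extended interval of $M_i$ while losing at most an $O(\eps)$ fraction of the weight of $N_{i-1}$-edges whose endpoints are touched in the process.

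The decisive accounting uses $\delta$-wideness: the partition boundaries grow geometrically as $\ell_{i+k} \geq \delta^{k}\, \ell_i$. Hence an edge of weight $w_e \in [\ell_i, r_i)$ can lie in the extended interval $[\eps \ell_j, r_j/\eps)$ only for $|i-j| = O(\log_{\delta}(1/\eps))$. Consequently each fixed edge of $M^*$ is ``touched'' by the substitution step for only $O(\log_{\delta}(1/\eps))$ many classes, so it can incur the $O(\eps)$ substitution loss at most that many times during the entire iterative process. Summing these losses edge-by-edge over $M^*$ yields a total loss of $O(\eps \log_{\delta}(1/\eps)) \cdot w(M^*)$, which gives the claimed bound.

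The main obstacle will be arranging the substitution step so that the losses compose \emph{additively} rather than multiplicatively across the $O(\log_{\delta}(1/\eps))$ overlap-steps that can affect a given edge. In particular, one needs a formulation of \cref{lemma:matching substitution lemma} in which the loss incurred at step $i$ can be charged against $M^*$-edges of weight in the ``new'' portion of the extended interval $[\eps \ell_i, r_i/\eps)$, rather than against the aggregate weight of the running matching $N_{i-1}$. I expect \cref{lemma:matching substitution lemma} to be stated with precisely this kind of local charging in mind, enabling a clean induction whose final bound matches the one claimed.
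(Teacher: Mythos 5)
Your high-level intuition is correct — since the partition is $\delta$-wide and the padded interval enlarges each class by only a factor of $\eps^{-2}$, any given weight lies in at most $O(\log_{\delta}(1/\eps))$ padded intervals, and it is exactly this overlap bound that must translate into the $\eps\log_{\delta}(1/\eps)$ error. However, your implementation of the iteration has two genuine gaps.

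First, you initialize the running matching to $N_0 := \emptyset$ and try to \emph{grow} it by absorbing parts of $M^*$. This inverts the substitution lemma: \cref{lemma:matching substitution lemma} guarantees $w(M)\geq (1-4\eps)w(S) - \sum_i\bigl(\mu_w(G_{[\eps\ell_i,r_i\eps^{-1})})-w(T_i)\bigr)$, which is vacuous when the \emph{source} $S$ is empty. The paper's proof starts from $S_0 := M^*$ (a maximum weight matching of $G$) and successively \emph{replaces} portions of it with edges from the $M_i$, so that the weight of the running matching only ever shrinks by a controlled amount. The source-to-target direction of the substitution lemma is essential.

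Second, you propose processing the weight classes one at a time, $i=1,\dots,k$, and then ask for a stronger ``local charging'' variant of the substitution lemma to prevent the per-step $(1-4\eps)$ loss from compounding over all $k$ steps. That is precisely the obstacle, and the paper resolves it with a different device: instead of $k$ sequential applications, it partitions the indices into $g = \lceil\log_{\delta}(1/\eps^3)\rceil + 1 = O(\log_{\delta}(1/\eps))$ residue classes modulo $g$, so that within each residue class the original intervals are $(1/\eps^3)$-spread and hence the \emph{padded} intervals $[\eps\ell_i, r_i\eps^{-1})$ are $(1/\eps)$-spread, exactly the hypothesis of \cref{lemma:matching substitution lemma}. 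It then invokes the substitution lemma once per residue class, feeding in all targets of that class simultaneously. This gives only $g$ applications, so the $(1-4\eps)$ multiplicative factors compound to $(1-4\eps)^{g} \geq 1-4g\eps$ as desired. The additional error term in each application is bounded by $\eps(1+4\eps)\mu_w(G)$ via the helper \cref{corollary:weight combination with gap}; your observation $w(M_i)\geq(1-\eps)w(M^*_i)$ is not the quantity that enters. In short: the fix you anticipated (a stronger local-charging substitution lemma) is not what the paper does — the fix is starting from $M^*$ and batching the classes by residue so the number of substitution rounds is $O(\log_{\delta}(1/\eps))$ rather than $k$.
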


On a high level, the matching composition lemma states that if we ``pad'' the weight classes $[\ell_i, r_i)$ slightly by a factor of $1/\eps$ in both directions, causing overlap, then we can effectively ``sparsify'' the graph by only considering MWM's on each $G_{[\eps \ell_i, r_i\eps^{-1})}$.
This readily leads to an algorithmic framework for weight reduction: fix a $\delta$-wide weight partition of the graph, maintain a $(1-\eps)$-approximate MWM $M_i$ on each ``padded'' weight class using the given dynamic algorithm, and then somehow aggregate them together to form the final output matching (i.e., maintain a $(1-\eps)$-approximate MWM on the union of $M_i$'s).
As long as the aggregation can be done efficiently, the output matching can be as well.
We give a more detailed overview in \cref{sec:overview:framework}.

\paragraph{The Matching Substitution Lemma}
The moment we introduce weight overlaps, we need a completely different analysis from that of \cite{GuptaP13} to prove that $\mu_w(\bigcup M_i) \geq (1-O(\eps))\mu_w(G)$. By forcing disjoint (and in fact spread) weighted classes, \cite{GuptaP13} ensured that any conflict between matchings $M_i$ and $M_j$ could always be resolved in favor of the higher weight class (hence greedy aggregation). But once there is weight overlap, it is not clear how conflicts should be resolved. Our new analysis thus requires a new structural understanding of weighted matching.

In particular,
the matching composition lemma is proved via a structural \emph{matching substitution lemma} formally stated below.
It asserts that one can effectively ``substitute'' parts of a matching $S$ with matchings $T_1, \ldots, T_k$ that come from certain weight classes.

\begin{restatable}[Matching Substitution Lemma]{lemma}{Hybrid}\label{lemma:matching substitution lemma}
 Let $G$ be a graph and $[\ell_1,r_1),\dots,[\ell_k,r_k)\subseteq\mathbb R$ be $(1/\eps)$-spread.
 For $\eps \leq 1/2$, given any
 matching $S\subseteq G$, and a batch of target matchings $\{T_i\subseteq G_{[\eps \ell_i,r_i \eps^{-1})}\mid i \in [k]\}$,
    there exists a matching $M\subseteq S\cup T_1\cup\cdots\cup T_k$ of weight
    \[w(M)\geq (1-4\eps)w(S) - \sum_{i \in [k]} \left(\mu_w(G_{[\eps\ell_i,r_i \eps^{-1})})-w(T_i)\right)\]
    such that $M \cap G_{[\ell_i,r_i)}\subseteq T_i$ for all $i \in [k]$.
\end{restatable}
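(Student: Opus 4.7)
The plan is to construct $M$ via iterative substitution: processing weight classes $i = 1, \dots, k$ in order, I maintain a sequence of matchings $M^{(0)} = S, M^{(1)}, \dots, M^{(k)} = M$, where $M^{(i)}$ is obtained from $M^{(i-1)}$ by a local exchange that replaces the edges $X_i := M^{(i-1)} \cap G_{[\ell_i, r_i)}$ (forbidden by the constraint) with edges from $T_i$. Concretely, at step $i$, I form the auxiliary graph $H_i = (M^{(i-1)} \setminus X_i) \cup T_i$; since this is a union of two matchings it has maximum degree $2$ and decomposes into paths and even cycles alternating between the two, from which $M^{(i)}$ is chosen as a maximum-weight matching, component by component. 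The constraint $M \cap G_{[\ell_i,r_i)} \subseteq T_i$ is preserved at all subsequent steps $j \neq i$ because the $(1/\eps)$-spread assumption implies $r_i \leq \eps\ell_{i+1}$ and $r_{i-1}/\eps \leq \ell_i$, so each padded range $[\eps\ell_j, r_j/\eps)$ is disjoint from every other core range $[\ell_i, r_i)$, and consequently the substitution at step $j$ cannot reintroduce an edge of weight in $[\ell_i, r_i)$.

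The heart of the proof is to show that the total weight loss $w(S) - w(M) = \sum_i c_i$, with $c_i := w(M^{(i-1)}) - w(M^{(i)})$, is at most $4\eps \cdot w(S) + \sum_i L_i$, where $L_i := \mu_w(G_{[\eps\ell_i, r_i/\eps)}) - w(T_i)$. I would decompose this into two bounds: an \emph{optimal} bound handling the case $T_i = T_i^\star$ (the MWM of the padded range, so $L_i = 0$), giving $w(M) \geq (1-4\eps)w(S)$; and a \emph{suboptimality} bound accounting for the gap between $T_i$ and $T_i^\star$ that yields the extra $\sum_i L_i$ term, obtained by substituting each $T_i^\star$-edge by a $T_i$-edge and losing at most $w(T_i^\star) - w(T_i) = L_i$ weight in the process. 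A key structural fact for the optimal bound is that each edge of $G$ lies in at most two consecutive padded ranges, because $\eps\ell_{i+2} \geq r_i/\eps$ follows from the $(1/\eps)$-spread property; hence any per-step loss that can be charged linearly to padded-range weight will sum, over all $i$, to at most twice $w(S)$, and a $2\eps$ per-step loss yields the desired $4\eps \cdot w(S)$.

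The main obstacle, and where the $(1/\eps)$-spread assumption plays a decisive role, is establishing the per-step loss bound in the optimal case. Inside each alternating path or cycle $C$ of $H_i$, one must choose a sub-matching that incorporates $T_i^\star$-edges to recoup the removal of $X_i$ while still keeping the high-weight $M^{(i-1)}$-edges outside the padded range. I would argue that any $M^{(i-1)}$-edge in $C$ of weight at least $r_i/\eps$ dominates adjacent $T_i^\star$-edges by a factor of $1/\eps$ (so it should be kept), whereas any $M^{(i-1)}$-edge of weight less than $\eps\ell_i$ is dominated by adjacent $T_i^\star$-edges (so it can be sacrificed). The nontrivial case is when $C$ contains $M^{(i-1)}$-edges in the padded range itself, where weights are comparable; for this case the exchange argument exploits the fact that $T_i^\star$ is the MWM of the padded range, so its total weight on $C$ is at least $w(M^{(i-1)} \cap C \cap G_{[\eps\ell_i, r_i/\eps)})$, and any residual conflict cost is absorbed by the $2\eps$-factor per-step allowance. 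Assembling these component-wise bounds and carefully accounting for the changing $M^{(i-1)}$ across iterations is the principal technical hurdle of the proof.
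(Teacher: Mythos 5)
Your high-level plan (iterate over weight classes, substitute core-range edges, exploit the $(1/\eps)$-spread to keep padded ranges from overlapping other core ranges) is in the right direction, and the observations that padded range $j$ is disjoint from core range $i$ for $i\neq j$, and that any edge lies in at most two consecutive padded ranges, are both correct and used in the paper. But the construction you propose does not actually deliver the weight bound, and the point you flag as ``the principal technical hurdle'' is precisely the idea that the paper supplies and that your plan is missing.

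Concretely, you form $H_i = (M^{(i-1)}\setminus X_i)\cup T_i$ by deleting the core-range edges $X_i$ and then take a max-weight matching of $H_i$. The problem is that the alternating components of $H_i$ can contain $M^{(i-1)}$-edges of arbitrary weight --- in particular edges of weight $\geq r_i/\eps$ or $<\eps\ell_i$, far outside the padded range --- sitting on the same path as $T_i$-edges. Your ``dominance'' heuristic (keep the heavy $M^{(i-1)}$-edges, sacrifice the light ones) is not formalized, and the choice ``max-weight matching per component'' gives you neither the constraint-preserving swap nor a loss that can be charged to $\mu_w(G_{[\eps\ell_i,r_i/\eps)})-w(T_i)$, because the weight of the component is not confined to the padded range. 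The paper does not delete $X_i$ at all. Instead it first deletes a different, carefully chosen \emph{buffer} set $D_i\subseteq M_{i-1}$: around each $M_{i-1}$-edge in $[\ell_i,r_i/\eps)$ it deletes the first neighboring $M_{i-1}$-edges on the alternating path of weight below $\eps\ell_i$, and around each $M_{i-1}$-edge of weight $\geq r_i/\eps$ it deletes the first neighboring $M_{i-1}$-edges of weight at most $r_i$. This deletion severs the alternating paths exactly so that, in $\widetilde{M}_{i-1}\oplus T_i$ (where $\widetilde{M}_{i-1}=M_{i-1}\setminus D_i$), every component that contains a core-range edge lies \emph{entirely} inside $G_{[\eps\ell_i,r_i/\eps)}$. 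The algorithm then symmetric-differences only those components $L_i$, so the per-step loss is $w(L_i\cap\widetilde{M}_{i-1})-w(L_i\cap T_i)\leq\mu_w(G_{[\eps\ell_i,r_i/\eps)})-w(T_i)$ directly (no detour through an optimal $T_i^\star$), and the deletion cost $w(D_i)$ telescopes to $4\eps\, w(S)$ via the geometric spread. This isolation step is the heart of the proof; without it your per-component accounting does not go through, and your ``optimal versus suboptimal'' decomposition with $T_i^\star$ is unnecessary once the isolation is in place.
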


The matching substitution lemma starts with an arbitrary source matching $S$, and a set of target matchings $T_1,\dots, T_k$ on the ``padded'' weight classes $[\eps\ell_1,r_1/ \eps),\dots,[\eps \ell_k,r_k / \eps)$.
It allows us to build a matching starting from $S$ and substitute all edges of $S$ in each weight class $[\ell_i,r_i)$ with edges contained $T_i$; this incurs some additive approximation error, but it is easy to check that the error is small as long as each $T_i$ is a near-maximum matching for the corresponding padded weight class.
For example, if we take $S$ to be a maximum weight matching on $G$ and set $T_i$ to be the $(1-\eps)$-approximate MWM $M_i$ on $G_{[\eps \ell_i, r_i\eps^{-1})}$ from~\cref{lemma:matching composition lemma}, then we can substitute each weight range of $S$ with edges from $M_i$ at minimal loss.
We defer the full proof to \cref{sec:structural lemma}.

\paragraph{Arbitrary Approximation Ratio}
As discussed in the introduction, our reductions only work for $(1-\eps)$-approximations, and not for arbitrary $\alpha$-approximations. In particular, in the matching composition lemma, if each $M_i$ is instead an arbitrary $\alpha$-approximate MWM (for some fixed $\alpha < 1$), then it is \textbf{\emph{not}} the case that $\mu_w(M_1 \cup \ldots \cup M_k) \geq (\alpha - O(\eps))\mu_w(G)$. Somewhat surprisingly, this limitation is not an artifact of our particular choice of weight classes, and turns out to be inherent to the general framework of composing approximate matchings between weight classes: for such a framework to work with any $\alpha$-approxmation (as does \cite{GuptaP13}), $\exp(1/\eps)$-wide weight classes are required. See \cref{appendix:lower-bound} for more details.

\subsubsection{Algorithmic Framework} \label{sec:overview:framework}

The matching composition lemma suggests the following algorithmic framework:
\begin{enumerate}
  \item Fix a $\delta$-wide weight partition of the input graph and maintain a $(1-\eps)$-approximate MWM $M_i$ on each padded weight class.
  \item\label{item:aggregate} Maintain a $(1-O(\eps))$-approximate MWM on the union of $M_i$'s as the output matching. By the matching composition lemma this is $(1-O(\eps\log_\delta(1/\eps)))$-approximate in the input graph.
\end{enumerate}

Scaling $\eps$ down by a factor of $O(\log_\delta(1/\eps))$ thus ensures that the matching we output is $(1-\eps)$-approximate in the input graph.
We now describe how we implement Step \labelcref{item:aggregate} efficiently.
For this we set $\delta = \Theta(\eps^{-3})$.
With this choice of $\delta$, even though the neighboring ``padded'' weight classes overlap, the sets of ``odd'' and ``even'' intervals are still each $\Theta(1/\eps)$-spread.
\begin{enumerate}[label=\textup{(2.\arabic*)},widest=(2.1),itemindent=*]
\item\label{item:greedy} As such, similar to \cite{GuptaP13}, these matchings can be separately aggregated using a greedy algorithm in $O(\log n)$ update time. More specifically, let $M_i$ be the matching in the $i$-th weight class. Then, we compute a $(1-\eps)$-approximate MWM $M_{\mathrm{odd}}$ (respectively, $M_{\mathrm{even}}$) on the union $M_1 \cup M_3 \cup \cdots$ (respectively, $M_2 \cup M_4 \cup \cdots$) greedily.
\item\label{item:degree-two} At this point, we are left with two matchings $M_{\mathrm{odd}}$ and $M_{\mathrm{even}}$ that we need to combine together.
This can be relatively easily handled in $O(1/\eps)$ time per change to $M_{\mathrm{odd}}$ and $M_{\mathrm{even}}$ since the union of these two matchings consists of only paths and cycles, and MWM can be computed and maintained very efficiently on them by splitting each connected component into paths of length $O(1/\eps)$ and solving each path individually via a dynamic program.
\end{enumerate}

As a result, with this choice of $\delta$ we arrive at a deterministic framework that reduces the aspect ratio from $\poly(n)$ to $\Theta(\eps^{-5})$ for any dynamic algorithm (note that the $\Theta(\eps^{-5})$ term comes from padding the $\Theta(\eps^{-3})$-wide intervals in each direction).

\begin{theorem}\label{thm:informal-reduction}
Given a dynamic $(1-\eps)$-approximate MWM algorithm in general (possibly non-bipartite) graphs with maximum weight in $[1, \poly(1/\eps)]$, there is a transformation that produces a dynamic $(1-O(\eps))$-approximate MWM algorithm in graphs with maximum weight $[1, W]$.
The reduction is partially dynamic preserving and has a multiplicative update time overhead 
of $\log n\cdot \poly(1/\eps)$.
The new weighted algorithm is deterministic if the initial algorithm is deterministic.
\end{theorem}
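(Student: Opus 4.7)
The plan is to instantiate the algorithmic framework outlined after \cref{lemma:matching composition lemma}. I fix $\delta = \Theta(\eps^{-3})$ and a $\delta$-wide weight partition $[\ell_1, r_1), \ldots, [\ell_k, r_k)$ of $[1,W]$, where $k = O(\log W/\log(1/\eps))$. Each padded interval $[\eps\ell_i, r_i\eps^{-1})$ has weight ratio $\delta \cdot \eps^{-2} = \Theta(\eps^{-5}) = \poly(1/\eps)$; after rescaling by $1/(\eps \ell_i)$, the restricted graph $G_{[\eps\ell_i, r_i\eps^{-1})}$ fits the hypothesis of the input dynamic algorithm, which I invoke with accuracy $\eps' = \Theta(\eps)$ to maintain a $(1-\eps')$-approximate MWM $M_i$ on each padded interval. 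A direct calculation (the length $\delta/\eps^2$ of each padded interval on a logarithmic scale base $\delta$ equals $1 + 2\log_\delta(1/\eps) < 2$) shows that each edge weight lies in at most two padded intervals, so a single input update triggers $O(1)$ updates to the subroutines and thus $O(\U)$ work per input update on the bottom layer.

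I then aggregate the $M_i$'s in two stages. Since consecutive odd (respectively even) padded intervals remain $\Theta(\eps^{-1})$-spread, I apply the standard greedy aggregation of \cite{GuptaP13} on a balanced binary tree to maintain matchings $M_{\mathrm{odd}}$ and $M_{\mathrm{even}}$; the well-separatedness of the weight classes ensures that each change to some leaf $M_i$ propagates through $O(\log n)$ tree nodes, and that the resulting matchings are $(1 - O(\eps))$-approximate MWMs on $\bigcup_{i\text{ odd}} M_i$ and $\bigcup_{i\text{ even}} M_i$ respectively. For the final stage, I exploit the fact that $M_{\mathrm{odd}} \cup M_{\mathrm{even}}$ has maximum degree two and therefore decomposes into vertex-disjoint paths and cycles; I split each component into chunks of length $\Theta(1/\eps)$ (sacrificing only an $\eps$ fraction of the weight by removing the lightest edge in every block) and maintain an optimal MWM on each chunk via a standard dynamic program, at a cost of $O(1/\eps)$ per change to either input matching.

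For correctness, \cref{lemma:matching composition lemma} applied with $\delta = \Theta(\eps^{-3})$ yields $\mu_w(M_1 \cup \cdots \cup M_k) \ge (1 - O(\eps' \log_\delta(1/\eps))) \mu_w(G) = (1 - O(\eps')) \mu_w(G)$, using $\log_\delta(1/\eps) = \Theta(1)$. The two subsequent aggregation stages each incur at most another $(1 - O(\eps))$ factor loss on the weight of their input, so the final output is a $(1 - O(\eps))$-approximate MWM of $G$ after taking $\eps' = \Theta(\eps)$. Summing the costs gives amortized update time $\U \cdot (\log n + \poly(1/\eps)) = \U \cdot \log n \cdot \poly(1/\eps)$. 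The reduction is partially dynamic preserving because each layer only inserts or deletes edges in response to its input, and deterministic whenever the underlying subroutine is.

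The main technical obstacle I expect is the careful amortized analysis ensuring that each change to some $M_i$ really propagates through only $O(\log n)$ nodes of the greedy aggregation tree, generating only $O(1)$ net insertions or deletions into the path/cycle data structure, and that the chunk boundaries inside long paths can be maintained incrementally (rather than being re-cut from scratch on every change). The $\Theta(1/\eps)$-spread of the odd and even padded intervals is what bounds the cascading cost at each level of the aggregation tree, while the length-$\Theta(1/\eps)$ chunking confines DP recomputation to a constant number of cells per update; threading these bookkeeping details together constitutes the bulk of the formal proof.
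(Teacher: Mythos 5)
Your proof follows the same three-layer architecture as the paper's: a $\Theta(\eps^{-3})$-wide partition with $1/\eps$-padding justified by the matching composition lemma, an odd/even aggregation of the per-class matchings exploiting the resulting $\Theta(1/\eps)$-spread, and a final degree-two path/cycle dynamic program chunked at length $\Theta(1/\eps)$. The only point worth flagging is your middle layer: a hierarchical (balanced-tree) aggregation of the spread matchings does \emph{not} reproduce the global greedy census of Gupta--Peng, since at an internal node an edge from a lower class may be dropped due to a higher-class edge that is itself later dropped. What it produces is exactly the paper's weaker \emph{locally greedy census} (\cref{sec:framework:well-separated}), and the approximation guarantee you are relying on is \cref{lemma:local greedy:approximation}, not the original Gupta--Peng analysis. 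Since that lemma shows a locally greedy census still retains a $(1-4\eps)$ fraction of $\sum_i w(M_i)$, your argument does close; just attribute the step correctly, and note that the tree depth is really $O(\log k) = O(\log\log n)$ rather than $O(\log n)$, which still fits inside the claimed $\log n \cdot \poly(1/\eps)$ overhead.
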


The weight reduction framework described above works for both bipartite and non-bipartite graphs. Moreover, combined with the unfolding framework of~\cite{BernsteinDL21}, it reduces weighted matching algorithms directly to unweighted matching algorithms in bipartite graphs with $\log n\cdot \poly(1/\eps)$ multiplicative overhead.

\subsection{Further Improvements} \label{sec:overview:further-improvements}

On top of the framework \cref{thm:informal-reduction}, we made the following additional improvements to decrease $\log n$ and $1/\eps$ factors in the final running time which may be of independent interest.

\paragraph{More Efficient Aggregation of Spread Matchings}
In Step~\labelcref{item:greedy} of our framework described above, we need to maintain a $(1-\eps)$-approximate MWM over matchings $M_1, \ldots, M_k$ whose weights are sufficiently spread apart (by a gap of $\Theta(1/\eps)$).

\begin{restatable*}[$(1-\eps)$-Approximate MWM over Matchings in $(1/\eps)$-Spread Weight Classes]{problem}{WellSeparated}\label{problem:well-separated weight classes}
    Given a set of $(1/\eps)$-spread weight classes $[\ell_1,r_1),\dots,[\ell_k,r_k)\subseteq \R$, and a set of $k$ matchings $M_1,\dots,M_k\subseteq G$ undergoing adversarial edge deletions/insertions satisfying $M_i\subseteq G_{[\ell_i,r_i)}$ for all $i \in [k]$.
    The task is to dynamically maintain a matching $M$ satisfying 
    \[w(M)\geq (1-O(\eps))\sum_{i\in[k]} w(M_i).\]
\end{restatable*}

The work of \cite{GuptaP13} solved \cref{problem:well-separated weight classes} with update time $O(k)$ using a \emph{greedy census matching} algorithm that was also used in \cite{AnandBGS12,StubbsW17}.
To improve upon this, we propose a different notion of \emph{locally} greedy census matching.
We show that the new notion suffices for maintaining a $(1-\eps)$-approximation and since, on a high level, the local version allows us to consider fewer edges in each update, we get a faster update time of $O(k/\log n)$.
Note that for Step~\labelcref{item:greedy}, the value of $k$ is\footnote{Note that this is because we assume the input graph has weights in $[1, \ldots, \poly(n)]$. %
} $O(\log n)$
and thus this shaves off the $O(\log n)$ factor in the update time that would have been there if we used the subroutine of \cite{GuptaP13}.
See~\cref{sec:framework:well-separated} for more details.

\paragraph{Low-Recourse Transformation}
Note that the overall update time of our framework also depends on the \emph{recourse} $\sigma$ of the given dynamic algorithm $\mathcal{A}$, i.e., the number of changes to the matching $M_i$ that it generates per update to the input graph.
This is because each such changes propagate to the internal dynamic subroutines, and for our case it will first correspond to an update to our algorithm for \cref{problem:well-separated weight classes}, and then be propagated to Step~\labelcref{item:degree-two} which has an update time of $O(1/\eps)$.
Our overall update time is thus $\mathcal{U} + \sigma/\eps$, where $\mathcal{U}$ is the update time of the dynamic algorithm $\mathcal{A}$.
Similar scenarios also occur in previous reductions of \cite{GuptaP13,BernsteinDL21}, and they both implicitly used the fact that $\sigma \leq \mathcal{U}$ (this is for algorithms that explicitly maintain a matching) and therefore their reductions incur a \emph{multiplicative} overhead in the update time of $\mathcal{A}$.

However, the output recourse can be much smaller than the update time.
For instance, for unweighted dynamic matching algorithms, the recourse can always be made $O(1/\eps)$ by a simple lazy update trick (the work of \cite{SolomonS21} further achieved a \emph{worst-case} recourse bound by ``smoothing'' the lazy update), while all known dynamic matching algorithms have update time much larger than this.
To address this disparity and make the overhead of our reduction \emph{additive}, we design a generic low-recourse transformation that converts, in a black-box way, \emph{any} $(1-\eps)$-approximate dynamic MWM algorithm to one with amortized recourse $O(\poly(\log W)/\eps)$.
This improves the na\"ive lazy update approach that would have a recourse bound of $O(W/\eps)$.
As our framework reduces the weight range to $W = \Theta(\eps^{-5})$, we use this new low-recourse transformation on the input algorithm $\mathcal{A}$ to decrease the additive overhead from $O(\eps^{-6})$ (with the na\"ive lazy update) to $O(\log(\eps^{-1})/\eps)$.

To improve the aspect ratio further than $\Theta(\eps^{-5})$, we continue to apply the framework on each $\Theta(\eps^{-5})$ intervals. Combined with the low-recourse transformation, we provide a trade-off between the aspect ratio and the efficiency of aggregation (see~\cref{corollary:partial reduction:low-recourse}). We use it to improve the fully dynamic low-degree algorithm in~\cite{GuptaP13} which then serves as another aggregation method that finally allows us to reduce the aspect ratio to $\Theta(\eps^{-2})$, the best we can get using~\cref{lemma:matching composition lemma}.
See~\cref{sec:framework} for more details.

\paragraph{The Final Transformation}
In the end, applying the improvements we discussed in this section, we obtain our final main theorem.

\begin{restatable}{theorem}{UltimateReduction}\label{thm:partial reduction:ultimate}
  Given a dynamic $(1-\eps)$-approximate MWM algorithm $\A$ that, on input $n$-vertex $m$-edge graph with aspect ratio $W$, has initialization time $\I(n, m, W, \varepsilon)$, and update time $\U(n, m, W, \varepsilon)$, there is a transformation which produces a dynamic $(1-\eps\log(\eps^{-1}))$-approximate MWM dynamic algorithm that has initialization time
  \[\log(\eps^{-1})\cdot O(\I(n, m, \Theta(\eps^{-2}), \Theta(\varepsilon))+m\varepsilon^{-1}),\]
  amortized update time
  \[\poly(\log(\eps^{-1}))\cdot O(\U(n, m, \Theta(\eps^{-2}), \Theta(\varepsilon))+\eps^{-6}),\]
  and amortized recourse
  \[\poly(\log(\eps^{-1}))\cdot O(\eps^{-6}).\]
  The transformation is partially dynamic preserving.
\end{restatable}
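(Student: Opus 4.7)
The plan is to assemble the theorem from the four tools developed earlier: (i) the matching composition lemma (\cref{lemma:matching composition lemma}); (ii) the improved local greedy census-matching subroutine for $(1/\eps)$-spread families of matchings (\cref{problem:well-separated weight classes}); (iii) the degree-$2$ combiner that merges the odd- and even-indexed residual matchings; and (iv) a low-recourse transformation applied to $\A$ itself.

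First, I would fix a $\delta$-wide weight partition of $[1,W]$ for an appropriately chosen $\delta>1$, padded on each side by $1/\eps$ as in \cref{lemma:matching composition lemma}. On each padded class I run an independent copy of $\A$; since every edge of $G$ lies in only $O(1)$ padded classes, a single input-graph update triggers $O(1)$ updates to copies of $\A$, producing the ``additive'' $\U(n,m,\Theta(\eps^{-2}),\Theta(\eps))$ term without any $\log W$ overhead. I then aggregate the resulting matchings $M_1,\dots,M_k$ in two stages. In the first stage, the odd- and even-indexed sub-families are each $\Theta(1/\eps)$-spread, so I combine them separately with the improved local census-matching subroutine, obtaining amortized update time $O(k/\log n)=O(\log(1/\eps))$ per downstream change. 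In the second stage, the two resulting matchings $M_{\mathrm{odd}}$ and $M_{\mathrm{even}}$ have union of maximum degree $2$ (a disjoint union of paths and cycles), so I maintain a $(1-\eps)$-approximate MWM on $M_{\mathrm{odd}}\cup M_{\mathrm{even}}$ by chopping each component into segments of length $O(1/\eps)$ and running a short dynamic program on each segment in $O(1/\eps)$ time per change. By \cref{lemma:matching composition lemma}, the matching thus produced is $(1-O(\eps\log_\delta(1/\eps)))$-approximate on the original graph.

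To make the cost of $\A$ enter \emph{additively} rather than multiplied by its output recourse, I would apply the low-recourse transformation to $\A$, bringing its amortized output recourse to $\poly(\log(1/\eps))/\eps$; each emitted change then propagates through the aggregation pipeline in $O(\log(1/\eps)+\eps^{-1})$ time. To further shrink the aspect ratio fed to $\A$ from the $\Theta(\eps^{-5})$ natural to a single application of \cref{lemma:matching composition lemma} with $\delta=\Theta(\eps^{-3})$ down to the target $\Theta(\eps^{-2})$, I would iterate the framework a constant number of times with successively smaller $\delta$, and invoke the improved fully dynamic low-degree MWM algorithm of \cite{GuptaP13} as the terminal aggregation; this contributes the $O(\eps^{-5})$ additive term in the update time and recourse and the overall $\poly(\log(1/\eps))$ factor. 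Initialization is handled analogously: each of the $O(\log_\delta W)$ padded classes is initialized once via $\I$, distributing edges to their padded classes costs $O(m/\eps)$, and amortizing the constant iteration depth against the $\log(1/\eps)$-many ``effective layers'' yields the stated $\log(1/\eps)\cdot O(\I+m/\eps)$ initialization bound. Partial-dynamism preservation follows because every subroutine in the pipeline is itself partial-dynamism preserving.

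The main obstacle is the parameter budgeting. Because $\delta$ must be pushed close to $O(1)$ in order to reach aspect ratio $\Theta(\eps^{-2})$, \cref{lemma:matching composition lemma} already saturates the claimed $(1-\eps\log(1/\eps))$ approximation, leaving essentially no slack. Consequently every internal subroutine --- $\A$, the census-matching aggregator, the degree-$2$ combiner, and the low-recourse transformation --- must produce a $(1-\eps)$-approximation with a carefully rescaled $\eps$ parameter, and these rescalings have to be tracked across the iteration layers so that the final approximation, update time, and recourse bounds land exactly on the stated expressions rather than picking up additional $\log(1/\eps)$ factors from naive composition.
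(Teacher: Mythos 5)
The proposal correctly assembles the right set of tools (the matching composition lemma, the locally greedy census aggregator, the low-recourse transformation, and the low-degree MWM algorithm obtained from iterating the $\Theta(\eps^{-5})\to\Theta(\eps^{-2-3\cdot 2^{-d}})$ framework), and correctly observes that $\delta$ must be pushed to $O(1)$ to reach aspect ratio $\Theta(\eps^{-2})$. However, the route it describes has a genuine gap.

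You propose to iterate the odd/even-split framework with successively smaller $\delta$. But a constant number $d$ of iterations only brings the aspect ratio to $\Theta(\eps^{-2-3\cdot 2^{-d}})$, which is never $\Theta(\eps^{-2})$; getting there would require $d=\Omega(\log\log(1/\eps))$, contradicting the constant-iteration budget. More importantly, once $\delta$ drops below $\Theta(\eps^{-3})$, the odd/even split breaks: two consecutive odd (or even) base classes are $\delta^2$ apart, so their $1/\eps$-padded versions are $\eps^2\delta$-spread, which for $\delta<\eps^{-3}$ is less than $1/\eps$ (and for $\delta=O(1)$ they are not even disjoint). The locally greedy census would then no longer yield a $(1-O(\eps))$ guarantee, and the degree-2 combiner becomes inapplicable. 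Your last paragraph flags this as a ``parameter budgeting'' headache, but it is in fact a structural obstruction rather than a bookkeeping one.

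The paper's actual proof circumvents this entirely: it fixes a $2$-wide base partition (so padded classes have aspect ratio $\Theta(\eps^{-2})$ in one shot), then splits the classes by residue modulo $g=\lceil\log(\eps^{-3})\rceil+1=O(\log(1/\eps))$ rather than modulo $2$. Within each residue class the padded intervals are $(1/\eps)$-spread even with a $2$-wide base, so the locally greedy census applies. This produces $g=O(\log(1/\eps))$ intermediate matchings (not two), whose union has maximum degree $g$; the terminal aggregation is then the low-degree dynamic MWM algorithm (\cref{thm:low-degree}) run on a graph of degree $O(\log(1/\eps))$, not the degree-2 combiner. The modulo-$g$ grouping is the missing idea that lets the base width go to $O(1)$ while keeping each group well-spread, and the degree-$g$ terminal step is what replaces your degree-2 step once $g>2$.
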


\section{Matching Composition and Substitution Lemmas}\label{sec:structural lemma}

We now turn to the proof of our key technical lemmas, the matching composition and substitution lemmas.
We first prove the matching substitution lemma, and then use it to deduce the matching composition lemma that ultimately leads to our algorithmic framework.

In the matching substitution lemma, we are given a source matching $S$ and target matchings $T_1, \dots, T_k$ on padded versions of $(1/\eps)$-spread intervals $[\ell_1, r_1), \dots, [\ell_k, r_k)$; more precisely, each $T_i$ is a matching on $[\eps \ell_i, r_i \eps^{-1})$.
The lemma states that we can find a new matching $M$ with $w(M) \approx w(S)$---assuming the matchings $T_i$ are large---such that $M$ only uses edges of $T_i$ on the weight interval $[\ell_i, r_i)$ for each $i \in [k]$.
The key idea in the proof is to identify a set of edges $D$ with small total weight (relative to $S$) to delete such that the edges of every component in $(S \cup T_1\cup \cdots \cup T_k) \setminus D$ are confined to a single weight class.

\Hybrid*

\begin{proof}
    We construct a sequence of matchings $M_0,M_1,\dots,M_k$, such that $M_0=S$ is the source matching, $M_i$ is constructed from $M_{i-1}\cup T_i$, and $M_k=M$ is the desired matching in the lemma.
    
    We first describe how to construct $M_i$ for $i\geq 1$. The components of $M_{i-1}\oplus T_i$ are only paths and cycles.
    Construct a set $D_i\subseteq M_{i-1}$ as follows: For each path or cycle $P\subseteq M_{i-1}\oplus T_i$ and $e\in P\cap M_{i-1}$ such that $w(e)\geq r_i \eps^{-1}$, in both directions of $P$, add the closest edges in $P\cap M_{i-1}$ of weight at most $r_i$ into $D_i$. For each $e\in P\cap M_{i-1}$ such that $\ell_i\leq w(e)< r_i \eps^{-1}$, in both directions of $P$, add the closest edges in $P\cap M_{i-1}$ of weight less than $\eps \ell_i$ into $D_i$.

    Now let $\widetilde M_{i-1} \defeq M_{i-1}\setminus D_i$ and again consider $\widetilde M_{i-1}\oplus T_i$ and any path or cycle $P\subseteq \widetilde M_{i-1}\oplus T_i$. Notice that if there is an $e\in P\cap \widetilde M_{i-1}$ such that $w(e)\in[\ell_i,r_i)$, then it must be the case that $P\subseteq G_{[\eps \ell_i,r_i \eps^{-1})}$. Let $L_i$ be the collection of such paths and cycles. We then construct $M_i\defeq\widetilde M_{i-1}\oplus L_i$, and thus $M_i\subseteq \widetilde M_{i-1}\cup T_i$ and $M_{i} \cap G_{[\ell_i,r_i)} \subseteq T_i$. It follows by induction that $M_k\subseteq M_{k-1}\cup T_k\subseteq\cdots\subseteq S\cup T_1\cup\cdots\cup T_k$ and $M_k\cap G_{[\ell_i,r_i)}\subseteq T_i$ for all $i$.

    We now analyze $w(M_k)$. Starting with $M_0$, two kinds of changes happened to the matching. The first one is the edge deletion $D_1\cup\cdots\cup D_k$, and the second one is the edge substitution through $L_1\cup\cdots\cup L_k$. We analyze the total weight loss in each part respectively.
    \begin{enumerate}
        \item Since $\ell_i\geq r_{i-1} \eps^{-1}$, only edges in $S$ cause deletion. For any edge $e\in S$, it could cause at most 2 edges deletions with respect to every weight class $[\eps \ell_i, r_i \eps^{-1})$. The weight of the deleted edges in the $i$th weight class would be at least $\eps$ smaller than $w_e$ and at most $r_i$. Since $r_i\geq \ell_i\geq r_{i-1} \eps^{-1}$, the total weight of those deleted edges would be at most $w(e)\cdot \left(2\eps+2\eps^2+\cdots\right)\leq 4\varepsilon\cdot w(e)$. Thus 
        \[w(D_1\cup\cdots\cup D_k)\leq 4\eps \cdot w(M).\]
        \item For each of the substitution induced by $L_i$, notice that $L_i\subseteq G_{[\eps \ell_i,r_i \eps^{-1})}$, thus
        \[\sum_{P\in L_i}w(P\cap \widetilde M_{i-1})-w(P\cap T_i)\leq \mu_w(G_{[\eps \ell_i,r_i \eps^{-1})})-w(T_i).\]
        Therefore, the second part leads to a total weight loss of at most \[
        \sum_{i\in[k]}\left(\mu_w(G_{[\eps \ell_i,r_i \eps^{-1})})-w(T_i)\right).
        \]
    \end{enumerate}
\end{proof}

We also need the following helper lemma.

\begin{restatable}{lemma}{WeightCombination}\label{corollary:weight combination with gap}
For $\varepsilon\leq 1/6$, any graph $G$ and any set of $(1/\eps)$-spread weight classes\\$[\ell_1,r_1),\dots,[\ell_k,r_k)\subseteq \R$, we have
\[\sum_{i\in[k]}\mu_w(G_{[\ell_i,r_i)})\leq (1+4\varepsilon)\cdot \mu_w(G).\]
\end{restatable}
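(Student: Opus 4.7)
The plan is to combine a maximum weight matching in each $G_{[\ell_i,r_i)}$ into a single matching $M_k \subseteq G$ whose weight is at least $(1-2\eps)\sum_i \mu_w(G_{[\ell_i,r_i)})$. Since $M_k \subseteq G$ implies $w(M_k) \leq \mu_w(G)$, the claim will follow from the elementary inequality $(1-2\eps)^{-1} \leq 1+4\eps$, which holds whenever $\eps \leq 1/4$ and hence in particular for $\eps \leq 1/6$.

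First I would fix, for each $i \in [k]$, a MWM $S_i$ of $G_{[\ell_i,r_i)}$, so that $w(S_i) = \mu_w(G_{[\ell_i,r_i)})$. I would then construct $M_1,\dots,M_k \subseteq G$ inductively, taking $M_1 := S_1$ and, for $i \geq 2$, setting $M_i := S_i \cup (M_{i-1} \setminus D_i)$, where $D_i$ is the set of edges of $M_{i-1}$ sharing an endpoint with some edge of $S_i$. Since $S_i$ and $M_{i-1}$ are both matchings, so is each $M_i$, and by induction $M_i \subseteq S_1 \cup \dots \cup S_i \subseteq G$.

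The key quantitative step is the bound $w(D_i) \leq 2\eps \cdot w(S_i)$. Every edge of $M_{i-1}$ lies in some $S_j$ with $j < i$ and therefore has weight strictly less than $r_{i-1}$. Because $M_{i-1}$ is a matching, each vertex of $V(S_i)$ is covered by at most one edge of $M_{i-1}$, so $|D_i| \leq |V(S_i)| = 2|S_i|$, giving $w(D_i) < 2 r_{i-1} |S_i|$. Using the $(1/\eps)$-spread hypothesis $\ell_i \geq r_{i-1}/\eps$ together with $|S_i| \cdot \ell_i \leq w(S_i)$, this yields $w(D_i) \leq 2\eps \cdot w(S_i)$, and hence $w(M_i) \geq w(M_{i-1}) + (1-2\eps) w(S_i)$.

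Unrolling this recurrence from $w(M_1) = w(S_1) \geq (1-2\eps) w(S_1)$ gives $w(M_k) \geq (1-2\eps) \sum_{i \in [k]} w(S_i)$. Combining with $w(M_k) \leq \mu_w(G)$ and the inequality $(1-2\eps)^{-1} \leq 1+4\eps$ completes the proof. I do not anticipate any substantive obstacle; the only subtle point is the accounting for removed edges, which is handled by the observation that $M_{i-1}$ being a matching prevents any double-counting when each newly added $S_i$-edge evicts at most two adjacent edges.
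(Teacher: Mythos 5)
Your proof is correct and rests on the same core idea as the paper's: greedily aggregate the per-class MWMs into a single matching $M \subseteq G$, using the $(1/\eps)$-spread to argue that whenever a high-weight edge evicts lower-class edges, the evicted weight is at most an $O(\eps)$-fraction of what is kept. The only cosmetic difference is the order of iteration: the paper processes edges of $\bigcup_i M_i$ in decreasing order of weight and charges each kept edge $e$ with the $\leq 2$ evicted edges per lower class (giving $\sum_f w(f) \leq (1+4\eps)w(M)$ directly), whereas you sweep through the classes from low to high, letting each new $S_i$ displace conflicting lower-class edges and bounding the displaced weight by $2\eps\, w(S_i)$, then invert $(1-2\eps)^{-1} \leq 1+4\eps$ at the end. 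Both are sound and yield the identical constant.
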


\begin{proof}
  Suppose that $\ell_1 < \ell_2 < \cdots < \ell_k$.
  Let $M_i$ be a MWM on $G_{[\ell_i,r_i)}$, and let $H \defeq M_1 \cup \cdots \cup M_k$.
  Let $M$ be the matching obtained by the following greedy process:
  While $H$ is non-empty, we pick an edge $e$ in $H$ with the maximum weight and include it into $M$.
  Then, to ensure that the next edge we pick from $H$ still forms a matching with $M$, we remove all edges in $H$ that are adjacent to $e$ (including $e$ itself).
  Observe that if an edge $f$ is removed from $H$ by $e$, then we must have $w(f) \leq w(e)$.
  Let $i_e$ be such that $e \in M_{i_e}$.
  We also have that for each $j < i_e$, at most two edges from $M_j$ will be removed by $e$ (the two matched edges in $M_j$ for the endpoints of $e$).
  As the weight classes are $(1/\eps)$-spread, we have
  \[
    \sum_{f\;\text{removed by}\;e}w(f) \leq w(e) \cdot (1 + 2 \cdot (\eps + \eps^2 + \cdots)) \leq (1+4\eps) \cdot w(e).
  \]
  At the end of the process, $H$ will become empty.
  In other words, each edge in $H$ is removed by some edge in $M$.
  This shows that
  \[
    \sum_{i \in [k]}\mu_w(G_{[\ell_i, r_i)}) = \sum_{f \in H}w(f) \leq (1+4\eps) \cdot w(M) \leq (1+4\eps) \cdot \mu_w(G).
  \]
\end{proof}

We can now prove the matching composition lemma.

\MatchingComposition*

\begin{proof}
    Let $g = \lceil\log_{\delta}(1/\eps^3)\rceil+1$, and for all $j \in \{0, \dots, g - 1\}$, let $I_{j}=\{i \in [k] : i \equiv j\bmod g\}$. For the weight classes in each $I_j$, the weight gap between neighboring weight classes is at least $\delta^{g-1}\geq 1/\eps^3$. The set of weight classes $\{[\ell_i,r_i):i\in I_j\}$ is $(1/\eps^3)$-spread, and thus the set of padded weight classes $\{[\eps\ell_i,r_i\eps^{-1}):i\in I_j\}$ is $(1/\eps)$-spread. Consider any exact MWM $M^*$ on $G$. We will start with the initial source matching $S_0=M^*$, and for $j=0,1,\dots,g-1$, sequentially apply~\cref{lemma:matching substitution lemma} on the source matching $S_j$ and target matchings $\{M_i\mid i\in I_{j}\}$ and get $S_{j+1}$. For a fixed $j$, since $M_i$ is a $(1-\eps)$-approximate MWM on $G_{[\eps \ell_i,r_i \eps^{-1})}$ for $i\in I_j$, \cref{lemma:matching substitution lemma,corollary:weight combination with gap} give us a matching $S_{j+1}\subseteq S_{j}\cup (\bigcup_{i\in I_{j}} M_i)$ that satisfies
    \begin{align*}
    w(S_{j+1})&\geq (1-4\eps)w(S_{j})-\eps\cdot \sum_{i\in I_{j}}\mu_w(G_{[\eps \ell_i,r_i \eps^{-1})})\\
    &\geq (1-4\eps)w(S_{j})-\eps(1+4\eps)\cdot \mu_w(G)
    \geq (1-4\eps)w(S_{j})-3\eps\cdot \mu_w(G),
    \end{align*}
    and that $S_{j+1} \cap G_{[\ell_i,r_i)} \subseteq M_i$ for all $i \in I_j$.
    By induction, we have
    \[w(S_{j+1})\geq (1-4(j+1)\eps)\cdot w(S_0)-3(j+1)\eps\cdot \mu_w(G)\geq (1-7(j+1)\eps)\mu_w(G),\]
    and that
    \[S_{j+1}\cap G_{[\ell_i,r_i)}\subseteq \left(S_j\cap G_{[\ell_i,r_i)}\right)\cup \left(\bigcup_{t\in I_j} M_t\right)\subseteq\dots\subseteq \bigcup_{l=j^\prime}^j \bigcup_{t\in I_l} M_t\]
    hold for all $j^\prime \leq j$ and $i \in I_{j^\prime}$.
    Thus, we have
    \[w(S_{g})\geq (1-O(g\cdot\eps))\mu_w(G)\geq (1-O(\log_{\delta}(1/\eps)\cdot \eps))\mu_w(G),\]
    and
    \[ S_g\cap G_{[\ell_j,r_j)}\subseteq \bigcup_{i\in[k]}M_i\]
    for all $j \in [k]$.
    Therefore, the matching $S_g$ is contained in the union of all $M_i$'s and consequently
    \[\mu_{w}(M_1\cup M_2\cup\dots\cup M_k)\geq w(S_g)\geq (1-O(\log_{\delta}(1/\eps)\cdot \eps))\mu_w(G).\]
\end{proof}

\newcommand{\Modd}{M_{\textrm{odd}}}
\newcommand{\Meven}{M_{\textrm{even}}}
\newcommand{\Iodd}{I_{\textrm{odd}}}
\newcommand{\Ieven}{I_{\textrm{even}}}

\section{Framework}\label{sec:framework}
In this section, we will describe our framework in detail. As suggested by~\cref{lemma:matching composition lemma}, we first fix a $\Theta(\eps^{-3})$-wide weight partition, and compute a $(1-\eps)$-approximate MWM on each ``padded'' weight classes with aspect ratio $\Theta(\eps^{-5})$.
The choice of width ensures that the set of odd ``padded'' weight classes has $\Theta(1/\eps)$ weight gaps and so does the set of even ones. We use a subroutine~\cref{alg:locally greedy} in~\cref{sec:framework:well-separated} to aggregate odd matchings and even matchings, and maintain a $(1-\eps)$-approximate MWM on the union of them using the second subroutine~\cref{alg:degree-two} in~\cref{sec:framework:degree-two}. In~\cref{sec:framework:reduction} we give the complete framework to reduce the aspect ratio with multiplicative $\poly(1/\eps)$ overhead. In~\cref{sec:low-recourse}, we introduce a low-recourse transformation for $(1-\eps)$-approximate dynamic MWM to change the multiplicative $\poly(1/\eps)$ overhead to an additive $\poly(1/\eps)$ overhead. Finally, in~\cref{sec:putting everything together}, we use the low-recourse transformation to obtain an efficient fully dynamic algorithm on low-degree graphs, which leads to an efficient weighted rounding algorithm and could also serve as an efficient aggregation that allows us to reduce the aspect ratio to $O(\eps^{-2})$, which is the best we can hope for based on~\cref{lemma:matching composition lemma}. Also, combined with~\cite{BernsteinDL21} we achieve a $\poly(1/\eps)$ multiplicative overhead reduction from weighted matching algorithms to unweighted ones in bipartite graphs.

\subsection{Dynamic Approximate MWM on Matchings in \texorpdfstring{$(1/\eps)$-Spread}{(1-e)-Spread} Weight Classes}\label{sec:framework:well-separated}

Our first subroutine is an improved algorithm that combines matchings in weight classes that are sufficiently spread.
In particular, the goal is to solve the following problem.

\WellSeparated

As mentioned in \cref{sec:overview:further-improvements}, our improvement comes from maintaining the following \emph{locally greedy census} matchings.

\begin{restatable}[Locally Greedy Census]{definition}{LocalGreedy}
Consider $k$ matchings $M_1,M_2,\dots,M_k$. A matching $M$ is a \emph{locally greedy census matching} of $M_1,M_2,\dots,M_k\subseteq G$ if for every edge $e \in M_i\setminus M$, there exists an $f \in M_j$ such that $e \cap f \ne \emptyset$ for some $j> i$.   
\end{restatable}

The above local notion should be compared with the standard greedy census matching considered in \cite{AnandBGS12,GuptaP13,StubbsW17}.
In the standard notion, an edge can only be removed if it is incident to some higher-weight edge \emph{that is included into the output matching}.
In contrast to that, in our locally greedy census matching, if an edge is incident to \emph{any} higher-weight edge, regardless of whether that edge is in the output matching we are allowed to remove it.
This allows us to consider potentially much fewer edges when maintaining the local greedy census matching.
Nevertheless, we show that a similar charging argument can be used to prove the following guarantee.

\begin{restatable}{lemma}{LocalGreedyApproximation}\label{lemma:local greedy:approximation}
For $\varepsilon\leq 1/2$, any set of $(1/\eps)$-spread weight classes $[\ell_1,r_1),\dots,[\ell_k,r_k)\subseteq\mathbb R$, and matchings $M_1,\dots, M_k\subseteq G$ satisfying $M_i\subseteq G_{[\ell_i,r_i)}$ for all $i \in [k]$, every locally greedy census matching $M$ over the union of $M_1,\dots,M_k$ satisfies
\[
    w(M)\geq (1-4\eps)\sum_{i\in[k]} w(M_i).
\]
\end{restatable}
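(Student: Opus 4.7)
The plan is to prove this by a direct weight-charging argument, exploiting the fact that the weight classes are $(1/\eps)$-spread and that each $M_i$ is a matching.

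First, I would set up the charging scheme as follows. Fix any locally greedy census matching $M$. For every edge $e \in M_i \setminus M$, the local greediness property guarantees some $f = f(e) \in M_j$ with $j > i$ that shares an endpoint with $e$. I charge $w(e)$ to this edge $f(e)$. Note that $f(e)$ need not be in $M$; I just use it as an accounting target among $\bigcup_i M_i$. The goal is to show that for each edge $f$, the total charge it receives is at most $4\eps \cdot w(f)$.

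For the second step, fix any $f \in M_j$. The only edges that can charge to $f$ are edges $e \in M_i$ with $i < j$ that are incident to $f$. Since $M_i$ is a matching, each of the two endpoints of $f$ has at most one $M_i$-edge adjacent to it, so at most $2$ edges from $M_i$ can ever charge to $f$. Using the $(1/\eps)$-spread property iteratively ($r_i \le \eps \ell_{i+1} \le \eps r_{i+1}$), I would deduce that $r_i \le \eps^{j-i}\,\ell_j \le \eps^{j-i}\, w(f)$, so any edge $e\in M_i$ incident to $f$ satisfies $w(e) \le \eps^{j-i} w(f)$. Summing the geometric series over $i<j$ yields
\[
\text{total charge to } f \;\le\; \sum_{i<j} 2\eps^{j-i}\, w(f) \;\le\; \frac{2\eps}{1-\eps}\, w(f) \;\le\; 4\eps\, w(f),
\]
where the last inequality uses $\eps \le 1/2$.

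Finally, I would aggregate the charges globally. By the charging, $\sum_{i} w(M_i \setminus M) \le \sum_{f \in \bigcup_j M_j} (\text{charge to } f) \le 4\eps \sum_j w(M_j)$, where the last step uses that $M_1,\dots,M_k$ are pairwise edge-disjoint (their weight classes are disjoint). Rearranging gives $\sum_i w(M_i \cap M) \ge (1-4\eps)\sum_i w(M_i)$, and since the $M_i$ are disjoint, $w(M) \ge w\bigl(M \cap \bigcup_i M_i\bigr) = \sum_i w(M_i \cap M) \ge (1-4\eps)\sum_i w(M_i)$, proving the lemma. I don't anticipate a serious obstacle: the calculation is entirely routine once the spread property is translated into the bound $r_i \le \eps^{j-i} w(f)$; the only subtlety is noting that although the charged target $f(e)$ may not lie in $M$, the lost weight is nevertheless controlled by $4\eps\sum_j w(M_j)$, which suffices since the matchings $M_i$ are pairwise disjoint by the disjointness of their weight classes.
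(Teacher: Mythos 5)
Your proof is correct and uses essentially the same argument as the paper's: charge each excluded edge $e \in M_i \setminus M$ to a higher-weight incident edge $f \in M_j$ ($j > i$) guaranteed by local greediness, bound the total charge to any $f$ by $\frac{2\eps}{1-\eps}w(f) \leq 4\eps\, w(f)$ via the $(1/\eps)$-spread geometric series, and sum. Your write-up simply makes the charging function $f(e)$ and the final aggregation step explicit, whereas the paper's version is terser.
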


\begin{proof}
    The proof idea is similar to \cref{corollary:weight combination with gap}.
    For any edge $e\in M_j$, at most two edges in each lower weight class $i<j$ are 
    not included in $M$ because of $e$, and the total weight of these edges is at most
    \[\sum_{i \in [j-1]}2\cdot w(e)\cdot \eps^{-(i-j)}\leq \frac{2\eps}{1-\eps}\cdot w(e).\]
    Thus,
    \[w(M)\geq \left(1-\frac{2\eps}{1-\eps}\right)\sum_{i\in[k]}w(M_i)\geq (1-4\varepsilon)\sum_{i\in[k]} w(M_i). \]
\end{proof}

\begin{algorithm2e}[!ht]
  \caption{Dynamic Locally Greedy Census Matching} \label{alg:locally greedy}
  
  \SetEndCharOfAlgoLine{}

  \SetKwInput{KwData}{Input}
  \SetKwInput{KwResult}{Output}
  \SetKwInOut{State}{global}
  \SetKwProg{KwProc}{function}{}{}
  \SetKwFunction{Initialize}{Initialize}
  \SetKwFunction{Insert}{Insert}
  \SetKwFunction{Delete}{Delete}

  \KwProc{\Initialize{}} {
    \For{each node $u\in G$} {
      Initialize its neighborhood $N_u\gets\emptyset$.\;
    }
    \For{$j=k,\ldots,1$} {
      \For{$uv \in M_j$} {
        \If{$N_u=\emptyset$ and $N_v=\emptyset$}{
          Add $uv$ to the matching.\;
        }
        Add $uv$ to $N_u$ and $N_v$.\;
      }
    }
  }

  \KwProc{\Insert{$j,uv$}} {
    Add $uv$ to $N_u$ and $N_v$.\;
    \If{$u$ is matched to some vertex $u^\prime$, and $u u^\prime \in M_i$ such that $i<j$}{
      Delete $u u^\prime$ from the matching.
    }
    \If{$v$ is matched to some vertex $v^\prime$, and $vv^\prime \in M_i$ such that $i<j$}{
      Delete $vv^\prime$ from the matching.
    }
    \If{$uv$ is in the highest weight class among $N_u$ and $N_v$}{
      Add $uv$ to the matching.
    }
  }
  
  \KwProc{\Delete{$j,uv$}} {
    Delete $uv$ from $N_u$ and $N_v$.\;
    \If{$N_u$ is not empty}{
      $uu^\prime \gets$ the edge in the highest weight class among $N_u$.\;
      \If{$u u^\prime$ is in the highest weight class among $N_{u^\prime}$.}{
        Add $uu^\prime$ to the matching.\;
      }
    }
    \If{$N_v$ is not empty}{
      $vv^\prime \gets$ the edge in the highest weight class among $N_v$.\;
      \If{$vv^\prime$ is in the highest weight class among $N_{v^\prime}$.}{
        Add $vv^\prime$ to the matching.\;
      }
    }
  }
\end{algorithm2e}

We show that our modified notion allows us to maintain a locally greedy census matching more efficiently than what is achieved in \cite{StubbsW17} for the non-local version.
Remarkably, our algorithm achieves a constant update time when there are only $O(\log n)$ matchings.

\begin{restatable}{theorem}{LocalGreedyRuntime}\label{thm:local greedy:runtime}
    \cref{alg:locally greedy} initializes in $O(m)$ time and solves \cref{problem:well-separated weight classes} by dynamically maintaining a locally greedy census matching with $\min\{O(\log k),O(k/\log n)\}$ worst-case update time and $O(1)$ worst-case recourse.
\end{restatable}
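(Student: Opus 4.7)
The plan is to verify the four claims of the theorem separately: (i) the maintained set $M$ is always a locally greedy census matching and is a valid matching; (ii) the initialization runs in $O(m)$ time; (iii) each \Insert/\Delete takes $\min\{O(\log k), O(k/\log n)\}$ time; and (iv) each operation causes $O(1)$ changes to $M$.

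For correctness I would work with the following invariant maintained after every operation: $M$ is a matching, and for every $e = uv \in \bigcup_i M_i$ not in $M$, there is an incident edge $f \in M_j$ with $j$ strictly larger than the class of $e$. After initialization this is immediate: when we sweep classes from $k$ down to $1$, the edge $uv \in M_j$ is added to $M$ iff neither $N_u$ nor $N_v$ already contains a neighbor, which at that point consists exactly of the edges in classes $>j$, and otherwise the blocking edge is a certificate for the invariant. For the inductive step consider \Insert{$j,uv$}. One has to check three kinds of edges: (a) $uv$ itself; (b) previously matched edges $uu'\in M_i, vv'\in M_{i'}$ that may be expelled from $M$; (c) all other edges of the $M_i$'s. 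For (a) the explicit highest-class check at both endpoints directly realizes the invariant. The key observation for (b) is that if $uu'$ was in $M$ before the insertion, it was the highest-class edge at $u'$, so every other edge at $u'$ has class $\le i < j$ and is still certified by $uu'\in M_i$ which is unchanged at $u'$ (only the matching membership of $uu'$ changed, not its presence in the weight classes). For (c), the only neighborhoods that change are $N_u$ and $N_v$, and the only new neighbor is the freshly inserted $uv$, which cannot invalidate any higher-class certificate. An analogous case analysis handles \Delete, where removing $uv$ can only affect edges whose certificate was $uv$ itself; in that case the two re-insertions at $u$ and $v$ (if $N_u, N_v$ are non-empty) reinstate certificates from the next highest class.

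For the running-time claims the main subroutine is ``find the highest non-empty class in $N_u$''. I would provide two implementations of each $N_u$ and let the algorithm pick whichever bound is smaller. The first is a balanced BST (or indexed priority queue) keyed by class, storing multiplicities, which supports insert/delete/findMax in $O(\log k)$ time. The second packs a $k$-bit presence vector (whose $i$-th bit indicates that $u$ has at least one class-$i$ neighbor) into $\lceil k/\log n\rceil$ machine words, together with per-class counters so that the bit toggles correctly; finding the highest set bit is a word-parallel scan plus a constant-time MSB instruction, costing $O(k/\log n)$. Each \Insert/\Delete performs $O(1)$ such operations, yielding the $\min\{O(\log k), O(k/\log n)\}$ bound. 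Initialization touches each of the $m$ edges once, doing $O(1)$ constant-work data-structure updates for each, for $O(m)$ total. For recourse, an \Insert adds at most the single edge $uv$ and removes at most two edges $uu', vv'$; a \Delete removes at most $uv$ and adds at most one edge at each of $u,v$, so the change in $M$ is bounded by a constant.

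The main obstacle I expect is the correctness argument, specifically verifying that the asymmetric ``$i<j$'' removal rule combined with the purely local highest-class check still maintains the global locally greedy property for all edges incident to the newly unmatched endpoints $u',v'$ after an \Insert. The clean way to finish it is the observation above that $uu'\in M$ pre-update forces every other edge at $u'$ to belong to a class $\le i$, so those edges never required $uu'$ itself to be in $M$ as their certificate — only $uu'\in M_i$ — which is why no explicit re-matching of $u'$ is needed. A secondary subtlety is handling ties: edges of equal class at a vertex must be ordered consistently (e.g.\ by edge id within a class) so that ``highest class at $u$ and at $v$'' identifies a unique candidate and the matching invariant is not broken by the final insertion. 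Once these points are settled, combining correctness with~\cref{lemma:local greedy:approximation} gives the $(1-4\eps)$-approximation needed for~\cref{problem:well-separated weight classes}, completing the theorem.
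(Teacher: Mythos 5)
Your invariant is too weak to carry the induction. You propose to maintain ``$M$ is a matching, and every $e\in\bigcup_i M_i$ not in $M$ has an incident edge in a higher class,'' i.e.\ matching $+$ the LGCM property. The paper instead maintains the \emph{characterization} ``$uv\in M$ if and only if $uv$ is in the highest weight class among $N_u$ and $N_v$,'' which is a specific LGCM and is strictly stronger: there are matching-plus-LGCM configurations that violate it. For example, with $M_1=\{uw\}$, $M_2=\{uu'\}$, $M_3=\{u'x\}$, the set $M=\{uw,u'x\}$ is a matching satisfying LGCM (the certificate for $uu'$ is $u'x$), yet $uw\in M$ is \emph{not} the highest edge at $u$. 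Your own proof of step (b) relies on the forward implication ``$uu'\in M\Rightarrow uu'$ is highest at $u'$,'' which simply does not follow from your stated hypothesis.

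Where this actually bites is the \Delete{} case, which you only sketch ``analogously.'' In $\Delete(j,uv)$ the algorithm finds the highest edge $uu'$ of the updated $N_u$ and, if it is also highest in $N_{u'}$, adds it to $M$ \emph{without first checking that $u$ and $u'$ are free}. That this never causes a conflict is exactly the forward direction of the iff. Under matching $+$ LGCM alone one can exhibit a state --- $N_u=\{uw,uu',uv\}$ with classes $1,3,4$, $N_{u'}=\{uu',u'y\}$ with classes $3,2$, $M\supseteq\{uw,u'y\}$, and $uv\notin M$ certified by a class-$5$ edge at $v$ --- where the state is a matching satisfying LGCM, yet deleting $uv$ removes $uu'$'s only certificate, makes $uu'$ the highest at both $N_u$ and $N_{u'}$, and the algorithm would then add $uu'$ on top of two matched edges. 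So the inductive step for \Delete{} does not close from your hypothesis. (Incidentally, for \Insert{} the iff observation you invoke in (b) is actually unnecessary: $N_{u'}$ is unchanged by the insertion, and an LGCM certificate need not lie in $M$, so edges at $u'$ keep their certificates automatically; but the point still stands that your argument uses a fact it hasn't established.)

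The repair is to adopt the paper's iff characterization as the invariant; it implies both that $M$ is a matching (no two edges can simultaneously be uniquely-highest at a shared vertex) and the LGCM property, and the local checks in \Insert{} and \Delete{} visibly re-establish exactly that condition on the only edges whose highest-class status can change. Your runtime and recourse arguments (BST vs.\ packed bit-vector with MSB lookup, $O(1)$ changes per operation) match the paper's and are fine, modulo the minor point that the per-class multiplicity counters you add are unnecessary since each $N_u$ has at most one edge per class.
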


\begin{proof}
    For any edge $uv$, it is contained in the locally greedy census matching if and only if it is in the highest weight class among $N_u$ and $N_v$. By definition, after the initialization, \cref{alg:locally greedy} maintains a locally greedy census matching. And the initialization takes $O(m)$ time.
    
    For an edge update $uv$, the only possible changes in the locally greedy census matching are in $N_u$ and $N_v$. For insertion of $uv$, \cref{alg:locally greedy} checks whether the edges related to $u$ and $v$ in the current matching still satisfies the condition, and whether $uv$ can be added. For deletion of $uv$, only the edges in the highest weight class among $N_u$ or $N_v$ can be added into the matching. \cref{alg:locally greedy} finds those edges and checks whether the condition is met. Therefore, it maintains a locally greedy census matching and the worst-case recourse is $O(1)$.

    For each node $u$, there is at most one edge from each weight class in $N_u$, i.e., $|N_u|\leq k$. To maintain the maximum element in $N_u$,
    we can use a binary search tree which runs in $O(\log k)$ time. Both $\texttt{Insert}$ and $\texttt{Delete}$ have $O(1)$ number of updates and queries to the binary search tree. Therefore, the update time would be $O(\log k)$.
    
    Alternatively, we can use a packed bit-representation of the weight-class information in $N_u$. We set the $i$-th bit to be 1 if and only if there is an edge in $N_u$ in the $(k-i)$-th weight class. Thus to find the edge in the highest weight class among $N_u$, it suffices to look at the lowest bit in the representation, which can be done in $O(k/\log n)$ time in the word RAM model.
\end{proof}

\subsection{Dynamic Approximate MWM on Degree-Two Graphs}\label{sec:framework:degree-two}

After combining the odd and even matchings with our locally greedy census matching algorithm, we are left with a union of two matchings which is a graph with maximum degree at most two.
That is, we need to solve the following problem.

\begin{restatable}[Fully-Dynamic $(1-\eps
)$-Approximate MWM on Degree-Two Graphs]{problem}{DegreeTwo}\label{problem:degree-two}
    Given a graph $G$ undergoing edge
    updates satisfying that its maximum degree is at most two. The task is to dynamically maintain a matching $M$ satisfying the following condition:
    \[w(M)\geq (1-O(\eps))\cdot \mu_w(G).\]
\end{restatable}

Observe that a degree-two graph consists of paths and cycles. Since an exact MWM on a path or cycle
$P$ can be computed in $O(|P|)$ time with dynamic programming, it suffices to maintain a collection of short paths and cycles on which a large-weight matching is supported.
For this, one can delete the minimum weight edge in each $\Theta(1/\eps)$-length neighborhood while keeping a $1-O(\eps)$ fraction of the total weight. We propose \cref{alg:degree-two} to solve \cref{problem:degree-two} by dynamically maintaining this $O(1/\eps)$-length decomposition of the paths and cycles and computing an exact MWM on each piece.

\begin{lemma}[Dynamic Path/Cycle Maintainer]\label{lemma:dynamic path-cycle maintainer}
There is a deterministic data structure $\mathcal D$ that maintains a set of dynamic paths or cycles $\{P_i\}$ under the insertion/deletion of edges and supports the following operations, where all update times and recourse mentioned are worst-case:
\begin{enumerate}
    \item Find the path/cycle $P_u$ that $u$ belongs to in $O(|P_u|)$ time.
    \item \emph{\texttt{FindHeads}$(P)$}: For a path $P$, find its both ends in $O(|P_u|)$ time.
    \item \emph{\texttt{Insert}/\texttt{Delete}$(uv)$}: Insert/delete an edge $uv$ in $O(|P_u|+|P_v|+1)$ time.
    \item \emph{\texttt{FindMin}$(P,h,\ell,r)$}: For a path $P$, find the edge with the minimum weight between the $\ell$-th and the $r$-th edges counting from $h$, one of the end of $P$, in $O(|P|)$ time.
    \item For a path/cycle $P$, explicitly maintain its MWM in $O(|P|)$ time and recourse. 
\end{enumerate}
\end{lemma}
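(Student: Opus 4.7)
The plan is to represent each dynamic path or cycle as a doubly-linked list. Concretely, for every vertex $u$ we store an adjacency array of size at most two, where each slot holds a pointer to a neighbor together with the weight of the corresponding edge. Because the maximum degree is at most two, every connected component of the underlying graph is either a simple path or a simple cycle, and this representation captures each such component as an explicit list of vertices threaded by the adjacency pointers.

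Operations 1, 2, and 4 are all realized by a single linear walk along the list. For operation 1 we start at $u$ and walk in both directions, terminating at a vertex of degree at most one (identifying a path endpoint) or upon returning to $u$ (identifying a cycle); each step is $O(1)$ and each vertex of $P_u$ is visited a constant number of times, so the cost is $O(|P_u|)$. Operation 2 (\texttt{FindHeads}) is the same walk, returning the two endpoints reached. Operation 4 (\texttt{FindMin}) starts at the given head $h$, advances along the list while counting edges, and returns the minimum-weight edge whose index from $h$ lies in $[\ell, r]$; this is also $O(|P|)$. For operation 3, an insertion simply appends a neighbor record to each of the (previously size-at-most-one) adjacency arrays of $u$ and $v$, and a deletion removes the corresponding records; the extra $O(|P_u|+|P_v|)$ budget absorbs the cost of invoking operation 1 on $u$ and $v$ afterwards to re-identify the resulting component(s), since a single update may merge two paths, close a path into a cycle, split a cycle into a path, or split a path into two paths.

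Operation 5 is the only non-elementary ingredient. For a path we walk from one endpoint (located via operation 2) and run the classical two-state dynamic program: at each vertex we record the best matching weight of the prefix in each of the states ``last vertex matched'' and ``last vertex unmatched'', together with a backpointer. This produces the MWM and its edge set in $O(|P|)$ time. For a cycle we fix any edge $e$, run the path DP twice---once with $e$ forced into the matching (and its endpoints forbidden to be rematched) and once with $e$ forbidden---and keep the better of the two outcomes; this is still $O(|P|)$. To maintain the matching explicitly, we rerun this DP on the (at most two) components containing $u$ or $v$ after each update and emit as recourse the symmetric difference with the previously stored matching restricted to those components; both matchings lie inside the affected components, so the recourse is $O(|P|)$. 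The only point of care, which I do not expect to be a serious obstacle, is a small case analysis verifying that each insertion/deletion preserves the doubly-linked list invariant; this reduces to the four structural effects listed above, each handled uniformly by the adjacency-array edit.
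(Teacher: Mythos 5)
Your proof is correct and takes essentially the same approach as the paper: a linked-list representation traversed linearly for operations 1--4, and the standard two-state prefix dynamic program with backpointers for operation 5. You are a bit more careful than the paper in one respect---you explicitly spell out how to handle the MWM on a cycle (fixing an edge $e$ and running the path DP twice, once with $e$ forced in and once forbidden), whereas the paper's proof only states the recurrence for paths, leaving the cycle case implicit.
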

\begin{proof}
    We can check all elements in a path in linear time in its size. Thus the first 4 operations are straightforward to achieve.
    Now we prove that it can output the MWM. Consider the following dynamic programming for computing MWM on paths. For a path $P$, number its edges from $P_1$ to $P_{|P|}$. Denote $f_{i,0/1}$ as the MWM on the path $P_{1}\dots P_i$ when $P_i$ is in the matching or not.
    For any $i\leq |P|$, $f_{i,0/1}$ can be computed by
    \[f_{i,x}=w_i\cdot x+\max_{0\leq y\leq 1-x} f_{i-1,y}.\]
    Therefore, the value can be computed in $O(|P|)$ time and the edge list corresponding to the MWM can be inferred by taking notes of how each state is updated.
\end{proof}

\begin{algorithm2e}[!ht]
  \caption{Fully-Dynamic $(1-\varepsilon)$-Approximate MWM on Degree-Two Graphs}
  
  \SetEndCharOfAlgoLine{}

  \SetKwInput{KwData}{Input}
  \SetKwInput{KwResult}{Output}
  \SetKwInOut{State}{global}
  \SetKwProg{KwProc}{function}{}{}
  \SetKwFunction{Initialize}{Initialize}
  \SetKwFunction{Insert}{Insert}
  \SetKwFunction{Delete}{Delete}
  \SetKwFunction{Maintain}{Maintain}

  \KwProc{\Initialize{}} {
    $\mathcal D\gets$ an instance of the dynamic path/cycle maintainer described in \cref{lemma:dynamic path-cycle maintainer}.\;
    $R\gets \emptyset$.\;
    \lFor{$uv \in E$}{\Insert{$uv$}.}
  }

  \KwProc{\Insert{$u, v$}} {
    $\mathcal D.\texttt{Insert}(uv)$.\;
    \lIf{$P_u$ is a path}{\Maintain{$P_u$}.}
  }
  
  \KwProc{\Delete{$uv$}} {
    \lIf{$uv\in R$}{
      $R\gets R\setminus uv$.
    }{
      $\mathcal D.\texttt{Delete}(uv)$.\;
      $h_u,u\gets \mathcal D.\texttt{FindHeads}(P_u)$.\;
      \If{there is an edge $h_uh_u^\prime\in R$}{
        $R\gets R\setminus h_u h_u^\prime$.\;
        $\mathcal D.\texttt{Insert}(h_u h_u^\prime)$.\;
        \Maintain{$P_u$}.\;
      }
      $h_v,v\gets \mathcal D.\texttt{FindHeads}(P_v)$.\;
      \If{there is an edge $h_v h_v^\prime \in R$}{
        $R\gets R\setminus h_v h_v^\prime$.\;
        $\mathcal D.\texttt{Insert}(h_v h_v^\prime)$.\;
        \Maintain{$P_v$}.\;
      }
    }
  }
  \KwProc{\Maintain{$P$}} {
    $h,t\gets \mathcal D.\texttt{FindHeads}(P)$.\;
    \If{$|P|\geq 3\lceil\eps^{-1}\rceil$}{
      $uv \gets \mathcal D.\texttt{FindMin}(P,h,\lfloor(|P|-\lceil\eps^{-1}\rceil)/2\rfloor,\lfloor(|P|-\lceil\eps^{-1}\rceil)/2\rfloor+\lceil\eps^{-1}\rceil-1)$.\label{line:degree two:split}\;
      $\mathcal D.\texttt{Delete}(uv)$.\;
      $R\gets R\cup \{uv\}$.\;
      \Maintain{$P_u$}.\;
      \Maintain{$P_v$}.\;
    }
  }
\label{alg:degree-two}
\end{algorithm2e}

\begin{lemma}\label{lemma:degree two:length}
    During the execution of \cref{alg:degree-two}, $\mathcal D$ maintains a set of paths/cycles with length at most $3\lceil\eps^{-1}\rceil$.
\end{lemma}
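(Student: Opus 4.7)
The plan is to prove a slightly strengthened invariant by induction on the sequence of user-facing \texttt{Insert} and \texttt{Delete} operations. Write $k = \lceil\eps^{-1}\rceil$ for brevity. I will show that after every such operation, every path stored in $\mathcal D$ has at most $3k-1$ edges and every cycle has at most $3k$ edges. The refinement for paths (strictly less than $3k$) is what lets an \texttt{Insert} between the two endpoints of a path close it into a cycle of length at most $3k$.

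The key technical ingredient is an auxiliary claim about \texttt{Maintain}: whenever \texttt{Maintain}$(P)$ returns, every subpath of $\mathcal D$ descended from $P$ has length at most $3k - 1$. This is proved by strong induction on $|P|$. If $|P| < 3k$, the procedure returns immediately, giving the bound. Otherwise, writing $n = |P|$, I would check that the candidate interval $[\ell,r] = [\lfloor(n-k)/2\rfloor,\lfloor(n-k)/2\rfloor + k - 1]$ lies within $[1, n]$, so \texttt{FindMin} returns a valid split edge at some position $j \in [\ell,r]$; removing it produces two paths of lengths $j-1$ and $n-j$, both strictly less than $n$ and nonnegative. The inductive hypothesis applied to the two recursive calls then finishes the claim.

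With the auxiliary claim in hand, I would carry out the induction on operations. For \texttt{Insert}$(u,v)$, the degree-two promise ensures that neither $u$ nor $v$ lies on a cycle prior to the insertion, so both are endpoints of (possibly trivial) paths. If they sit on the same path $P$, the inductive hypothesis gives $|P| \leq 3k-1$, so the cycle formed has at most $3k$ edges and the algorithm correctly refrains from calling \texttt{Maintain} on it. Otherwise $u$ and $v$ lie on distinct paths; their concatenation is handed to \texttt{Maintain}, and the auxiliary claim certifies that all resulting sub-paths have length at most $3k - 1$. For \texttt{Delete}$(uv)$, if $uv \in R$ then $\mathcal D$ is untouched. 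Otherwise removing $uv$ from $\mathcal D$ either splits a path into two pieces of length at most $3k-2$ or turns a cycle of length at most $3k$ into a path of length at most $3k-1$; in both cases the invariant is preserved before the clean-up. The subsequent optional re-insertions from $R$ at $h_u$ and $h_v$ are each followed by \texttt{Maintain}, so the auxiliary claim once more ensures every resulting path has length at most $3k-1$.

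The main obstacle I anticipate is purely a careful verification that the \texttt{FindMin} window is well-defined and that the two sub-paths produced by each split are both strictly shorter than the input to \texttt{Maintain}; everything else is a straightforward case analysis driven by the auxiliary claim. I will also note in passing that cycles never appear as the argument to \texttt{Maintain}, which is essential since a cycle of length exactly $3k$ would otherwise violate the strict path bound after splitting.
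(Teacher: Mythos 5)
Your proof is correct and follows the same approach as the paper's (much terser) argument: a strong induction on $|P|$ shows that \texttt{Maintain} always returns with every descendant path of length at most $3\lceil\eps^{-1}\rceil - 1$, and cycles can only arise by closing a path that is already that short. The fact you flag but do not prove---that \texttt{Maintain} is never handed a cycle, in particular after the re-insertion of an $R$-edge inside \texttt{Delete}---is also taken for granted by the paper, and it does hold: the endpoint adjacent to the freshly deleted edge has at most one remaining edge in $G$, and that edge lies in $\mathcal D$ rather than $R$, so no $R$-edge at the far endpoint can close the current path into a cycle.
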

\begin{proof}
    The length of a path only increases after an edge insertion in $\mathcal D$, and \cref{alg:degree-two} calls $\Maintain$ every time which splits the path into two whenever its length is at least $3\lceil\eps^{-1}\rceil$. Thus the paths have lengths at most $3\lceil\eps^{-1}\rceil-1$. The only case $\mathcal D$ keeps a cycle is that before the formation of that cycle, the path has a length at most $3\lceil\eps^{-1}\rceil-1$. Therefore, the cycle has length at most $3\lceil\eps^{-1}\rceil$.
\end{proof}

\begin{restatable}{lemma}{DegreeTwoAlgo}\label{lemma:degree two:approximation and runtime}
    \cref{alg:degree-two} initializes in $O(m\varepsilon^{-1})$ time and solves \cref{problem:degree-two} by explicitly maintaining an matching with $O(\varepsilon^{-1})$ worst-case update time and $O(\varepsilon^{-1})$ worst-case recourse.
\end{restatable}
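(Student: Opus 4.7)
The plan is to verify three properties of \cref{alg:degree-two}: (a) the approximation guarantee $w(M) \geq (1-O(\varepsilon))\mu_w(G)$, (b) the $O(m\varepsilon^{-1})$ initialization time together with the $O(\varepsilon^{-1})$ worst-case update time, and (c) the $O(\varepsilon^{-1})$ worst-case recourse. Since \cref{lemma:degree two:length} ensures every piece tracked by $\mathcal{D}$ has length at most $3\lceil\varepsilon^{-1}\rceil$, and operation (5) of \cref{lemma:dynamic path-cycle maintainer} explicitly maintains the exact MWM on each piece, the output satisfies $w(M) = \sum_i \mu_w(Q_i)$ where the $Q_i$ are the current pieces in $\mathcal{D}$.

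For the approximation, let $R$ denote the set of currently removed edges (those in the global variable $R$ of the algorithm). Any MWM $M^*$ of $G$ restricted to a piece is still a valid matching of the piece, so $\sum_i \mu_w(Q_i) \geq w(M^*) - w(M^* \cap R) \geq \mu_w(G) - w(R)$. Since $w(G) \leq 2\mu_w(G)$ on degree-two graphs (taking alternating edges of each path or cycle), it then suffices to show $w(R) \leq O(\varepsilon) w(G)$. For this I would use a charging argument: at the moment an edge $e$ was added to $R$ in line~\ref{line:degree two:split}, $e$ was the minimum-weight edge in a contiguous window $W_e$ of $\lceil\varepsilon^{-1}\rceil$ edges from the middle of some piece $P$, so
\[
  w(e) \leq \frac{w(W_e \setminus \{e\})}{\lceil\varepsilon^{-1}\rceil - 1} \leq 2\varepsilon \cdot w(W_e \setminus \{e\}).
\]
The key structural claim is that $W_e \setminus \{e\} \subseteq Q_L^e \cup Q_R^e$, where $Q_L^e$ and $Q_R^e$ are the two final pieces immediately adjacent to $e$'s former position. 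Indeed, after the split that removed $e$, the remaining window edges lie at the very end of the new piece on each side of $e$; any subsequent split of that piece operates on its middle $\lceil\varepsilon^{-1}\rceil$ edges, which (for pieces of length $\geq 3\lceil\varepsilon^{-1}\rceil$) is disjoint from the trailing $\lceil\varepsilon^{-1}\rceil - 1$ positions holding those window edges. Summing over $e \in R$ and noting that each final piece is adjacent to at most two edges of $R$ yields $w(R) \leq 4\varepsilon \cdot w(G \setminus R) \leq 4\varepsilon w(G)$, as desired.

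For update time and recourse, each insertion causes $\mathcal{D}$ to merge at most two pieces of length at most $3\lceil\varepsilon^{-1}\rceil$ into one of length at most $6\lceil\varepsilon^{-1}\rceil + 1$, which is then reduced below the threshold by \texttt{Maintain} in $O(1)$ recursive splits (the longer sub-piece produced by a split of $P$ has length at most $\lceil(|P|+\lceil\varepsilon^{-1}\rceil)/2\rceil - 2$, so within a constant number of levels the length drops below $3\lceil\varepsilon^{-1}\rceil$). Each split invokes $\mathcal{D}.\texttt{FindMin}$, $\mathcal{D}.\texttt{Delete}$, and updates the exact MWM on each of the two new pieces, all in $O(\varepsilon^{-1})$ time and recourse by \cref{lemma:dynamic path-cycle maintainer}. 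A deletion either resolves in $O(1)$ time if the deleted edge lies in $R$, or triggers a $\mathcal{D}.\texttt{Delete}$ followed by at most two reinsertions of $R$-edges, each of which behaves like an insertion. Summing yields $O(\varepsilon^{-1})$ worst-case time and recourse, and the initialization bound follows from performing $m$ sequential insertions.

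The main obstacle is proving the structural claim $W_e \setminus \{e\} \subseteq Q_L^e \cup Q_R^e$ cleanly; a naive depth-based argument would incur an $O(\varepsilon \log n)$ loss rather than $O(\varepsilon)$. Establishing it requires showing that whenever a piece of length $\geq 3\lceil\varepsilon^{-1}\rceil$ is split again at its middle window, that middle window is separated by at least $\lceil\varepsilon^{-1}\rceil - 1$ positions from either boundary of the piece, so that the window edges of an earlier-removed edge (which sit flush against one boundary after the earlier split) are never cut by a later split. Verifying this cleanly from the split-point formula in line~\ref{line:degree two:split} is the crux of the analysis.
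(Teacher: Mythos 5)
Your proof follows the same architecture as the paper's: write $w(M) = \mu_w(G\setminus R) \geq \mu_w(G) - w(R)$, bound $w(R)$ by $O(\eps)\mu_w(G)$ using the fact that each $e \in R$ was the cheapest edge of its $\lceil\eps^{-1}\rceil$-edge window, and derive the update time from $O(1)$ recursive calls to \texttt{Maintain}, each costing $O(\eps^{-1})$ by the path/cycle maintainer. Where you diverge is in the charging scheme for $w(R)$. The paper asserts that the windows $W_e$ of distinct $e \in R$ are \emph{pairwise disjoint} sets of edges (justified by the threshold: a split only triggers when $|P| \geq 3\lceil\eps^{-1}\rceil$, so the chosen window starts at position $\geq \lceil\eps^{-1}\rceil$ and ends at position $\leq |P| - \lceil\eps^{-1}\rceil$, staying clear of the first and last $\lceil\eps^{-1}\rceil-1$ positions where earlier window edges sit). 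This gives $\sum_{e\in R} w_e \leq \eps\sum_{f\in G} w_f \leq 2\eps\mu_w(G)$ directly. You instead argue containment $W_e\setminus\{e\}\subseteq Q_L^e\cup Q_R^e$ and charge each final piece at most twice, getting $w(R)\leq 4\eps\,w(G\setminus R)$. Both routes give $O(\eps)$; your constant is a bit worse ($1-8\eps$ rather than the paper's $1-2\eps$).

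The gap you explicitly flag — that the containment claim needs verification against the split-point formula — is real, but the paper asserts the analogous disjointness equally tersely, so you are not behind the paper's own level of rigor. The split-point geometry just noted is indeed the crux for both claims. What neither your sketch nor the paper spells out, and what you would also need to close the argument fully, is that this geometric fact survives the \emph{dynamic} operations: pieces adjacent to $e\in R$ can only grow on the side away from $e$ (its endpoints are already degree-2-saturated in $G$ by $e$ plus one piece edge, so no insertion lands there), and any deletion inside such a piece causes \texttt{Delete} to re-insert $e$ and re-run \texttt{Maintain}, so $e$'s window is always the one from the most recent time it was placed in $R$ and still lives at the ends of the current adjacent pieces. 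Your runtime analysis is equivalent to the paper's; you have a small off-by-constant in the per-split shrinkage bound (the longer sub-piece can reach $\lceil(|P|+\lceil\eps^{-1}\rceil)/2\rceil$, not that minus two), but this does not affect the conclusion that \texttt{Maintain} recurses $O(1)$ times.
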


\begin{proof}
    \cref{lemma:degree two:length} show that $\mathcal D$ maintains a set of paths/cycles with length at most $3\lceil\eps^{-1}\rceil$. By \cref{lemma:dynamic path-cycle maintainer}, we know that each operation of $\mathcal{D}$ takes time $O(1/\eps)$. Now consider the recurrence in $\Maintain$. Any path that appears in $\Maintain$ has length at most $O(1/\eps)$ and will be at least $\lceil\eps^{-1}\rceil$ shorter in line~\ref{line:degree two:split}. Thus there are only $O(1)$ recurrences in $\Maintain$. Therefore, the worst-case update time of the algorithm is $O(1/\eps)$ and the worst-case recourse is $O(1/\eps)$.

    Now we show it maintains a $(1-O(\varepsilon))$-approximated MWM. In a degree-two graph, every connected component is either a path or a cycle, and $\mathcal D$ maintains an exact MWM on each component of $G\setminus R$ according to the last operation in \cref{lemma:dynamic path-cycle maintainer}. We know that $\mu_w(G)\geq \frac{1}{2}\sum_{e\in G} w_e$, since for each component in $G$, the ``odd'' edges and ``even'' edges both form a matching. On the other hand, an edge is added into $R$ only if it is the minimum among a set of $\lceil\eps^{-1}\rceil$ edges, and those sets are disjoint for different edges in $R$ since we only add edges to $R$ when the path is at least $3\lceil\eps^{-1}\rceil$ long. Thus $\sum_{e\in R} w_e\leq \eps\cdot \sum_{e\in G} w_e\leq 2\varepsilon\cdot \mu_w(G)$.
    Denote $M$ as the matching output by $\mathcal D$, we have
    \[w(M)=\mu_w(G\setminus R)\geq \mu_w(G)-\sum_{e\in R} w_e\geq (1-2\varepsilon)\mu_w(G). \]
\end{proof}

\subsection{Weight Reduction Framework for General Graphs}\label{sec:framework:reduction}
We are now ready to show our main result, a deterministic framework with $\poly(1/\eps)$ multiplicative overhead and recourse, which reduces the aspect ratio from $W$ to $\poly(1/\varepsilon)$ for any $(1-\varepsilon)$-approximate dynamic MWM algorithm.

\begin{restatable}{theorem}{Reduction}\label{thm:partial reduction}
  Given a dynamic $(1-\eps)$-approximate MWM algorithm $\A$ that, on input $n$-vertex $m$-edge graph with aspect ratio $W$, has initialization time $\I(n, m, W, \varepsilon)$, amortized/worst-case update time $\U(n, m, W, \varepsilon)$,  amortized/worst-case recourse $\sigma(n,m,W,\varepsilon)$, there is a transformation which produces a dynamic $(1-O(\eps))$-approximate MWM algorithm with initialization time
  \[O(\I(n, m, \Theta(\eps^{-5}), \Theta(\varepsilon))+m\varepsilon^{-1})\]
  time, amortized/worst-case update time
  \[O(\U(n, m, \Theta(\eps^{-5}), \Theta(\varepsilon))+\sigma(n,m,\Theta(\varepsilon^{-5}),\Theta(\varepsilon))\eps^{-1}),\]
  and amortized/worst-case recourse \[O(\sigma(n,m,\Theta(\varepsilon^{-5}),\Theta(\varepsilon))\eps^{-1}).\]
  The transformation is partially dynamic preserving.
\end{restatable}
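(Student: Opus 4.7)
The plan is to follow the three-step framework outlined in \cref{sec:overview:framework}. First, fix a $\Theta(\eps^{-3})$-wide weight partition $[\ell_1,r_1),\dots,[\ell_k,r_k)$ of $[1,W]$. Second, for each $i \in [k]$, instantiate a copy of the given algorithm $\A$ with accuracy parameter $\Theta(\eps)$ on the padded subgraph $G_{[\eps\ell_i, r_i\eps^{-1})}$, whose aspect ratio is $\Theta(\eps^{-5})$, to maintain a $(1-\eps)$-approximate MWM $M_i$. Third, because the odd-indexed (and even-indexed) padded intervals are $(1/\eps)$-spread by the choice of width, I run \cref{alg:locally greedy} on the odd-indexed matchings to maintain an aggregated matching $\Modd$, and analogously on the even-indexed matchings to obtain $\Meven$. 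Finally, on the degree-two graph $H \defeq \Modd \cup \Meven$, I run \cref{alg:degree-two} to maintain the output matching $M$. A direct check on the padding shows that each edge of $G$ lies in at most two padded subgraphs, so every update to $G$ triggers $O(1)$ updates to the copies of $\A$.

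The approximation analysis is the main obstacle. \cref{lemma:matching composition lemma} with $\delta = \Theta(\eps^{-3})$ gives $\log_\delta(1/\eps) = O(1)$, so there exists a matching $S^\star \subseteq \bigcup_i M_i$ with $w(S^\star) \ge (1-O(\eps))\mu_w(G)$. For each edge $e \in S^\star$ pick any $t(e)$ with $e \in M_{t(e)}$ and partition $S^\star = S^\star_{\mathrm{odd}} \sqcup S^\star_{\mathrm{even}}$ by the parity of $t(e)$; both parts are sub-matchings of $S^\star$, and in particular $S^\star_{\mathrm{odd}} \subseteq \bigcup_{i\text{ odd}} M_i$. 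The proof of \cref{lemma:local greedy:approximation} in fact establishes the stronger bound $w\bigl(\bigcup_{i\text{ odd}} M_i \setminus \Modd\bigr) \le O(\eps)\sum_{i\text{ odd}} w(M_i)$, and \cref{corollary:weight combination with gap} applied to the odd padded classes, which are $(1/\eps^3)$-spread (hence $(1/\eps)$-spread), yields $\sum_{i\text{ odd}} w(M_i) \le (1+O(\eps))\mu_w(G)$. Therefore $w(S^\star_{\mathrm{odd}} \setminus \Modd) \le O(\eps)\mu_w(G)$, and analogously for the even side. Consequently $\tilde S \defeq (S^\star_{\mathrm{odd}} \cap \Modd) \cup (S^\star_{\mathrm{even}} \cap \Meven)$ is a sub-matching of $S^\star$ contained in $H$ with $w(\tilde S) \ge w(S^\star) - O(\eps)\mu_w(G) \ge (1-O(\eps))\mu_w(G)$, so $\mu_w(H) \ge (1-O(\eps))\mu_w(G)$. \cref{lemma:degree two:approximation and runtime} then yields an output $M$ with $w(M) \ge (1-O(\eps))\mu_w(G)$; scaling the accuracy parameter passed to $\A$ by a constant absorbs the accumulated constants in the big-$O$.

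For the running time, the fact that each edge of $G$ lies in $O(1)$ padded subgraphs gives initialization cost $O(\I(n,m,\Theta(\eps^{-5}),\Theta(\eps))) + O(m/\eps)$, where the additive term is from \cref{alg:degree-two}. Each update in $G$ triggers $O(1)$ updates across the copies of $\A$ at total cost $O(\U(n,m,\Theta(\eps^{-5}),\Theta(\eps)))$ and produces $O(\sigma(n,m,\Theta(\eps^{-5}),\Theta(\eps)))$ changes to the $M_i$'s. Each such change is processed by \cref{alg:locally greedy} in $O(1)$ worst-case time via the packed-bit variant of \cref{thm:local greedy:runtime} (since $k = O(\log W/\log\eps^{-1}) = O(\log n)$ under $W = \poly(n)$ gives $O(k/\log n) = O(1)$) and produces $O(1)$ recourse to $H$; each resulting change in $H$ is handled by \cref{alg:degree-two} in $O(1/\eps)$ worst-case time and recourse. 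Summing yields the claimed update time and recourse. The amortized or worst-case nature of $\A$'s bounds transfers directly, since the subroutines introduce no additional amortization, and partial dynamism is preserved because every subroutine responds to insertions (resp.\ deletions) in its input with insertions (resp.\ deletions) in its output.
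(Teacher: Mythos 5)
Your proposal is correct and follows essentially the same approach as the paper's (which splits it into \cref{lemma:reduction:partial:approximation} and \cref{lemma:reduction:partial:runtime}): the same $\Theta(\eps^{-3})$-wide partition with $\eps^{-1}$ padding, the same pipeline of copies of $\A$ feeding \cref{alg:locally greedy} and then \cref{alg:degree-two}, and the same use of \cref{lemma:matching composition lemma}, \cref{lemma:local greedy:approximation}, and \cref{corollary:weight combination with gap} for the approximation guarantee and of \cref{thm:local greedy:runtime} and \cref{lemma:degree two:approximation and runtime} for the runtime and recourse. Your explicit parity-split of $S^\star$ is merely a reformulation of the paper's inequality $\mu_w(\Modd\cup\Meven)\ge\mu_w\bigl(\bigcup_i M_i\bigr)-\bigl(\sum_{i\in\Iodd}w(M_i)-w(\Modd)\bigr)-\bigl(\sum_{i\in\Ieven}w(M_i)-w(\Meven)\bigr)$, and the ``stronger bound'' you attribute to the proof of \cref{lemma:local greedy:approximation} is in fact equivalent to its statement once one notes the odd (resp.\ even) padded classes are disjoint, so that $\sum_{i\text{ odd}}w(M_i)=w\bigl(\bigcup_{i\text{ odd}}M_i\bigr)$.
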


\begin{algorithm2e}[!ht]
  \caption{Reduction Framework} \label{alg:reduction:partial}
  
  \SetEndCharOfAlgoLine{}

  \SetKwInput{KwData}{Input}
  \SetKwInput{KwResult}{Output}
  \SetKwInOut{State}{global}
  \SetKwProg{KwProc}{function}{}{}
  \SetKwFunction{Initialize}{Initialize}
  \SetKwFunction{Update}{Update}

  \KwData{A dynamic algorithm for $(1-\varepsilon)$-approximate maximum weight matching $\mathcal A$}
  \KwProc{\Initialize{}} {
    $L \gets \lfloor \log_{1/\eps}{W}\rfloor = \widetilde{O}(1)$.\;
    $E_{-1}=E_{L+1}=\emptyset$.\;
    \For{$i = 0, \ldots, L$} {
      $E_i \gets \{e \in E: \lfloor \log_{1/\eps}w(e)\rfloor = i\}$.\;
    }
    \For{$i = 1, \ldots, \lceil (L+1)/3\rceil$} {
      $\ell_i\gets 3i-3,r_i\gets \min(L,3i-1)$.\;
      $\widetilde E_i\gets\bigcup\limits_{j=\ell_i-1}^{r_i+1} E_j$.\;
      $\mathcal A_i\gets$ an independent copy of $\mathcal A$.\;
      Initialize $\mathcal A_{i}$ with $\widetilde E_i$.\;
      Denote $M_{i}$ as the matching maintained by $\mathcal A_{i}$.\;
    }
    $\mathcal C_1,\mathcal C_2\gets$ two independent copies of \cref{alg:locally greedy}.\;
    Initialize $\mathcal C_1$ with $\{M_i\mid i\equiv 1 \pmod 2\land 1\leq i\leq \lceil (L+1)/3\rceil\}$.\;
    Initialize $\mathcal C_2$ with $\{M_i\mid i\equiv 0 \pmod 2\land 1\leq i\leq \lceil (L+1)/3\rceil\}$.\;
    Denote $\Modd$ as the matching maintained by $\mathcal C_1$ and $\Meven$ as the one maintained by $\mathcal C_2$.\;
    $\mathcal M\gets$\cref{alg:degree-two}.\;
    Initialize $\mathcal M$ with $\Modd\cup \Meven$.\;
    \textbf{output} the matching maintained by $\mathcal M$.\;
  }

  \KwProc{\Update{$e$}} {
    $j\gets \lfloor \log_{1/\eps} w(e)\rfloor$.\;
    Update $E_j$ accordingly.\;
    \For{$i:1\leq i\leq \lceil (L+1)/3\rceil\land \ell_i-1\leq j\leq r_i+1$}{
        Update $\widetilde E_i$ based on the update in $E_j$.\;
        Use $\mathcal A_i$ to maintain $M_i$ based on the update in $\widetilde E_i$.\;
        \lIf{$i$ is odd}{
          Use $\mathcal C_1$ to maintain $\Modd$ based on the update in $M_i$.
        }
        \lElse{
          Use $\mathcal C_2$ to maintain $\Meven$ based on the update in $M_i$.
        }
    }
    Feed the updates in $\mathcal C_1$ and $\mathcal C_2$ into $\mathcal M$.\;
    \textbf{output} the matching maintained by $\mathcal M$.\;
  }
\end{algorithm2e}

There are three steps in \cref{alg:reduction:partial}. In the first step, for all $1\leq i\leq \lceil (L+1)/3\rceil$, $M_i$ is maintained by $\mathcal A_i$ and is a $(1-\varepsilon)$-approximation of $\mu_w(\widetilde E_i)$. In the second step, we use the locally greedy census matching \cref{alg:locally greedy} to aggregate $M_i$ for odd $i$ and even $i$ respectively, into $\Modd$ and $\Meven$, with the guarantee from \cref{lemma:local greedy:approximation} that $\Modd$ and $\Meven$ both keep at least a $(1-4\varepsilon)$ fraction of the total weight of the corresponding matchings. Then we use \cref{alg:degree-two} for degree-two graphs to aggregate $\Modd$ and $\Meven$, and \cref{lemma:degree two:approximation and runtime} shows that the final matching output by \cref{alg:reduction:partial} is a $(1-2\varepsilon)$-approximated MWM on $\Modd\cup \Meven$. We will prove that since at each step we lose a $O(\eps)$ fraction, the final matching we output keeps a $(1-O(\eps))$-approximate MWM.

\begin{lemma}\label{lemma:reduction:partial:approximation}
    For $\varepsilon\leq 1/2$, \cref{alg:reduction:partial} maintains a matching $M$ with $\mu_w(M)\geq (1-O(\varepsilon))\mu_w(G)$.
\end{lemma}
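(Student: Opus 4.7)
The plan is to first establish $\mu_w(\Modd \cup \Meven) \geq (1-O(\eps))\mu_w(G)$, then combine this with \cref{lemma:degree two:approximation and runtime} to conclude $w(M) \geq (1-O(\eps))\mu_w(G)$. The overall structure has three pieces: matching composition on the padded classes, a bound on the weight of edges discarded during locally greedy aggregation, and a restriction argument that preserves most of a composition-guaranteed matching inside $\Modd \cup \Meven$.

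First I would verify that the setup fits the Matching Composition Lemma~\ref{lemma:matching composition lemma}: the intervals $\{[(1/\eps)^{3i-3}, (1/\eps)^{3i})\}_{i}$ form an $\eps^{-3}$-wide weight partition, and by construction each $M_i$ is a $(1-\eps)$-approximate MWM of the padded graph $G_{[\eps(1/\eps)^{3i-3},\, (1/\eps)^{3i}/\eps)} = \widetilde E_i$. Since $\log_{\eps^{-3}}(1/\eps) = 1/3$, the lemma produces a matching $N \subseteq \bigcup_i M_i$ with $w(N) \geq (1-O(\eps))\mu_w(G)$.

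Next I would bound $w(L)$, where $L \defeq \bigcup_i M_i \setminus (\Modd \cup \Meven)$. Because consecutive odd padded classes are separated by a factor of at least $1/\eps$ in weight, they are weight-disjoint, so the odd-indexed $M_i$'s are pairwise edge-disjoint and $w(\bigcup_{i\text{ odd}} M_i) = \sum_{i\text{ odd}} w(M_i)$; the analogous identity holds for even. Then \cref{lemma:local greedy:approximation} gives $w(\bigcup_{i\text{ odd}} M_i \setminus \Modd) \leq 4\eps \sum_{i\text{ odd}} w(M_i)$ and likewise for $\Meven$, while applying \cref{corollary:weight combination with gap} separately to the $(1/\eps)$-spread odd and even padded classes yields $\sum_i w(M_i) \leq 2(1+4\eps)\mu_w(G)$. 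Observing $L \subseteq (\bigcup_{i\text{ odd}} M_i \setminus \Modd) \cup (\bigcup_{i\text{ even}} M_i \setminus \Meven)$, these combine to give $w(L) \leq O(\eps)\mu_w(G)$.

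The final step is short: since $N \subseteq \bigcup_i M_i$, the part of $N$ not in $\Modd \cup \Meven$ lies in $L$, so $w(N \cap (\Modd \cup \Meven)) \geq w(N) - w(L) \geq (1-O(\eps))\mu_w(G)$. As $N \cap (\Modd \cup \Meven)$ is a matching contained in $\Modd \cup \Meven$, this proves $\mu_w(\Modd \cup \Meven) \geq (1-O(\eps))\mu_w(G)$, and \cref{lemma:degree two:approximation and runtime} then delivers $w(M) \geq (1-2\eps)\mu_w(\Modd \cup \Meven) \geq (1-O(\eps))\mu_w(G)$.

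The main obstacle I anticipate is resisting a naive analysis based only on the total weights $w(\Modd) + w(\Meven)$: that route gives $\mu_w(\Modd \cup \Meven) \geq \max(w(\Modd), w(\Meven)) \geq \tfrac{1}{2}(w(\Modd)+w(\Meven))$, at best a $1/2$-approximation. The crucial conceptual move is to track the actual matching $N$ promised by Matching Composition and argue that the edges discarded by locally greedy aggregation weigh only $O(\eps)\mu_w(G)$ in total, so that $N$ survives almost entirely when intersected with $\Modd \cup \Meven$; without this observation the argument collapses to a constant-factor guarantee rather than $(1-O(\eps))$.
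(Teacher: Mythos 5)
Your proposal is correct and follows essentially the same route as the paper's proof: apply the Matching Composition Lemma~\ref{lemma:matching composition lemma} to the padded weight classes, bound the total weight of edges discarded by the two locally greedy census matchings via Lemmas~\ref{lemma:local greedy:approximation} and~\ref{corollary:weight combination with gap}, and finish with Lemma~\ref{lemma:degree two:approximation and runtime}. Your version is slightly more explicit than the paper's---you name the intermediate matching $N$ and the discarded set $L$, whereas the paper compresses this into a single inequality $\mu_w(\Modd\cup\Meven)\geq\mu_w(\bigcup_i M_i)-\sum_{\text{odd}}(w(M_i)-w(\Modd))-\sum_{\text{even}}(w(M_i)-w(\Meven))$---but the underlying argument, including the key observation that one must track a concrete near-optimal matching rather than just $w(\Modd)+w(\Meven)$, is identical.
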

\begin{proof}
    \cref{lemma:matching composition lemma} shows that
    \[\mu_w(M_1\cup M_2\cup\dots\cup M_{\lceil(L+1)/3\rceil})\geq (1-O(\eps)) \mu_w(G).\]
    Consider the locally greedy census matching $\Modd$ and $\Meven$. Denote $\Iodd=\{1\leq i\leq\lceil (L+1)/3\rceil:i\text{ is odd}\}$, and $\Ieven=\{1\leq i\leq\lceil (L+1)/3\rceil:i\text{ is even}\}$. \cref{lemma:local greedy:approximation} shows that
    \[w(\Modd)\geq (1-O(\eps))\sum_{i\in \Iodd} w(M_i)\quad\text{and}\quad w(\Meven)\geq (1-O(\eps))\sum_{i\in \Ieven} w(M_i).\]
    Also, we know
    \[\Modd\subseteq \bigcup_{i\in \Iodd} M_i\quad\text{and}\quad \Meven\subseteq \bigcup_{i\in \Ieven} M_i,\]
    thus
    \begin{align*}
        \mu_w(\Modd\cup \Meven)
        &\geq \mu_w\left(\bigcup_{i=1}^{\lceil(L+1)/3\rceil} M_i\right)-\left(\sum_{i\in \Iodd}w(M_i)-w(\Modd)\right)-\left(\sum_{i\in \Ieven}w(M_i)-w(\Meven)\right)\\
        &\geq (1-O(\eps))\mu_w(G)-O(\eps)\cdot \sum_{i\in \Iodd}w(M_i)-O(\eps)\cdot \sum_{i\in \Ieven}w(M_i).
    \end{align*}
    \cref{corollary:weight combination with gap} shows that
    \[\sum_{i\in \Iodd} w(M_i)\leq (1+4\eps)\mu_w(G)\quad\text{and}\quad\sum_{i\in \Ieven} w(M_i)\leq (1+4\eps)\mu_w(G),\]
    and thus
    \[\mu_w(\Modd\cup \Meven)\geq (1-O(\eps))\mu_w(G).\]
    The final matching $M$ we output is a $(1-2\eps)$-approximate MWM on $\Modd\cup \Meven$. Therefore,
    \[\mu(M)\geq (1-O(\eps))\mu_w(G). \]

\end{proof}

\begin{lemma}\label{lemma:reduction:partial:runtime}
    \cref{alg:reduction:partial} initializes in $O(\mathcal I(n,m,\Theta(\varepsilon^{-5}),\Theta(\varepsilon))+m\varepsilon^{-1})$ time, and has update time $O(\mathcal U(n,m,\Theta(\varepsilon^{-5}),\Theta(\varepsilon))+\sigma(n,m,\Theta(\varepsilon^{-5}),\Theta(\varepsilon))\varepsilon^{-1})$ and recourse $O(\sigma(n,m,\Theta(\varepsilon^{-5}),\Theta(\varepsilon))\varepsilon^{-1})$.
\end{lemma}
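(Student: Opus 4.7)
The plan is to establish the three stated bounds by tracking the propagation of updates through the three layers of Algorithm 3: the instances $\mathcal{A}_i$ on the padded windows $\widetilde E_i$, the two locally greedy census maintainers $\mathcal{C}_1, \mathcal{C}_2$, and the degree-two maintainer $\mathcal{M}$. The key combinatorial fact is that each basic weight class $E_j$ belongs to at most $O(1)$ windows: since $\widetilde E_i$ uses indices $j \in [3i-4,\, 3i]$, fixing $j$ forces $i$ into a length-$4/3$ interval, which contains at most two integers. Symmetrically each $\widetilde E_i$ is a union of at most five consecutive classes, so it has weight ratio $\Theta(\varepsilon^{-5})$, matching the parameter we feed to $\mathcal{A}$.

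For initialization, $\sum_i |\widetilde E_i| = O(m)$ by the constant-coverage bound, so under the standard monotonicity and sub-additivity of $\I$ in its size argument, initializing all of the $\mathcal{A}_i$'s costs $O(\I(n, m, \Theta(\varepsilon^{-5}), \Theta(\varepsilon)))$. By \cref{thm:local greedy:runtime} we initialize $\mathcal{C}_1, \mathcal{C}_2$ in $O(m)$ time, and by \cref{lemma:degree two:approximation and runtime} we initialize $\mathcal{M}$ in $O(m\varepsilon^{-1})$ time. Summing gives the claimed initialization bound.

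For the update time, each edge update to $E_j$ triggers updates only in the $O(1)$ windows $\widetilde E_i$ containing $E_j$, so we make $O(1)$ calls to instances $\mathcal{A}_i$ at total cost $O(\U(n,m,\Theta(\varepsilon^{-5}),\Theta(\varepsilon)))$, producing a total of $O(\sigma(n,m,\Theta(\varepsilon^{-5}),\Theta(\varepsilon)))$ changes across the $M_i$'s. Each such change is forwarded as a single update to $\mathcal{C}_1$ or $\mathcal{C}_2$. Since the paper's standing assumption is $W = \poly(n)$, we have $k = \lceil (L+1)/3\rceil = O(\log_{1/\varepsilon} W) = O(\log n)$; therefore \cref{thm:local greedy:runtime} gives worst-case time $O(k/\log n) = O(1)$ per update with $O(1)$ recourse. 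Each of the resulting $O(\sigma)$ changes in $\Modd \cup \Meven$ is then forwarded to $\mathcal{M}$, which by \cref{lemma:degree two:approximation and runtime} spends $O(\varepsilon^{-1})$ time per update with $O(\varepsilon^{-1})$ recourse. Summing the three layers yields total update time $O(\U + \sigma\varepsilon^{-1})$ and total output recourse $O(\sigma\varepsilon^{-1})$, exactly as claimed; the distinction between amortized and worst-case bounds is preserved because the $\mathcal{C}$ and $\mathcal{M}$ layers have worst-case guarantees.

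The main subtlety is the cascaded amplification of recourse across the three layers: the $\sigma$ factor enters because changes to $M_i$ drive the update cost of everything downstream, and the $\varepsilon^{-1}$ factor enters only at the last stage because of the $\Theta(\varepsilon^{-1})$-length path/cycle rebuilds in Algorithm 2. It is crucial that no $\log n$ overhead is introduced at the $\mathcal{C}$ stage; this is precisely what the word-RAM packed-bit implementation in \cref{thm:local greedy:runtime} buys us, and it is the reason we route the $M_i$'s through the \emph{locally} greedy census matching rather than the standard one used in prior work. Partial-dynamism preservation is immediate, since every internal subroutine operates on the atomic changes produced by $\mathcal{A}$ and does not itself inject insertions or deletions of any new type.
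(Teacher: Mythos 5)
Your proof is correct and follows the paper's own approach almost exactly: trace a single update through the three layers, observing that each basic class $E_j$ is covered by $O(1)$ padded windows, that each $\widetilde E_i$ has aspect ratio $\Theta(\eps^{-5})$, that the locally greedy census maintainer runs in $O(k/\log n)=O(1)$ time (using $k=O(\log_{1/\eps}W)$ and $W=\poly(n)$), and that the degree-two maintainer contributes the $O(\eps^{-1})$ factor. Your version is actually a bit more explicit than the paper's (e.g., you spell out the length-$4/3$ interval argument giving at most two covering windows, and you make the sub-additivity assumption on $\I$ explicit where the paper merely says ``the initialization time follows from that of each subroutine'').

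One side remark is slightly off, though it does not affect the lemma you are proving. You close by claiming partial-dynamism preservation is immediate because ``every internal subroutine \ldots does not itself inject insertions or deletions of any new type.'' That is not accurate: even when the input graph is incremental, the matchings $M_i$ may lose as well as gain edges, so $\mathcal{C}_1,\mathcal{C}_2$ and $\mathcal{M}$ do experience deletions. What actually preserves partial dynamism is that the only subroutines required to match the input's update type are the $\mathcal{A}_i$'s, whose input edge sets $\widetilde E_i$ inherit the input's monotonicity; the downstream aggregators $\mathcal{C}_1,\mathcal{C}_2,\mathcal{M}$ are fully dynamic by construction and absorb whatever matching changes occur. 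This is tangential here, since the lemma concerns only initialization time, update time, and recourse, and partial dynamism is handled separately in \cref{thm:partial reduction}.
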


\begin{proof}
    Each weight class $[\ell,r)$ has $\Theta(\eps^{-5})$ aspect ratio. Thus each edge update $e\in E_j$ causes $\sigma(n,m,\Theta(\eps^{-5}),\Theta(\eps))$ changes in corresponding $M_i$s which take $O(\mathcal U(n,m,\Theta(\eps^{-5}),\Theta(\eps)))$ update time. By \cref{thm:local greedy:runtime}, $\mathcal C_1,\mathcal C_2$ both handle each of these changes in $O(\log_{1/\varepsilon} W/\log n)=O(1)$ time and recourse, thus the update time of $\mathcal M$ would be $O(\sigma(n,m,\Theta(\eps^{-5}),\Theta(\eps))\varepsilon^{-1})$ and the recourse is at most $O(\sigma(n,m,\Theta(\eps^{-5}),\Theta(\eps))\eps^{-1})$. The initialization time follows from that of each subroutine.
\end{proof}

By repeatedly applying~\cref{lemma:matching composition lemma} and~\cref{alg:degree-two}, we can further reduce the aspect ratio.

\begin{restatable}{theorem}{partialreduction}\label{thm:partial reduction:2}
  Given a dynamic $(1-\eps)$-approximate MWM algorithm $\A$ that, on input $n$-vertex $m$-edge graph with aspect ratio $W$, has initialization time $\I(n, m, W, \eps)$, amortized/worst-case update time $\U(n, m, W, \eps)$, and amortized/worst-case recourse $\sigma(n,m,W,\eps)$, there is a transformation that produces a dynamic $(1-O(\eps))$-approximate MWM algorithm that has initialization time
  \[O(\I(n, m, \Theta(\eps^{-2-3\cdot 2^{-d}}), \Theta(\eps))+m\eps^{-1}),\]
  amortized/worst-case update time
  \[O(\U(n, m, \Theta(\eps^{-2-3\cdot 2^{-d}}), \Theta(\eps))+\sigma(n,m,\Theta(\eps^{-2-3\cdot 2^{-d}}),\Theta(\eps))\eps^{-(1+d)}),\]
  and amortized/worst-case recourse
  \[O(\sigma(n,m,\Theta(\eps^{-2-3\cdot 2^{-d}}),\Theta(\eps))\eps^{-(1+d)})\]
  for any integer parameter $d\in\Z_{\geq 0}$.
  The transformation is partially dynamic preserving.
\end{restatable}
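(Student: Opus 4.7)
The plan is to prove the theorem by a direct construction that generalizes~\cref{alg:reduction:partial}, parameterized by $d \in \Z_{\geq 0}$, with the case $d = 0$ reducing exactly to~\cref{thm:partial reduction}. For general $d$, I would partition the weight range $[1, W]$ into classes of width $\eps^{-3 \cdot 2^{-d}}$ rather than the width $\eps^{-3}$ used in~\cref{alg:reduction:partial}, so that after the $\eps$-padding on each side the padded classes have aspect ratio exactly $\eps^{-2-3 \cdot 2^{-d}}$ and can be fed to the given inner algorithm $\A$; for each padded class I maintain a $(1-\eps)$-approximate MWM $M_i$ using $\A$.

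The key step is aggregating these per-class matchings. A direct computation on the formulas $[\eps \ell_i, r_i / \eps)$ for the padded intervals shows that same-residue classes modulo $k$ become $(1/\eps)$-spread after padding precisely when $k \geq 2^d + 1$, so I would partition the classes into $k := 2^d + 1$ parity groups and aggregate each group using~\cref{alg:locally greedy}. By~\cref{lemma:matching composition lemma} with $\delta = \eps^{-3 \cdot 2^{-d}}$ (so that $\log_\delta(1/\eps) = 2^d/3 = O(1)$ for constant $d$), the union of the $M_i$'s already has weight $(1-O(\eps))\cdot \mu_w(G)$, and~\cref{lemma:local greedy:approximation} preserves this within each parity group up to another multiplicative $(1-O(\eps))$ factor. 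To combine the $k$ resulting parity-matchings into a single output matching, I would build a balanced binary tree of~\cref{alg:degree-two} instances (valid since the union of any two matchings has maximum degree two), which has depth $\lceil \log_2 k \rceil = d + 1$.

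The main obstacle lies in accounting for how update time and recourse propagate through this aggregation tree. Each update to the input graph induces $\sigma(\cdot)$ changes to the matchings $M_i$; each such change triggers $O(1)$ changes in the corresponding locally greedy census matching (by~\cref{thm:local greedy:runtime}, since each parity group has only $O(\log_{1/\eps} W / k)$ matchings), and each of those changes then cascades through the $d + 1$ layers of degree-two MWM. Since every such layer multiplies recourse by $O(1/\eps)$ per~\cref{lemma:degree two:approximation and runtime}, the total recourse blow-up is exactly $\eps^{-(1+d)}$, matching the claimed amortized recourse $O(\sigma(n, m, \Theta(\eps^{-2-3\cdot 2^{-d}}), \Theta(\eps))\cdot \eps^{-(1+d)})$. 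A telescoping sum over the layers of the tree then yields the stated update-time bound $\U(\cdot) + \sigma(\cdot)\cdot \eps^{-(1+d)}$, and the initialization time arises from initializing each subroutine once. Partial-dynamic preservation is inherited from every component subroutine, since each handles insertions and deletions separately. A small subtlety worth checking is that the $O(\eps)$ approximation losses compose additively across the $O(d)$ aggregation layers, so a constant rescaling of $\eps$ absorbs the hidden dependence on $d$ and yields the stated $(1-O(\eps))$-approximation.
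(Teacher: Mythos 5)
Your construction diverges from the paper's at the aggregation step, and that divergence introduces a gap that I do not see how to close. The paper's proof does \emph{not} build a wider parity structure over a $\delta$-wide base partition with $\delta = \eps^{-3\cdot 2^{-d}}$; instead, it keeps the outer structure of \cref{thm:partial reduction} (two parity groups, locally greedy census, one degree-two instance) and, \emph{inside} each $\Theta(\eps^{-5})$-aspect-ratio weight class, recursively splits the interval at its geometric midpoint $d$ times. Crucially, at each level of that recursion it re-invokes the matching composition lemma on the two padded subintervals: this is what justifies that a $(1-O(\eps))$-approximate MWM on $M_1 \cup M_2$ (the two sub-matchings) is still $(1-O(\eps))$-approximate on the \emph{whole} interval, so \cref{alg:degree-two} at that internal node only loses another $(1-O(\eps))$ factor. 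The weight-interval structure plus $\eps$-padding is what makes the binary-tree combination sound.

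Your binary tree, by contrast, is over the $k = 2^d+1$ parity census matchings $A_1,\dots,A_k$, and there is no weight-interval structure or matching-composition argument at an internal node. Having established $\mu_w(A_1\cup\cdots\cup A_k)\geq(1-O(\eps))\mu_w(G)$, you still need to extract a near-optimal matching from a degree-$k$ graph, and composing pairwise $(1-O(\eps))$-approximate MWMs up a tree does not preserve approximation on the union in general. The issue is that an internal node sees only $O_L\cup O_R$, but the optimal matching on the union of all leaves below it may use edges of $O_L$'s subtree that $O_L$ discarded because they conflicted only with edges from $O_R$'s subtree. Concretely, with $k\geq 3$ and adjacent weight classes of narrow width $\delta$, one can set up alternating paths where the node combining $A_1,A_2$ commits to an edge that is then dominated by an $A_3$ edge, losing a constant fraction of the path weight rather than $O(\eps)$; this is exactly the phenomenon the matching composition lemma is designed to avoid, and your combining step does not invoke it. (For $d=0$, i.e.\ $k=2$, your scheme collapses to \cref{thm:partial reduction} and is fine, because there is only one degree-two instance and the union $A_1\cup A_2$ itself is degree-two, so its MWM is directly computed; the problem appears precisely when $k>2$.) To repair the argument along your lines you would need to establish an analogue of the matching composition lemma at each internal node of your tree over parity groups, or follow the paper and do the recursion inside each weight interval where the composition lemma applies directly. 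The accounting you do for recourse, update time, initialization, and the $(1/\eps)$-spread/parity-group arithmetic ($k = 2^d+1$) is otherwise correct.
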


\begin{proof}
    For each weight class $[\ell,r)$ with aspect ratio $\Theta(\eps^{-2-3\cdot 2^{-x}})$ (we start with $x=0$, i.e., $\Theta(\eps^{-5})$), denote $m=\sqrt{\ell\cdot r}$. There is a consistent constant $c$ in \cref{lemma:matching composition lemma} such that given two $(1-\eps)$-approximate MWM $M_1$ and $M_2$ on the ``padded'' weight classes $[\ell,m\cdot \eps^{-1})$ and $[m\cdot \eps,r)$ respectively, there is a matching of the weight class $[\ell,r)$ on $M_1\cup M_2$ with approximation ratio 
    \[\left(1-c\cdot\left(\log(1/\eps)/\log({\eps^{-1-3\cdot 2^{-(x+1)}}})\right)\cdot \eps\right)=\left(1-\frac{c}{1+3\cdot 2^{-(x+1)}}\cdot\eps\right)\geq (1-c\cdot \eps).\]
    \cref{alg:degree-two} can maintain a $(1-\eps)$-approximate MWM on the union, thus maintain a $(1-(c+1)\cdot\eps)$-approximate matching, and we reduce the aspect ratio from $\Theta(\eps^{-2-3\cdot 2^{-x}})$ to $\Theta(\eps^{-2-3\cdot 2^{-(x+1)}})$. Repeatedly applying~\cref{lemma:matching composition lemma} and~\cref{alg:degree-two} for $d$ times, we achieve a $d$-depth binary tree representation of the weight reduction. Each inner node of the binary tree is a matching maintained on the union of its offspring. Since for each layer, we lose a $c+1$ factor in the approximation error, the matching maintained at the root has an approximation ratio $1-(c+1)^d\cdot\eps$.
    
    Since there are $2^d=O(1)$ nodes in the binary tree, the initialization takes time
    \[O(\I(n, m, \Theta(\eps^{-2-3\cdot 2^{-d}}), \Theta(\varepsilon))+m\eps^{-1}).\]
    For the update time and recourse, consider the layers in decreasing depths. In the deepest layer with depth $d$, there are $2^d=O(1)$ nodes. The edge change could occur in each of them, so there is an update time $O(\U(n, m, \Theta(\eps^{-2-3\cdot 2^{-d}}), \Theta(\varepsilon)))$ and recourse $O(\sigma(n,m,\Theta(\eps^{-2-3\cdot 2^{-d}}),\Theta(\varepsilon)))$.
    For the layer with depth $d-1$, the number of edge updates in total equals the recourse of the layer with depth $d$, thus both the update time and recourse would be $O(\sigma(n,m,\Theta(\eps^{-2-3\cdot 2^{-d}}),\Theta(\varepsilon)))\eps^{-1}$. Suppose $\eps\leq 1/2$, an easy induction shows that the total update time would be
    \[O(\U(n, m, \Theta(\eps^{-2-3\cdot 2^{-d}}), \Theta(\varepsilon))+\sigma(n,m,\Theta(\eps^{-2-3\cdot 2^{-d}}),\Theta(\varepsilon))\eps^{-(1+d)}),\]
    and the recourse would be
    \[O(\sigma(n,m,\Theta(\eps^{-2-3\cdot 2^{-d}}),\Theta(\varepsilon))\eps^{-(1+d)}).\]
    Combined with~\cref{thm:partial reduction} we finish the proof.
\end{proof}

\subsection{Low-Recourse Transformation}\label{sec:low-recourse}
The update time of our reduction comprises two parts:
the original update time of the algorithm $\A$ and its recourse.
According to~\cref{thm:partial reduction:2},
the multiplicative overhead on the update time is constant
while that on the recourse is $\poly(1/\eps)$.
A high recourse of the algorithm could make it inefficient
when serving as a subroutine. \cite{SolomonS21} provides
a low-recourse transformation that reduces the recourse to worst-case $O(W/\eps)$
for any $\alpha$-approximate dynamic MWM algorithm. In this section, we design a
tailored low-recourse transformation for $(1-\eps)$-approximate
weight matching that reduces the recourse to amortized $O(\poly(\log W)/\eps)$
(see~\cref{thm:our low-recourse}). Besides efficiency, the low-recourse transformation
can be applied to an algorithm that implicitly maintains
a matching as long as it supports the following vertex-match query, relaxing the requirement of explicitly maintaining the matching.
\begin{definition}[Vertex-Match Query]
A dynamic matching algorithm is said to support the vertex-match query in query time $T$ if given any vertex query $v$, it answers in $O(T)$ time either $v$ is unmatched in the maintained matching or the matched vertex of $v$; and it can output all the edges in the maintained matching $M$ in $O(|M|\cdot T)$ time.
\end{definition}

Formally, denote $G_0$ as the initial graph and $G_i$ as the graph after the $i$-th update. The recourse of a dynamic matching algorithm $\A$
measures the changes in the support set of the matching maintained by $\A$,
which is defined as follows.

\begin{definition}[Worst-Case Recourse of a Dynamic Matching Algorithm]\label{definition:recourse:worst-case}
    For a fixed dynamic matching algorithm $\A$ that
    (possibly implicitly) maintains a matching $M_i$ on graph $G_i$,
    the worst-case recourse of $\A$ on $G_0,G_1,\dots,G_k$ is defined as $\max_{i\in[k]}|M_{i}\oplus M_{i-1}|$,
    i.e., the maximum changes in the matching edge set.
\end{definition}
\begin{definition}[Amortized Recourse of a Dynamic Matching Algorithm]\label{definition:recourse}
    For a fixed dynamic matching algorithm $\A$ that
    (possibly implicitly) maintains a matching $M_i$ on graph $G_i$,
    the amortized recourse of $\A$ on $G_0,G_1,\dots,G_k$ is defined as $\frac{1}{k}\sum_{i\in[k]}|M_{i}\oplus M_{i-1}|$,
    i.e., the average changes in the matching edge set.
\end{definition}

We start with designing a transformation between two matchings $\widetilde M_i$ on $G_i$ and $M_j$ on $G_j$ where $i<j$ that builds a large matching on $G_j$ based on $\widetilde M_i$ and $M_j$ with small $|\widetilde M_i\oplus M_j|$. We will use $\Delta$-additive-approximate to represent an additive approximation.
Formally, a matching $M$ on $G$ is \emph{$\Delta$-additive-approximate} if $w(M) \geq \mu_w(G)-\Delta$.
\begin{algorithm2e}[!ht]
  \caption{Direct Transformation between Two Time Points} \label{alg:low-recourse:two time points}
  
  \SetEndCharOfAlgoLine{}

  \SetKwInput{KwData}{Input}
  \SetKwInput{KwResult}{Output}
  \SetKwInOut{State}{global}
  \SetKwProg{KwProc}{function}{}{}

  \KwData{Two matchings $\widetilde M_i$ on $G_i$ and $M_j$ on $G_j$.}
  Consider $P=\widetilde M_i\oplus M_j$.\;
  Denote $U$ as the set of updated edges between time $i+1$ and time $j$.\;
  $D\gets\emptyset$.\;
  \For{any edge $e$ in $U\cap P$ and each of its direction}{
    \If{there exists at least $2/\eps$ edges in that direction in $P$}{
        Denote $E^\prime$ as the closest $2/\eps$ edges.\;
        \If{$E^\prime$ doesn't contain any edge in $U$}{
            Add the edge with minimum weight in $E^\prime\cap M_j$ into $D$.\label{line:two time points:deletion}\;
        }
    }
  }
  For those paths and cycles in $P\setminus D$ that contains edges in $U$, $\widetilde M_j$ picks edges in $M_j$.\;
  For the remaining ones, $\widetilde M_j$ pick edges in $\widetilde M_i$\label{line:two time points:substitution}.\;
  \texttt{return} $\widetilde M_j$.\;
\end{algorithm2e}

\begin{claim}\label{claim:two-time-points}
    Suppose $\widetilde M_i$ is a $\Delta_i$-additive-approximate MWM on $G_i$
    and $M_j$ is a $\Delta_j$-additive-approximate MWM on $G_j$.
    Then, \cref{alg:low-recourse:two time points} outputs a $\left(\Delta_i+\Delta_j+\eps\cdot\mu_w(G_j)\right)$-additive-approximate MWM $\widetilde M_j$
    on $G_j$ such that $|\widetilde M_i\oplus\widetilde M_j| \leq O((j-i)\cdot \eps^{-1})$.
\end{claim}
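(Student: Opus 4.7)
The plan is to decompose the sub-components of $P\setminus D$ into $A$ (those containing at least one edge of $U$) and $B$ (those not), and bound the recourse and approximation loss separately. By construction, $\widetilde M_j$ agrees with $\widetilde M_i$ on $\widetilde M_i\cap M_j$ and on every sub-component in $B$, while flipping to the $M_j$-side on sub-components in $A$. Hence $\widetilde M_i\oplus\widetilde M_j=\bigcup_{C\in A}C$, and, writing $M_j$ as the disjoint union $(\widetilde M_i\cap M_j)\sqcup(M_j\cap A)\sqcup(M_j\cap B)\sqcup D$,
\[
w(M_j)-w(\widetilde M_j)=w(D)+\sum_{C\in B}\bigl(w(M_j\cap C)-w(\widetilde M_i\cap C)\bigr).
\]

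For the recourse, \cref{line:two time points:deletion} guarantees that from any $U$-edge of a sub-component $C\in A$, walking outward along $P$ in either direction we must encounter either another $U$-edge or a $D$-cut within $2/\eps$ edges. Consequently $|C|\leq O(|C\cap U|/\eps)$, so $|\widetilde M_i\oplus\widetilde M_j|=\sum_{C\in A}|C|\leq O(|U|/\eps)\leq O((j-i)/\eps)$, since each of the $j-i$ updates changes at most one edge.

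For the bound on $w(D)$: each $D$-edge is the minimum-weight $M_j$-edge among the $\approx 1/\eps$ $M_j$-edges in its strip $E'$, so its weight is at most $\eps\cdot w(E'\cap M_j)$. The key observation is that each $M_j$-edge $f$ lies in at most two cutting strips, at most one from each side. Indeed, if two $U$-edges $e_1,e_2\in U\cap P$ both lie within $2/\eps$ of $f$ on the same side, then the one closer to $f$ lies inside the other's strip, and the $E'\cap U=\emptyset$ condition prevents the farther one from cutting. Redistributing the charges yields $w(D)\leq 2\eps\cdot w(M_j)\leq 2\eps\cdot\mu_w(G_j)$.

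For the $B$-term, consider $\widetilde M_i^\ast\defeq\widetilde M_i\oplus\bigcup_{C\in B}C$, obtained from $\widetilde M_i$ by flipping the sub-components in $B$. Since $B$ sub-components contain no edges of $U$, all of their edges lie in $G_i\cap G_j$, so the standard alternating-path argument shows $\widetilde M_i^\ast$ is a valid matching on $G_i$. Therefore
\[
\sum_{C\in B}\bigl(w(M_j\cap C)-w(\widetilde M_i\cap C)\bigr)=w(\widetilde M_i^\ast)-w(\widetilde M_i)\leq\mu_w(G_i)-w(\widetilde M_i)\leq\Delta_i.
\]
Combining with $w(M_j)\geq\mu_w(G_j)-\Delta_j$ gives $w(\widetilde M_j)\geq\mu_w(G_j)-\Delta_i-\Delta_j-2\eps\cdot\mu_w(G_j)$, matching the claim up to a constant factor in $\eps$ (absorbable by using $\eps/2$ in the algorithm). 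The main obstacle is the charging analysis for $w(D)$: the condition $E'\cap U=\emptyset$ is exactly what ensures that each $M_j$-edge lies in at most two cutting strips, and without it the naive bound on $|U|$ alone is too weak.
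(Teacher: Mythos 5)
Your proof is correct and follows the same route as the paper's (which is only three sentences long and glosses over the charging argument for $w(D)$): split the components of $P\setminus D$ by whether they touch $U$, bound recourse via the $2/\eps$-edge gap enforced between $U$-edges and $D$-cuts, and bound the weight loss from $D$ and from flipping the $U$-free components by showing the latter corresponds to an alternating switch inside $G_i$. Your observation that each $M_j$-edge lies in at most two cutting strips (one from each side) is exactly the missing detail needed to make the $w(D)$ bound rigorous; it yields $2\eps\cdot\mu_w(G_j)$ rather than the stated $\eps\cdot\mu_w(G_j)$, a constant-factor slack the paper's own proof does not resolve and which is absorbable by adjusting the $2/\eps$ constant in the algorithm.
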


\begin{proof}
    Starting with $M_j$, the additive approximation is $\Delta_j$.
    It increases by $\eps\cdot\mu_w(G_j)$ during the deletion
    in line \ref{line:two time points:deletion} and $\Delta_i$ during
    the substitution in line \ref{line:two time points:substitution}.
    Since any edge in $\widetilde M_i\oplus\widetilde M_j$ belongs to a path
    or cycle in $P\setminus D$ that contains an edge in $U$, by construction,
    its size is bounded by $O(1/\eps)\cdot |U|=O((j-i)\cdot \eps^{-1})$.
\end{proof}

Now we are ready to introduce the full transformation. By running an independent copy of $\A$ on the unweighted version of $G$ we assume that we have access to a $(1-\eps)$-approximation $\nu_i$ to the size of the MCM in $G_i$.
The full transformation works in multiple phases, where each phase spans a contiguous segment of time points.
Suppose that a phase starts at time $t$.
The transformation reads the entire edge set of the weighted matching $M_{t}$
maintained by $\A$ on $G_{t}$ and sets the length of the phase to be
$\eps\cdot \nu_{t}$.
Then, we define several checkpoints $t_i$ within this phase, where $t_0$ is set to $t$ and the remaining checkpoints are defined iteratively as $t_{i+1} \defeq t_i + \frac{\eps \cdot w(M_{t_i})}{W}$.
The transformation will compute a matching $\widetilde{M}_{t_i}$ for each checkpoint, and this matching will be used as the output from this time point until the next checkpoint.
That is, for each time point $t_i < j < t_{i+1}$, the matching output by the transformation on $G_j$ will simply be $\widetilde{M}_{t_i} \cap G_j$, which is $\widetilde{M}_{t_i}$ with edges deleted in $G_j$ dropped.

\begin{lemma}\label{lemma:sufficiency of checkpoints}
    Suppose on $G_{t_i}$, $\widetilde M_{t_i}$ is a $(1-x\cdot \eps)$-approximate MWM.
    Then, for any $j$ such that $t_i<j<t_{i+1}$, $\widetilde M_{t_i}\cap G_j$ is a
    $(1-(x+2)\cdot \eps)$-approximate MWM on $G_{j}$.
\end{lemma}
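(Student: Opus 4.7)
The plan is to show that the matching $\widetilde M_{t_i} \cap G_j$ loses little weight relative to $\widetilde M_{t_i}$, and that $\mu_w(G_{t_i})$ and $\mu_w(G_j)$ are within a $(1\pm\eps)$ multiplicative factor. Combining these two bounds yields the claim.

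First, I would use the definition of $t_{i+1}$ to bound the number of updates between time $t_i$ and time $j$. Specifically, since $j < t_{i+1} = t_i + \eps \cdot w(M_{t_i})/W$, strictly fewer than $\eps \cdot w(M_{t_i})/W$ updates occur on this interval. Each update is either an insertion or a deletion of a single edge whose weight lies in $[1, W]$.

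Second, I would bound the weight loss from $\widetilde M_{t_i}$ to $\widetilde M_{t_i} \cap G_j$. The only updates that remove edges from this intersection are deletions of edges in $\widetilde M_{t_i}$, and each such deletion removes an edge of weight at most $W$. Hence
\[
    w(\widetilde M_{t_i} \cap G_j) \;\ge\; w(\widetilde M_{t_i}) - W \cdot (j - t_i) \;\ge\; w(\widetilde M_{t_i}) - \eps \cdot w(M_{t_i}) \;\ge\; (1 - x\eps)\,\mu_w(G_{t_i}) - \eps\, \mu_w(G_{t_i}),
\]
where I used $w(M_{t_i}) \le \mu_w(G_{t_i})$ and the hypothesis that $\widetilde M_{t_i}$ is $(1 - x\eps)$-approximate on $G_{t_i}$. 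So $w(\widetilde M_{t_i} \cap G_j) \ge (1 - (x+1)\eps)\mu_w(G_{t_i})$.

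Third, I would relate $\mu_w(G_{t_i})$ and $\mu_w(G_j)$. Each insertion increases $\mu_w$ by at most $W$ and each deletion decreases it by at most $W$, so
\[
    |\mu_w(G_{t_i}) - \mu_w(G_j)| \;\le\; W \cdot (j - t_i) \;<\; \eps\, w(M_{t_i}) \;\le\; \eps\, \mu_w(G_{t_i}),
\]
which gives $\mu_w(G_{t_i}) \ge \mu_w(G_j)/(1+\eps) \ge (1 - \eps)\,\mu_w(G_j)$. Chaining this with the previous estimate yields
\[
    w(\widetilde M_{t_i} \cap G_j) \;\ge\; (1 - (x+1)\eps)(1 - \eps)\,\mu_w(G_j) \;\ge\; (1 - (x+2)\eps)\,\mu_w(G_j),
\]
as claimed. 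The only mildly delicate step is the third one, where one must observe that using $w(M_{t_i})$ rather than $W$ in the checkpoint spacing is exactly what makes the additive error $\eps\, w(M_{t_i})$ translate into a purely relative $\eps$ loss against $\mu_w(G_{t_i})$; otherwise the bound would degrade with the weight range.
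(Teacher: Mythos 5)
Your proof is correct and follows essentially the same argument as the paper: both bound the total weight of the updated edges between time $t_i$ and $j$ by $W \cdot (j - t_i) < \eps\, w(M_{t_i}) \le \eps\,\mu_w(G_{t_i})$, use this to lower-bound $w(\widetilde M_{t_i} \cap G_j)$ and upper-bound $\mu_w(G_j)$, and then chain the two estimates. The only cosmetic difference is that the paper phrases the update bound in terms of $w(U)$ while you bound the count of updates times the max weight $W$ directly — these are the same quantity.
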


\begin{proof}
    Denote $U$ as the edge updates between $t_i$ and $t_{i+1}$ then
    \[w(U)\defeq\sum_{e\in U}w(e)\leq \frac{\eps\cdot w(M_{t_i})}{W}\cdot W
    \leq \eps\cdot\mu_w(G_{t_i}).\]
    The weight of the matching is at least
    \[w(\widetilde M_{t_i}\cap G_j)\geq w(\widetilde M_{t_i})-w(U)
    \geq (1-(x+1)\cdot \eps)\cdot \mu_w(G_{t_i}),\]
    while the MWM in the graph has weight at most
    \[\mu_w(G_j)\leq \mu_w(G_{t_i})+w(U)\leq (1+\eps)\cdot \mu_w(G_{t_i}).\]
    Thus
    \[w(\widetilde M_{t_i}\cap G_j)\geq \frac{1-(x+1)\cdot\eps}{1+\eps}\cdot \mu_w(G_j)
    \geq (1-(x+2)\cdot \eps)\cdot \mu_w(G_j).\]
\end{proof}
\cref{lemma:sufficiency of checkpoints} shows that it suffices to maintain
a good approximate MWM $\widetilde{M}_{t_i}$ at each checkpoint. Below we show that the number of
checkpoints within each phase is bounded by $O(W)$.
\begin{lemma}\label{lemma:number of checkpoints}
    Each phase has at most $\frac W{(1-\eps)^2}$ checkpoints.
\end{lemma}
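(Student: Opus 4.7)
The plan is to bound the length of a single phase and then lower bound the spacing $t_{i+1} - t_i$ between consecutive checkpoints inside it; dividing the former by the latter gives the desired bound on the number of checkpoints.

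First, recall that a phase beginning at time $t$ has total length exactly $\eps \cdot \nu_t$ time steps, and $\nu_t$ is a $(1-\eps)$-approximation from below to the size of the MCM in $G_t$, so in particular $\nu_t \leq \mu_1(G_t)$, where $\mu_1(\cdot)$ denotes MCM size. Hence during a phase at most $\eps \cdot \nu_t \leq \eps \cdot \mu_1(G_t)$ edge updates take place; in particular, for any checkpoint $t_i$ inside the phase,
\[
\mu_1(G_{t_i}) \;\geq\; \mu_1(G_t) - \eps \cdot \mu_1(G_t) \;=\; (1-\eps)\,\mu_1(G_t) \;\geq\; (1-\eps)\,\nu_t,
\]
because the MCM size changes by at most one per edge insertion/deletion.

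Next I would lower bound $w(M_{t_i})$. Since $\A$ maintains a $(1-\eps)$-approximate MWM on $G_{t_i}$ and every edge weight lies in $[1,W]$, we have $\mu_w(G_{t_i}) \geq \mu_1(G_{t_i})$ and consequently
\[
w(M_{t_i}) \;\geq\; (1-\eps)\,\mu_w(G_{t_i}) \;\geq\; (1-\eps)\,\mu_1(G_{t_i}) \;\geq\; (1-\eps)^2\,\nu_t.
\]
Therefore the spacing between consecutive checkpoints satisfies
\[
t_{i+1} - t_i \;=\; \frac{\eps \cdot w(M_{t_i})}{W} \;\geq\; \frac{\eps (1-\eps)^2 \nu_t}{W}.
\]

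Finally, since all checkpoints must lie inside the phase of length $\eps \cdot \nu_t$, the number of checkpoints is at most
\[
\frac{\eps \cdot \nu_t}{\eps (1-\eps)^2 \nu_t / W} \;=\; \frac{W}{(1-\eps)^2},
\]
which is the bound claimed. The only mildly subtle point — and the one I would be careful about — is the chain of inequalities $w(M_{t_i}) \geq (1-\eps)^2 \nu_t$: it relies on three separate facts (that $\A$ is a $(1-\eps)$-approximation on $G_{t_i}$, that weights are at least $1$ so MWM dominates MCM, and that MCM size can only change by one per update) and it is important that $\nu_t$ is an approximation to \emph{cardinality}, not to weight, so it can be compared against $\mu_1$ rather than $\mu_w$.
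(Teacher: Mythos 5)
Your proof is correct and follows essentially the same route as the paper's: bound the phase length by $\eps\nu_t$, lower bound the checkpoint spacing via $w(M_{t_i}) \geq (1-\eps)\mu_w(G_{t_i}) \geq (1-\eps)\mu_1(G_{t_i}) \geq (1-\eps)^2\nu_t$, and divide. The only cosmetic difference is that the paper tracks the loss in $\mu_1$ additively as $\mu_1(G_{t_i}) \geq \mu_1(G_{t_0}) - \eps\nu_{t_0} \geq (1-\eps)\nu_{t_0}$ whereas you use the slightly looser intermediate bound $\mu_1(G_{t_i}) \geq (1-\eps)\mu_1(G_t)$ before invoking $\mu_1(G_t)\geq\nu_t$; both yield the same final inequality.
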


\begin{proof}
    For a phase starting at time $t_0$, its length is $\eps\cdot \nu_{t_0}$.
    The gap between any checkpoints $t_i$ and $t_{i+1}$ is at least
    \[\frac{\eps\cdot w(M_{t_i})}{W}\geq \frac{\eps(1-\eps)\cdot \mu_w(G_{t_i})}{W}
    \geq \frac{\eps(1-\eps)\cdot \mu(G_{t_i})}{W}
    \geq \frac{\eps(1-\eps)\cdot (\mu(G_{t_0})-\eps\nu_{t_0})}{W}
    \geq \frac{\eps(1-\eps)^2\nu_{t_0}}{W},\]
    thus the number of checkpoints is at most $\frac{W}{(1-\eps)^2}$.
\end{proof}

\cref{alg:low-recourse:two time points} provides a direct transformation between
any two checkpoints. The full transformation will use~\cref{alg:low-recourse:two time points}
as a subroutine. The initial idea is to link the checkpoints in a path-like way,
i.e., the matching maintained by the full transformation $\widetilde M_{t_i}$ on $G_{t_i}$
is the output of~\cref{alg:low-recourse:two time points} on $\widetilde M_{t_{i-1}}$
and $M_{t_i}$, where $M_{t_i}$ is the matching maintained by $\A$ on $G_{t_i}$.
Two issues arise. The first issue is that the guarantee of~\cref{claim:two-time-points}
is an additive approximation, thus $\mu_w(G_{t_{i-1}})$ should not be too much larger
than $\mu_w(G_{t_i})$. The second issue is that the path length is $O(W)$.
Since after one direct transformation, the approximation error accumulates,
suppose we choose a $(1-\delta)$-approximate MWM algorithm, the final approximation
error could reach $O(W\cdot \delta)$.

We solve the above issues in the following way. The first issue can be fixed by only
allowing $\widetilde M_{t_i}$ to be transformed by some checkpoint $t_j$ with
$\mu_w(G_{t_j})\leq 2\cdot \mu_w(G_{t_i})$. The second issue is fixed by linking
the checkpoints in a tree-like way instead of a path-like. Formally, we define
the transformation tree as follows.
\begin{definition}[Transformation Tree]
    The transformation tree is a rooted tree where the nodes represent distinct checkpoints
    and could have \textbf{ordered} children. The degree of a transformation tree
    is the maximum number of children of any node. The depth of a node in the
    transformation tree is the number of edges in the path between the root and that node.
    The depth of the transformation tree is the largest depth of its node.
    The mapping between the checkpoints and the nodes will ensure that the
    preorder traversal of the transformation tree corresponds to a contiguous subarray
    of the checkpoints, i.e., $t_i,t_{i+1},\cdots,t_j$. Further, it ensures that
    for any pair of nodes $t_i,t_j$ such that $t_j$ is an ancestor of $t_i$ in the
    transformation tree, $\mu_w(G_{t_j})\leq O(1)\cdot \mu_w(G_{t_i})$.
\end{definition}

\begin{lemma}\label{lemma:transformation tree guarantee}
    Given a transformation tree with depth $d$ and degree $c$ that corresponds to
    the checkpoints $t_i,t_{t+1},\dots,t_j$ and a dynamic $(1-\eps)$-approximate
    MWM algorithm $\A$ with initialization time $\I(n,m,W,\eps)$, 
    update time $\U(n,m,W,\eps)$ and query time $T(n,m,W,\eps)$,
    there is an algorithm that dynamically and explicitly maintains a
    $(1-O(d\cdot\eps))$-approximate MWM on $G_{t_i},G_{t_{i+1}},\dots,G_{t_j}$
    with initialization time
    \[\I(n,m,W,\eps)+O(\nu_{t_i})\cdot T(n,m,W,\eps),\]
    amortized update time
    \[\U(n,m,W,\eps)+O(c\cdot d\cdot \eps^{-1})\cdot T(n,m,W,\eps),\]
    and ensures that
    \[\frac{1}{t_j-t_i}\sum_{k=i+1}^j|\widetilde M_{t_k}\oplus\widetilde M_{t_{k-1}}|=O(c\cdot d\cdot \eps^{-1}),\]
    where $\widetilde M_{t_k}$ is the matching output by the framework on $G_{t_k}$. The transformation is partially dynamic preserving.
\end{lemma}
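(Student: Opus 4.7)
The plan is to process the checkpoints in the preorder of the transformation tree, maintaining $\widetilde{M}_{t_v}$ explicitly at each checkpoint $v$. We set $\widetilde{M}_{t_i} := M_{t_i}$ (extracted from $\A$ via vertex-match queries), and for each non-root checkpoint $v$ with tree-parent $p$, we compute $\widetilde{M}_{t_v}$ by invoking \cref{alg:low-recourse:two time points} on $\widetilde{M}_{t_p}$ (already stored) and $M_{t_v}$ (queried from $\A$ on-the-fly). For any time $t_k<\tau<t_{k+1}$ strictly between consecutive checkpoints, we output $\widetilde{M}_{t_k}\cap G_\tau$, whose approximation guarantee is given by \cref{lemma:sufficiency of checkpoints}.

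The approximation bound follows by induction on the tree depth. Assuming $\widetilde{M}_{t_p}$ is $O(d'\eps\mu_w(G_{t_p}))$-additive-approximate at depth $d'$, applying \cref{claim:two-time-points} to the $(1-\eps)$-approximate $M_{t_v}$ yields an additive error of $O(d'\eps\mu_w(G_{t_p}))+2\eps\mu_w(G_{t_v})$ for $\widetilde{M}_{t_v}$. The transformation-tree property $\mu_w(G_{t_p})\leq O(1)\cdot\mu_w(G_{t_v})$ then gives an $O((d'+1)\eps\mu_w(G_{t_v}))$-additive error, so the matching at depth $d$ is $(1-O(d\eps))$-approximate, as required.

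For the recourse analysis, the key step is a DFS-tour argument. By triangle inequality for symmetric difference, $|\widetilde{M}_{t_k}\oplus\widetilde{M}_{t_{k-1}}|$ is bounded by the sum of $|\widetilde{M}_{t_a}\oplus\widetilde{M}_{t_b}|$ over tree edges $(a,b)$ on the unique tree path from $t_{k-1}$ to $t_k$. Since the concatenation of preorder-consecutive paths forms a DFS tour that traverses each tree edge exactly twice, $\sum_{k}|\widetilde{M}_{t_k}\oplus\widetilde{M}_{t_{k-1}}|$ is at most twice $\sum_{(p,v)}O((t_v-t_p)\eps^{-1})$ using the recourse bound of \cref{claim:two-time-points}. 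The hard part is to show $\sum_{(p,v)}(t_v-t_p)\leq O(cd)(t_j-t_i)$, which I will prove by bounding, for any fixed time $\tau$, the number of tree edges $(p,v)$ with $t_p<\tau\leq t_v$ by $O(cd)$. Letting $w$ be the first checkpoint with $t_w\geq\tau$ and $p_0,p_1,\ldots,p_\delta=w$ its root-to-$w$ path, every crossing edge is either the edge $(p_{\delta-1},w)$ itself or of the form $(p_k,u)$ where $u$ is a sibling of $p_{k+1}$ appearing \emph{after} $p_{k+1}$ in the child ordering of $p_k$; siblings appearing before $p_{k+1}$ have their subtrees completed in preorder before $w$, hence at time $<\tau$. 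Since each $p_k$ has at most $c-1$ such later siblings, the per-time bound is $O(cd)$, and integrating over $\tau\in(t_i,t_j]$ yields the desired bound. The amortized recourse is therefore $O(cd\eps^{-1})$.

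For the running time, $\A$ contributes $\I(n,m,W,\eps)$ at initialization and $\U(n,m,W,\eps)$ per update when used as a black-box subroutine. Each invocation of \cref{alg:low-recourse:two time points} at checkpoint $v$ costs $O((t_v-t_p)\eps^{-1}T)$: for each of the $O(t_v-t_p)$ updated edges we use $O(T)$-time vertex-match queries to decide membership in $M_{t_v}$ and trace paths/cycles in $P$ of length $O(\eps^{-1})$, each step also costing $O(T)$. Summing over all checkpoints and applying the same time-crossing bound amortizes to an additional $O(cd\eps^{-1})T$ per update. The initialization needs an additional $O(\nu_{t_i})T$ to extract $M_{t_i}$ from $\A$. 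Partial-dynamic preservation is immediate since $\A$ is used as a black box on the original update stream, and the transformation itself only reads from $\A$ and writes to the output.
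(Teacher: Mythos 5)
Your proposal follows the same overall scheme as the paper (process checkpoints in preorder, transform from the tree-parent via the two-time-points subroutine, bound the crossing tree edges per time step by $O(cd)$), and your crossing-edge count via the root-to-$w$ path is a valid, slightly more concrete rendering of the paper's per-depth disjoint-subarray argument. One genuine contribution of your write-up over the paper's proof is that you explicitly bridge from per-tree-edge recourse $|\widetilde M_{t_v}\oplus\widetilde M_{t_p}|$ to the output recourse $\sum_k|\widetilde M_{t_k}\oplus\widetilde M_{t_{k-1}}|$ via the triangle inequality along preorder-consecutive tree paths and the Euler-tour observation that each tree edge appears in at most two such paths; the paper amortizes the per-edge quantity against covered updates but never spells out this connection.

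However, your induction for the approximation bound has a small gap as written. You maintain the hypothesis ``$\widetilde M_{t_p}$ is $O(d'\eps\,\mu_w(G_{t_p}))$-additive-approximate at depth $d'$'' and then rescale $\mu_w(G_{t_p})\le C\,\mu_w(G_{t_v})$ at every level. If $C>1$ (which it is, e.g.\ $C=2$ in the paper's construction), this multiplies the hidden constant by $C$ at each step, so the ``constant'' in $O(d'\eps\,\mu_w(G_{t_v}))$ grows like $C^{d'}$ and the inductive step does not close. The fix, which is what the paper does, is to avoid the level-by-level rescaling: accumulate the raw error $\sum_{l\le d}2\eps\,\mu_w(G_{t_{p_l}})$ along the root-to-leaf path and invoke the transformation-tree property once at the end, using that \emph{every} ancestor $p_l$ satisfies $\mu_w(G_{t_{p_l}})\le C\,\mu_w(G_{t_{\text{leaf}}})$, giving $2Cd\eps\,\mu_w(G_{t_{\text{leaf}}})=O(d\eps)\mu_w(G_{t_{\text{leaf}}})$ directly. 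Your final conclusion is correct, but you should restate the induction in this accumulated form.
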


\begin{proof}

  We first describe the transformation.
  For the root $t_i$, we set $\widetilde M_{t_i}=M_{t_i}$. For any checkpoint $t_k>t_i$,
  it has a parent node $t_p<t_k$ in the transformation tree.
  We run~\cref{alg:low-recourse:two time points} on $\widetilde M_{t_p}$
  and $M_{t_k}$ to get $\widetilde M_{t_k}$.
  We now establish the guarantee of the transformation.
    
\paragraph{Approximation Error}
\cref{claim:two-time-points} shows that for any checkpoint
$t_k>t_i$ and its parent node $t_p$, the additional approximation error of
$\widetilde M_{t_k}$ increases by $O(\eps)\cdot \mu_w(G_{t_k})$ compared to that of
$\widetilde M_{t_p}$, since $M_{t_k}$ is a $(1-\eps)$-approximate MWM on $G_{t_k}$.
Denote $A_k$ as the set of ancestors of $t_k$,
then the additive approximation error of $\widetilde M_{t_k}$
is at most $O(\eps)\cdot \sum_{t_l\in A_k\cup\{t_k\}}
\mu_w(G_{t_l})=O(d\cdot\eps)\cdot \mu_w(G_{t_k})$,
since the definition of a transformation tree ensures that
for any $t_l\in A_k$, $\mu_w(G_{t_l})\leq O(1)\cdot \mu_w(G_{t_k})$.

\paragraph{Runtime and Recourse}
The additional initialization time is the cost of reading
the edge set of $M_{t_i}$. For the update time and recourse,
consider a fixed checkpoint $t_k$ and its parent $t_p$.
Using the vertex-match query of $\A$, we can find all edges in
$\widetilde M_{t_k}\oplus \widetilde M_{t_p}$ in time
$O((t_k-t_p)\cdot \eps^{-1})\cdot T(n,m,W,\eps)$
by~\cref{claim:two-time-points}. In other words, the cost of a
direct transformation from $t_p$ to $t_k$ can be amortized by
all updates between $t_p$ and $t_k$ and the amortized additional
update time is $O(\eps^{-1})\cdot T(n,m,W,\eps)$ while the amortized recourse
is $O(\eps^{-1})$. It suffices to show that for any fixed update $t$,
there will be at most $O(c\cdot d)$ direct transformation that covers it, i.e.,
the number of pairs $t_k$ and its parent node $t_p$ such that $t_p\leq t\leq t_k$
is at most $O(c\cdot d)$. The definition of the transformation tree ensures that
its subtree also corresponds to a contiguous subarray of checkpoints.
Consider a fixed depth of nodes in the transformation tree. The subtrees with those
nodes as root correspond to disjoint contiguous subarray. Thus $t$ could be included
in at most one of them, i.e., the number of distinct $t_p$ is at most $O(d)$.
Since the degree of each node is $c$, we conclude the proof.
\end{proof}

Below we show an online construction of $O(\log W)$ transformation trees
corresponding to disjoint contiguous subarrays whose union covers the entire phase.

\begin{algorithm2e}[!ht]
  \caption{Online Construction of Transformation Trees within a Phase} \label{alg:low-recourse:transformation trees}
  
  \SetEndCharOfAlgoLine{}

  \SetKwInput{KwData}{Input}
  \SetKwInput{KwResult}{Output}
  \SetKwInOut{State}{global}
  \SetKwProg{KwProc}{function}{}{}
  \SetKwFunction{Initialize}{Initialize}
  \SetKwFunction{Build}{Build}

  \KwData{A set of checkpoints $\{t_0,t_1,\dots,t_l\}$ where $l=O(W)$.}
  Denote $M_{t_i}$ as the matching maintained by $\A$ on $G_{t_i}$.\;
  $\theta \gets \lceil\log \frac{W}{(1-\eps)^2}\rceil=O(\log W)$.\;
  $\mathtt{root}\gets t_0$.\;
  $\mathtt{root.complete}\gets 0$.\;
  $\mathtt{root.depth}\gets 0$.\;
  $\mathtt{cur}\gets t_0$.\;
  \For{$i=1,\dots,l$}{
    \While{$\lfloor \log w(M_{\mathtt{cur}})\rfloor>\lfloor \log w(M_{t_{i}})\rfloor$ and $\mathtt{cur}\neq \mathtt{root}$}{
      $\mathtt{cur}\gets \mathtt{cur.father}$.\;
    }
    \eIf{$\lfloor \log w(M_{\mathtt{cur}})\rfloor>\lfloor \log w(M_{t_{i}})\rfloor$}{
      $\mathtt{root} \gets t_i$.\label{line:new root}\;
      $t_i\mathtt{.depth}\gets 0$.\;
    }{
    $t_i\mathtt{.father}\gets \mathtt{cur}$.\;
    $t_i\mathtt{.depth}\gets \mathtt{cur.depth} + 1$.\;
    }
    \eIf{$t_i\mathtt{.depth}=\theta$}{
      \While{$\mathtt{cur.complete} = 1$}{
        $\mathtt{cur}\gets \mathtt{cur.father}$.\;
      }
      $\mathtt{cur.complete}=1$.\;
    }{
      $\mathtt{cur}\gets t_i$.\;
      $\mathtt{cur.complete}=0$.\;
    }
  }
\end{algorithm2e}

\begin{lemma}\label{lemma:transformation tree construction}
\cref{alg:low-recourse:transformation trees} constructs $O(\log W)$ transformation trees
with depth $O(\log W)$ and degree $O(\log W)$ in amortized $O(1)$ time that
corresponds to contiguous subarrays that are disjoint and their union covers
the entire phase.
\end{lemma}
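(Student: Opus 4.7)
The plan is to verify the six properties claimed for the trees produced by \cref{alg:low-recourse:transformation trees}: (a) they partition the checkpoints into disjoint contiguous subarrays covering the whole phase; (b) they satisfy the transformation tree property $\mu_w(G_{t_p}) = O(\mu_w(G_{t_c}))$ for any ancestor--descendant pair $t_p, t_c$; (c) each tree has depth $O(\log W)$; (d) each tree has degree $O(\log W)$; (e) there are $O(\log W)$ trees in total; and (f) the amortized cost is $O(1)$ per checkpoint.

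For (a) and (b), I would first note that the main loop processes checkpoints in order, and each is either promoted to a new root or attached as the last child of some ancestor $\mathtt{cur}$, so the preorder traversal of each tree matches a contiguous segment of the checkpoint sequence. The inner weight-comparison while loop halts the first time $\lfloor\log w(M_\mathtt{cur})\rfloor \le \lfloor\log w(M_{t_i})\rfloor$, which implies $w(M_\mathtt{cur}) \le 2\,w(M_{t_i})$ whenever $t_i$ is attached as a child. Iterating along any root-to-leaf path gives $w(M_{t_p}) \le 2\,w(M_{t_c})$ for any ancestor--descendant pair, and combining this with the $(1\pm \eps)$-approximation guarantee of $\A$ yields the constant-factor ratio between $\mu_w(G_{t_p})$ and $\mu_w(G_{t_c})$.

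For (c) and (e), I would maintain the invariant that the depth of $\mathtt{cur}$ is strictly less than $\theta = \lceil \log(W/(1-\eps)^2)\rceil = O(\log W)$ after each iteration; this is preserved because the depth conditional either moves $\mathtt{cur}$ to a strict ancestor of a freshly placed depth-$\theta$ node, or sets $\mathtt{cur} \gets t_i$ only when $t_i$'s depth is below $\theta$. Since every new child has depth one more than its parent, the tree depth never exceeds $\theta$. For (e), a new tree is started only when the pop loop reaches the current root while $\lfloor\log w(M_\mathtt{root})\rfloor > \lfloor\log w(M_{t_i})\rfloor$, so successive tree roots have strictly decreasing $\lfloor\log w(\cdot)\rfloor$; bounding the total weight change over a phase of length $\eps \nu_{t_0}$ by $\eps W \nu_{t_0}$ and using $w(M_{t_0}) = \Omega(\nu_{t_0})$ shows that the range of $\lfloor\log w(M_{t_i})\rfloor$ across a phase is $O(\log W)$, hence there are at most $O(\log W)$ trees.

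For (d) and (f), the degree bound follows from a structural analysis of the completeness mechanism: whenever a node at depth $\theta$ is attached, the completeness while loop walks upward and marks the lowest non-complete ancestor as complete, so a node $v$ at depth $d$ acquires a new child only after the previous child's subtree has been fully completed. An induction on depth then shows that each node has at most $O(\theta) = O(\log W)$ children. For (f), every upward motion of $\mathtt{cur}$ in either while loop can be charged to a distinct prior descent (an attachment of a new child or an explicit assignment $\mathtt{cur} \gets t_i$), giving total work $O(l)$ across all checkpoints and amortized $O(1)$ per checkpoint. I expect the main obstacle to be the careful combinatorial bookkeeping of the completeness-flag propagation required to establish (d) tightly; the remaining parts follow from direct invariant maintenance.
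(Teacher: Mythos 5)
The main issue is in your argument for the degree bound (d). You claim ``a node $v$ at depth $d$ acquires a new child only after the previous child's subtree has been fully completed,'' but this is false: the \emph{weight-comparison} while loop at the start of each iteration can move $\mathtt{cur}$ from deep inside $v$'s current subtree back up to $v$ (whenever a checkpoint arrives whose $\lfloor\log w(M_{t_i})\rfloor$ is small), at which point $v$ acquires a new child even though the previous child's subtree is far from complete. So the induction you sketch has no base to stand on. The argument the paper uses instead is: each time $v$ acquires a child $c_{j+1}$ because $\mathtt{cur}$ was popped up \emph{through} the previous child $c_j$ by the weight loop, we have $\lfloor\log w(M_{c_{j+1}})\rfloor < \lfloor\log w(M_{c_j})\rfloor$, so the $\lfloor\log w(\cdot)\rfloor$ values of such successive children are strictly decreasing and there can be only $O(\log W)$ of them; the only exception is the single transition of $v.\mathtt{complete}$ from $0$ to $1$ (after which $v$ can be handed a fresh child with an arbitrary $\lfloor\log w\rfloor$), which splits $v$'s children into two monotone blocks and still gives degree $O(\log W)$. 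Your proposal omits both the role of the weight-loop pop and the one-shot nature of the completeness flag, so it does not establish (d).

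Two smaller issues. For (e), ``$w(M_{t_0}) = \Omega(\nu_{t_0})$'' combined with ``total weight change $\le \eps W \nu_{t_0}$'' gives a lower bound $\Omega(\nu_{t_0}) - \eps W \nu_{t_0}$, which is vacuous whenever $\eps W \ge 1$; the lower bound needs to go through the unweighted matching size, i.e.\ $w(M_{t_i}) \ge (1-\eps)\mu_w(G_{t_i}) \ge (1-\eps)\mu(G_{t_i}) \ge (1-\eps)^2 \nu_{t_0}$, uniformly over the phase. For (b), ``iterating along a root-to-leaf path'' as written reads as multiplying the parent--child factor of $2$ along the path, which would give $2^{\mathrm{depth}}$; what actually holds, and what you should state, is that $\lfloor\log w(\cdot)\rfloor$ is non-decreasing along the root-to-$\mathtt{cur}$ path (an invariant preserved by both while loops and by attaching $t_i$ as a child), from which \emph{any} ancestor--descendant pair differs by at most a factor of $2$. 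Your remaining parts, (a), (c), and the charging argument for (f), are fine.
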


\begin{proof}
    It is clear that~\cref{alg:low-recourse:transformation trees} constructed
    a set of transformation trees in amortized $O(1)$ time with depth $O(\log W)$
    that are disjoint. Since there are $O(W)$ checkpoints within a phase according to
    \cref{lemma:number of checkpoints} and the choice of $\theta$,
    those transformation tree satisfy the covering property. We will prove that
    there are $O(\log W)$ of them, each with degree $O(\log W)$.

    \cref{alg:low-recourse:transformation trees} creates a new transformation tree
    whenever the root changes, i.e., at line~\ref{line:new root}.
    Thus the new root satisfies that
    $\lfloor\log w(M_{\mathtt{new\_root}})\rfloor<\lfloor\log w(M_{\mathtt{old\_root}})\rfloor$.
    The length of a phase starting with $t_0$ is set to be $\eps\cdot \nu_{t_0}$.
    Thus for any checkpoint $t_i$ within the phase,
    \[(1-\eps)\cdot\nu_{t_0}\leq w(M_{t_i})\leq \frac{1+\eps}{1-\eps}\cdot W\cdot \nu_{t_0},\]
    meaning that the number of different $\lfloor\log w(M_{t_{\cdot}})\rfloor$
    is at most $O(\log W)$.

    For each node $v$ in the constructed tree, $v\mathtt{.complete}$ represents whether
    there is a subtree with one of its children as root that is a complete binary tree
    with depth $\theta-v\mathtt{.depth}$. $v\mathtt{.complete}$ can only be $0$ or $1$
    during the execution. And whenever it changes to $1$, $v$ would continue building
    its subtree with a new child node. Therefore, with the same reason as
    the number of roots, $v$ would have $O(\log W)$ children when $v\mathtt{.complete}=0$ and
    $O(\log W)$ children when $v\mathtt{.complete}=1$, proving that the degree is $O(\log W)$.
\end{proof}

\begin{theorem}\label{thm:our low-recourse}
    Given a dynamic $(1-\eps)$-approximate MWM algorithm $\A$ with initialization time
    $\I(n,m,W,\eps)$, update time $\U(n,m,W,\eps)$ and query time $T(n,m,W,\eps)$, there is a transformation that produces a dynamic algorithm that explicitly maintains a $(1-O(\eps\cdot\log W))$-approximate MWM with initialization time
    \[\I(n,m,W,\eps),\]
    amortized update time
    \[\U(n,m,W,\eps)+O(\log^2 W\cdot\eps^{-1})\cdot T(n,m,W,\eps),\]
    and amortized recourse
    \[O(\log^2 W\cdot \eps^{-1}).\]
    The transformation is partially dynamic preserving.
\end{theorem}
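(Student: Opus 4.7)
The plan is to stitch together the machinery already developed — \cref{alg:low-recourse:two time points,alg:low-recourse:transformation trees} together with \cref{claim:two-time-points,lemma:sufficiency of checkpoints,lemma:number of checkpoints,lemma:transformation tree guarantee,lemma:transformation tree construction} — into a single phased algorithm. Conceptually the transformation runs $\A$ in the background, partitions time into phases, and within each phase uses the online transformation-tree construction to explicitly maintain a low-recourse matching that tracks the (possibly implicit) matching $M_i$ maintained by $\A$.

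First I would formalize the phase structure. Alongside $\A$ we run an independent copy of $\A$ on the unweighted version of $G$ to obtain a $(1-\eps)$-approximation $\nu_t$ to the size of the MCM at each time $t$. A phase begins at some time $t_0$, uses the vertex-match interface to read the edge set of $M_{t_0}$ in $O(\nu_{t_0}) \cdot T$ time, sets the phase length to $\eps \cdot \nu_{t_0}$, and defines checkpoints $t_0, t_1, \dots$ by $t_{i+1} = t_i + \eps w(M_{t_i})/W$. By \cref{lemma:number of checkpoints}, a phase has at most $O(W)$ checkpoints. When the phase length elapses we close any open transformation tree and restart.

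Within a phase, I would maintain a matching $\widetilde M_{t_i}$ at every checkpoint using the transformation-tree structure produced online by \cref{alg:low-recourse:transformation trees}. By \cref{lemma:transformation tree construction}, the checkpoints of a phase are covered by at most $O(\log W)$ disjoint transformation trees, each of depth and degree $O(\log W)$. Feeding one such tree into \cref{lemma:transformation tree guarantee} with $c, d = O(\log W)$ yields a $(1-O(\log W \cdot \eps))$-approximate MWM at every checkpoint in its range, with amortized extra update time $O(\log^2 W \cdot \eps^{-1}) \cdot T$ and amortized recourse $O(\log^2 W \cdot \eps^{-1})$ charged to the updates inside that range. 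Between consecutive checkpoints $t_i < j < t_{i+1}$ the output is simply $\widetilde M_{t_i}$ restricted to $G_j$ (i.e., with adversarially deleted edges dropped); \cref{lemma:sufficiency of checkpoints} bounds the additional approximation loss by $O(\eps)$.

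Combining the pieces, the approximation error is $(1-O(\log W \cdot \eps))$ as claimed, and the per-tree update time and recourse bounds sum over the disjoint trees to give the stated overall bounds. The $O(\nu_{t_0}) \cdot T$ cost of reading $M_{t_0}$ at the start of a phase amortizes to $O(\eps^{-1}) \cdot T$ per update over the phase length $\Omega(\eps \nu_{t_0})$, which is dominated by $O(\log^2 W \cdot \eps^{-1}) \cdot T$. Partial-dynamic preservation follows since the transformation interacts with $\A$ only through queries and feeds it the same update stream it receives. The main obstacle I anticipate is bookkeeping at phase boundaries: we reset $\widetilde M_{t_0}$ to $M_{t_0}$, which can differ entirely from the final matching of the previous phase. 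The reset incurs a one-shot recourse of $O(\nu_{t_0})$, which amortizes to $O(\eps^{-1})$ per update over the $\Omega(\eps \nu_{t_0})$-length phase, fitting within the stated budget; and the approximation at the first checkpoint of a new phase is already $(1-\eps)$ since $\widetilde M_{t_0} = M_{t_0}$, so no additional error leaks across the boundary.
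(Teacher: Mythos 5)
Your proposal takes essentially the same route as the paper's proof: partition time into phases, use the online construction of \cref{lemma:transformation tree construction} to cover each phase with $O(\log W)$ transformation trees of depth and degree $O(\log W)$, apply \cref{lemma:transformation tree guarantee} per tree with $c,d=O(\log W)$, and invoke \cref{lemma:sufficiency of checkpoints} between checkpoints. The one thing to tighten is the amortization of tree initializations: you only account for the $O(\nu_{t_0})\cdot T$ read at the start of the phase, but \emph{each} of the $O(\log W)$ tree roots within a phase incurs an $O(\nu_{t_i})\cdot T$ read of the matching; the paper amortizes all of these over the phase to get an extra $O(\log W\cdot\eps^{-1})\cdot T$ term (and $O(\log W\cdot\eps^{-1})$ recourse), still dominated by the stated $O(\log^2 W\cdot\eps^{-1})\cdot T$ bound, so your conclusion stands.
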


\begin{proof}
    The only initialization time is for $\A$. We would prove the amortized update time
    and recourse within each phase. According to~\cref{lemma:transformation tree construction},
    ~\cref{alg:low-recourse:transformation trees} builds $O(\log W)$ transformation trees with
    $O(\log W)$ depth and degree that correspond to disjoint contiguous subarrays of
    checkpoints and their union covers the entire phase.
    ~\cref{lemma:transformation tree guarantee} shows that for each transformation tree,
    we spend $O(\nu_{t_0})\cdot T(n,m,W,\eps)$ time and $O(\nu_{t_0})$ recourse for
    initialization of the transformation. The initialization for $O(\log W)$ transformation
    trees can be amortized over the entire phase to be
    $O(\log W\cdot\eps^{-1})\cdot T(n,m,W,\eps)$ amortized update time and
    $O(\log W\cdot\eps^{-1})$ recourse.
    Thus the bottleneck is the $O(\log^2 W\cdot \eps^{-1})\cdot T(n,m,W,\eps)$
    amortized update time and $O(\log^2 W\cdot\eps^{-1})$ recourse that each transformation
    tree induces by the checkpoints other than root. 
\end{proof}

Consequently, our framework has a $\poly(1/\eps)$ additive overhead independent of the underlying algorithm. For simplicity, in the further use of this result in this work, we only consider algorithms that explicitly maintain the matching with $T(n,m,W,\eps)=O(1)$.

\begin{corollary}\label{corollary:partial reduction:low-recourse}
  Given a dynamic $(1-\eps)$-approximate MWM algorithm $\A$ that,
  on input $n$-vertex $m$-edge graph with aspect ratio $W$,
  has initialization time $\I(n, m, W, \varepsilon)$, update time $\U(n, m, W, \varepsilon)$ and query time $T(n,m,W,\eps)$,
  there is a transformation that produces a dynamic $(1-O(\eps\log \eps^{-1}))$-approximate MWM algorithm that has initialization time
  \[O(\I(n, m, \Theta(\eps^{-2-3\cdot 2^{-d}}), \Theta(\eps))+m\eps^{-1}),\]
  amortized update time
  \[O(\U(n, m, \Theta(\eps^{-2-3\cdot 2^{-d}}), \Theta(\eps))+\eps^{-2-d}\log^2\eps^{-1}\cdot T(n,m,\Theta(\eps^{-2-3\cdot 2^{-d}}),\eps)),\]
  and amortized recourse
  \[O(\eps^{-2-d}\log^2\eps^{-1})\]
  for any integer parameter $d\in\Z_{\geq 0}$.
  The transformation is partially dynamic preserving.
\end{corollary}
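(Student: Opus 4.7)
The plan is to derive Corollary \ref{corollary:partial reduction:low-recourse} by composing the two main results of this section: the low-recourse transformation of Theorem \ref{thm:our low-recourse} and the recursive weight-reduction framework of Theorem \ref{thm:partial reduction:2}. The key observation is that, inside the reduction framework of Theorem \ref{thm:partial reduction:2}, the base algorithm is only ever invoked on subgraphs of aspect ratio $W_{\text{inner}} \defeq \Theta(\eps^{-2-3\cdot 2^{-d}})$, so $\log W_{\text{inner}} = \Theta(\log \eps^{-1})$. If we first pre-process $\A$ with the low-recourse transformation on this range of weights, the resulting algorithm has amortized recourse depending only on $\log \eps^{-1}$ (rather than the original $\log W$), which is exactly what is needed to turn the multiplicative-recourse blowup in Theorem \ref{thm:partial reduction:2} into the additive $\poly(1/\eps)$ overhead advertised in the corollary.

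Concretely, the first step is to apply Theorem \ref{thm:our low-recourse} to $\A$ restricted to weights in $[1, W_{\text{inner}}]$, obtaining an algorithm $\A'$ that is $(1-O(\eps\log\eps^{-1}))$-approximate and has amortized recourse $\sigma'(n,m,W_{\text{inner}},\eps) = O(\log^2 \eps^{-1} \cdot \eps^{-1})$ and amortized update time $\U'(n,m,W_{\text{inner}},\eps) = \U(n,m,W_{\text{inner}},\eps) + O(\log^2 \eps^{-1} \cdot \eps^{-1}) \cdot T(n,m,W_{\text{inner}},\eps)$, with the initialization time unchanged. The second step is to instantiate Theorem \ref{thm:partial reduction:2} using $\A'$ as the base algorithm. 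Plugging the above expressions for $\U'$ and $\sigma'$ into the bounds supplied by Theorem \ref{thm:partial reduction:2} and simplifying $O(\log^2\eps^{-1}\cdot\eps^{-1})\cdot\eps^{-(1+d)} = O(\eps^{-2-d}\log^2\eps^{-1})$ yields, term by term, precisely the initialization time, amortized update time, and amortized recourse claimed in the corollary (the additive $+\eps^{-(1+d)}$ term from Theorem \ref{thm:partial reduction:2} is absorbed into $O(\eps^{-2-d}\log^2\eps^{-1})\cdot T$ since $T \ge 1$). Partial-dynamic preservation is inherited from both constituent transformations.

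The main (mild) obstacle, and the only non-routine step, is verifying that Theorem \ref{thm:partial reduction:2} still gives a meaningful guarantee when its base algorithm has approximation $(1-O(\eps\log\eps^{-1}))$ rather than $(1-\eps)$. Inspecting Lemma \ref{lemma:reduction:partial:approximation}, the approximation analysis is entirely modular: the framework charges $O(\eps)$ loss to the matching composition step, the locally greedy census step, and the degree-two aggregation step, and those losses simply \emph{add} to whatever error is present in the input matchings $M_i$. Hence feeding in an algorithm with error $O(\eps\log\eps^{-1})$ yields overall approximation $(1-O(\eps\log\eps^{-1}))$, matching the statement. Everything else in the proof is a direct substitution of parameters into the two previously established theorems.
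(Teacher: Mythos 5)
Your proposal is correct and takes the same route as the paper, which proves the corollary as a direct composition of Theorem~\ref{thm:our low-recourse} and Theorem~\ref{thm:partial reduction:2}. You correctly identify that the low-recourse transformation must be applied first so that its guarantees are evaluated at the inner aspect ratio $W_{\text{inner}} = \Theta(\eps^{-2-3\cdot2^{-d}})$ (hence $\log W_{\text{inner}} = \Theta(\log\eps^{-1})$), and your accounting of how the $(1-O(\eps\log\eps^{-1}))$ approximation propagates additively through the framework matches the modular structure of Lemma~\ref{lemma:reduction:partial:approximation}.
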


\begin{proof}
    It follows from combining \cref{thm:partial reduction:2} and \cref{thm:our low-recourse}.
\end{proof}

\subsection{Putting Everything Together}\label{sec:putting everything together}

With the help of our low-recourse transformation, we obtain an efficient fully dynamic algorithm on low-degree graphs, which then leads to a $\widetilde O(\poly(1/\eps))$ update time rounding algorithm for weighted fractional matchings. The low-degree algorithm can also serve as an efficient aggregation even if we reduce the aspect ratio to $O(\eps^{-2})$ in our framework, which is the best we can hope for based on~\cref{lemma:matching composition lemma}. Finally, combined with~\cite{BernsteinDL21} we reduce weighted matching algorithms to unweighted ones in bipartite graphs.
                               
\subsubsection{Fully Dynamic Algorithm on Low-Degree Graphs}            
\begin{theorem}
  Given an $n$-vertex $m$-edge graph $G$ with edge weights
  bounded by $W$ that undergoes edge insertions and deletions
  such that the maximum degree of $G$ is bounded by $\Delta$, there is an algorithm with
  $O(\Delta \eps^{-5}\log^2\eps^{-1})$ amortized update time and
  $O(\eps^{-5}\log^2\eps^{-1})$ amortized recourse
  that explicitly maintains a $(1-O(\eps\log\eps^{-1}))$-approximate MWM in $G$.
  \label{thm:low-degree}
\end{theorem}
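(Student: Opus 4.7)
The plan is to combine the aspect-ratio reduction of \cref{corollary:partial reduction:low-recourse} with the fully dynamic low-degree MWM algorithm of Gupta--Peng, whose update time depends favorably on the aspect ratio. The overall structure is straightforward: the reduction shrinks the aspect ratio to a small polynomial in $1/\eps$, at which point any reasonable bounded-degree base algorithm suffices.

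Concretely, I would invoke \cref{corollary:partial reduction:low-recourse} with parameter $d = 3$. This reduces the aspect ratio from $W$ down to $\Theta(\eps^{-2-3 \cdot 2^{-3}}) = \Theta(\eps^{-19/8}) = O(\eps^{-3})$, while adding $O(\eps^{-5}\log^2 \eps^{-1})$ additively to the amortized recourse and $O(\eps^{-5}\log^2 \eps^{-1}) \cdot T$ to the amortized update time (where $T$ is the vertex-match query time of the base algorithm). The resulting approximation factor becomes $(1 - O(\eps \log \eps^{-1}))$, which matches the theorem's claim; the transformation is partially dynamic preserving by the corollary.

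For the base algorithm $\A$, I would plug in Gupta--Peng's fully dynamic $(1-\eps)$-approximate MWM algorithm specialized to low-degree graphs. On a graph of maximum degree $\Delta$ and aspect ratio $W' = O(\eps^{-3})$, their algorithm processes each edge update by locally re-optimizing a bounded neighborhood around the affected endpoints, giving update time $O(\Delta \cdot \poly(1/\eps))$ and supporting constant-time vertex-match queries (so $T = O(1)$). Feeding this into the corollary, the total amortized update time becomes
\[
O(\Delta \cdot \poly(1/\eps)) + O(\eps^{-5}\log^2 \eps^{-1}) \cdot O(1) = O(\Delta \cdot \eps^{-5}\log^2 \eps^{-1}),
\]
and the total amortized recourse is $O(\eps^{-5}\log^2 \eps^{-1})$, matching the theorem's stated bounds.

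The main obstacle is verifying the precise polynomial dependence of the base low-degree algorithm on the reduced aspect ratio $W' = O(\eps^{-3})$: one must ensure that its $\poly(W'/\eps)$ factor is dominated by $\eps^{-5}\log^2 \eps^{-1}$. If Gupta--Peng's algorithm does not fit this budget off the shelf, a fallback base can be built directly via local recomputation: after each edge update $uv$, rebuild a $(1-\eps)$-approximate MWM on a local subgraph around $u$ and $v$ using \cref{thm:DP14}. Because the aspect ratio is $O(\eps^{-3})$ and the maximum degree is $\Delta$, the effect of any update is confined to a neighborhood of size $O(\Delta \cdot \poly(1/\eps))$, so \cref{thm:DP14} runs on it in $O(\Delta \cdot \poly(1/\eps))$ time, still fitting inside the target update time budget.
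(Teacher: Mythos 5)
Your proposal is correct and takes essentially the same approach as the paper: the paper also invokes \cref{corollary:partial reduction:low-recourse} with $d=3$ and plugs in the Gupta--Peng low-degree fully dynamic algorithm. The ``main obstacle'' you flag is settled by the paper's \cref{fact:low-degree-mwm}, which records that algorithm's bound as $O(\Delta W \eps^{-2}\log\eps^{-1})$ update time and $O(\Delta W\eps^{-1})$ recourse, so with reduced aspect ratio $W' = \Theta(\eps^{-19/8})$ the base update time $O(\Delta\eps^{-35/8}\log\eps^{-1})$ is indeed dominated by the target $O(\Delta\eps^{-5}\log^2\eps^{-1})$.
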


\begin{proof}
  For a subgraph of $G$, we can use the following standard
  algorithm~\cite{GuptaP13} to maintain a $(1-\eps)$-approximate
  MWM on it.

  \begin{fact}
  There is a fully dynamic $(1-\eps)$-approximate MWM algorithm
  that has amortized update time $O(\Delta W \eps^{-2}\log(\eps^{-1}))$
  and recourse $O(\Delta W \eps^{-1})$ on a graph
  with maximum degree $\Delta$ and aspect ratio $W$.
  \label{fact:low-degree-mwm}
  \end{fact}

  The theorem follows by applying \cref{corollary:partial reduction:low-recourse}
  with $d=3$.
\end{proof}

\subsubsection{Rounding Weighted Fractional Matching}\label{sec:rounding}

We obtain a weighted rounding algorithm with polynomial dependence
on $\eps^{-1}$,
showing that the dynamic fractional matching problem is as hard as the integral
one up to $\poly(1/\eps)$ factors. Formally, a dynamic weighted rounding algorithm is defined as follows. 

\begin{definition}[{see e.g.,~\cite[Definition 3.11]{ChenST23}}]
  A \emph{dynamic rounding algorithm}, for a given $n$-vertex graph $G = (V, E)$, edge weights $\bw \in \N^{E}$ bounded by $W = \poly(n)$, and accuracy parameter $\eps > 0$, initializes with an $\bx \in \M_G$ and must maintain an integral matching $M \subseteq \supp(\bx)$ with $\bw(M) \geq (1-\eps)\bw^\top \bx$ under entry updates to $\bx$ that guarantee $\bx \in \M_G$ after each operation.
  \label{def:rounding}
\end{definition}

We are going to use the given fractional matching to find a sparse subgraph on which there is a large weight matching.

\begin{lemma}[\cite{ChenST23}]
  Given an $\eps > 0$, there is a deterministic algorithm that, on an $m$-edge graph $G$ with edge weights bounded by $W$ and a fractional matching $\bx \in \M_G$, initializes in $\widetilde{O}(m)$ time, supports
  \begin{itemize}
    \item inserting/deleting and edge or changing the value of $\bx_e$ in amortized $\widetilde{O}(W \cdot \eps^{-1})$ time,
  \end{itemize}
  and maintains
  \begin{itemize}
    \item a subgraph $H \subseteq G$ with maximum degree $\widetilde{O}(\eps^{-2})$ on which a fractional matching $\bx^{(H)}$ of weight $\sum_{e \in E}w(e)\bx^{(H)}_e \geq (1-\eps)\sum_{e \in E}w(e)\bx_e$ that satisfies $\bx^{(H)}(v) \leq \bx(v) + O(\eps)$ for all $v \in V$ and $\left|\bx^{(H)}_e - \bx_e\right| \leq O(\eps^2)$ for all $e \in E$.
  \end{itemize}
  \label{lemma:unweighted-sparsifier}
\end{lemma}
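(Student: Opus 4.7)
The plan is to construct $H$ by partitioning edges into geometric weight classes and, within each class, retaining only the edges carrying the most $\bx$-mass at each vertex; then $\bx^{(H)}$ will be a slight down-scaling of $\bx$ restricted to $H$. First, I would bucket edges by weight, $E_i \defeq \{e \in E : w(e) \in [(1+\eps)^{i-1}, (1+\eps)^i)\}$ for $i \in \{1, \dots, L\}$ with $L = O(\eps^{-1}\log W)$, so that within each bucket all weights are within a $(1+\eps)$ factor of a common representative $w_i$. This essentially reduces the problem within each bucket to an unweighted fractional matching sparsification: preserving $(1-\eps)$ of the total $\bx$-mass in bucket $i$ preserves $(1-O(\eps))$ of the weight contribution from that bucket.

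Within each bucket $E_i$ and each vertex $v$, I would retain only the top $K = \widetilde{O}(\eps^{-1})$ edges by $\bx$-value at $v$; the union over all buckets (and all vertices) forms $H$. Since each vertex participates in $L$ buckets and keeps at most $K$ edges per bucket, the maximum degree of $H$ is $K \cdot L = \widetilde{O}(\eps^{-2})$. To also enforce $|\bx^{(H)}_e - \bx_e| \leq O(\eps^2)$, every \emph{heavy} edge with $\bx_e > \Theta(\eps^2)$ must be included in $H$; but the fractional matching constraint $\bx(v) \leq 1$ allows at most $O(\eps^{-2})$ heavy edges per vertex, so this only inflates the degree by an additive $O(\eps^{-2})$.

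For the weight-loss analysis, let $d_i(v) \defeq \sum_{e \in E_i,\, e \ni v} \bx_e \leq \bx(v) \leq 1$ and let $b_1^{(v,i)} \geq b_2^{(v,i)} \geq \dots$ denote the sorted $\bx$-values at $v$ in bucket $i$. Each dropped non-heavy edge has $\bx_e \leq b_{K+1}^{(v,i)} \leq d_i(v)/K = O(\eps \cdot d_i(v))$, and a Markov-style charging argument shows that the total dropped mass at $v$ in bucket $i$ is $O(\eps) \cdot d_i(v)$; summing over vertices and buckets yields the desired $(1-O(\eps))$ bound on $\bw^\top \bx^{(H)}$. I would then define $\bx^{(H)}_e \defeq (1-c\eps)\bx_e$ on retained edges for a small constant $c$ (and $0$ elsewhere), which ensures strictly $\bx^{(H)}(v) \leq \bx(v)$ while only perturbing each entry by $O(\eps \cdot \bx_e)$; combining with the edge-wise criterion for heavy edges establishes $|\bx^{(H)}_e - \bx_e| \leq O(\eps^2)$. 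Rescaling $\eps$ by a constant yields the stated $(1-\eps)$ approximation.

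For the dynamic maintenance, I would keep at each vertex $v$ and each bucket $i$ a balanced BST of the edges incident to $v$ in $E_i$, keyed by $\bx$-value, supporting top-$K$ queries and updates in $O(\log n)$ time. The main obstacle is bounding the cascade: when $\bx_e$ changes, or $e$ is inserted or deleted, $e$'s rank in the BST at both endpoints can shift, potentially promoting or demoting other edges into/out of the ``top-$K$'' set, and these promotions/demotions themselves force updates to $\bx^{(H)}$ at neighboring vertices. I expect the stated $\widetilde{O}(W/\eps)$ amortized bound to come from a potential-function argument that charges each discretized $\bx$-level crossing (of which there are $\widetilde{O}(W)$ per edge over its lifetime, after quantizing $\bx$ to multiples of $\Theta(\eps^2)$) to the update causing it, multiplied by the $\widetilde{O}(\eps^{-1})$ cost of each BST promotion/demotion through the top-$K$ boundary. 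This amortized analysis is, in my view, the most delicate part of the proof, since one must ensure that an adversarial sequence of small perturbations to $\bx$ cannot repeatedly ``ping-pong'' edges across the top-$K$ threshold.
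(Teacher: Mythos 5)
The paper does not actually prove this lemma: it is an imported black-box result, cited to \cite{ChenST23} with no accompanying argument. So there is no internal proof to compare against; what can be assessed is whether your sketch is internally sound, and unfortunately it has a real gap in the mass-preservation step.

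The breaking point is the claim that, after keeping only the top $K=\widetilde{O}(\eps^{-1})$ edges by $\bx$-value at each vertex in each weight bucket, ``the total dropped mass at $v$ in bucket $i$ is $O(\eps)\cdot d_i(v)$.'' This is not a consequence of any Markov-style bound. Consider a single bucket in which $v$ has $M\gg K$ incident edges, each with $\bx_e=d_i(v)/M$. Every dropped edge indeed has $\bx_e\le d_i(v)/K$, but the dropped mass is $(1-K/M)\,d_i(v)$, which approaches $d_i(v)$ as $M\to\infty$. Top-$K$ retention bounds the \emph{maximum} dropped value, not the \emph{sum} of dropped values; an order-statistics tail bound of the kind you want simply does not hold for an arbitrary fractional matching. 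Moreover, your final step sets $\bx^{(H)}_e=(1-c\eps)\bx_e$ on kept edges, i.e., you only \emph{decrease} $\bx$, so there is no mechanism to recoup the lost mass. The slack in the lemma's guarantees — $\bx^{(H)}(v)\le\bx(v)+O(\eps)$ and $\lvert\bx^{(H)}_e-\bx_e\rvert\le O(\eps^2)$ — is there precisely so that the sparsifier may \emph{boost} kept edges by up to $\Theta(\eps^2)$ each; since each boosted edge can absorb at most $O(\eps^2)$ units of redistributed mass, the threshold must be on the order of $K\approx\Theta(\eps^{-2})$ rather than $\widetilde{O}(\eps^{-1})$. If you then multiply by your $L=O(\eps^{-1}\log W)$ weight buckets, the degree bound is ruined. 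What you actually need is either a vertex- and bucket-dependent threshold $K_i(v)=\Theta\bigl(1+d_i(v)/\eps^2\bigr)$, whose sum over buckets is still $\widetilde{O}(\eps^{-2})$ because $\sum_i d_i(v)\le 1$, or a bucketing keyed on $\bx$-levels rather than weight levels so that boosting a retained edge to the next $\bx$-level preserves mass within a $(1+\eps)$ factor. As written, your construction fails already on a single unit-weight vertex with many equal-mass incident edges.

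The dynamic-maintenance and degree-counting parts of your sketch are reasonable in spirit (BSTs per bucket per vertex, amortizing over $\bx$-level crossings), and the weight bucketing step is a sensible way to reduce to the unit-weight-ratio case. But without a mechanism for mass redistribution onto kept edges, the static correctness statement — which is the heart of the lemma — is false, and fixing it changes both the threshold and the degree accounting.
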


\begin{lemma}
  Given a fractional matching $\bx \in \M_G$, for any $\eps > 0$ we can initialize in $\widetilde{O}(m)$ a subgraph $H$ of $G$ and maintain it with $\widetilde{O}(\eps^{-1})$ time per entry update to $\bx$ such that $H$ has maximum degree $\widetilde{O}(\eps^{-2})$ and $\mu_w(H) \geq (1-O(\eps))\bw^\top\bx$ holds.
  \label{lemma:sparsify-to-low-degree}
\end{lemma}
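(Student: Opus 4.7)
My approach is to eliminate the $W$-dependence in the per-update cost of \cref{lemma:unweighted-sparsifier} by partitioning the edges of $G$ by weight into classes of aspect ratio $\poly(1/\eps)$ and invoking the sparsifier on each class independently. Concretely, I would partition $E$ into $k = \widetilde{O}(1)$ classes $E_i \defeq \{e : w(e) \in [\eps^{-i}, \eps^{-(i+1)})\}$ for $i = 0, 1, \ldots, \lceil \log_{1/\eps} W \rceil$, each of aspect ratio $O(1/\eps)$. Using a refined accuracy $\eps' \defeq \eps/\Theta(k)$, I would run an independent copy of \cref{lemma:unweighted-sparsifier} on each $G_i \defeq (V, E_i)$ fed with the restriction of $\bx$ to $E_i$, yielding $H_i \subseteq G_i$ of maximum degree $\widetilde{O}(\eps'^{-2}) = \widetilde{O}(\eps^{-2})$ and a fractional matching $\bx^{(H_i)}$ of weight at least $(1-\eps')\bw^\top \bx|_{E_i}$. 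The output subgraph is $H \defeq \bigcup_{i}H_i$.

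Both the update-time and max-degree bounds follow routinely. Each entry update to $\bx_e$ affects the single class containing $e$ and triggers one sparsifier update with effective weight bound $O(1/\eps)$, costing amortized $\widetilde{O}(\eps^{-1} \cdot \eps'^{-1}) = \widetilde{O}(\eps^{-1})$ after absorbing $\poly(\log n, 1/\eps)$ factors into $\widetilde{O}(\cdot)$. Initialization is $\widetilde{O}(m)$ by summing $\widetilde{O}(|E_i|)$ across the $\widetilde{O}(1)$ classes. Since the $H_i$ are edge-disjoint, $\deg_H(v) \leq \sum_i \deg_{H_i}(v) \leq k \cdot \widetilde{O}(\eps^{-2}) = \widetilde{O}(\eps^{-2})$.

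The hard part is establishing $\mu_w(H) \geq (1-O(\eps))\bw^\top \bx$. Since the $H_i$ are edge-disjoint I can form $\bx^{(H)} \defeq \sum_i \bx^{(H_i)}$, which satisfies $\bw^\top \bx^{(H)} \geq (1-\eps')\bw^\top \bx \geq (1-O(\eps))\bw^\top \bx$ and, by summing the per-class vertex guarantees, $\bx^{(H)}(v) \leq \bx(v) + O(k\eps') = 1 + O(\eps)$; rescaling by $(1+O(\eps))^{-1}$ turns this into a valid fractional matching on $H$ of weight $(1-O(\eps))\bw^\top \bx$. To convert this into an integral matching of comparable weight one must certify that the rescaled $\bx^{(H)}$ lies in (a near-version of) the matching polytope $\M_H$; this is immediate for bipartite graphs by LP integrality, but in general one must argue that blossom constraints inherited from $\bx \in \M_G$ survive the per-class sparsification. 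The entrywise bound $|\bx^{(H_i)}_e - \bx_e| \leq O(\eps'^2)$ from \cref{lemma:unweighted-sparsifier} allows the blossom inequalities to be satisfied up to a small multiplicative slack absorbable by a further rescaling of $\eps$; this blossom-preservation step is the main technical crux, after which the integrality of the matching polytope yields the desired integral matching in $H$.
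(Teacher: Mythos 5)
Your decomposition and the overall structure match the paper's: partition $E$ by weight, run \cref{lemma:unweighted-sparsifier} per bucket with a refined accuracy parameter, take the union, and argue via the entrywise and vertex bounds that the summed fractional matching is feasible on $H$ up to a $(1+O(\eps))$ rescaling, with the blossom constraints (for odd sets of size $O(1/\eps)$) being the delicate part. However, there is a genuine quantitative gap in your choice of bucket width. You use weight classes of aspect ratio $O(1/\eps)$, i.e.\ $E_i = \{e : w(e) \in [\eps^{-i}, \eps^{-(i+1)})\}$, whereas the paper uses dyadic buckets with edge weights within a factor of two of each other. The per-update cost of \cref{lemma:unweighted-sparsifier} is $\widetilde{O}(W' \cdot (\eps')^{-1})$ where $W'$ is the (rescaled) aspect ratio of the bucket. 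With $W' = O(1/\eps)$ and $\eps' = \eps/\Theta(k)$, you get $\widetilde{O}(k/\eps^2) = \widetilde{O}(\eps^{-2})$, not $\widetilde{O}(\eps^{-1})$; with the paper's $W' = O(1)$, one gets $\widetilde{O}(K/\eps) = \widetilde{O}(\eps^{-1})$. Your derivation papers over this by ``absorbing $\poly(\log n, 1/\eps)$ factors into $\widetilde{O}(\cdot)$,'' but in this paper $\widetilde{O}$ hides only $\poly(\log n)$ factors, not $\poly(1/\eps)$ (otherwise the statement $\widetilde{O}(\eps^{-1})$ would be vacuous, as would the downstream $\widetilde{O}(\eps^{-8})$ bound in \cref{thm:weighted-rounding}). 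So the update-time claim as you argue it is off by a factor of $\eps^{-1}$, and this error propagates since the lemma is also used as a recourse bound for updates fed into \cref{thm:low-degree}. The fix is simply to bucket by a constant factor (e.g.\ powers of $2$) so that within each $E_i$ the rescaled weights are $O(1)$.

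A secondary, smaller point: you correctly name the blossom-preservation step as the crux, but leave it at a sketch. The paper carries it out by bounding $\bx^{(H)}[B] \leq \bx[B] + O((\eps')^2 K)|B|^2 \leq \bx[B] + O(\eps)|B|$ for odd sets $B$ of size $O(1/\eps)$, using the entrywise bound $|\bx^{(H_i)}_e - \bx_e| \leq O((\eps')^2)$; for a complete proof you would need to fill in that computation (and note why restricting to small odd sets suffices, which relies on the approximate matching polytope being controlled by small blossoms).
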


\begin{proof}
  We split the edges into $K = O(\log W)$ classes $E_0 \cup \cdots \cup E_{K}$ such that $E_i$ contain precisely the edges with weights between $2^i$ and $2^{i+1} - 1$ (inclusively).
  Let $G_i$ be the induced subgraph of $E_i$.
  Let $\eps^\prime = \Theta(\eps/K)$.
  We run \cref{lemma:unweighted-sparsifier} on each of the $G_i$ with accuracy $\eps^\prime$ and obtain $H_i$.
  Let $H = H_0 \cup \cdots \cup H_{O(\log W)}$.
  By \cref{lemma:unweighted-sparsifier}, there is a fractional matching $\bx^{(H_i)}$ of weight
  \[ \sum_{e \in E_i}\bw_e\bx^{(H_i)}_e \geq (1-\eps^\prime)\sum_{e \in E_i}\bw_e \bx_e\]
  and therefore letting $\bx^{(H)} \defeq \bx^{(H_1)} + \cdots + \bx^{(H_K)}$ we have $\bw^\top \bx^{(H)} \geq (1-\eps^\prime) \bw^\top \bx$.
  Notice that $\bx^{(H)}(v) \leq \bx(v) + O(\eps^\prime K) \leq 1 + O(\eps)$ for each $v \in V$ and $\bx^{(H)}[B] \leq \bx[B] + O((\eps^\prime)^2 \cdot K) \cdot |B|^2 \leq \bx[B] + O(\eps)|B|$ for all odd sets of size at most $O(1/\eps)$.
  Therefore, we see that if we scale $\bx^{(H)}$ down by a multiplicative $1+O(\eps)$ factor then it will be a feasible fractional matching supported on $H$.
  This proves that $\mu_w(H) \geq (1-\eps)\bw^\top \bx$.
  The update time of the algorithm follows from that of \cref{lemma:unweighted-sparsifier} which is $\widetilde{O}({\eps^\prime}^{-1}) = \widetilde{O}(\eps^{-1})$ since the edge weights in a single $E_i$ are within a factor of two from each other.
\end{proof}

\begin{theorem}
  Given an $m$-edge graph, there is a dynamic rounding algorithm that
  initializes in $\widetilde{O}(m)$ time and handles each entry update
  to $\bx$ in $\widetilde{O}(\eps^{-8})$ time per update.
  \label{thm:weighted-rounding}
\end{theorem}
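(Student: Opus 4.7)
The plan is to compose the two preceding results, \cref{lemma:sparsify-to-low-degree} and \cref{thm:low-degree}, in a black-box fashion. First, I will run the sparsifier of \cref{lemma:sparsify-to-low-degree} on the input fractional matching $\bx$ (with accuracy parameter $\Theta(\eps)$), which maintains a subgraph $H \subseteq G$ of maximum degree $\Delta = \widetilde{O}(\eps^{-2})$ with the property that $\mu_w(H) \geq (1 - O(\eps))\bw^\top\bx$, using $\widetilde{O}(\eps^{-1})$ time per entry update to $\bx$. Second, on top of $H$, I will run the fully dynamic low-degree MWM algorithm of \cref{thm:low-degree} with accuracy $\Theta(\eps/\log(1/\eps))$, which explicitly maintains a $(1 - O(\eps))$-approximate MWM in $H$. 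The output of the rounding algorithm is precisely this matching.

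For correctness, observe that the maintained matching $M$ in $H$ has weight at least $(1-O(\eps))\mu_w(H) \geq (1-O(\eps))^2 \bw^\top\bx \geq (1-O(\eps))\bw^\top\bx$. Moreover, since $H \subseteq \supp(\bx)$ by construction of the sparsifier, we have $M \subseteq \supp(\bx)$ as required by \cref{def:rounding}. After rescaling $\eps$ by a constant, this yields the desired $(1-\eps)$-approximate rounding guarantee.

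For the running time, initialization of the sparsifier takes $\widetilde{O}(m)$ time; initializing the low-degree MWM algorithm on $H$ (which has $\widetilde{O}(n\eps^{-2})$ edges) adds another $\widetilde{O}(m)$ term, giving $\widetilde{O}(m)$ overall initialization. For each entry update to $\bx$, the sparsifier spends $\widetilde{O}(\eps^{-1})$ time; crucially, this also bounds the recourse on $H$, since every edge change to $H$ must be reported within this budget. Each such change to $H$ is then fed to the low-degree MWM algorithm, which by \cref{thm:low-degree} costs $O(\Delta \cdot \eps^{-5}\log^2\eps^{-1}) = \widetilde{O}(\eps^{-7})$ amortized per update to $H$ (substituting $\Delta = \widetilde{O}(\eps^{-2})$). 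Multiplying the two amortized bounds gives $\widetilde{O}(\eps^{-1}) \cdot \widetilde{O}(\eps^{-7}) = \widetilde{O}(\eps^{-8})$ amortized time per entry update to $\bx$, as claimed.

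The main obstacle, such as it is, lies not in a difficult new argument but in carefully chaining the amortized recourse guarantees: the analysis must check that the $\widetilde{O}(\eps^{-1})$ update time from \cref{lemma:sparsify-to-low-degree} genuinely translates into an amortized bound on the number of edge changes passed to the downstream low-degree MWM algorithm, and that the loss in approximation quality from each layer (the sparsifier and the $(1-O(\eps\log\eps^{-1}))$-approximation of \cref{thm:low-degree}) can be absorbed by a single rescaling of $\eps$ without degrading the final constant in the rounding guarantee.
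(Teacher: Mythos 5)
Your proposal is correct and follows essentially the same approach as the paper's own proof: compose \cref{lemma:sparsify-to-low-degree} with \cref{thm:low-degree}, observe that the $\widetilde{O}(\eps^{-1})$ update time of the sparsifier also bounds its recourse on $H$, and multiply by the $\widetilde{O}(\eps^{-7})$ amortized update time of the low-degree algorithm (after substituting $\Delta = \widetilde{O}(\eps^{-2})$) to get $\widetilde{O}(\eps^{-8})$. Your explicit rescaling of the accuracy parameter and the remark on $M \subseteq \supp(\bx)$ are details the paper leaves implicit but do not constitute a different argument.
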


\begin{proof}
  Given the fractional matching $\bx$,
  we run \cref{lemma:sparsify-to-low-degree} to maintain a
  $\widetilde{O}(\eps^{-2})$-degree subgraph
  $H \subseteq G$ with $\mu_w(G) \geq (1-O(\eps))\bw^\top \bx$.
  We then apply \cref{thm:low-degree} to maintain a
  $(1-O(\eps\log\eps^{-1}))$-approximate matching over $H$.
  By the guarantee of $H$, such a matching will have weight at
  least $(1-O(\eps\log\eps^{-1}))\bw^\top \bx$.
  Since $H$ has maximum degree $\widetilde{O}(\eps^{-2})$,
  \cref{thm:low-degree} handles each update to $H$
  in $\widetilde{O}(\eps^{-7})$.
  The theorem follows as there are
  $\widetilde{O}(\eps^{-1})$ modifications to $H$ per update
  to $\bx$ by \cref{lemma:sparsify-to-low-degree}.
\end{proof}

\subsubsection{Improved Weight Reduction Framework for General Graphs}
The fully dynamic low-degree algorithm in \cref{thm:low-degree} allows us to reduce the aspect ratio to $O(\eps^{-2})$, which is the best we can get using~\cref{lemma:matching composition lemma}.

\UltimateReduction*
\begin{proof}
    According to~\cref{lemma:matching composition lemma}, we consider any $2$-wide weight partition of $G$. Denote $g=\lceil \log(\eps^{-3})\rceil+1$, and for $0<j\leq g-1$, denote $I_j=\{i:i\equiv j\pmod g\}$. The weight gap between neighboring ``padded'' weight classes in $I_j$ is $\Omega(\eps^{-1})$, and we use~\cref{alg:locally greedy} to aggregate matchings in $I_{j}$ and use the low-degree algorithm in~\cref{thm:low-degree} to maintain a $(1-\eps\log \eps^{-1})$-approximate MWM on the union. Following a similar proof of~\cref{lemma:matching composition lemma} itself, the union of the greedy matchings keeps a $1-O(\eps\log\eps^{-1})$ fraction thus so does the output matching.
    
    Any edge would be contained in $O(\log(\eps^{-1}))$ ``padded'' weight classes. For initialization, we use~\cref{thm:DP14} to compute a $(1-\eps)$-approximate MWM on the union, and it takes $m\log(\eps^{-1})\eps^{-1}$ time. The update time and recourse come from~\cref{thm:local greedy:runtime} and~\cref{thm:low-degree}.
\end{proof}

\subsubsection{From Weighted Matching to Unweighted Matching in Bipartite Graphs}\label{sec:framework:weightedtounweighted}
\cite{BernsteinDL21} provides a framework to reduce dynamic weighted matching algorithms to unweighted ones in bipartite graphs. We slightly optimize their algorithm to have better runtime. The description of the algorithm,~\cref{alg:BDL21}, and its proof are deferred to~\cref{sec:appendix:BDL}.
\begin{restatable}[\cite{BernsteinDL21,KaoLST01}]{lemma}{AnalysisBDL}\label{red:WeightedtoUnweighted}
For $\eps\leq 1/6$, given a dynamic algorithm $\A$ that, on input $n$-vertex $m$-edge \textbf{bipartite} graph, initializes in $\I(n, m, \eps)$ time and explicitly maintains an $(1-\eps)$-approximate MCM in $\U(n, m, \varepsilon)$ update time, there is a dynamic algorithm that initializes in
  \[O(\I(nW, mW, \eps)+m\log(\eps^{-1})\eps^{-1})\]
  time and explicitly maintains an $(1-\eps)$-approximate MWM on a bipartite graph with integer edge weights bounded by $W$ in
  \[O(W\cdot \U(nW, mW, \Theta(\varepsilon)))+W\log(\eps^{-1})\eps^{-2})\]
  amortized update time and has amortized recourse
  \[O(W\log(\eps^{-1})\eps^{-2}).\]
  The transformation is partially dynamic preserving. On non-bipartite graphs, the approximation ratio is $\frac{2}{3}-\eps$.
\end{restatable}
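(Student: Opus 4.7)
The plan is to follow the Kao--Lam--Sung--Ting (KLST) reduction from integer-weighted bipartite MWM to unweighted bipartite MCM, made dynamic via the framework of \cite{BernsteinDL21}. The starting point is the KLST structural fact: for a bipartite graph $G$ with integer edge weights in $[1,W]$, letting $G_k \subseteq G$ denote the subgraph of edges with weight at least $k$, we have $\mu_w(G) = \sum_{k=1}^{W} \mu(G_k)$, where $\mu(\cdot)$ denotes the maximum cardinality matching size. This can be proven directly from K\"onig's theorem (or bipartite LP duality) by converting a minimum vertex cover $C_k$ of each $G_k$ into a feasible MWM LP dual $y_v = |\{k : v \in C_k\}|$. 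In non-bipartite graphs, the equality degrades to $\mu_w(G) \leq \sum_k \mu(G_k) \leq \tfrac{3}{2}\mu_w(G)$ due to the $3/2$ integrality gap of the matching polytope, which is where the $2/3 - \eps$ approximation arises.

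The dynamic algorithm instantiates $W$ independent copies $\A_1, \ldots, \A_W$ of the unweighted algorithm, with $\A_k$ running on the unweighted bipartite graph $G_k$. Equivalently, this can be viewed as a single invocation of $\A$ on the disjoint union $G^\star \defeq \bigsqcup_{k=1}^{W} G_k$, which has $nW$ vertices and at most $mW$ edges. Each edge update to $G$ on an edge $e$ of weight $w_e$ triggers $w_e \leq W$ corresponding layer updates (insertions or deletions in $G_1,\dots,G_{w_e}$), costing $O(W \cdot \U(nW, mW, \Theta(\eps)))$ time. Since edge insertions in $G$ translate to edge insertions in the layers (and likewise for deletions), the reduction is partially dynamic preserving. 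Let $M_k$ be the $(1-\eps)$-approximate MCM maintained by $\A_k$, and define $H \defeq \bigcup_k M_k$, interpreted as a simple subgraph of $G$.

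The output matching is a $(1-\eps)$-approximate MWM on $H$, computed via Duan--Pettie (\cref{thm:DP14}). Correctness follows from applying KLST a second time to $H$: each $M_k$ is a matching contained in $H_k \defeq H \cap G_k$, so $\mu(H_k) \geq |M_k|$, and hence
\[
    \mu_w(H) \;=\; \sum_{k=1}^{W}\mu(H_k) \;\geq\; \sum_{k=1}^{W}|M_k| \;\geq\; (1-\eps)\sum_{k=1}^{W}\mu(G_k) \;=\; (1-\eps)\,\mu_w(G)
\]
in the bipartite case; in the non-bipartite case the chain loses an additional factor of $3/2$, yielding the $(2/3 - \eps)$ ratio. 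Composing with a $(1-\eps)$ Duan--Pettie approximation on $H$ gives the claimed approximation ratios.

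To realize the stated update time and recourse, the plan is a lazy-recomputation strategy: rather than recomputing Duan--Pettie on $H$ after every update, maintain a batch counter and recompute only once the accumulated changes to $H$ exceed an $\eps$-fraction of $|H|$, during which the weight of the current matching can drift by at most a $(1 \pm O(\eps))$ factor. Each recomputation costs $O(|H|\log(\eps^{-1})\eps^{-1})$ by \cref{thm:DP14}, and these changes to $H$ are driven by the recourse of the $\A_k$'s plus the normal propagation of graph updates; amortizing over the batch length yields the additive overhead $O(W\log(\eps^{-1})\eps^{-2})$ in both update time and recourse. The main obstacle is precisely this amortization: bounding the churn of $H$ per update to $G$, carefully selecting the recomputation threshold to simultaneously preserve the $(1-O(\eps))$ approximation and achieve the additive $O(W\log(\eps^{-1})\eps^{-2})$ bound, and verifying that the resulting matching is maintained \emph{explicitly} rather than only via the implicit support of $H$. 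These amortization details follow the BDL'21 framework with the mild optimization alluded to in the theorem's attribution.
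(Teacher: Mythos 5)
Your layering formulation of the Kao--Lam--Sung--Ting result ($\mu_w(G) = \sum_{k=1}^W \mu(G_k)$, where $G_k$ is the subgraph of edges with weight at least $k$) is a valid alternative to what the paper does. The paper instead invokes the \emph{unfolded graph} $\phi(G)$ --- a single graph on $nW$ vertices with $\mu(\phi(G)) = \mu_w(G)$ for bipartite $G$ --- runs the unweighted algorithm once on $\phi(G)$, and \emph{refolds} the resulting matching $M_u$ into a subgraph $\mathcal{R}(M_u)$ of $G$ on which Duan--Pettie is then run. Your layered/disjoint-union construction $\bigsqcup_k G_k$ has the same size blowup ($nW$ vertices, at most $mW$ edges), the same per-update cost (at most $W$ triggered layer updates), and the same two-step correctness chain (applying the KLST identity once to $G$ and once to the support $H$, exactly analogous to the paper's use of Fact A.5 for $\mathcal{R}(M_u)$). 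The two graphs $\phi(G)$ and $\bigsqcup_k G_k$ are constructed differently, so this is a genuine variant rather than a paraphrase, and your correctness argument via $\mu_w(H)=\sum_k\mu(H_k)\geq\sum_k|M_k|\geq(1-\eps)\sum_k\mu(G_k)=(1-\eps)\mu_w(G)$ is sound for bipartite $G$.

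However, the lazy-rebuild threshold has a genuine gap. You propose recomputing only once the accumulated changes to $H$ exceed an $\eps$-fraction of $|H|$, and claim this keeps the stored matching's weight within a $(1\pm O(\eps))$ factor. This is false: every edge of the stored matching $M$ removed between rebuilds may have weight up to $W$, so $\eps|H|$ such removals could cost as much as $\eps|H|\cdot W$ in weight. Since $|H|$ and $w(M)$ are both $\Theta(\mu_w(G))$ up to constants, this is an $\Omega(\eps W)$ \emph{multiplicative} loss, which destroys the approximation whenever $W\gtrsim 1/\eps$. The paper's Algorithm~5 uses the correct threshold: it counts $\eps W^\star/W$ \emph{updates to the weighted graph} $G$, where $W^\star$ is the weight of the matching at the last rebuild. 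Each update can degrade the matching by at most $W$, so the total drift is $\leq\eps W^\star$, and the Duan--Pettie recomputation cost $O(W^\star\log(\eps^{-1})\eps^{-1})$ amortizes over $\eps W^\star/W$ updates to the claimed $O(W\log(\eps^{-1})\eps^{-2})$. A secondary issue with your threshold is that the churn of $H$ per update to $G$ scales with the (unbounded, in the hypothesis) recourse of $\A$, so the implied batch length and amortized rebuild cost would inherit an extra recourse factor; the paper's threshold is deliberately independent of $\A$'s recourse.
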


\cref{red:WeightedtoUnweighted} requires integer weights.
By a standard scaling and rounding argument, one can reduce a problem
with real weights and $W$ aspect ratio to the same problem with integer weights
and $W\eps^{-1}$ aspect ratio.
Combined~\cref{thm:partial reduction:ultimate},~\cref{red:WeightedtoUnweighted}
and the above fact, we have the following reduction from
weighted matching to unweighted matching.

\begin{restatable}{theorem}{weightedtounweighted}\label{red:WeightedtoUnweighted:new}
  Given a dynamic algorithm $\A$ that,
  on input $n$-vertex $m$-edge unweighted \textbf{bipartite} graph,
  initializes in $\I(n, m, \eps)$ time and explicitly
  maintains an $(1-\eps)$-approximate MCM
  in $\U(n, m, \eps)$ update time, there is a dynamic algorithm that initializes in
  \[\log(\eps^{-1})\cdot O(\I(\Theta(\eps^{-3})n, \Theta(\eps^{-3})m, \Theta(\eps))+m\eps^{-1})\]
  time and explicitly maintains an $(1-O(\eps\log(\eps^{-1})))$-approximate MWM in
  \[\poly(\log(\eps^{-1}))\cdot O(\U(\Theta(\eps^{-3})n, \Theta(\eps^{-3})m,\Theta(\eps))\cdot \eps^{-3}+\eps^{-6})\]
  amortized update time, and has amortized recourse \[\poly(\log(\eps^{-1}))\cdot O(\eps^{-6}).\]
  The transformation is partially dynamic preserving.
  In non-bipartite graphs, the approximation ratio changes to $\frac{2}{3}-O(\eps\log(\eps^{-1}))$ while the runtime and recourse remain the same.
\end{restatable}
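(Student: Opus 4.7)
The plan is to build the final algorithm as a two-level composition: the outer level is the aspect-ratio reduction \cref{thm:partial reduction:ultimate} applied to the arbitrary-weight input, and the inner level is the weighted-to-unweighted reduction \cref{red:WeightedtoUnweighted} which, together with $\A$, provides the small-aspect-ratio $(1-\eps)$-approximate MWM subroutine required by the outer level. A standard scaling-and-rounding step bridges the two: since \cref{red:WeightedtoUnweighted} accepts only integer weights while \cref{thm:partial reduction:ultimate} produces internal weight classes with real-valued endpoints, given the real-weight subgraph of aspect ratio $\Theta(\eps^{-2})$ handed to the inner subroutine, we replace each weight $w(e)$ by $\lfloor w(e)/\eps\rfloor$ to obtain an integer-weighted graph of aspect ratio $\Theta(\eps^{-3})$ whose $(1-\eps)$-approximate MWM corresponds to a $(1-O(\eps))$-approximate MWM of the original.

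Instantiating \cref{red:WeightedtoUnweighted} with $W=\Theta(\eps^{-3})$ gives initialization time $O(\I(\Theta(\eps^{-3})n,\Theta(\eps^{-3})m,\eps)+m\log(\eps^{-1})\eps^{-1})$, amortized update time $O(\eps^{-3}\U(\Theta(\eps^{-3})n,\Theta(\eps^{-3})m,\Theta(\eps))+\eps^{-5}\log(\eps^{-1}))$, and amortized recourse $O(\eps^{-5}\log(\eps^{-1}))$. Plugging this wrapped subroutine into \cref{thm:partial reduction:ultimate} and substituting $\I$ and $\U$ in its formulas directly yields the stated bounds: the $\log(\eps^{-1})\cdot O(\I(\cdots)+m\eps^{-1})$ initialization time, the $\poly(\log(\eps^{-1}))\cdot O(\U(\cdots)\eps^{-3}+\eps^{-5})$ amortized update time, and the $\poly(\log(\eps^{-1}))\cdot O(\eps^{-5})$ amortized recourse. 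The approximation guarantee composes in three pieces: $(1-O(\eps))$ from scaling-and-rounding, $(1-O(\eps))$ from \cref{red:WeightedtoUnweighted} (absorbed by passing $\Theta(\eps)$ as its accuracy parameter), and $(1-O(\eps\log(\eps^{-1})))$ from the outer reduction, totaling the claimed $(1-O(\eps\log(\eps^{-1})))$-approximation for bipartite inputs. Partial-dynamic preservation is immediate since both constituent reductions are partially dynamic preserving.

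The main obstacle is the non-bipartite case. On non-bipartite inputs \cref{red:WeightedtoUnweighted} only provides a $(2/3-\eps)$-approximate subroutine, while the matching composition lemma underlying \cref{thm:partial reduction:ultimate} is stated for a $(1-\eps)$-approximate subroutine. The update-time and recourse analyses of the outer reduction depend only on the number of edge changes pushed through its dynamic subroutines---not on the approximation factor of the inner matching---so the resource bounds carry over verbatim. What requires care is the approximation analysis: one must re-examine the substitution argument of \cref{lemma:matching substitution lemma} with $(2/3-\eps)$-approximate inner matchings, controlling the per-round additive loss $\mu_w(G_{[\eps\ell_i,r_i\eps^{-1})})-w(T_i)$ via the $2/3$-approximation bound and verifying that the multiplicative $2/3$ factor propagates without compounding over the $O(\log_\delta(1/\eps))$ substitution rounds in the proof of \cref{lemma:matching composition lemma}, so as to yield the claimed $(2/3-O(\eps\log(\eps^{-1})))$ approximation.
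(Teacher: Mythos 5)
Your bipartite construction matches the paper's proof exactly: apply \cref{thm:partial reduction:ultimate} as the outer level to reduce the aspect ratio to $\Theta(\eps^{-2})$, round weights down by $\eps$ to reach integer weights of aspect ratio $\Theta(\eps^{-3})$, and hand each padded weight class to \cref{red:WeightedtoUnweighted} instantiated with $\A$. Tracing the parameters through \cref{thm:partial reduction:ultimate} with the wrapped subroutine's initialization time $O(\I(\Theta(\eps^{-3})n,\Theta(\eps^{-3})m,\eps)+m\log(\eps^{-1})\eps^{-1})$, update time $O(\eps^{-3}\U(\cdot)+\eps^{-5}\log(\eps^{-1}))$, and recourse $O(\eps^{-5}\log(\eps^{-1}))$ gives the stated bounds, and the approximation loss composes to $(1-O(\eps\log(\eps^{-1})))$. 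So for bipartite inputs your proposal is essentially the paper's proof.

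Your treatment of the non-bipartite case, however, proposes a fix that would not go through. You suggest re-running \cref{lemma:matching substitution lemma} with $(2/3-\eps)$-approximate target matchings $T_i$ and ``verifying that the multiplicative $2/3$ factor propagates without compounding.'' But the substitution lemma's loss term $\mu_w(G_{[\eps\ell_i,r_i\eps^{-1})})-w(T_i)$ is \emph{additive}: when $T_i$ is only $(2/3-\eps)$-approximate this term is $\geq \tfrac{1}{3}\mu_w(G_{[\eps\ell_i,r_i\eps^{-1})})$, so by \cref{corollary:weight combination with gap} each round of the proof of \cref{lemma:matching composition lemma} loses $\Theta(\mu_w(G))$, and after the $g=\Theta(\log_\delta(1/\eps))$ rounds the accumulated loss wipes out the matching entirely. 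This is precisely the failure mode the paper describes in the ``Arbitrary Approximation Ratio'' paragraph and in \cref{appendix:lower-bound}: the composition framework does not accept $\alpha$-approximate inner matchings for fixed $\alpha<1$. So whatever argument justifies the $(2/3-O(\eps\log(\eps^{-1})))$ claim for non-bipartite graphs, it cannot be the naive substitution-lemma re-analysis you sketch; the $2/3$ loss has to be isolated from the composition machinery (e.g.\ tracked through the refolded sparsifier $\mathcal{R}(M_u)$ of \cref{red:WeightedtoUnweighted}, which still carries a $(1-\eps)$-factor guarantee relative to its own MWM) rather than fed into it. You correctly identify the non-bipartite case as the delicate part, but the specific resolution you propose would, if actually carried out, show the argument breaking rather than succeeding.
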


The reduction improves on the work of \cite{BernsteinDL21},
whose reduction, when combined with~\cite{GuptaP13},
has an update time of $\U(\eps^{-O(1/\eps)} n, \eps^{-O(1/\eps)} m,\Theta(\eps))\cdot \eps^{-O(1/\eps)}\cdot \log W$.

\section{Applications}\label{sec:applications}

In this section, we discuss the implications of our reduction frameworks for obtaining $(1-\varepsilon)$-approximate maximum weight matching algorithms in various models. 

\subsection{The Dynamic Model}

In the dynamic setting, barring a few exceptions (for example, \cite{Gupta14,BhattacharyaKS23}), much of the focus has been designing algorithms for $(1-\eps)$-approximate MCM. Thus, a lot of the weighted matching results for \emph{bipartite graphs} follow from the reduction of \cite{BernsteinDL21}, and consequently, incur a multiplicative overhead of $\eps^{-O(1/\eps)}$. In this section, we remedy this. A summary of our results is given in \Cref{table:dynamic}. We start with bipartite graphs.

\begin{table}[H]
\centering
	\caption{Summary of Prior and Our Results on Dynamic Weighted Matching}
		\begin{tabular}{c c c c c}
		\hline
		Setting	& Prior Result & Our Result & Reduction\\
		\hline

\makecell{Fully Dynamic\\Bipartite} & \makecell{$\eps^{-O(1/\eps)}\cdot \frac{n}{2^{\Omega(\sqrt{\log n})}}$\\\cite{Liu24}+\cite{BernsteinDL21}} & \makecell{$O(\poly(\eps^{-1})\cdot \frac{n}{2^{\Omega(\sqrt{\log n})}})$\\\Cref{lem:fullydynamic:weighted:Liu24}} & \hyperref[red:WeightedtoUnweighted:new]{Thm \ref*{red:WeightedtoUnweighted:new}}\\
  \hline
\makecell{Incremental\\ Bipartite} & \makecell{$O(m\log n\log^{2}(nW/\eps)\eps^{-2})$\\\cite{BhattacharyaKS23} \\ (fractional)\\$m\cdot \eps^{-O(1/\eps)}\cdot\log W$\\\cite{BlikstadK23}+\cite{BernsteinDL21}} & \makecell{$O((n\eps^{-9}+m\eps^{-8})\cdot \poly(\log(1/\eps))$\\\Cref{lem:incremental}} & \hyperref[red:WeightedtoUnweighted:new]{Thm \ref*{red:WeightedtoUnweighted:new}}\\
\hline
\makecell{Fully Dynamic\\General} & \makecell{$\sqrt{m}\cdot \eps^{-O(1/\eps)}\cdot \log W$\\\cite{GuptaP13}} & \makecell{$O((\sqrt{m}\cdot \eps^{-4}+\eps^{-6})\cdot \poly(\log (1/\eps))$\\\Cref{lem:fullydynamic:weighted:GP13}} & \hyperref[thm:partial reduction:ultimate]{Thm \ref*{thm:partial reduction:ultimate}}\\
\hline
\makecell{Decremental\\General} & \makecell{$m\cdot \poly(\log (nW))\cdot \eps^{-O(1/\eps)}$\\\cite{ChenST23}} & \makecell{$O(m\cdot \poly(\log n,\eps^{-1}))$ \\ \Cref{lem:decremental:weighted}} & \hyperref[thm:partial reduction:ultimate]{Thm \ref*{thm:partial reduction:ultimate}}\\
\hline
\makecell{Fully Dynamic\\ Offline \\ Bipartite} & \makecell{$O(n^{0.58}\cdot \eps^{-O(1/\eps)}\cdot \log W)$  \\ \cite{Liu24}+\cite{BernsteinDL21}} & \makecell{$O(n^{0.58}\cdot \poly(\eps^{-1}))$\\ \Cref{lem:dynamic:offline}} & \makecell{\hyperref[red:WeightedtoUnweighted:new]{Thm \ref*{red:WeightedtoUnweighted:new}}}\\
\hline
\end{tabular}
\label{table:dynamic}
\end{table}

\paragraph{Bipartite Graphs} For bipartite graphs, we show the following three results in the fully dynamic and incremental setting respectively.

\begin{lemma}\label{lem:fullydynamic:weighted:Liu24}
There is a fully dynamic randomized algorithm that maintains a $(1-\varepsilon)$-approximate MWM in a bipartite graph in $O(\poly(\eps^{-1})\cdot \frac{n}{2^{\Omega(\sqrt{\log n})}})$ update time. 
\end{lemma}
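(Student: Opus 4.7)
The plan is to obtain the weighted algorithm by plugging Liu's unweighted bipartite MCM algorithm into the black-box reduction given by \Cref{red:WeightedtoUnweighted:new}. Item \ref{item:Liu} in the applications overview records that \cite{Liu24} provides, on unweighted bipartite graphs, a fully dynamic $(1-\eps)$-approximate MCM algorithm that explicitly maintains the matching with update time $\U(n,m,\eps) = O(\poly(\eps^{-1}) \cdot n / 2^{\Omega(\sqrt{\log n})})$. This is precisely the type of input that \Cref{red:WeightedtoUnweighted:new} accepts, and the reduction is partially dynamic preserving, so the resulting weighted algorithm is fully dynamic.

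Invoking \Cref{red:WeightedtoUnweighted:new} with this $\U$, the resulting weighted algorithm maintains a $(1-O(\eps \log(\eps^{-1})))$-approximate MWM with amortized update time
\[\poly(\log(\eps^{-1})) \cdot O\bigl(\U(\Theta(\eps^{-3}) n,\, \Theta(\eps^{-3}) m,\, \Theta(\eps)) \cdot \eps^{-3} + \eps^{-5}\bigr).\]
The key quantitative step is to simplify the inner $\U$ call. Substituting Liu's bound, the expression becomes
\[O\!\left(\poly(\eps^{-1}) \cdot \frac{\Theta(\eps^{-3}) n}{2^{\Omega(\sqrt{\log(\Theta(\eps^{-3}) n)})}}\right),\]
and since $\log(\Theta(\eps^{-3}) n) = \log n + O(\log \eps^{-1})$, for $n$ larger than $\poly(\eps^{-1})$ we have $\sqrt{\log(\Theta(\eps^{-3}) n)} = \Omega(\sqrt{\log n})$, so the $\poly(\eps^{-1})$ blow-up in the graph size only changes the leading $\poly(\eps^{-1})$ factor without affecting the $n / 2^{\Omega(\sqrt{\log n})}$ term. (For smaller $n$, the bound $\poly(\eps^{-1})$ is already trivially of the claimed form.) Collecting constants, the resulting update time is $O(\poly(\eps^{-1}) \cdot n / 2^{\Omega(\sqrt{\log n})})$.

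Finally, to translate the $(1 - O(\eps \log \eps^{-1}))$ approximation from the reduction into the claimed $(1-\eps)$-approximation, I simply run the framework with accuracy parameter $\eps' = \Theta(\eps / \log(1/\eps))$; this only changes the $\poly(\eps^{-1})$ factor in the update time, which is absorbed into $\poly(\eps^{-1})$. This yields the fully dynamic $(1-\eps)$-approximate MWM algorithm with update time $O(\poly(\eps^{-1}) \cdot n / 2^{\Omega(\sqrt{\log n})})$, as claimed. No step in this argument is genuinely hard; the only subtlety to verify is that the $\poly(\eps^{-1})$ scaling of $n$ inside the subcall is benign, which follows from the slow growth of $\sqrt{\log}$.
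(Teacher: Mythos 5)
Your proposal is correct and follows the same route as the paper: plug the unweighted bipartite MCM algorithm of \cite{Liu24} into the black-box reduction of \Cref{red:WeightedtoUnweighted:new}. The paper's proof states this in one line; your additional bookkeeping — verifying that the $\poly(\eps^{-1})$ blow-up of $n$ inside the subcall is absorbed because $\sqrt{\log(\poly(\eps^{-1}) n)} = \Omega(\sqrt{\log n})$, and rescaling $\eps$ to absorb the $\log(1/\eps)$ loss in the approximation — is the right way to fill in the omitted details.
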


\begin{lemma}\label{lem:incremental}
    There is a deterministic incremental algorithm that maintains a $(1-\eps)$-approximate MWM in an incremental bipartite graph in $O(n\eps^{-9}\poly(\log1/\eps)+m\eps^{-8}\poly(\log 1/\eps))$ total update time.
\end{lemma}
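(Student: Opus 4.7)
The plan is to combine the deterministic incremental $(1-\eps)$-approximate MCM algorithm of \cite{BlikstadK23} for bipartite graphs, which runs in $\poly(\eps^{-1})$ amortized update time, with the weighted-to-unweighted reduction given by \Cref{red:WeightedtoUnweighted:new}. Two structural properties make this composition yield the claim: first, \Cref{red:WeightedtoUnweighted:new} is partially dynamic preserving, so feeding in an incremental unweighted algorithm produces an incremental weighted algorithm; second, both building blocks are deterministic, so the composed algorithm is deterministic as claimed.

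First I would instantiate \Cref{red:WeightedtoUnweighted:new} with $\A$ taken to be the \cite{BlikstadK23} incremental algorithm. The reduction blows up the vertex and edge counts by a $\Theta(\eps^{-3})$ factor (so the invoked instance of \cite{BlikstadK23} operates on a $\Theta(\eps^{-3})n$-vertex, $\Theta(\eps^{-3})m$-edge bipartite graph), charges a multiplicative $\poly\log(\eps^{-1})$ factor and an additive $O(\eps^{-5})$ factor per weighted update, and applies $\A$'s per-update cost through an $\eps^{-3}$ blowup coming from the per-update recourse of the matching composition pipeline.

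Next I would substitute the concrete bounds of \cite{BlikstadK23} into the formula of \Cref{red:WeightedtoUnweighted:new} and split the total cost into $n$-dependent and $m$-dependent contributions. The $n$-dependent ``vertex-initialization'' portion of \cite{BlikstadK23} on the enlarged graph produces the $O(n\eps^{-9}\poly\log(\eps^{-1}))$ term, while the $m$-dependent per-edge-insertion portion, multiplied by the $\eps^{-3}$ reduction overhead and summed over all $m$ insertions, produces the $O(m\eps^{-8}\poly\log(\eps^{-1}))$ term. Finally, to convert the $(1-O(\eps\log\eps^{-1}))$ guarantee from \Cref{red:WeightedtoUnweighted:new} into a clean $(1-\eps)$ guarantee, I would simply rerun the construction with $\eps' \defeq \Theta(\eps/\log\eps^{-1})$; this rescaling is absorbed into the $\poly\log(\eps^{-1})$ factor and does not change the form of the bound.

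The main obstacle is the bookkeeping in the previous paragraph: verifying that the total cost of \cite{BlikstadK23} on a graph of size $\Theta(\eps^{-3})(n+m)$ really splits into an $n$-dependent component with an $\eps^{-6}$ exponent and an $m$-dependent component with an $\eps^{-5}$ exponent, so that after the $\eps^{-3}$ blowup from the reduction one recovers exactly $\eps^{-9}$ on $n$ and $\eps^{-8}$ on $m$. Everything else (bipartiteness, determinism, the incremental property, and the approximation factor) is transferred automatically through \Cref{red:WeightedtoUnweighted:new}, so once the exponents of \cite{BlikstadK23} are tracked at this granularity the conclusion is routine.
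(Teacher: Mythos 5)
Your proposal is correct and follows the same route as the paper: instantiate \Cref{red:WeightedtoUnweighted:new} with the deterministic incremental $(1-\eps)$-approximate MCM algorithm of \cite{BlikstadK23} (\Cref{lem:incremental:unweighted}, with total update time $O(n\eps^{-6}+m\eps^{-5})$), track the $\Theta(\eps^{-3})$ size blow-up and the $\eps^{-3}$ per-update overhead to obtain $\eps^{-9}$ on the $n$ term and $\eps^{-8}$ on the $m$ term, and rescale $\eps \mapsto \Theta(\eps/\log\eps^{-1})$ to convert the $(1-O(\eps\log\eps^{-1}))$ guarantee into $(1-\eps)$, absorbing the loss into the $\poly\log(1/\eps)$ factor. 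The paper's proof is a single sentence invoking exactly this composition; your expansion supplies the same bookkeeping.
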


\begin{lemma}\label{lem:dynamic:offline}
    There is a randomized algorithm that given an offline sequence of edge insertions and deletions to an $n$-vertex bipartite weighted graph, maintains the edges of a $(1-\eps)$-approximate maximum weight matching in amortized $O(n^{0.58}\poly(\eps^{-1}))$ time with high probability. 
\end{lemma}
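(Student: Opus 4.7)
The plan is to apply the new weighted-to-unweighted reduction \cref{red:WeightedtoUnweighted:new} in a black-box fashion to the offline fully dynamic $(1-\eps)$-approximate MCM algorithm of Liu (item~\ref{item:Liu-offline} in the introduction), which, on bipartite graphs, achieves amortized update time $\U(n,m,\eps) = O(n^{0.58}\poly(\eps^{-1}))$. Since our target approximation is $(1-\eps)$ but \cref{red:WeightedtoUnweighted:new} produces a $(1-O(\eps\log\eps^{-1}))$-approximation, we first rescale the accuracy parameter by a factor of $\Theta(\log\eps^{-1})$; because Liu's update time is polynomial in $1/\eps$, this only costs an extra $\poly(\log\eps^{-1})$ factor.

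Next, I would verify the running time bound by direct substitution. Plugging Liu's $\U$ into the reduction's update time bound
\[
\poly(\log(\eps^{-1})) \cdot O\bigl(\U(\Theta(\eps^{-3})n, \Theta(\eps^{-3})m, \Theta(\eps)) \cdot \eps^{-3} + \eps^{-5}\bigr),
\]
the $\Theta(\eps^{-3})n$ input size feeds into an $n^{0.58}$ expression yielding $(\eps^{-3}n)^{0.58} = n^{0.58}\cdot \poly(\eps^{-1})$, so the overall amortized update time simplifies to $O(n^{0.58}\poly(\eps^{-1}))$, as desired. The initialization time follows similarly from the $\I$ bound in \cref{red:WeightedtoUnweighted:new}.

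The main thing to check, which I expect to be the only genuine subtlety, is that the reduction is \emph{offline-preserving}: given knowledge of the entire update sequence in advance, we must be able to run \cref{red:WeightedtoUnweighted:new} so that every internal subroutine (including the black-box $\A$) is also presented with its entire update sequence in advance. This holds because the reduction is built from three kinds of components. First, the weight-class decomposition in \cref{alg:reduction:partial} depends only on the weights of individual edges, so the exact sequence of insertions and deletions fed to each internal copy $\A_i$ is a deterministic function of the original offline sequence and can be precomputed. Second, the aggregation subroutines (\cref{alg:locally greedy}, \cref{alg:degree-two}) and the unfolding framework of \cite{BernsteinDL21} that underlies \cref{red:WeightedtoUnweighted:new} are purely online data structures driven by recourse from the next level down, so they inherit offline access for free. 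Third, the low-recourse transformation of \cref{sec:low-recourse} uses checkpoints defined only by past weights of the maintained matchings, so it does not require lookahead. Putting these together, each internal copy of $\A$ receives an update sequence that is determined (and computable in the claimed time budget) from the original offline sequence, so Liu's offline algorithm can be invoked legitimately, and the final amortized update time is $O(n^{0.58}\poly(\eps^{-1}))$ with high probability.
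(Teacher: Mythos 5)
Your proposal is correct and follows the same approach as the paper: apply \cref{red:WeightedtoUnweighted:new} as a black box to the offline unweighted algorithm of \cite{Liu24}. The paper's own proof is a one-line citation; you go further by explicitly verifying that the reduction is offline-preserving (that the update sequences fed to each internal copy of $\A$ are deterministic functions of the original offline sequence and that the checkpoint/rebuild structure only affects when outputs are read, not what inputs are fed), a point the paper implicitly relies on but does not spell out.
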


We now prove \Cref{lem:fullydynamic:weighted:Liu24}. In order to achieve this, we use the following recent result by \cite{Liu24}.

\begin{lemma}[\cite{Liu24}]\label{lem:unweighted:Liu24}
There is a fully dynamic randomized algorithm that maintains a $(1-\varepsilon)$-approximate MCM in a bipartite graph in $O(\poly(\eps^{-1})\cdot \frac{n}{2^{\Omega(\sqrt{\log n})}})$ update time. 
\end{lemma}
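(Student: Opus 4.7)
This lemma is the main fully-dynamic bipartite matching result of Liu~\cite{Liu24} invoked here as a black box, so a ``proof'' really means describing the high-level structure of that paper. The plan is to follow the standard periodic-rebuild paradigm: fix a batch size $B = \Theta(\eps \cdot n)$, and show that if a near-optimal matching is recomputed once every $B$ updates and only edges incident to deletions are touched in between, then the maintained matching is $(1-\eps)$-approximate at every time step, since $\mu(G)$ changes by at most $B$ over a batch and the initial matching is $(1-\eps/2)$-approximate. The amortized update time is then the per-rebuild cost divided by $B$, so the whole problem reduces to making each rebuild cost $n \cdot 2^{-\Omega(\sqrt{\log n})} \cdot \poly(1/\eps) \cdot B$.

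Next, I would reduce each rebuild to an almost-static $(1-\eps)$-approximate bipartite matching call on a sparsifier of size $\widetilde{O}(n \cdot 2^{-\Omega(\sqrt{\log n})})$ whose matching number agrees with $\mu(G)$ up to a $(1-\eps)$ factor, and whose approximate matching can be lifted back to the full graph by local augmentation. This sparsifier would be built through a hierarchy of $\Theta(\sqrt{\log n})$ levels, each shrinking the active instance by a factor $2^{\Omega(\sqrt{\log n})}$; the geometric mean of depth and per-level shrinkage gives the advertised $2^{-\Omega(\sqrt{\log n})}$ saving. Once the sparsifier is available, a $(1-\eps)$-approximate matching on it is computed by the almost-linear-time max-flow / bipartite matching machinery of~\cite{ChenKLPGS22,BrandCLPGSS23}, which fits comfortably inside the budget.

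The hardest step will be step three: maintaining the hierarchical sparsifier under adversarial edge insertions and deletions. This is where~\cite{Liu24} invests most of its technical work, using dynamic expander / low-diameter decompositions, carefully bucketed witness structures, and amortized charging across levels to guarantee that (i) the hierarchy can be updated within the promised amortized bound, and (ii) the matching-preservation property survives arbitrary update sequences. Since none of the apparatus in the present paper is aimed at reproving this, my plan ends at quoting~\cite{Liu24} for these two properties and checking that the rebuild schedule and lift-back step, which are straightforward, yield the claimed $O(\poly(\eps^{-1}) \cdot n / 2^{\Omega(\sqrt{\log n})})$ update time.
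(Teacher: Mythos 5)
This statement is not proved in the paper at all: it is quoted verbatim as a known result of \cite{Liu24}, and the paper's only ``proof'' is the citation itself. You correctly recognize that the lemma is invoked as a black box, which is the essential point, so there is no real gap relative to what the paper does.

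That said, your speculative reconstruction of the internals of \cite{Liu24} does not match how that result is actually obtained. The $2^{\Omega(\sqrt{\log n})}$ saving there does not come from a $\Theta(\sqrt{\log n})$-level hierarchy of sparsifiers each shrinking the instance by $2^{\Omega(\sqrt{\log n})}$; it comes from a reduction of dynamic approximate matching to (approximate) online matrix-vector multiplication, where the subpolynomial speedup is inherited from the combinatorial OMv data structure of Larsen and Williams. The periodic-rebuild-plus-lazy-deletion outer loop you describe is a standard and correct ingredient for converting a fast batch/query primitive into an amortized update bound, but the middle layers of your plan (the sparsifier hierarchy and the invocation of almost-linear-time max-flow per rebuild) are your own invention rather than a description of the cited algorithm. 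For the purposes of this paper none of this matters --- only the black-box statement is used --- but if you intend your sketch as an account of \cite{Liu24}, it should be corrected to reflect the OMv-based approach.
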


\begin{proof}[Proof of \Cref{lem:fullydynamic:weighted:Liu24}]
    The result follows from \Cref{lem:unweighted:Liu24} and \Cref{red:WeightedtoUnweighted:new}.
\end{proof}

Prior to \cref{lem:fullydynamic:weighted:Liu24}, the best-known result had an update time of $\eps^{-O(1/\eps)}\cdot \frac{n}{2^{-\Omega(\sqrt{\log n})}}$, which was obtained by combining the result of \cite{Liu24} and \cite{BernsteinDL21}. We now show \Cref{lem:incremental}, and for that we need the following recent result by Blikstad and Kiss.

\begin{lemma}[\cite{BlikstadK23}]\label{lem:incremental:unweighted}
 There exists a deterministic incremental algorithm that maintains a $(1-\varepsilon)$-approximate MCM in an incremental bipartite graph in $O(n\eps^{-6}+m\eps^{-5})$ total update time.
\end{lemma}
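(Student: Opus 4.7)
The plan is to sketch the high-level approach behind the incremental bipartite matching algorithm of \cite{BlikstadK23}. The starting point is the classical Hopcroft--Karp characterization: if a matching $M$ admits no augmenting path of length at most $2k-1$ where $k = \lceil 1/\eps \rceil$, then $|M| \geq (1-\eps)\,\nu(G)$, where $\nu(G)$ is the maximum matching size. Thus the algorithmic task is to dynamically maintain a matching with no short augmenting paths in the face of edge insertions. Two structural facts will be central to the amortization: the matching size is monotone non-decreasing, so there are at most $n$ augmenting events in total; and after an augmentation along a path of length $\ell$, the set of free vertices shrinks by exactly two and the local structure of BFS layers around the path changes in a bounded way.

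The plan is to partition the insertion sequence into \emph{phases}. Each phase terminates once $\Theta(\eps n)$ successful augmentations have been performed since the last phase boundary, so there are $O(1/\eps)$ phases. Within a phase, I would maintain a bounded-depth BFS layering (to depth $2k-1$) rooted at the set of free vertices on one side of the bipartition, using the standard alternating-edge convention. When an edge $(u,v)$ is inserted, only a $\poly(1/\eps)$-neighborhood of $u$ and $v$ in the layered structure needs to be examined: either the new edge creates a short augmenting path that can be found by a bounded local DFS along alternating paths of length $\leq 2k-1$, or it updates at most a $\poly(1/\eps)$-sized portion of the layers. If an augmenting path is found, we augment and perform a localized repair of the BFS layering around the path. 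At phase boundaries, I would run a fresh global Hopcroft--Karp-style sweep in $O(m/\eps)$ time to rebuild the layered structure from scratch and flush out any remaining short augmenting paths; this cost of $O(m/\eps)$ per phase, multiplied by $O(1/\eps)$ phases, contributes the $O(m\eps^{-2})$ baseline term.

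The main obstacle is bounding the amortized per-insertion work so that the total is $O(n\eps^{-6} + m\eps^{-5})$ rather than the naive $O(m^2/\eps)$. The key charging argument is that each edge can participate in the ``local'' update region of the layered data structure only $\poly(1/\eps)$ times per phase before being ``frozen'' at its correct layer; across $O(1/\eps)$ phases this yields the $O(m\eps^{-5})$ term. The additive $O(n\eps^{-6})$ term accounts for the cost of the at-most-$n$ augmentations, each triggering a localized repair of cost $\poly(1/\eps)$, together with the phase-boundary rebuilds whose $O(m/\eps^2)$ aggregate cost is dominated by the edge-based term and whose vertex-touch cost sums to $O(n\eps^{-6})$. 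The delicate part of the analysis—where I expect most of the technical work—is proving that the localized repair after each augmentation suffices to preserve the invariant that the data structure correctly identifies all short augmenting paths, so that the global rebuild only needs to happen at phase boundaries rather than after every augmentation.
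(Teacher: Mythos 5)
This lemma is stated in the paper as a black-box citation of \cite{BlikstadK23}; the paper supplies no proof, so the only meaningful comparison is against the cited work itself. Your sketch takes a genuinely different route from that work: \cite{BlikstadK23} is built on the \emph{auction algorithm} for bipartite matching (in the style of Assadi--Liu--Tarjan), in which vertices carry prices/potentials that are \emph{monotone non-decreasing} over the run of the algorithm. That monotonicity is the engine of the whole amortization: each vertex can raise its price only $O(1/\eps)$ times per scale, each price increase can be charged to a bounded amount of rebidding work, and edge insertions integrate cleanly because a new edge either is immediately unprofitable at current prices or triggers a bounded bidding cascade. Your proposal instead uses a Hopcroft--Karp layering with phases and localized repairs, which is not how the cited bound is obtained.

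More importantly, your sketch has a genuine gap at exactly the point you flag as ``the key charging argument.'' The claim that each edge participates in the local update region only $\poly(1/\eps)$ times per phase before being ``frozen'' at its correct layer is asserted, not proved, and there is no monotone quantity in your setup that could drive such a charging argument. After an augmentation, the matched/unmatched status of every edge on the augmenting path flips, which changes alternating reachability from the free vertices \emph{globally}: BFS layers of vertices far from the path can change, and they can change back and forth as further augmentations occur. This is precisely the obstacle that makes ``maintain no augmenting path of length $\le 2k-1$ with local repairs'' fail to give $\poly(1/\eps)$ amortized time by naive arguments (the straightforward implementations give bounds like $O(m\sqrt{m}\cdot\poly(1/\eps))$ or $O(mn)$). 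Without identifying a monotone potential --- which is what the prices in the auction framework provide --- the amortization in your third paragraph does not go through, and the same issue undermines the claim that localized repair preserves the invariant that all short augmenting paths are detected between phase boundaries.
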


\begin{proof}[Proof of \Cref{lem:incremental}]
The result follows from the application of \Cref{red:WeightedtoUnweighted:new} to the amortized runtime given by \Cref{lem:incremental:unweighted}.
\end{proof}
Prior to \cref{lem:incremental}, the best known incremental algorithm has an update time of $O(m\cdot \log n\cdot \log^{2}(nW\eps^{-1})\cdot\eps^{-2})$ (see \cite{BhattacharyaKS23}) or $O(\eps^{-O(1/\eps)}\cdot m\cdot \log W)$ (by combining \cite{BlikstadK23} and \cite{GuptaP13}).
We now prove \Cref{lem:dynamic:offline}. In order to achieve this, we use the following recent result by \cite{Liu24}.

\begin{lemma}[\cite{Liu24}]
    There is a randomized algorithm that given an offline sequence of edge insertions and deletions to an $n$-vertex bipartite graph, maintains the edges of $(1-\eps)$-approximate matching in amortized $O(n^{0.58}\poly(\eps^{-1}))$ time with high probability. 
\end{lemma}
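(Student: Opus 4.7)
The plan is to exploit offline access to the entire update sequence via a hierarchical time decomposition, reducing the dynamic problem to a sequence of static (or near-static) bipartite matching computations at the nodes of a segment tree over time. The key structural observation is that each edge is ``permanent'' at only $O(\log T)$ segment-tree nodes (where $T$ is the number of updates), so if we can compute a $(1-\eps)$-approximate bipartite matching at each node in time proportional to its local graph size times $n^{0.58}\cdot\poly(\eps^{-1})$, the amortized per-update cost will match the claimed bound. Because the setting is offline, we may perform this decomposition in a preprocessing pass over the update stream.

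Concretely I would (i) build a segment tree over the $T$ updates; (ii) for each node $v$ with time interval $[\ell_v,r_v]$, let $E_v$ be the set of edges present throughout $[\ell_v,r_v]$; and (iii) traverse the tree top-down, using the parent's matching as a warm start for computing a $(1-\eps)$-approximate matching on $(V,E_v)$. For the warm-start step, I would invoke the Hopcroft--Karp-style observation that once a $(1-\delta)$-approximate matching is in hand, it suffices to cancel all augmenting paths of length at most $O(1/\eps)$ to improve it to $(1-\eps)$-approximate. These short-augmenting-path computations can be batched across many source vertices, and on bipartite graphs they can be carried out using fast rectangular matrix multiplication applied to a sparsified version of the graph obtained by a randomized expander decomposition (which is what injects the ``with high probability'' aspect of the statement).

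The main obstacle is achieving the specific exponent $n^{0.58}$ rather than the weaker, easier bounds such as $n^{2/3}$ from naive rebuild-every-$\sqrt n$-steps, or $n^{\omega-1}$ from a straight matrix-multiplication plug-in. This requires a careful balance between (a) the size of the subgraph on which each segment-tree node's matching is computed, (b) the cost of finding short augmenting paths in batch via rectangular matrix multiplication, and (c) the cost of sparsifying/reweighting via randomized expander decomposition. The exponent $0.58$ should emerge from optimizing these three trade-offs using the best known rectangular matrix multiplication exponents $\omega(1,1,\mu)$. Handling deletions (not just insertions) adds the further complication that the warm-start matching inherited from a parent segment-tree node can degrade after many updates, so an auxiliary data structure is needed to quickly re-extract a $(1-\eps)$-approximate matching on the permanent edges $E_v$ of each node; this is where the exponent is ultimately paid.
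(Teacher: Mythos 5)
You should first be aware of a mismatch: the paper does not prove this lemma. It is imported verbatim from \cite{Liu24} as a black box, and its only role is as an ingredient for the weighted offline result (\cref{lem:dynamic:offline}), which the paper obtains by feeding this cited unweighted algorithm into the weighted-to-unweighted reduction \cref{red:WeightedtoUnweighted:new}. (The proof environment following the lemma in the source actually argues that combination; it is the proof of the weighted corollary, not of the cited statement.) So you have set out to re-derive Liu's offline dynamic matching algorithm from scratch, which is a substantially harder task than anything the paper attempts here.

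Taken on its own terms, your sketch has genuine gaps and is not yet a proof. First, the accounting over the segment tree is not carried out: the fact that each edge is permanent at $O(\log T)$ nodes bounds the total size of the sets $E_v$, but it does not bound the cost of producing a $(1-\eps)$-approximate matching at each of the $\Theta(T)$ nodes, and you never argue that the warm start inherited from a parent is only $O(\eps)$-far from optimal on the child's edge set (it need not be: the child's permanent edge set can differ from the parent's by many edges, so many augmentations may be required, and the total work is left uncontrolled). Second, the step ``batch short augmenting paths across many sources via rectangular matrix multiplication on an expander-decomposed sparsifier'' is asserted, not proved; finding vertex-disjoint augmenting paths is not a reachability computation, and you give no statement of what the matrix product computes or why disjoint paths can be extracted from it. Third, and most importantly, the quantitative heart of the lemma --- the exponent $0.58$ --- is explicitly deferred (``should emerge from optimizing these three trade-offs''), so the claimed bound is never established. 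For reference, Liu's actual argument goes through a reduction to online matrix-vector multiplication; offline access lets the vector queries be batched so the products can be evaluated with fast rectangular matrix multiplication, which is where $n^{0.58}$ arises from the known bounds on $\omega(1,1,\mu)$. If your goal is only to justify the lemma's use in this paper, the correct move is to cite \cite{Liu24}; if your goal is to prove it, each of the three steps above needs a complete argument.
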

\begin{proof}
    The result follows from \Cref{red:WeightedtoUnweighted:new} and \Cref{lem:dynamic:offline}.
\end{proof}

We now show our results for general graphs.

\paragraph{General Graphs} For general graphs, we show two results in the fully dynamic and decremental settings, respectively.

\begin{lemma}\label{lem:fullydynamic:weighted:GP13}
There is a deterministic fully dynamic algorithm that maintains a $(1-\eps)$-approximate MWM in $O((\sqrt{m}\cdot \eps^{-4}+\eps^{-6})\cdot \poly(\log (1/\eps))$ update time.
\end{lemma}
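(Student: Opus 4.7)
The plan is to instantiate \Cref{thm:partial reduction:ultimate} with an inner $(1-\eps)$-approximate MWM algorithm $\A$ that is efficient on graphs of aspect ratio $W = \Theta(\eps^{-2})$. Since the outer reduction already shrinks the weight range from $\poly(n)$ down to $\Theta(\eps^{-2})$, we no longer need the original~\cite{GuptaP13} weight-reduction step, whose $\exp(1/\eps)$ factor was the bottleneck. Instead, on aspect ratio $\Theta(\eps^{-2})$, we can round edge weights to powers of $(1+\eps)$, producing $O(\log(1/\eps)/\eps)$ distinct weight levels, maintain a $(1-\eps)$-approximate MCM for each threshold via the $O(\sqrt{m}\cdot\poly(1/\eps))$-time algorithm of \cite{GuptaP13}, and aggregate via the locally greedy census of \Cref{sec:framework:well-separated}. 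Each edge update touches $O(\log(1/\eps)/\eps)$ MCM instances, giving $\A$ an amortized update time of $O(\sqrt{m}\cdot \eps^{-4})$ after absorbing lower-order factors.

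Feeding this $\A$ into \Cref{thm:partial reduction:ultimate} yields amortized update time
\[
\poly(\log(1/\eps))\cdot O\!\left(\U_\A(n,m,\Theta(\eps^{-2}),\Theta(\eps))+\eps^{-5}\right) \;=\; O\!\left((\sqrt{m}\cdot\eps^{-4}+\eps^{-5})\cdot\poly(\log(1/\eps))\right),
\]
with approximation ratio $1-O(\eps\log(1/\eps))$. A standard rescaling $\eps \leftarrow \Theta(\eps/\log(1/\eps))$ turns this into a true $(1-\eps)$-approximation at the cost of only additional $\poly(\log(1/\eps))$ factors, matching the target bound. Determinism is preserved throughout, since GP'13's MCM algorithm, the locally greedy census aggregation, the matching substitution and degree-two subroutines of our framework, and the low-recourse transformation are all deterministic.

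The main obstacle is verifying that the inner MWM subroutine on aspect-ratio-$\Theta(\eps^{-2})$ inputs really attains $O(\sqrt{m}\cdot\eps^{-4})$ update time with no hidden $\exp(1/\eps)$ dependence. The key point is that the $\exp(1/\eps)$ factor in the bound of \cite{GuptaP13} comes exclusively from their $\eps^{-O(1/\eps)}$-wide weight classes used inside their internal weight-reduction step; once the outer application of \Cref{thm:partial reduction:ultimate} has already compressed the aspect ratio to $\Theta(\eps^{-2})$, only $O(\log(1/\eps)/\eps)$ weight levels remain, each level is handled by a single copy of the $\sqrt{m}$-time MCM algorithm, and the aggregation can be performed in $O(1)$ time per matching change by \Cref{thm:local greedy:runtime} because this many levels fit in a single machine word. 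Together these give the desired $O(\sqrt{m}\cdot \eps^{-4})$ inner bound, and the claimed update time follows.
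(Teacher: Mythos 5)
Your top-level plan---instantiate \Cref{thm:partial reduction:ultimate} with an inner $(1-\eps)$-approximate MWM subroutine that is efficient on aspect ratio $\Theta(\eps^{-2})$ and then rescale $\eps$---does match the paper. But the paper obtains the inner subroutine in the simplest possible way: it cites \cite{GuptaP13}'s own \emph{weighted} dynamic algorithm, stated in the paper as \Cref{lem:fullydynamic:weighted:GP13:original}, which gives $(1-\eps)$-approximate MWM in $O(\sqrt m\, W \eps^{-2})$ update time with no exponential dependence on $1/\eps$ whatsoever. Plugging in $W=\Theta(\eps^{-2})$ yields $O(\sqrt m\,\eps^{-4})$ directly, and \Cref{thm:partial reduction:ultimate} contributes the $+\eps^{-5}$ and $\poly(\log(1/\eps))$ terms. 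Your premise that ``we no longer need the original GP'13 weight-reduction step, whose $\exp(1/\eps)$ factor was the bottleneck'' is correct, but you then conclude, incorrectly, that you must therefore \emph{rebuild} GP'13's weighted algorithm from their MCM routine. GP'13's weighted result already has only linear $W$ dependence; the $\eps^{-O(1/\eps)}$ arises only from their separate bucketing scheme, which our framework replaces.

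The reconstruction you propose has a genuine gap. You round weights to powers of $(1+\eps)$ to get $O(\log(1/\eps)/\eps)$ levels, maintain an MCM per level, and ``aggregate via the locally greedy census.'' But \Cref{lemma:local greedy:approximation} only guarantees a $(1-O(\eps))$-approximation when the weight classes it aggregates are $(1/\eps)$-\emph{spread}; weight levels spaced by factors of $(1+\eps)$ are nowhere near spread enough, and the locally greedy census gives no useful guarantee on them. More fundamentally, taking arbitrary approximate matchings on narrow, disjoint weight ranges and hoping their union contains a near-optimal MWM is exactly the obstacle the matching composition lemma and the counterexample in \Cref{sec:appendix:counterexample} address: without padding/overlap, the union can lose a constant fraction. (And even if the per-level-MCM idea worked, your own arithmetic would give $O(\sqrt m\,\eps^{-3}\log(1/\eps))$, not the $O(\sqrt m\,\eps^{-4})$ you state---a sign the construction isn't the one actually being used.) The correct and much shorter proof is to invoke \Cref{lem:fullydynamic:weighted:GP13:original} directly as the inner algorithm in \Cref{thm:partial reduction:ultimate}.
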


Note that prior to this work, the best-known update time for an algorithm that maintains a $(1-\eps)$-approximate maximum weight matching was $O(\sqrt{m}\cdot \eps^{-O(1/\eps)}\cdot \log W)$. This was obtained by combining the results of \cite{GuptaP13} with their bucketing scheme. 

\begin{lemma}\label{lem:decremental:weighted}
There is a randomized decremental algorithm that maintains a $(1-\eps)$-approximate MWM in a decremental general graph in $O(m\cdot \poly(\log n, \eps^{-1}))$ total update time.
\end{lemma}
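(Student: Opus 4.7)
The plan is to apply the new aspect-ratio reduction \cref{thm:partial reduction:ultimate} as a black box to the decremental MWM algorithm of \cite{ChenST23}, in direct analogy with how \Cref{lem:fullydynamic:weighted:GP13} is obtained from the fully dynamic algorithm of \cite{GuptaP13}. The intuition is that the $\eps^{-O(1/\eps)}$ factor in \cite{ChenST23}'s update time arises precisely from their invocation of the \cite{GuptaP13} weight reduction to handle general aspect ratios; the underlying subroutine, when instantiated on graphs with small aspect ratio $W = \Theta(\eps^{-2})$, runs in amortized $\poly(\log n, \eps^{-1})$ time per deletion. Replacing the outer weight reduction with \cref{thm:partial reduction:ultimate}, whose multiplicative overhead is only $\poly(\log(1/\eps))$, should give the claimed improved bound.

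Concretely, I would first extract from \cite{ChenST23} a decremental $(1-\eps)$-approximate MWM algorithm $\A$ on graphs with aspect ratio $W = O(\eps^{-2})$ whose amortized update time is $\U(n, m, W, \eps) = \poly(\log n, \eps^{-1})$. Then I would invoke \cref{thm:partial reduction:ultimate}; since that reduction is partially dynamic preserving, applying it to the decremental algorithm $\A$ yields a decremental algorithm. The resulting amortized update time per edge deletion is
\[
\poly(\log(\eps^{-1})) \cdot \bigl(\U(n, m, \Theta(\eps^{-2}), \Theta(\eps)) + \eps^{-5}\bigr) = \poly(\log n, \eps^{-1}).
\]
Since a decremental algorithm processes at most $m$ edge deletions, the total update time is $O(m \cdot \poly(\log n, \eps^{-1}))$. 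Finally, rescaling $\eps$ by a $\Theta(\log(1/\eps))$ factor converts the $(1-O(\eps\log\eps^{-1}))$-approximation given by \cref{thm:partial reduction:ultimate} into a true $(1-\eps)$-approximation without changing the asymptotic bound.

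The main obstacle will be cleanly isolating the small-weight decremental subroutine from \cite{ChenST23}. If their algorithm does not directly expose such a subroutine with the required runtime, I would instead take the following alternative route: combine \cite{ChenST23}'s decremental weighted \emph{fractional} matching algorithm (which maintains a fractional MWM in $\poly(\log n, \eps^{-1})$ update time) with our improved rounding algorithm \cref{thm:weighted-rounding}, which handles each entry change to the fractional matching in $\widetilde O(\eps^{-8})$ time and, critically, only requires edge deletions when the underlying fractional matching is decremental. This composition directly yields a decremental integral $(1-\eps)$-approximate MWM algorithm with the desired $O(m \cdot \poly(\log n, \eps^{-1}))$ total update time, and bypasses the need to analyze the internals of \cite{ChenST23}'s integral algorithm.
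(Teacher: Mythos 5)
Your primary plan is exactly the paper's proof: the paper cites ChenST23's decremental result with an explicit $W$-dependence of $O(m \cdot W \cdot \poly(\log n, \eps^{-1}))$ total update time and then plugs this directly into \cref{thm:partial reduction:ultimate} with $W = \Theta(\eps^{-2})$. Your worry about cleanly isolating a small-aspect-ratio subroutine from inside ChenST23 is therefore unnecessary---the stated form of their result already exposes the $W$-dependence as a black box, so no surgery on their internals is needed; your alternative route via \cref{thm:weighted-rounding} combined with the ChenST23 fractional algorithm is also sound and is mentioned in the introduction as a consequence of the rounding theorem, though it is not the route taken in the proof of this lemma.
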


Prior to this work, the best known decremental algorithm maintaining a $(1-\eps)$-approximate maximum weight matching in a general graph had $O(\eps^{-O(1/\eps)}\cdot\poly(\log n))$-update time, and was obtained by combining the results of \cite{ChenST23} with the bucketing scheme of \cite{GuptaP13}. We now proceed with showing the proof of \Cref{lem:fullydynamic:weighted:GP13}. For this, we use the following result.

\begin{lemma}[\cite{GuptaP13}]\label{lem:fullydynamic:weighted:GP13:original}
There is a deterministic fully dynamic algorithm that maintains a $(1-\eps)$-approximate MWM in $O(\sqrt{m} W\eps^{-2})$ update time.
\end{lemma}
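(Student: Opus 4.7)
The plan is to use a classical periodic rebuild strategy. I will maintain a matching $M$ by (a) periodically invoking \Cref{thm:DP14} to compute a $(1-\eps/2)$-approximate MWM of the current graph from scratch, and (b) between rebuilds, applying only lazy deletion: whenever an edge $e \in M$ is deleted from $G$ remove $e$ from $M$, and on edge insertions do nothing to $M$. Since \Cref{thm:DP14} is deterministic and the bookkeeping between rebuilds is trivial, the overall algorithm is deterministic, and since the rebuild together with the lazy-deletion step do not generate insertions from deletions (or vice versa), the reduction is partially dynamic preserving.

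To bound the safe rebuild period, let $\mu_0 \defeq \mu_w(G)$ at the moment of the latest rebuild, so that $w(M_0) \geq (1-\eps/2)\mu_0$. After $k$ subsequent updates, each deletion can drop at most $W$ of weight from $M$, giving $w(M) \geq (1-\eps/2)\mu_0 - kW$, while each insertion can grow the optimum by at most $W$ (and deletions cannot), giving the current optimum $\mu' \leq \mu_0 + kW$. Imposing $w(M) \geq (1-\eps)\mu'$ and solving yields that the approximation is preserved as long as $k = \Theta(\eps \mu_0 / W)$, so rebuilds may be scheduled with this period. Dividing the rebuild cost $O(m\log(\eps^{-1})\eps^{-1})$ by the period yields amortized update time
\[ O(m\log(\eps^{-1})\eps^{-1}) \big/ \Theta(\eps\mu_0/W) = O\!\left(mW/(\eps^2\mu_0)\right), \]
absorbing the $\log(\eps^{-1})$ factor into the stated $\eps^{-2}$. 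This matches the claimed $O(\sqrt{m}W\eps^{-2})$ bound precisely when $\mu_0 \geq \sqrt{m}$.

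The main obstacle is the regime $\mu_0 < \sqrt{m}$, where the rebuild period becomes too short for the naive Duan--Pettie rebuild to amortize. I would address this by using an $|M^*|$-sensitive rebuild: since $\mu_0 < \sqrt{m}$ forces $|M^*| < \sqrt{m}$ (each edge has weight at least $1$), a scaling- or Hopcroft--Karp-style static algorithm can compute a $(1-\eps/2)$-approximate MWM in time $O(m\sqrt{|M^*|/\eps}) = O(m \cdot m^{1/4} / \sqrt{\eps})$ by terminating after $O(\sqrt{|M^*|/\eps})$ augmenting-path phases of cost $O(m)$ each; substituting this into the amortization recovers $O(\sqrt{m}W\eps^{-2})$ uniformly over all $\mu_0$. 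The entire argument is a black-box rebuild-and-lazy-delete scheme on top of \Cref{thm:DP14}, with the only nontrivial analytic content being the two linear bounds on $w(M)$ and $\mu'$ driving the choice of $k$.
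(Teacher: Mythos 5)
The paper does not prove this lemma at all: it is imported verbatim from \cite{GuptaP13} as a black box, so there is no internal proof to compare against; I am therefore judging your reconstruction on its own merits. The first half of your argument is the standard and correct skeleton of the cited result: the stability computation giving a safe rebuild period of $k = \Theta(\eps\mu_0/W)$ is right (each update changes $w(M)$ and $\mu_w(G)$ by at most $W$), and dividing the $O(m\log(\eps^{-1})\eps^{-1})$ rebuild cost by this period correctly yields $O\left(mW\eps^{-2}\log(\eps^{-1})/\mu_0\right)$, which matches the target when $\mu_0 \geq \sqrt{m}$. (Two small caveats even here: a $\log(\eps^{-1})$ factor cannot simply be ``absorbed'' into $\eps^{-2}$, and should either be carried or avoided by a different static subroutine; and ``partially dynamic preserving'' is not part of the claim.)

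The regime $\mu_0 < \sqrt{m}$, which you correctly flag as the main obstacle, is where your proposal genuinely fails. First, the static subroutine you invoke --- a $(1-\eps)$-approximate MWM in time $O(m\sqrt{|M^*|/\eps})$ obtained by ``terminating after $O(\sqrt{|M^*|/\eps})$ augmenting-path phases'' --- is asserted without justification: the Hopcroft--Karp phase bound is a statement about cardinality matching, and for weighted matching (a scaling algorithm in the style of \cref{thm:DP14}) no such phase-count guarantee is stated or obvious. Second, even granting that subroutine, the arithmetic does not close: with static time $O(m\sqrt{\mu_0/\eps})$ and period $\Theta(\eps\mu_0/W)$ the amortized cost is $O\bigl(mW\eps^{-3/2}/\sqrt{\mu_0}\bigr)$, and the requirement that this be $O(\sqrt{m}W\eps^{-2})$ is equivalent to $\mu_0 \geq \eps m$, which is incompatible with $\mu_0 < \sqrt{m}$ whenever $m \geq \eps^{-2}$. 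Concretely, for a star with $m$ unit-weight edges ($\mu_0 = W = 1$) your scheme rebuilds every $O(\eps)$ updates at cost $\Omega(m)$ each, giving amortized $\Omega(m/\eps)$ against a target of $O(\sqrt{m}\eps^{-2})$. The missing idea is that when $\mu_0$ is small the rebuild must operate on an edge set whose size scales with $\mu_0$ (roughly $O(\mu_0\sqrt{m})$ edges, e.g., by restricting to edges incident to a small set of ``relevant'' vertices maintained across the phase), rather than merely running a faster algorithm over all $m$ edges; reducing only the number of phases, as you propose, provably cannot reach the claimed bound.
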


\begin{proof}[Proof of \Cref{lem:fullydynamic:weighted:GP13}]
The result follows from \Cref{lem:fullydynamic:weighted:GP13:original} and \Cref{thm:partial reduction:ultimate}.
\end{proof}

We now show the proof of \Cref{lem:decremental:weighted}. Our proof uses the following recent result by \cite{ChenST23}.

\begin{lemma}[\cite{ChenST23}]\label{lem:decremental:weighted:CST23}
There is a randomized decremental algorithm that maintains a $(1-\varepsilon)$-approximate MWM in a decremental general graph in $O(m\cdot W \cdot \poly(\log n,\eps^{-1}))$ total update time.
\end{lemma}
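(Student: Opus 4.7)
The plan is to combine a primal--dual multiplicative weights update (MWU) scheme for the matching LP with a decremental unweighted MCM subroutine, arranging the computation so that the per-edge work is $\widetilde O(W \poly(\eps^{-1}))$. The key conceptual point is that with integer weights in $[1,W]$, the natural ``scale'' of dual variables is $W$, which is exactly what produces the $mW$ factor in the claimed bound. A matching algorithm that maintains a fractional solution and rounds it via \cref{thm:weighted-rounding} only adds $\widetilde O(\poly(\eps^{-1}))$ overhead, so essentially all the work goes into the fractional primal--dual subroutine.

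First I would set up the Edmonds LP, keeping only blossom inequalities for odd sets $B$ of size at most $O(\eps^{-1})$ (which suffice for a $(1-\eps)$-approximation). I would maintain vertex duals $y_v$ and blossom duals $z_B$, all initialized to zero, together with a primal fractional matching $\bx$. The MWU loop runs in phases: in each phase, compute an approximate MCM in the ``tight'' subgraph $H$ consisting of edges whose reduced weight $w_e - y_u - y_v - \sum_{B \ni e} z_B$ is at least $(1-\eps)w_e$, then multiplicatively scale up the duals of over-saturated vertices or blossoms. Because the dual update is monotone, edges only leave $H$ over time; and since the external deletions are also monotone, a single decremental \emph{unweighted} MCM instance on $H$ can drive an entire phase, restarted only when $H$ changes too drastically. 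At the end, the maintained fractional matching is rounded to an integral one via the weighted rounding algorithm.

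For the runtime, standard MWU analysis gives $\widetilde O(\eps^{-2})$ phases; each edge can absorb dual increments at most $\widetilde O(W/\eps)$ times before its reduced weight saturates (since weights are integers in $[1,W]$ and the multiplicative step is $1+\eps$), so the total decremental work across all instances of the unweighted MCM subroutine is $\widetilde O(mW \poly(\eps^{-1}))$, matching the target. The hardest part will be correctly handling the blossom duals $z_B$ in the decremental regime: unlike vertex duals, increasing $z_B$ can remove many edges from $H$ at once and may trigger cascading reshrinking of nested blossoms, so the decremental MCM subroutine must be robust to batched non-local edge deletions. Bounding the total structural churn---and hence the number of decremental MCM restarts---by $\widetilde O(mW \poly(\eps^{-1}))$ is the technical heart of the argument, and I expect it to require a potential-function analysis that balances growth of $\sum_v y_v + \sum_B z_B$ against the shrinking edge population in $H$.
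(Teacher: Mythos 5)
First, note that the paper does not prove this statement at all: it is imported verbatim as a black-box citation of \cite{ChenST23}, and the surrounding text only uses it as input to \cref{thm:partial reduction:ultimate}. So what you are really attempting is a from-scratch reproof of the main decremental result of that cited work, and your outline should be judged on its own merits rather than against an in-paper argument.

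On those merits, your sketch has the right architecture (maintain a decremental fractional solution to the Edmonds LP restricted to odd sets of size $O(1/\eps)$, then round via a dynamic rounding procedure such as \cref{thm:weighted-rounding}), but it is not yet a proof, and the gaps are exactly where the difficulty lives. Concretely: (i) your claim that ``a single decremental unweighted MCM instance on $H$ can drive an entire phase'' rests on the tight subgraph $H$ only shrinking, which you yourself concede fails once blossom duals $z_B$ move --- raising $z_B$ can eject edges non-locally and force re-expansion of nested blossoms, so the monotonicity that makes the decremental subroutine applicable is precisely what you have not established; deferring this to ``a potential-function analysis I expect to need'' leaves the technical heart unproved. (ii) Your accounting for the $W$ factor is not justified: in a multiplicative-weights scheme with $(1+\eps)$ steps, the number of dual increments per constraint is governed by the ratio of the final to initial dual value and the width of the oracle, and you give no argument that this is $\widetilde O(W/\eps)$ rather than, say, $\widetilde O(\eps^{-2}\log W)$ or unbounded; the linear-in-$W$ cost in the cited result arises from the congestion-balancing potential over weight levels, not from per-edge saturation counts, so the bound you assert does not follow from what you wrote. (iii) You never argue that the maintained fractional matching remains feasible for the odd-set-constrained polytope throughout an adversarial deletion sequence, which is required for the rounding step to lose only a $(1-O(\eps))$ factor in general (non-bipartite) graphs. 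Until these three points are filled in, the proposal is a plausible research plan rather than a proof of the lemma.
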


\begin{proof}[Proof of \Cref{lem:decremental:weighted}]
The result follows from \Cref{lem:decremental:weighted:CST23} and \Cref{thm:partial reduction:ultimate}.
\end{proof}

\subsection{The Streaming Model}

\paragraph{Model Definition} In the streaming model, the edges of the input $n$-vertex graph $G=(V,E)$ are presented to the algorithm in a stream (in an arbitrary order). A \emph{semi-streaming algorithm} is allowed to make one or a few passes over the stream, use a limited amount of memory $O(n\poly(\log n))$, and at the end output a solution to the problem at hand, say, find an approximate maximum weight matching of $G$. 

\paragraph{Our Results} As in the dynamic case, we obtain two types of reductions. First is an aspect ratio reduction, and the second, is a weighted to unweighted reduction for bipartite graphs. 

\begin{theorem}\label{red:streaming:aspectratio}
    Suppose there is a semi-streaming algorithm for $(1-\eps)$-approximate maximum weight matching in an $n$-node $m$-edge general graph with aspect ratio $W$ that uses $p(n,m,W,\eps)$ passes and space $s(n,m,W,\eps)$, then for any constant $c>0$ there exists a semi-streaming algorithm for $(1-c^{-1}\eps)$-approximate maximum weight matching that uses $p(n,m,\Theta(\eps^{-(2+c)}),\eps)$ passes and space complexity $O(s(n,m,\Theta(\eps^{-(2+c)}),\eps)\cdot \log_{\eps^{-1}} W)$.
\end{theorem}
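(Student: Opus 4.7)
The plan is to apply the matching composition lemma (\cref{lemma:matching composition lemma}) in a parallel fashion over the stream. I would fix $\delta = \Theta(\eps^{-c})$ and let $[\ell_1, r_1), \ldots, [\ell_k, r_k)$ be the geometric weight partition of $[1, W]$ with $\ell_i = \delta^{i-1}$ and $r_i = \delta^i$, so that the partition is $\delta$-wide and $k = O(\log W / \log \delta) = O(\log_{\eps^{-1}} W / c)$. Each padded weight class $[\eps \ell_i, r_i \eps^{-1})$ then has aspect ratio exactly $\delta/\eps^2 = \Theta(\eps^{-(2+c)})$, matching the target aspect ratio in the theorem statement.

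During the stream, I would run $k$ independent instances $\A_1, \ldots, \A_k$ of the given semi-streaming algorithm in parallel, where $\A_i$ operates on the subgraph $G_{[\eps \ell_i, r_i \eps^{-1})}$. Each incoming edge $e$ is forwarded only to those instances whose padded interval contains $w(e)$. Since the $\ell_i$ grow geometrically by a factor of $\delta$ while each padded interval spans a factor of $\delta/\eps^2$, each edge is forwarded to only $O(\log_\delta(\delta / \eps^2)) = O(1/c) = O(1)$ instances. Because all instances run synchronously, the overall pass complexity equals the single-instance pass complexity $p(n, m, \Theta(\eps^{-(2+c)}), \eps)$, and the total working memory is $O(k \cdot s(n, m, \Theta(\eps^{-(2+c)}), \eps)) = O(s(n, m, \Theta(\eps^{-(2+c)}), \eps) \cdot \log_{\eps^{-1}} W)$ (treating $c$ as a constant).

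After the last pass, each instance outputs a $(1-\eps)$-approximate MWM $M_i$ of $G_{[\eps \ell_i, r_i \eps^{-1})}$. By \cref{lemma:matching composition lemma}, $\mu_w(M_1 \cup \cdots \cup M_k) \geq (1 - O(\eps \log_\delta(1/\eps))) \mu_w(G) = (1 - O(\eps/c)) \mu_w(G)$. Since $|M_1 \cup \cdots \cup M_k| \leq nk$, the union fits in semi-streaming memory, so once the stream ends I apply the offline algorithm of \cref{thm:DP14} to the union in a final postprocessing step. Replacing $\eps$ by $\eps / C$ for a sufficiently large absolute constant $C$ depending on $c$ throughout every subroutine absorbs the hidden constant in the composition lemma and yields the claimed $(1 - c^{-1}\eps)$-approximate MWM of $G$; this rescaling only changes the pass and space bounds by constant factors.

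The main obstacle is not deep but operational: we must correctly demultiplex the adversarially ordered stream to the $O(1)$ relevant instances per edge without inflating the pass count, and synchronize the parallel instances so they consume the stream in lockstep. Both issues are handled by the observations that each edge lies in only $O(1)$ padded classes and that one global stream pass corresponds to one pass of each $\A_i$. A minor technicality is that \cref{lemma:matching composition lemma} requires $\eps \leq 1/6$, which is without loss of generality since the statement is trivial otherwise.
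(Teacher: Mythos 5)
Your proposal follows the same approach as the paper: fix an $\eps^{-c}$-wide geometric weight partition, run $O(\log_{\eps^{-1}} W)$ parallel instances of the base algorithm on the padded classes of aspect ratio $\Theta(\eps^{-(2+c)})$, and invoke the matching composition lemma (\Cref{lemma:matching composition lemma}) to argue that the union of the returned matchings preserves a $(1 - O(\eps/c))$-approximate MWM. Your write-up is actually more explicit than the paper's terse proof — in particular you spell out the demultiplexing, the $O(1)$ overlap per edge, the postprocessing via \Cref{thm:DP14}, and the final $\eps$-rescaling to absorb the composition lemma's hidden constant — but the underlying argument is identical.
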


\begin{theorem}\label{red:streaming:weightedtounweighted}
Suppose there is a semi-streaming algorithm for $(1-\eps)$-approximate MCM in an $n$-node $m$-edge bipartite graph that uses $p(n,m,\eps)$ passes and space $s(n,m,\eps)$, then there exists a semi-streaming algorithm for $(1-O(\eps))$-approximate maximum weight matching in a $n$-node $m$-edge bipartite graph with aspect ratio $W$ that uses $p(\Theta(n\cdot \eps^{-(3+c)}),\Theta(m\cdot \eps^{-(3+c)}),\Theta(\eps))$ passes and space $O(s(\Theta(n\cdot \eps^{-(3+c)}),\Theta(m\cdot \eps^{-(3+c)}),\Theta(\eps))\cdot\log_{\eps^{-1}} W)$.
\end{theorem}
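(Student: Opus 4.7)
The plan is to compose \Cref{red:streaming:aspectratio} with a streaming simulation of the Kao--Lam--Sung--Ting blowup used in \Cref{red:WeightedtoUnweighted}. At a high level, I would first use the aspect-ratio reduction to shrink the weight range from $W$ down to $\Theta(\eps^{-(2+c)})$ without leaving the streaming model; then convert the resulting small-aspect-ratio bipartite MWM instance into an unweighted bipartite MCM instance whose size is only a $\poly(1/\eps)$ factor larger, and run the assumed MCM streamer on it.

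In more detail, I would proceed in four steps. \textbf{Step 1 (aspect-ratio reduction).} Invoke \Cref{red:streaming:aspectratio} with parameter $c$ to reduce $(1-\eps)$-approximate bipartite MWM with aspect ratio $W$ to $(1-O(\eps))$-approximate bipartite MWM with aspect ratio $W' = \Theta(\eps^{-(2+c)})$. This multiplies the space by $O(\log_{\eps^{-1}} W)$ and preserves bipartiteness, since the weight-class partition used in \Cref{sec:framework} only restricts the edge set and never introduces cross-side edges. \textbf{Step 2 (rounding to integer weights).} Scale weights by $1/\eps$ and round down to obtain integer weights in $\{1,\dots,\Theta(\eps^{-(3+c)})\}$, losing only a $(1-O(\eps))$ factor. \textbf{Step 3 (Kao--Lam--Sung--Ting blowup).} Apply the bipartite blowup from \cite{KaoLST01} (as used in \Cref{red:WeightedtoUnweighted}): replace each vertex with $\Theta(\eps^{-(3+c)})$ copies and each weight-$w$ edge with $w$ carefully chosen unit-weight edges between copies, producing an unweighted bipartite graph $H$ with $\Theta(n\,\eps^{-(3+c)})$ vertices and $\Theta(m\,\eps^{-(3+c)})$ edges for which a $(1-\eps)$-approximate MCM pulls back to a $(1-\eps)$-approximate MWM on the scaled instance. \textbf{Step 4 (stream simulation).} Feed $H$ into the assumed MCM streamer by intercepting each original edge $(u,v,w)$ in the stream and emitting, on the fly, the $w$ unit-weight edges of $H$ that it induces.

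Since the blowup of each edge depends only on the edge itself, Step 4 is purely local and every pass over the input stream simulates exactly one pass over $H$ with no additional state beyond what the MCM streamer already uses. Hence the final pass complexity is $p(\Theta(n\,\eps^{-(3+c)}),\Theta(m\,\eps^{-(3+c)}),\Theta(\eps))$ and the space is $O(s(\Theta(n\,\eps^{-(3+c)}),\Theta(m\,\eps^{-(3+c)}),\Theta(\eps))\cdot\log_{\eps^{-1}} W)$, the extra logarithm coming entirely from Step~1. The approximation factors of Steps~1--3 each cost $(1-O(\eps))$, so rescaling $\eps$ by a constant yields the claimed $(1-O(\eps))$ MWM guarantee.

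The main obstacle is verifying that \Cref{red:streaming:aspectratio}, whose statement asserts only an MWM-to-MWM reduction for general graphs, can indeed be instantiated so as to preserve bipartiteness and cooperate with a downstream MCM subroutine through the blowup in Steps~3--4. This requires inspecting the aggregation subroutines (locally greedy census matching on $(1/\eps)$-spread classes and the degree-two maintainer) to confirm that they operate on edge subsets of the original bipartite graph and do not introduce auxiliary non-bipartite structure; and it requires checking that the small-aspect-ratio sub-instance sees the \emph{same} stream as the outer algorithm so that the blowup in Step~4 can be performed online. Both properties follow from the explicit constructions in \Cref{sec:framework}, but the bookkeeping is the genuinely non-routine part of the argument.
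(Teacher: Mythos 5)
Your proposal is correct and is essentially the same argument the paper gives, merely described in the reverse order of composition. The paper first applies the unfolding lemma (its \Cref{red:weightedtounweighted:unfolding}, which already bundles the scaling-to-integers step as an $\eps^{-1}$ blowup) to turn the assumed MCM streamer into an MWM streamer parameterized by aspect ratio, and then feeds that into \Cref{red:streaming:aspectratio}. You describe the same composition from the other end: first reduce the aspect ratio to $\Theta(\eps^{-(2+c)})$, then scale/round and unfold each small-aspect-ratio subinstance into an unweighted instance to be solved by the MCM streamer. Both orderings yield the identical composed algorithm with the same pass and space counts. Your worries about whether the aspect-ratio reduction preserves bipartiteness and can be simulated online are well-placed but dissolve once one inspects the proof of \Cref{red:streaming:aspectratio}: in the streaming setting, the reduction just restricts to weight-class subgraphs, runs a copy of the underlying algorithm on each, and combines the resulting matchings statically at the end, so no auxiliary non-bipartite structure is introduced and the unfolding can be performed edge-by-edge as the stream arrives. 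Your mention of the locally greedy census matching and degree-two maintainer is a red herring in the streaming variant---those are needed only for the dynamic version of the framework.
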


The above reduction has the property that it is a weighted to unweighted reduction that preserves the number of passes, while increasing the space complexity by a factor of $\log W$. 

As a consequence of our reductions, we new results and trade-offs for streaming $(1-\eps)$-approximate bipartite maximum weight matching, which are summarized in the \Cref{table:streaming}. We state them formally thereafter.

\begin{table}[H]
\centering
	\caption{Summary of Results on $(1-\eps)$-approximate Bipartite MWM in Streaming}
    \begin{tabular}{c c c c}
		\hline
        Prior Result & & Our Result ($\forall c>0$ constant) & Reduction Used \\
  \hline
  \makecell{$O(\eps^{-2}\cdot \log(\eps^{-1}))$ passes\\$O(n\eps^{-2}\log W)$ space}
  & \cite{AhnG13} & \makecell{$O(\eps^{-2})$ passes\\$O(n\cdot \eps^{-(3+c)}\log W)$ space\\\Cref{lem:streaming:mwm:ALT21}} & \Cref{red:streaming:weightedtounweighted}\\
  \hline
  \makecell{$O(\eps^{-7}\cdot \log^3(1/\eps))$ passes\\$O(n\cdot \log(1/\eps)\cdot \log W)$ space} & \cite{LiuKK23} &  \makecell{$O(\eps^{-4}\cdot \log^3(1/\eps))$ passes\\$O(n\cdot \log W)$ space\\
  \Cref{lem:mwm:streaming:LiuKK23old}} & \Cref{red:streaming:aspectratio}\\
  \hline
\end{tabular}
\label{table:streaming}
\end{table}

 Our first result is the following lemma. 

 \begin{lemma}\label{lem:streaming:mwm:ALT21}
    For any constant $c>0$, there is a semi-streaming algorithm for $(1-\eps)$-approximate bipartite maximum weight matching that uses $O(n\cdot\eps^{-(3+c)}\cdot \log W)$ space and has a pass complexity of $O(\eps^{-2})$. 
\end{lemma}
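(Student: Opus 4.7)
The plan is to obtain the claimed algorithm by direct application of the black-box reduction in \Cref{red:streaming:weightedtounweighted} to a known unweighted semi-streaming $(1-\eps)$-approximate bipartite MCM algorithm. Specifically, I would invoke the algorithm of Assadi, Liu, and Tarjan (item \ref{item:streaming} in \Cref{sec:intro-applications}), which uses $p(n, m, \eps) = O(\eps^{-2})$ passes and $s(n, m, \eps) = O(n)$ space to maintain a $(1-\eps)$-approximate MCM in an $n$-vertex $m$-edge bipartite graph.

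Plugging these bounds into \Cref{red:streaming:weightedtounweighted} with an accuracy parameter scaled down by a constant factor to absorb the $O(\eps)$ loss in the approximation ratio yields a semi-streaming algorithm for $(1-\eps)$-approximate bipartite MWM whose pass complexity is
\[
p\bigl(\Theta(n \eps^{-(3+c)}), \Theta(m \eps^{-(3+c)}), \Theta(\eps)\bigr) = O(\eps^{-2}),
\]
since the number of passes in the ALT algorithm depends only on $\eps$, not on $n$ or $m$. The space used is
\[
O\bigl(s(\Theta(n\eps^{-(3+c)}), \Theta(m\eps^{-(3+c)}), \Theta(\eps))\cdot \log_{\eps^{-1}} W\bigr) = O\bigl(n\eps^{-(3+c)}\cdot \log W\bigr),
\]
where the $\log_{\eps^{-1}} W = \log W / \log(\eps^{-1})$ factor only contributes at most $O(\log W)$ (and in fact a smaller factor, which we can absorb into the $\eps^{-(3+c)}$ term).

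There is no real obstacle: all the technical work is encapsulated by \Cref{red:streaming:weightedtounweighted}, which was established earlier in the section by combining the matching composition framework with the bipartite unfolding step of \cite{BernsteinDL21}. The only mild care needed is to verify that scaling $\eps$ down by the constant inside the reduction does not affect the pass or space bounds asymptotically, which is immediate because $p$ and $s$ depend polynomially on $\eps^{-1}$. Thus the proof reduces to citing the ALT algorithm as the input to \Cref{red:streaming:weightedtounweighted} and reading off the resulting parameters.
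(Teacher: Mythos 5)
Your proposal is correct and follows exactly the paper's argument: the paper likewise instantiates \Cref{red:streaming:weightedtounweighted} with the Assadi--Liu--Tarjan $O(\eps^{-2})$-pass, $O(n)$-space $(1-\eps)$-approximate bipartite MCM algorithm (stated as \Cref{lem:mcm:streaming:ALT21}) and reads off the resulting pass and space bounds. No divergence in approach.
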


Prior to this, the semi-streaming algorithm of \cite{AhnG14} had the best known pass complexity of $O(\eps^{-2}\cdot \log(\eps^{-1}))$. The second result is the following.

\begin{lemma}\label{lem:mwm:streaming:LiuKK23} For any constant $c>0$ there is a semi-streaming algorithm for $(1-\eps)$-approximate bipartite maximum weight matching that uses $O(n\cdot \log \eps^{-1}\cdot \log W)$ space and has a pass complexity of $O(\eps^{-4}\cdot \log^3(1/\eps))$. 
\end{lemma}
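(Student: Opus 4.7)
The plan is to apply the streaming aspect-ratio reduction of \cref{red:streaming:aspectratio} as a black box to the semi-streaming MWM algorithm of \cite{LiuKK23}. Intuitively, the $\eps^{-7}$ factor in the direct \cite{LiuKK23} bound arises because their algorithm must internally accommodate a general aspect ratio $W$; once the aspect ratio has first been reduced to $\poly(1/\eps)$, the pass complexity of their underlying subroutine drops from $\widetilde O(\eps^{-7})$ to $\widetilde O(\eps^{-4})$, yielding the improved bound at essentially no other cost.

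Concretely, I would fix a small constant $c > 0$ (any $c \in (0, 1)$ works) and instantiate \cref{red:streaming:aspectratio} with this $c$, scaling $\eps$ down by the absolute constant $c^{-1}$ to absorb the corresponding loss in approximation quality. Letting $p(n, m, W', \eps)$ and $s(n, m, W', \eps)$ denote the pass and space complexities of the \cite{LiuKK23} algorithm on an $n$-vertex $m$-edge graph of aspect ratio $W'$, the reduction produces a $(1-O(\eps))$-approximate MWM semi-streaming algorithm with pass complexity $p(n, m, \Theta(\eps^{-(2+c)}), \Theta(\eps))$ and space $O(s(n, m, \Theta(\eps^{-(2+c)}), \Theta(\eps)) \cdot \log_{\eps^{-1}} W)$.

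I would then substitute the quantitative bounds of \cite{LiuKK23} evaluated at the reduced aspect ratio $W' = \Theta(\eps^{-(2+c)}) = \poly(1/\eps)$. The pass complexity $p(n, m, W', \Theta(\eps))$ evaluates to $O(\eps^{-4} \log^3(1/\eps))$ after choosing $c$ sufficiently small, since the $W$-dependence in \cite{LiuKK23} is polynomial with an exponent that makes the substitution $W' = \poly(1/\eps)$ land inside the target $\eps^{-4}$ budget (the remaining $\eps^{-4}\log^3(1/\eps))$ factor is the $W$-independent part of their algorithm's cost). For the space, the per-pass memory of \cite{LiuKK23} at aspect ratio $W'$ is $O(n \log(1/\eps) \log W')$; plugging in $W' = \poly(1/\eps)$ gives $O(n \log^2(1/\eps))$, and multiplying by the reduction's $\log_{\eps^{-1}} W = O(\log W / \log(1/\eps))$ overhead yields the stated bound of $O(n \log(1/\eps) \log W)$.

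The only real obstacle is bookkeeping: one must verify from the quantitative statement of \cite{LiuKK23}'s main theorem that the $W$-dependence of their pass complexity is polynomial with a small enough exponent that the substitution $W' = \Theta(\eps^{-(2+c)})$ indeed produces $\eps^{-4}$ rather than a worse exponent, and that the $c$ in the exponent $\eps^{-(2+c)}$ can be absorbed without inflating the $\log^3(1/\eps)$ polylog factor. Once this quantitative form is confirmed, no new algorithmic ingredient is required; the lemma follows immediately by composing \cref{red:streaming:aspectratio} with the algorithm of \cite{LiuKK23}.
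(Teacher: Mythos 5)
Your proposal matches the paper's proof exactly: the paper also obtains Lemma~\ref{lem:mwm:streaming:LiuKK23} by instantiating the aspect-ratio reduction of Theorem~\ref{red:streaming:aspectratio} with the \cite{LiuKK23} algorithm (whose quantitative form the paper records in Lemma~\ref{lem:mwm:streaming:LiuKK23old} as $O(n\log(1/\eps))$ space and $O(\log^3 W\cdot\eps^{-4})$ passes), and the bookkeeping you sketch is the same substitution $W'=\Theta(\eps^{-(2+c)})$. The only inaccuracy is that you guessed the \cite{LiuKK23} space bound to be $O(n\log(1/\eps)\log W')$, whereas the paper uses $O(n\log(1/\eps))$; since $\log W'=O(\log(1/\eps))$ after the reduction this changes nothing, and in fact the paper's bound yields the slightly stronger $O(n\log W)$ space before it is rounded up to the stated $O(n\log(1/\eps)\log W)$.
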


This improves on the result of \cite{LiuKK23} which had the same space complexity, but a pass complexity of $O(\eps^{-8})$ passes.

\paragraph{Proofs of Our Streaming Results}
We first begin by stating the proofs of our reductions. We start by proving \Cref{red:streaming:weightedtounweighted}. This will be done using \Cref{red:streaming:aspectratio} and the following result by \cite{BernsteinDL21,KaoLST01}. While their result is stated as being applicable to integral weight graphs, as mentioned before, by standard scaling and rounding techniques, one can reduce the arbitrary weight case to the integral case. We state a modified version of their result incorporating this.

\begin{lemma}[\cite{BernsteinDL21,KaoLST01}]\label{red:weightedtounweighted:unfolding}
Suppose $\mathcal{A}_u$ is a streaming algorithm that computes a $(1-\eps)$-approximation to the MCM in $p(n,m,\eps)$ passes and $s(n,m,\eps)$ space. Then, there is a streaming algorithm $\mathcal{A}_w$ that computes a $(1-\eps)$-approximation to the maximum weight matching in $p(nW\eps^{-1},mW\eps^{-1},\eps )$ passes and $s(n W\eps^{-1},m W\eps^{-1},\eps)$ space, where $W$ is the aspect ratio of the weighted graph.  
\end{lemma}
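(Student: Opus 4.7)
The plan is to compose two standard reductions. First, a scaling-and-rounding step converts the real-weighted MWM problem on $G$ into an integer-weighted MWM problem with at most an $O(\eps)$ loss in approximation and only an $O(\eps^{-1})$ blowup in the aspect ratio. Second, the Kao--Lam--Sung--Ting unfolding of \cite{KaoLST01} converts the integer-weighted bipartite MWM problem into an unweighted bipartite MCM problem on a graph $G'$ with $O(nW\eps^{-1})$ vertices and $O(mW\eps^{-1})$ edges, on which we invoke $\mathcal{A}_u$.

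For the scaling step, I would set $w'_e \defeq \lceil w_e \cdot n / (\eps \cdot w_{\min}) \rceil$; the resulting integer weights lie in $[1, O(nW/\eps)]$, and every $(1-\eps)$-approximate matching under $w'$ is a $(1-O(\eps))$-approximate matching under $w$, since each edge loses at most an $\eps/n$ fraction of its weight under rescaling while any matching has at most $n$ edges. Because this rounding is a per-edge operation, it can be applied on the fly as each edge arrives in the stream, without consuming additional passes or more than $O(\log(nW/\eps))$ bits of per-edge scratch space.

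For the unfolding step, I would replace each vertex of $G$ with $O(W\eps^{-1})$ copies and each integer-weighted edge with a fixed gadget whose structure depends only on its endpoints and weight, producing an unweighted bipartite graph $G'$ on $O(nW\eps^{-1})$ vertices and $O(mW\eps^{-1})$ edges. The KLST guarantee is that $\mu(G')$ equals a known function of $\mu_w(G)$, and any $(1-\eps)$-approximate MCM of $G'$ can be post-processed into a $(1-O(\eps))$-approximate MWM of $G$ by a local decoding that collapses vertex copies back together. Because each gadget depends only on its originating edge, we simulate the unfolded edge stream on the fly: when $(u, v, w_e)$ arrives, we feed the $O(w_e)$ corresponding unweighted edges of $G'$ into a simulated copy of $\mathcal{A}_u$, so a single pass over the weighted stream corresponds to exactly one pass over the unfolded stream.

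The main subtlety lies in the accounting. The pass complexity of the resulting algorithm equals that of $\mathcal{A}_u$ on $G'$, namely $p(nW\eps^{-1}, mW\eps^{-1}, \eps)$, and the working space is dominated by the $s(nW\eps^{-1}, mW\eps^{-1}, \eps)$ space used by $\mathcal{A}_u$, since the only additional state is $O(\log n)$ bits of counters used by the translator together with the final output matching, both of which fit in the same budget. The hardest part will be verifying that the KLST gadget is genuinely \emph{streamable}, i.e., that no global state beyond what $\mathcal{A}_u$ already stores is needed either to generate the unfolded stream or to decode the output matching; this should follow from the locality of the KLST construction, but requires a careful argument that vertex copies of $G'$ are named consistently across passes so that $\mathcal{A}_u$'s inter-pass state remains unambiguous.
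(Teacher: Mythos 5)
Your high-level plan matches what the paper does: the paper simply cites \cite{BernsteinDL21,KaoLST01} for integral weights and invokes ``standard scaling and rounding'' to handle real weights, exactly as you propose. However, your scaling formula has a parameter bug that, if followed through, would not give the stated bounds.

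You set $w'_e \defeq \lceil w_e \cdot n / (\eps \cdot w_{\min}) \rceil$, which produces integer weights as large as $\Theta(nW/\eps)$. The unfolded graph $\phi(G)$ has one vertex per (vertex of $G$, integer in $[1,\,\text{max weight}]$) pair, so this yields roughly $n^2 W \eps^{-1}$ vertices and $nmW\eps^{-1}$ edges---not the $nW\eps^{-1}$ vertices and $mW\eps^{-1}$ edges claimed in the lemma (and which you then assert later, inconsistently with your own formula). The extra factor of $n$ is both unnecessary and harmful. The correct scaling is $w'_e \defeq \lceil w_e / (\eps \cdot w_{\min}) \rceil$, which gives integers in roughly $[\eps^{-1}, W\eps^{-1}]$ and hence the advertised $O(nW\eps^{-1})$-vertex unfolded graph. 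Your justification---``each edge loses at most an $\eps/n$ fraction of its weight \ldots while any matching has at most $n$ edges''---is an additive-error argument that indeed requires the $n$ factor, but it is the wrong argument; the multiplicative per-edge loss of $\eps$ (with the corrected formula, since every $w_e \ge w_{\min}$) carries over directly to the matching weight without any dependence on the number of matched edges. Correcting the formula and using the multiplicative bound recovers both the approximation guarantee and the claimed size of $\phi(G)$. Your remaining points about streamability and consistent naming of vertex copies are reasonable observations but are exactly what \cite{BernsteinDL21} already handles, so they do not require a new argument here.
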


\begin{proof}[Proof of \Cref{red:streaming:weightedtounweighted}]
    Suppose $\mathcal{A}_u$ is the bipartite unweighted matching algorithm in the premise of the lemma. Then, we can use \Cref{red:weightedtounweighted:unfolding} to get an algorithm $\mathcal{A}_w$ with space complexity $O(s(nW\eps^{-1},mW\eps^{-1},\eps))$ space, and $p(nW,mW,\eps)$ passes. Applying \Cref{red:streaming:aspectratio} to $\mathcal{A}_w$, get an algorithm $\mathcal{A}'_{w}$ with pass complexity $p(\Theta(n\cdot \eps^{-(3+c)}),\Theta(m\cdot \eps^{-(3+c)}),\Theta(\eps))$ and space complexity $O(s(\Theta(n\cdot \eps^{-(3+c)}),\Theta(m\cdot \eps^{-(3+c)}),\Theta(\eps))\cdot\log_{\eps^{-1}} W)$.
\end{proof}

We now show how to implement our reduction in streaming.

\begin{proof}[Proof of \Cref{red:streaming:aspectratio}]
Let $\mathcal{A}$ be the algorithm given in the premise of the lemma. As in \Cref{thm:partial reduction:ultimate}, we consider any $\eps^{-c}$-wide weight partition of $G$, and let $I_j$'s be the set of ``padded" weight classes. Then, by \Cref{lemma:matching composition lemma}, the union of matchings $M_j$ on $I_{j}$ contains a $(1-c^{-1}\eps)$-approximate maximum weight matching of $G$. We run a copy of $\mathcal{A}$ on each of these weight classes $I_j$ and then combine them. Since we run $\log_{\eps^{-1}}W$ copies of $\mathcal{A}$ and the aspect ratio of the weight classes is $\eps^{-(2+c)}$, we have the desired space and pass bound.  
\end{proof}

We now show the proof of \Cref{lem:streaming:mwm:ALT21}. Our proof uses the following result by \cite{AssadiLT21}.

\begin{lemma}[\cite{AssadiLT21}]\label{lem:mcm:streaming:ALT21}
There is a semi-streaming algorithm for $(1-\eps)$-approximate bipartite MCM that uses $O(n)$ space, and has a pass complexity of $O(\eps^{-2})$.
\end{lemma}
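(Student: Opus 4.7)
The plan is to prove this by adapting the classical Hopcroft--Karp framework to the semi-streaming setting. Recall two foundational facts about bipartite matching: (i) if a matching $M$ admits no augmenting path of length at most $2k+1$, then $|M| \ge (1 - \tfrac{1}{k+1}) \mu(G)$; and (ii) an augmentation along vertex-disjoint shortest augmenting paths of length $\ell$ strictly increases the length of the shortest augmenting path (this is the phase-based analysis of Hopcroft--Karp). Setting $k = \lceil 1/\eps \rceil$, it therefore suffices to run $O(1/\eps)$ augmentation phases, where in the $\ell$-th phase ($\ell = 1, 3, 5, \ldots, 2k+1$) we find a maximal collection of vertex-disjoint augmenting paths of length exactly $\ell$ relative to the current matching.

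The core implementation step is showing that, for any fixed length $\ell$, a maximal set of vertex-disjoint augmenting $M$-alternating paths of length $\ell$ can be found in $\ell$ passes and $O(n)$ space. The approach is a streaming BFS: in pass $i \in \{1, \ldots, \ell\}$ we extend a layered alternating DAG by one level, using the current layer's labels on vertices (parent pointer, layer index, match-status) and the edges read from the stream to compute the next layer. Because vertices keep only $O(\log n)$ bits of BFS state, the total workspace is $O(n)$ regardless of $\ell$. After the $\ell$-th pass reaches the free vertices on the opposite side, we extract a maximal set of vertex-disjoint augmenting paths by a greedy back-trace along the DAG, which again fits in $O(n)$ space since we mark each vertex at most once. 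Summing over phases yields
\[
\sum_{\ell \in \{1,3,\ldots,2k+1\}} O(\ell) \;=\; O(k^2) \;=\; O(1/\eps^2)
\]
passes in total, giving the claimed bound.

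The main obstacle is executing the ``maximal vertex-disjoint augmenting paths of a given length'' subroutine within $O(n)$ space, because a naive approach may need to remember the DAG edges themselves. The plan to circumvent this is the standard ``layered labels, re-read the stream'' trick: after $\ell$ passes of forward BFS produce layer labels, an additional $\ell$ passes (which only doubles the constant) perform the backward DAG traversal by re-reading edges and checking endpoint labels on the fly, so at no point does the algorithm store edge sets---only vertex-indexed arrays. A secondary technical point is to ensure the phase analysis goes through with the \emph{maximal} (not maximum) vertex-disjoint set; this is exactly the Hopcroft--Karp guarantee and requires no modification. Putting these together gives a $(1-\eps)$-approximate MCM in $O(\eps^{-2})$ passes and $O(n)$ space, as claimed.
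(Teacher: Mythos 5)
There is a genuine gap, and it sits exactly at the step you describe as "the core implementation step." The two Hopcroft--Karp facts you invoke are fine, but the subroutine you need --- a \emph{maximal} set of vertex-disjoint augmenting paths of length $\ell$ in $O(\ell)$ passes and $O(n)$ space --- is the entire difficulty of the problem, and your sketch does not establish it. The forward BFS that assigns layer labels is indeed implementable, but the backward "greedy back-trace" is not: when partial paths are traced back level by level, a path can reach a vertex in layer $i$ all of whose layer-$(i-1)$ neighbors have already been claimed by other partial paths; that path dies, and the vertices it reserved in layers $i,\dots,\ell$ must be released and re-offered to other paths, which in turn may die, and so on. A constant number of re-reads of the stream per layer does not resolve these cascading dead ends, and without resolving them the set of paths you extract is not maximal. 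Maximality is not a "secondary technical point" here --- it is precisely what the Hopcroft--Karp phase argument needs to guarantee that the shortest augmenting-path length strictly increases, so if it fails you lose the $O(1/\eps)$ bound on the number of phases. This obstruction is well known: augmenting-path-based semi-streaming algorithms for $(1-\eps)$-approximate bipartite MCM (Eggert, Kliemann, Munstermann, and Srivastav) require substantial extra machinery to control the dead-end/retry process and end up with $O(\eps^{-5})$ passes, not $O(\eps^{-2})$.

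For context, the paper does not prove this lemma at all; it is imported from \cite{AssadiLT21}, whose argument is entirely different from yours. That algorithm is an \emph{auction} algorithm: prices are maintained on one side of the bipartition, each of $O(\eps^{-2})$ rounds is a single pass in which unmatched bidders bid on an affordable neighbor and prices are raised, and a potential-function argument bounds the number of rounds. No augmenting paths are ever constructed, which is how the method sidesteps the maximal-disjoint-paths bottleneck that your approach runs into. If you want to salvage an augmenting-path route, you would need to either accept a worse pass complexity or supply a genuinely new idea for extracting a maximal disjoint path collection in $O(\ell)$ passes.
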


\begin{proof}[Proof of \Cref{lem:streaming:mwm:ALT21}]
    Instantiating the reduction of \Cref{red:streaming:weightedtounweighted} with the algorithm of \Cref{lem:mcm:streaming:ALT21}, we obtain the desired $O(n\cdot\eps^{-(3+c)}\cdot \log W)$ space complexity and has a pass complexity of $O(\eps^{-2})$
\end{proof}

For \Cref{lem:mwm:streaming:LiuKK23}, we need the following result. 

\begin{lemma}[\cite{LiuKK23}]\label{lem:mwm:streaming:LiuKK23old}
   There is a semi-streaming algorithm for $(1-\eps)$-approximate bipartite maximum weight matching that has uses $O(n \log(1/\eps))$ space and has a pass complexity of $O(\log^3 W\cdot \eps^{-4})$. By applying the reduction of \cite{GuptaP13}, one can obtain an algorithm that uses $O(n\cdot \log (\eps^{-1})\cdot \log W)$ space and has a pass complexity of $O(\eps^{-7}\log^3(1/\eps))$.
\end{lemma}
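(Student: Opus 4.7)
The first sentence of the lemma is a direct citation of the main theorem of \cite{LiuKK23}, so there is nothing to prove there. The remaining content amounts to showing how a black-box application of the Gupta--Peng weight-reduction in \cite{GuptaP13} converts the base semi-streaming algorithm into one whose pass and space bounds match the second sentence. The key facts about the Gupta--Peng reduction, as already summarized in \Cref{sec:overview:gp13}, are that it partitions the edges into (a constant number of sub-instances whose union covers) $O(\log W / (\eps^{-1}\log \eps^{-1}))$ weight classes, each of weight ratio $\eps^{-O(1/\eps)}$, and then aggregates $(1-\eps)$-approximate MWMs on each class greedily to obtain a $(1-O(\eps))$-approximate MWM of the original graph.

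The plan is to instantiate this reduction in the streaming model as follows. First, on each pass, simulate all $O(\log W)$ instances of the base $(1-\Theta(\eps))$-approximate MWM algorithm of \cite{LiuKK23} in parallel: for each arriving edge $e$, forward it to only those instances whose weight window contains $w(e)$. Since passes over the stream are shared, the total pass complexity equals the pass complexity of a single instance run on a graph of weight range $W' = \eps^{-O(1/\eps)}$. Plugging $W' = \eps^{-O(1/\eps)}$ into the base bound $O(\log^3 W'\cdot \eps^{-4})$ and using $\log W' = O(\eps^{-1}\log \eps^{-1})$ yields passes
\[
O\bigl((\eps^{-1}\log \eps^{-1})^3 \cdot \eps^{-4}\bigr) \;=\; O(\eps^{-7}\log^3 \eps^{-1}),
\]
as claimed.

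For space, each of the $O(\log W)$ instances uses $O(n\log \eps^{-1})$ memory, for a total of $O(n\log \eps^{-1} \log W)$. After the last pass, performing the greedy aggregation across weight classes in decreasing order of weight keeps at most $O(n)$ matching edges in memory and so fits comfortably within the same bound. I would briefly note that the correctness of this aggregation (the fact that a greedy combination of $(1-\eps)$-approximate MWMs on the $\Theta(1/\eps)$-spread weight classes yields a $(1-O(\eps))$-approximate MWM of $G$) is exactly the analysis of \cite{GuptaP13} recalled in \Cref{sec:overview:gp13}, and thus requires no new argument here. Rescaling $\eps$ by a constant absorbs the $(1-O(\eps))$ loss and completes the proof.

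There is no real obstacle; the only mildly subtle point to address is the parallel simulation of all weight-class instances in a single pass, which is valid because the per-edge routing to the correct sub-instances depends only on $w(e)$ and can be implemented in $O(1)$ time per edge. Everything else is a mechanical composition of the base streaming algorithm with the Gupta--Peng reduction.
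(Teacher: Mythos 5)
Your derivation is correct and matches the intended (and in the paper, unproven) justification: the lemma is a citation of \cite{LiuKK23}, and the second sentence follows exactly as you describe by running the base algorithm in parallel on the Gupta--Peng weight classes of ratio $\eps^{-O(1/\eps)}$, sharing passes and summing space. One small imprecision: the Gupta--Peng reduction uses $O(1/\eps)$ sub-instances $G^{(j)}$ (one per residue class of deleted groups), not a constant number, but since each edge's group survives in all but one $j$ and the total number of weight-class instances across all $j$ is still $O(\log_{1/\eps} W)$, your pass and space bounds are unaffected.
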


We now show \Cref{lem:mwm:streaming:LiuKK23}, which improves the pass \Cref{lem:mwm:streaming:LiuKK23old}, while still achieving a space complexity which has logarithmic dependence on $\frac{1}{\eps}$. 

\begin{proof}[Proof of \Cref{lem:mwm:streaming:LiuKK23}]
    Let $\mathcal{A}$ be the algorithm in \Cref{lem:mwm:streaming:LiuKK23old} which has a space complexity of $O(n\cdot \log(1/\eps))$ and has a pass complexity of $O(\log^3 W\cdot \eps^{-4})$. We instantiate the reduction in \Cref{red:streaming:aspectratio} with $\mathcal{A}$. This yields a semi-streaming algorithm that satisfies the premise of the corollary. 
\end{proof}

\subsection{The MPC Model}

\paragraph{Model Definition} In the MPC Model, there are $p$ machines, each
with a memory of size $s$, such that $p\cdot s=O(m)$. The computation proceeds in synchronous rounds: in each round, each machine performs some local computation and at the end of the round they exchange messages. All messages sent and received by each machine in each round have to fit into the local memory of the machine, and hence their length is bounded by s in each round. At the end, the machines collectively output the solution. In this paper, we work in the \emph{linear memory} model in which, the memory per machine is $s=\Tilde{O}(n)$. We first state our results in this model.

\paragraph{Our Results} As in the dynamic and streaming case, we give the following reductions, the first one being an aspect ratio reduction, and the second, a reduction from weighted to unweighted matching in \emph{bipartite graphs}.

\begin{theorem}\label{red:MPC:aspectratio}
    Suppose there is an MPC algorithm for $(1-\eps)$-approximate maximum weight matching in an $n$-node $m$-edge general graph with aspect ratio $W$ that uses $r(n,m,W,\eps)$ rounds and space $s(n,m,W,\eps)$ per machine, then for any constant $c>0$ there exists an MPC algorithm for $(1-O(\eps))$-maximum weight matching that uses $r(n,m,\Theta(\eps^{-(2+c)}),\eps)$ rounds and $O(s(n,m,\Theta(\eps^{-(2+c)}),\eps)\cdot \log W+n\log W)$ space per machine.
\end{theorem}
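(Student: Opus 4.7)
The plan is to mirror the streaming reduction from \Cref{red:streaming:aspectratio} in the MPC model, leveraging the matching composition lemma (\Cref{lemma:matching composition lemma}) to reduce the aspect ratio at the cost of a multiplicative $\log W$ blowup in per-machine space.

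First, I would fix an $\eps^{-c}$-wide weight partition of $[1,W]$ into intervals $[\ell_j, r_j)$, yielding $k = O(\log_{\eps^{-1}} W) = O(\log W / \log(\eps^{-1}))$ classes. I then pad each interval by a factor of $\eps^{-1}$ on both sides to obtain $\widetilde{I}_j = [\eps \ell_j, r_j \eps^{-1})$, each of aspect ratio $\Theta(\eps^{-(2+c)})$. Let $G_j \defeq G_{\widetilde{I}_j}$. In parallel, I run an independent copy of $\A$ on each $G_j$ to obtain a $(1-\eps)$-approximate MWM $M_j$. By \Cref{lemma:matching composition lemma} with $\delta = \eps^{-c}$, we have $\mu_w(\bigcup_j M_j) \ge (1 - O(\eps / c))\cdot \mu_w(G)$, which after rescaling $\eps$ by the constant $c$ yields the claimed $(1-O(\eps))$-approximation.

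To fit the $k$ copies of $\A$ into the MPC model, I keep the number of machines fixed and have each machine store, simultaneously, the local state and the portion of input belonging to each of the $k$ padded classes. Because each edge $e \in E$ lies in at most two padded classes (neighboring ones may overlap, but non-adjacent ones do not, thanks to the $\eps^{-c}$-wide spacing), the total space blowup is at most a factor of $k = O(\log W)$. Since the $k$ copies of $\A$ are entirely independent and operate on disjoint regions of memory within each machine, they can be simulated in lockstep, so the round complexity is simply $r(n, m, \Theta(\eps^{-(2+c)}), \eps)$ and the per-machine space is $O(s(n,m,\Theta(\eps^{-(2+c)}),\eps)\cdot \log W)$, as required.

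Finally, I need to aggregate $M_1, \ldots, M_k$ into a single output matching. Each $M_j$ has at most $n/2$ edges, so the union $\bigcup_j M_j$ has size $O(n \log W)$, which fits into the memory of a single machine in the linear-memory MPC model (and accounts for the additive $n\log W$ term in the space bound). I therefore route all $M_j$ to one designated machine in $O(1)$ rounds and compute a $(1-\eps)$-approximate MWM on their union locally using \Cref{thm:DP14}; by \Cref{lemma:matching composition lemma} this is a $(1-O(\eps))$-approximate MWM of $G$. The main subtleties to verify are that (i) the overlap between adjacent padded intervals inflates space by at most a constant factor (not more), and (ii) the $O(1)$-round routing of matchings fits within the stated per-machine space; both follow from the structure of the padding and the linear-memory assumption.
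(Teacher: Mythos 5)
Your proof follows the same approach as the paper's: fix an $\eps^{-c}$-wide weight partition, pad each class by $\eps^{-1}$ to get aspect ratio $\Theta(\eps^{-(2+c)})$, run $\A$ independently on each padded class, invoke the matching composition lemma, and aggregate. The paper's own proof is terser (it does not spell out the aggregation step), so your additional detail about routing $\bigcup_j M_j$ (of total size $O(n\log W)$) to a single machine and finishing with \cref{thm:DP14} is a reasonable way to account for the $+\,n\log W$ additive term in the space bound.

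One small imprecision: you claim each edge lies in at most two padded classes because non-adjacent padded intervals cannot overlap. That is only true when $c \geq 2$. The padded interval has log-width $(c+2)\log(1/\eps)$ while consecutive starting points are spaced $c\log(1/\eps)$ apart, so a given weight can fall into roughly $(c+2)/c = 1 + 2/c$ padded intervals; for $c=1$ that is three, and for small constant $c$ it grows. This is still $O(1)$ for any fixed constant $c$, so the per-machine space blowup remains $O(\log W)$ and your conclusion is unaffected, but the stated reason (``non-adjacent ones do not overlap'') is incorrect in general and should be replaced by the observation that the number of padded intervals covering any fixed weight is $O(1 + 1/c)$.

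Your approximation accounting is correct: with $\delta = \eps^{-c}$ the composition lemma gives $\log_\delta(1/\eps) = 1/c$, so the loss is $O(\eps/c)$, matching the paper's $(1-c^{-1}\eps)$ bound.
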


\begin{theorem}\label{red:MPC:weightedtounweighted}
Suppose there is an MPC algorithm for $(1-\eps)$-approximate MCM in an $n$-node $m$-edge bipartite graph that uses $r(n,m,\eps)$ passes and space $s(n,m,\eps)$, then there exists an MPC algorithm for $(1-O(\eps))$-approximate maximum weight matching in a $n$-node $m$-edge bipartite graph with aspect ratio $W$ that uses $r(\Theta(n\cdot \eps^{-(3+c)}),\Theta(m\cdot \eps^{-(3+c)}),\Theta(\eps))$ rounds and space 
$O(s(\Theta(n\cdot \eps^{-(3+c)}),\Theta(m\cdot \eps^{-(3+c)}),\Theta(\eps))\cdot\log W+n\log W)$ per machine.   
\end{theorem}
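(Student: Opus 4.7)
The plan is to mirror the proof of the streaming counterpart (Theorem \ref{red:streaming:weightedtounweighted}) in the MPC model. Specifically, I would compose (i) an MPC analog of the Kao--Lam--Sung--Ting unfolding reduction (Lemma \ref{red:weightedtounweighted:unfolding}) with (ii) the MPC aspect-ratio reduction of Theorem \ref{red:MPC:aspectratio}.

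First, I would establish an MPC analog of Lemma \ref{red:weightedtounweighted:unfolding}: given a bipartite unweighted MPC MCM algorithm $\mathcal{A}_u$ using $r(n,m,\eps)$ rounds and $s(n,m,\eps)$ memory per machine, the standard scaling-and-rounding step converts real weights in $[1,W]$ to integer weights in a range of size $O(W/\eps)$ with a $(1-\eps)$ loss, and the unfolding construction then replaces each vertex by $\otilde(W/\eps)$ copies and each edge by a gadget whose maximum matching encodes the integer weight. This transformation is stateless and local: a machine holding an edge $(u,v)$ of scaled integer weight $w$ can independently emit the $O(w)$ unfolded edges, which takes $O(1)$ MPC rounds. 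Running $\mathcal{A}_u$ on the resulting unweighted bipartite graph, then projecting matched unfolded edges back to their originating edges, gives a weighted MPC MWM algorithm $\mathcal{A}_w$ on bipartite graphs with aspect ratio $W'$ using $r(O(nW'/\eps), O(mW'/\eps), \eps)$ rounds and $s(O(nW'/\eps), O(mW'/\eps), \eps)$ memory per machine.

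Next, I would plug $\mathcal{A}_w$ into Theorem \ref{red:MPC:aspectratio} with parameter $c$. The aspect-ratio reduction only invokes $\mathcal{A}_w$ on subgraphs of aspect ratio $W' = \Theta(\eps^{-(2+c)})$, so by the first step each invocation uses $r(\Theta(n\eps^{-(3+c)}), \Theta(m\eps^{-(3+c)}), \Theta(\eps))$ rounds and $s(\Theta(n\eps^{-(3+c)}), \Theta(m\eps^{-(3+c)}), \Theta(\eps))$ memory per machine. The round count is preserved by Theorem \ref{red:MPC:aspectratio} (since the aggregation across the $O(\log_{\eps^{-1}}W)$ overlapping padded weight classes from \cref{lemma:matching composition lemma} is performed in parallel in $O(1)$ additional rounds), while the per-machine space bound acquires the multiplicative $\log W$ factor (for tracking the weight classes) and the additive $n \log W$ term (for routing weight metadata and assembling the final matching), matching the theorem statement exactly.

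The main obstacle is verifying that the MPC unfolding is implementable within the stated memory budget. Concretely, the unfolded graph has $\Theta(n W'/\eps)$ vertices and $\Theta(m W'/\eps)$ edges, which generally exceed a single machine's $\otilde(n)$ memory. This is precisely why the final bound is written in terms of $s(\cdot,\cdot,\cdot)$ evaluated at the inflated sizes $\Theta(n\eps^{-(3+c)})$ and $\Theta(m\eps^{-(3+c)})$: we are assuming that $\mathcal{A}_u$ already ships with enough per-machine memory to operate on the unfolded instance. Given this, distributing the $\otilde(W'/\eps)$ copies of each vertex across machines and emitting the gadget edges can be done by hashing vertex-copy identifiers uniformly, which is routine in the linear-memory MPC model and does not increase round complexity beyond an additive $O(1)$.
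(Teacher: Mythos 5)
Your proposal is correct and takes essentially the same approach as the paper: compose an MPC version of the Kao--Lam--Sung--Ting unfolding (the paper's \cref{red:weightedtounweighted:unfolding:mpc}) with the aspect-ratio reduction of \cref{red:MPC:aspectratio}, invoking the latter on weight classes of aspect ratio $\Theta(\eps^{-(2+c)})$ so that the unfolded instances have $\Theta(n\eps^{-(3+c)})$ vertices and $\Theta(m\eps^{-(3+c)})$ edges. The paper leaves the MPC implementability of the unfolding implicit behind its citation; your remarks about locality and hashing simply make that explicit and do not change the argument.
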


As a consequence of these two reductions, we get the following two results about about $(1-\eps)$-approximate bipartite maximum weight matching, which matches the round complexity of the best known MPC algorithm for unweighted matching by \cite{AssadiLT21}.

\begin{lemma}\label{lem:mpc:improved:alt}
    There is an MPC algorithm for computing a $(1-\eps)$-approximate bipartite maximum weight matching in $O(\log\log (n/\eps)\cdot \eps^{-2})$ rounds and $O(n\cdot \eps^{-(3+c)}\cdot \log_{\eps^{-1}} W)$ space per machine. 
\end{lemma}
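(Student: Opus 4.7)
The plan is to simply instantiate the weighted-to-unweighted MPC reduction of~\cref{red:MPC:weightedtounweighted} with the state-of-the-art MPC algorithm for unweighted bipartite matching, in direct analogy to how~\cref{lem:streaming:mwm:ALT21} is derived from~\cref{red:streaming:weightedtounweighted}. The required black-box input is the $(1-\eps)$-approximate bipartite MCM algorithm of Assadi--Liu--Tarjan referenced as item~\ref{item:mpc} in~\cref{sec:intro-applications}, which runs in $r(n,m,\eps) = O(\eps^{-2} \log\log n)$ rounds using $s(n,m,\eps) = O(n)$ space per machine.

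Fix any constant $c > 0$ and apply~\cref{red:MPC:weightedtounweighted} to this algorithm. The round complexity becomes
\[
  r(\Theta(n \eps^{-(3+c)}),\,\Theta(m \eps^{-(3+c)}),\,\Theta(\eps)) \;=\; O\!\left(\eps^{-2} \cdot \log\log(n \eps^{-(3+c)})\right) \;=\; O\!\left(\eps^{-2} \cdot \log\log(n/\eps)\right),
\]
absorbing the constant $3+c$ inside the double logarithm. The per-machine space becomes
\[
  O\!\left(s(\Theta(n \eps^{-(3+c)}),\,\Theta(m \eps^{-(3+c)}),\,\Theta(\eps)) \cdot \log W + n\log W\right) \;=\; O\!\left(n \cdot \eps^{-(3+c)} \cdot \log W\right),
\]
which matches the stated bound (up to the usual change-of-base identity $\log W = \log(1/\eps) \cdot \log_{\eps^{-1}} W$, which can be folded into the constant $c$ by choosing the padded width slightly smaller in~\cref{red:MPC:aspectratio} inside the proof of~\cref{red:MPC:weightedtounweighted}). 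The approximation factor $(1 - O(\eps))$ output by the reduction is rescaled to $(1 - \eps)$ by running the whole pipeline with $\eps$ replaced by $\Theta(\eps)$, which affects only constants in the round and space bounds.

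There is no real obstacle here: the lemma is a corollary of two results already in hand. The only minor bookkeeping is verifying that the unweighted MPC algorithm being invoked inside~\cref{red:MPC:weightedtounweighted} does satisfy the interface assumed by the reduction (namely, that it operates on a graph of size $\Theta(n\eps^{-(3+c)})$ nodes and $\Theta(m\eps^{-(3+c)})$ edges with $O(n)$ space per machine after appropriate rebalancing), which is immediate since the Assadi--Liu--Tarjan algorithm uses $\tilde O(n)$ space per machine on arbitrary bipartite inputs.
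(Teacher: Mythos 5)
Your proof is correct and follows exactly the paper's own argument: instantiate the MPC weighted-to-unweighted reduction (\cref{red:MPC:weightedtounweighted}) with the Assadi--Liu--Tarjan unweighted bipartite MCM algorithm (\cref{lem:mcm:mpc:ALT21}). Your additional observation about reconciling $\log W$ vs.\ $\log_{\eps^{-1}} W$ in the space bound is a reasonable piece of bookkeeping that the paper glosses over, but the core derivation is the same.
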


The second lemma improves on the result of \cite{LiuKK23}.

\begin{lemma}\label{lem:mpc:improved:liukk}
There is an MPC algorithm for $(1-\eps)$-approximate bipartite maximum weight matching using $O(\log^3(1/\eps)\cdot \log \log n\cdot \eps^{-4})$ rounds and $O(n\log_{\eps^{-1}}W)$ space per machine.
\end{lemma}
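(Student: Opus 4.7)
The plan is to mirror the structure of the proof of Lemma \ref{lem:mwm:streaming:LiuKK23} in the MPC model. I would start from an MPC analog of the semi-streaming weighted matching algorithm of Liu--Kiss--Konrad (the one used in Lemma \ref{lem:mwm:streaming:LiuKK23old}), which achieves a $(1-\eps)$-approximate bipartite MWM in $O(\log^3 W \cdot \eps^{-4})$ passes and $\widetilde{O}(n)$ space. In the linear-memory MPC model, each of their passes, which performs multiplicative-weight-style updates on a fractional assignment, can be simulated in $O(\log \log n)$ rounds using $\widetilde{O}(n)$ space per machine via standard parallel aggregation primitives (sort/prefix-sum on $\widetilde{O}(n)$-size state). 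This yields a weighted MPC algorithm running in $O(\log^3 W \cdot \log \log n \cdot \eps^{-4})$ rounds with $\widetilde{O}(n)$ space per machine and aspect-ratio dependence $\log^3 W$.

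Next I would feed this base algorithm into our aspect-ratio reduction in MPC, Theorem \ref{red:MPC:aspectratio}, with a suitable constant $c > 0$. The reduction conceptually splits the graph into $O(\log_{1/\eps} W)$ overlapping padded weight classes of aspect ratio $\Theta(\eps^{-(2+c)})$, runs the base algorithm in parallel on each, and composes the results using the matching composition lemma (Lemma \ref{lemma:matching composition lemma}). Plugging $W' = \Theta(\eps^{-(2+c)})$ into the round complexity of the base algorithm gives
\[
O\bigl(\log^3(\eps^{-(2+c)}) \cdot \log\log n \cdot \eps^{-4}\bigr) = O\bigl(\log^3(1/\eps) \cdot \log\log n \cdot \eps^{-4}\bigr),
\]
as required. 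The space bound follows from the $O(\log W)$ multiplicative overhead in Theorem \ref{red:MPC:aspectratio} applied to the $\widetilde O(n)$ base-space, together with the fact that the number of weight classes maintained in parallel is $O(\log_{1/\eps} W)$, matching the claimed $O(n \log_{\eps^{-1}} W)$ space per machine up to the $\widetilde{O}$ factors.

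The main obstacle is justifying a clean MPC analog of the LiuKK23 streaming procedure with the stated per-pass round and space bounds: one must verify that each streaming pass, which maintains dual variables across weight classes and updates them according to augmenting-structure bookkeeping, can indeed be implemented in $O(\log \log n)$ rounds under the $\widetilde{O}(n)$ per-machine memory restriction. Once this simulation is established, the rest of the proof is essentially a black-box application of Theorem \ref{red:MPC:aspectratio}: one only needs to confirm that the aggregation step of the reduction (combining the per-weight-class matchings into a single output) can also be executed in $O(\log\log n)$ rounds, which follows from standard $O(\log_{1/\eps} W)$-leaf tree-aggregation over machines and the bounded-degree composition subroutines developed in Section \ref{sec:framework}.
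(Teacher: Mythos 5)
Your high-level strategy is the same as the paper's: take a Liu--Konrad--Kiss-style weighted MPC algorithm with round complexity $O(\log^3 W \cdot \log\log n \cdot \eps^{-4})$ and $O(n)$ space per machine, and feed it into the aspect-ratio reduction \Cref{red:MPC:aspectratio}, yielding $O(\log^3(1/\eps)\cdot\log\log n\cdot\eps^{-4})$ rounds and $O(n\log_{\eps^{-1}}W)$ space. That second step matches the paper exactly and is carried out correctly.

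The difference is in how you obtain the base algorithm. You treat the existence of the MPC analog of the streaming algorithm as something that still needs to be \emph{constructed}, by arguing that each streaming pass can be simulated in $O(\log\log n)$ MPC rounds via sort/prefix-sum primitives over $\widetilde O(n)$-sized state, and you rightly flag this simulation as ``the main obstacle.'' The paper sidesteps this entirely: it records as \Cref{lem:mwm:mpc:LKK23} that \cite{LiuKK23} \emph{already gives} an MPC algorithm with precisely the stated round and space bounds, so no streaming-to-MPC simulation needs to be justified. Your detour is not wrong in spirit, but it introduces a nontrivial claim that the paper never has to verify; you should instead cite the MPC result of \cite{LiuKK23} directly (as in \Cref{lem:mwm:mpc:LKK23}) and then the whole proof reduces to a one-line application of \Cref{red:MPC:aspectratio}.
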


We summarize these results in \Cref{table:mpc}.

\begin{table}
\centering
	\caption{Summary of Results on $(1-\eps)$-approximate Bipartite Matching in MPC Model}
		\begin{tabular}{c c c c c}
		\hline
		Rounds & Space & Weighted/Unweighted &  Reference \\  
  \hline
    $O(\eps^{-2} \log \log{n})$ & $O(n)$ & Unweighted & \cite{AssadiLT21}\\
  $O(\eps^{-8} \log \log{n})$ & $O(n \log_{\eps^{-1}}W)$ & Weighted & \cite{LiuKK23}\\ 
 $O(\log\log (n/\eps)\cdot \eps^{-2})$ & $O(n\cdot \eps^{-(3+c)}\cdot \log_{\eps^{-1}} W)$ & Weighted & \Cref{lem:mpc:improved:alt}\\
 $O(\log^3(1/\eps)\cdot \log \log(n/\eps)\cdot \eps^{-4})$ & $O(n \log_{\eps^{-1}} W)$ & Weighted & \Cref{lem:mpc:improved:liukk}\\
  \hline
\end{tabular}
	\label{table:mpc}
	\end{table}

\paragraph{Proofs in the MPC Model}
We first show the proof of our main reductions. We start with the proof of \Cref{red:MPC:weightedtounweighted}, and for that, we need the following theorem, which is implicit from the work of \cite{BernsteinDL21,KaoLST01}.

\begin{lemma}[Implicit in \cite{BernsteinDL21,KaoLST01}]\label{red:weightedtounweighted:unfolding:mpc}
Suppose $\mathcal{A}_u$ is an MPC algorithm that computes a $(1-\eps)$-approximation to the MCM in $r(n,m,\eps)$ rounds and $s(n,m,\eps)$ space per machine. Then, there is an MPC algorithm $\mathcal{A}_w$ that computes a $(1-\eps)$-approximation to the maximum weight matching in $r(nW\eps^{-1},mW\eps^{-1},\eps )$ rounds and $s(n W\eps{^{-1}},m W\eps^{-1},\eps)$ space per machine, where $W$ is the aspect ratio of the weighted graph.  
\end{lemma}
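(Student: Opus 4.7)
The plan is to lift the unfolding-based reduction of \cite{BernsteinDL21,KaoLST01} from the sequential/streaming setting to the MPC model. First, scale all edge weights of $G$ by $\eps^{-1}$ and round down to integers, losing at most a $(1-\eps)$ factor in the optimum MWM; this yields a bipartite graph $G_1$ with integer weights bounded by $W' = O(W\eps^{-1})$. This scaling step can be performed in $O(1)$ MPC rounds since each machine can rescale the weights of its own edges locally. Next, apply the KLST unfolding: replace each edge $e = (u,v)$ of weight $w_e$ with a gadget on $\Theta(w_e)$ copies of $u$ and $v$, producing an \emph{unweighted} bipartite graph $\widehat G$ with $O(nW')$ vertices and $O(mW')$ edges. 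The property invoked from \cite{KaoLST01} is that any $\alpha$-approximate MCM in $\widehat G$ decodes, via a local rule, into an $\alpha$-approximate MWM in $G_1$.

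Second, verify that the unfolding and decoding both fit in $O(1)$ MPC rounds. The unfolding is purely local: the gadget for edge $e = (u, v)$ is a deterministic function of the triple $(u, v, w_e)$, and the ``copies'' $u^{(i)}$ and $v^{(j)}$ can be addressed by canonical names. Hence a machine holding $e$ can enumerate the corresponding edges of $\widehat G$ with no communication. A constant number of sorting and load-balancing rounds then redistribute the edges of $\widehat G$ so that each machine holds at most $s(nW\eps^{-1}, mW\eps^{-1}, \eps)$ of them. On this layout we invoke $\mathcal{A}_u$, which returns a $(1-\eps)$-approximate MCM $M'$ of $\widehat G$ in $r(nW\eps^{-1}, mW\eps^{-1}, \eps)$ rounds. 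A final round maps $M'$ back to a matching $M$ of $G$ by, for each gadget, sending the matched-edge indicators back to the machine holding the original edge; this again uses only local computation plus standard shuffling.

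The main obstacle, inherited from the analogous sequential reduction, is to show that the matching correspondence is tight in the \emph{approximate} regime rather than merely the exact one, so that an $\alpha$-approximate MCM in $\widehat G$ yields an $\alpha$-approximate (and not some weaker ratio) MWM in $G_1$. This is precisely what \cite{KaoLST01} establish by designing the gadget so that every matched unit of cardinality in $\widehat G$ contributes exactly one matched unit of weight in $G_1$; we simply invoke this as a black box. A secondary MPC-specific subtlety is that a single vertex of $G$ with many heavy incident edges could generate a blown-up star of size $\gg s$ in $\widehat G$, which cannot fit on one machine. This is resolved by partitioning $\widehat G$ by \emph{edges} rather than by vertex stars via standard MPC sorting primitives, so that per-machine memory is dictated only by the given $s(\cdot, \cdot, \cdot)$ function applied to the parameters of $\widehat G$, as claimed.
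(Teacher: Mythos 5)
Your proof overlooks the step of extracting a matching from the refolded subgraph, and mischaracterizes the guarantee that the unfolding actually provides. You assert that an $\alpha$-approximate MCM $M'$ of $\widehat G$ ``decodes, via a local rule, into an $\alpha$-approximate MWM,'' and your final step simply ships matched-edge indicators back to the machines holding the original edges. But the refolded set $\mathcal{R}(M')$---the edges $uv$ of $G_1$ for which some $u^i v^j \in M'$ with $i+j = w(uv)+1$---need not be a matching: different copies of a vertex $u$ can be matched in $M'$ to copies of different neighbors, so $u$ can have degree up to $W\eps^{-1}$ in $\mathcal{R}(M')$. What \cite{KaoLST01} and \cite{BernsteinDL21} actually prove (recorded in this paper as a Fact about $\mu_w(\mathcal{R}(M))$) is only that $\mathcal{R}(M')$ \emph{contains} a matching of weight at least $\alpha\,\mu_w(G_1)$. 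Extracting it requires an additional $(1-\eps)$-approximate MWM computation on $\mathcal{R}(M')$, exactly as the dynamic version of this reduction does in its $\texttt{Rebuild}$ step; your proof omits this.

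In MPC the omission is not mere bookkeeping. The graph $\mathcal{R}(M')$ may contain up to $\min\{m,\,nW\eps^{-1}/2\}$ edges, so for large $W$ it need not fit on a single machine, and computing its MWM cannot be waved away as local post-processing. To repair the argument you should either restrict attention to the regime $W=\poly(1/\eps)$ (so that $\mathcal{R}(M')$ has degree $\poly(1/\eps)$, $\tilde O(n)$ edges, and its MWM can be found on one machine---this is the regime in which the paper ultimately invokes the lemma, after composing with the aspect-ratio reduction), or supply an explicit MPC subroutine for approximate MWM on the bounded-degree graph $\mathcal{R}(M')$. Either way, the ``local rule'' claim should be replaced by the correct statement about $\mu_w(\mathcal{R}(M'))$ together with a concrete extraction procedure and an accounting of its round/space cost.
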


The proof of \Cref{red:MPC:weightedtounweighted} is implied by the above lemma, and \Cref{red:MPC:aspectratio}.

\begin{proof}[Proof of \Cref{red:MPC:weightedtounweighted}]
Suppose $\mathcal{A}_u$ is the bipartite unweighted matching algorithm in the premise of the lemma. Then, we can use \Cref{red:weightedtounweighted:unfolding:mpc} to get an algorithm $\mathcal{A}_w$ with $O(s(nW\eps^{-1},mW\eps^{-1},\eps))$ space per machine, and $r(nW\eps^{-1},mW\eps^{-1},\eps)$ rounds. Applying \Cref{red:MPC:aspectratio} to $\mathcal{A}_w$, get an algorithm $\mathcal{A}'_{w}$ with round complexity $r(\Theta(n\cdot \eps^{-(3+c)}),\Theta(m\cdot \eps^{-(3+c)}),\Theta(\eps))$ rounds and space $O(s(\Theta(n\cdot \eps^{-(3+c)}),\Theta(m\cdot \eps^{-(3+c)}),\Theta(\eps))\cdot\log W+n\log W)$ per machine.   
\end{proof}

We now state the proof our aspect ratio reduction in MPC.

\begin{proof}[Proof of \Cref{red:MPC:aspectratio}]
Let $\mathcal{A}$ be the algorithm given in the premise of the lemma. As in \Cref{thm:partial reduction:ultimate}, we consider any $\eps^{-c}$-wide weight partition of $G$, and let $I_j$'s be the set of ``padded" weight classes. Then, by \Cref{lemma:matching composition lemma}, then, the union of matchings $M_j$ on $I_{j}$ contains a $(1-c^{-1}\eps)$-approximate maximum weight matching of $G$. We run a copy of $\mathcal{A}$ on each of these weight classes $I_j$ and then combine them in a single matching. Since we run $\log W$ copies of $\mathcal{A}$ and the aspect ratio of the weight classes is $\eps^{-(2+c)}$, we have the desired space and pass bound.  
\end{proof}

We now show the proof of \Cref{lem:mpc:improved:alt}. In order to do that, we need the following result.

\begin{lemma}[\cite{AssadiLT21}]\label{lem:mcm:mpc:ALT21}
There is an MPC algorithm for $(1-\eps)$-approximate bipartite matching using $O(\eps^{-2}\cdot \log \log n)$ rounds and $O(n)$ space per machine. 
\end{lemma}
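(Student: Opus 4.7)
The plan is to follow the auction-based framework for bipartite matching pioneered by Bertsekas and adapted to the MPC model. The key observation is that $(1-\eps)$-approximate bipartite matching admits a distributed/parallel algorithm based on iterative price updates, where each unmatched vertex on one side of the bipartition bids on its preferred neighbor (according to edge weights net of current prices), and prices on the other side are incremented whenever a vertex gets outbid. Discretizing prices in multiples of $\eps$ and capping them at $O(1/\eps)$ yields $(1-\eps)$-approximation after $O(1/\eps)$ scaling phases in which the bid increment shrinks geometrically.

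In more detail, I would first set up the auction in the unweighted bipartite setting: initialize all prices to $0$, and in each phase, repeatedly let every currently unmatched left vertex pick its most preferred available right neighbor (breaking ties arbitrarily) and raise its price by the current increment. A simple potential argument — the sum of prices is bounded by $O(n/\eps)$ — shows that each phase terminates after $O(1/\eps)$ bidding rounds, giving $O(1/\eps^2)$ total bidding rounds across all phases. A matching whose vertices are all satisfied by current prices is then extracted; standard analysis shows it is $(1-\eps)$-approximate.

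The second step is to implement each bidding round in $O(\log \log n)$ MPC rounds with $\tilde{O}(n)$ memory per machine. Each bidding round reduces to: (i) each unmatched left vertex computing a max over its adjacent edges (an aggregation), (ii) resolving contention on the right side where multiple left vertices bid on the same right vertex (a maximal-matching-like primitive on a bipartite ``bid graph'' whose degree is small after random load-balancing), and (iii) propagating the resulting assignments. Each of these is solvable by invoking $O(\log \log n)$-round MPC primitives for sorting, aggregation, and connected components / maximal matching on low-arboricity subgraphs (e.g.\ via Behnezhad–Hajiaghayi–Harvey style techniques), all of which respect the $\tilde{O}(n)$ memory per machine constraint since at any round the active bid/assignment data has total size $O(n)$.

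The main obstacle is step (ii): cleanly showing that each bidding round can be implemented in $O(\log \log n)$ rounds with the $\tilde O(n)$-per-machine memory, while handling the fact that an unmatched left vertex may have $\Theta(n)$ right neighbors and so its ``top choice'' computation must be distributed across machines. The resolution is a careful use of MPC sorting and segmented prefix/max primitives (which run in $O(1)$ rounds under $\tilde{O}(n)$ memory), combined with a sublinear-round maximal matching subroutine to break ties. Putting the two pieces together, $O(1/\eps^2)$ bidding rounds $\times$ $O(\log\log n)$ MPC rounds per bid gives the claimed $O(\eps^{-2}\log\log n)$-round, $\tilde{O}(n)$-memory algorithm.
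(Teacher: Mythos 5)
A point of context first: the paper does not prove this lemma at all --- it is imported as a black box from \cite{AssadiLT21}, and that cited work is itself an auction-based algorithm, so your high-level strategy does coincide with the actual source. Judged as a standalone proof, however, your sketch has two genuine gaps. The first is in the iteration count. The lemma concerns $(1-\eps)$-approximate \emph{cardinality} matching, so the weighted machinery of scaling phases with geometrically shrinking increments is unnecessary; more importantly, the potential argument you invoke (``the sum of prices is bounded by $O(n/\eps)$'') bounds the total number of \emph{bids}, not the number of \emph{bidding rounds}, since a round in which only one bidder bids still costs a full round. To obtain $O(\eps^{-2})$ rounds you must additionally argue that any round in which the current matching is not yet $(1-\eps)$-approximate generates many bids (on the order of $\eps\,\mu(G)$), which follows from a price-based near-optimality certificate showing that few unhappy bidders implies a near-maximum matching. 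Without that step the round bound does not follow from what you wrote.

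The second gap is the one you flag yourself but do not close: implementing a single bidding round in $O(\log\log n)$ MPC rounds with $O(n)$ memory per machine. Your proposed resolution conflates two different primitives. Contention among bidders for the same item does not require a maximal-matching subroutine --- in the auction each item simply accepts one bid (an arbitrary or extremal one), which is an argmax aggregation implementable in $O(1)$ rounds by standard MPC sorting and segmented-max primitives, exactly as you note for step (i). Conversely, if you do invoke an $O(\log\log n)$-round maximal-matching routine per bidding round, you must verify that it respects the $O(n)$-memory constraint and that its output is compatible with the auction's price-update invariants; neither is argued. As written, the origin of the $\log\log n$ factor is asserted rather than derived, so the claimed round complexity is not established by the sketch.
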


\begin{proof}[Proof of \Cref{lem:mpc:improved:alt}]
    Let $\mathcal{A}$ be the algorithm of \Cref{lem:mcm:mpc:ALT21}. We instantiate the reduction in \Cref{red:MPC:weightedtounweighted} with $\mathcal{A}$ to get an MPC algorithm for $(1-\eps)$-approximate bipartite maximum weight matching that has $O(\log\log (n/\eps)\cdot \eps^{-2})$ round complexity and $O(n\cdot \eps^{-(3+c)}\cdot \log_{\eps^{-1}} W)$ space per machine. 
\end{proof}

Next, we show the proof of \Cref{lem:mpc:improved:liukk}. We need the following result.

\begin{lemma}[\cite{LiuKK23}]\label{lem:mwm:mpc:LKK23}
There is an MPC algorithm for $(1-\eps)$-approximate bipartite maximum weight matching using $O(\log^3(W) \log \log n\cdot \eps^{-4})$ rounds and $O(n)$ space per machine. By applying the reduction of \cite{GuptaP13}, we can obtain an MPC algorithm for the same problem that uses $O(\log \log n \cdot \eps^{-7}\cdot \log^3(1/\eps))$ rounds and $O(n \log_{\eps^{-1}}W)$ space per machine. 
\end{lemma}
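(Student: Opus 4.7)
The lemma splits into two claims. The first claim — the MPC algorithm running in $O(\log^3 W \cdot \log \log n \cdot \eps^{-4})$ rounds and $O(n)$ space per machine for $(1-\eps)$-approximate bipartite MWM — is the main MPC theorem of \cite{LiuKK23}, so my plan is simply to state and cite it with no further argument.

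For the second claim, the plan is to compose the first-part algorithm with the Gupta--Peng weight-reduction framework of \cite{GuptaP13}. Concretely, I would apply GP13 to partition the edges into $O(\log_{\eps^{-1}} W)$ weight classes that are $\Theta(1/\eps)$-spread, each of aspect ratio $W' = \eps^{-O(1/\eps)}$, at the cost of one multiplicative $(1-\eps)$ factor in approximation. I would then run an independent copy of the first-part algorithm on each weight class in parallel across disjoint sets of machines, so that each machine holds at most $O(\log_{\eps^{-1}} W)$ classes' worth of state, giving $O(n \log_{\eps^{-1}} W)$ per-machine space overall. Finally, I would combine the per-class matchings using GP13's descending-weight greedy aggregation, which loses only an additional $O(\eps)$ factor thanks to the $\Theta(1/\eps)$ spread and fits into $O(1)$ extra MPC rounds.

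Plugging the reduced range into the first-part bound gives $\log^3 W' = O(\eps^{-3} \log^3(\eps^{-1}))$, so the overall round complexity becomes $O(\eps^{-3} \log^3(\eps^{-1}) \cdot \log \log n \cdot \eps^{-4}) = O(\log \log n \cdot \eps^{-7} \cdot \log^3(\eps^{-1}))$, matching the claim. The main obstacle is verifying that each step of the GP13 reduction — weight-class partitioning, parallel per-class execution, and greedy aggregation — can be implemented in $O(1)$ MPC rounds within the $O(n \log_{\eps^{-1}} W)$ per-machine space budget. This should follow from standard MPC primitives (thresholding on edge weights, disjoint-machine allocation, and a constant-depth reduction over $O(\log_{\eps^{-1}} W)$ matchings), but it needs to be spelled out carefully in a full proof rather than waved.
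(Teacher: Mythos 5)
Your proposal is correct and matches the paper's treatment: the lemma is simply a citation from [LiuKK23], with the second sentence being the immediate consequence of applying the Gupta--Peng aspect-ratio reduction, and the paper itself gives no proof. Your arithmetic is right---with $W' = \eps^{-O(1/\eps)}$ one has $\log^3 W' = O(\eps^{-3}\log^3(1/\eps))$, giving $O(\eps^{-7}\log^3(1/\eps)\log\log n)$ rounds---and the $O(n\log_{\eps^{-1}}W)$ space from running $O(\log_{\eps^{-1}}W)$ parallel weight-class instances and aggregating them is exactly the intended accounting, though as you note a fully rigorous version should briefly confirm that the per-class allocation and descending-weight greedy merge stay within $O(1)$ extra MPC rounds and the stated space budget.
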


\begin{proof}[Proof of \Cref{lem:mpc:improved:liukk}]
Let $\mathcal{A}$ be the algorithm of \Cref{lem:mwm:mpc:LKK23} with space complexity $O(n)$ and round complexity $O(\log^3(W) \log\log n \cdot \eps^{-4})$. Instantiating \Cref{red:MPC:aspectratio} with $\mathcal{A}$, we get a $(1-\eps)$-approximate bipartite maximum weight matching with round complexity $O(\log^3(1/\eps)\cdot \log \log(n/\eps)\cdot \eps^{-4})$ and space $O(n\cdot \log_{\eps^{-1}}W)$ per machine. 
\end{proof}

\subsection{The Parallel Shared-Memory Work-Depth Model}

\paragraph{Model Definition} The parallel shared-memory work-depth model is a parallel model where different processors can process instructions in parallel and read and write from the same shared-memory. In this model, we care about two properties of the algorithm: \emph{work}, which is the total amount of computation done by the algorithm and the \emph{depth}, which is the longest chain of sequential dependencies in the algorithm. Our goal in this section will be to show that our reduction can be implemented in parallel model very efficiently. In particular, we show the following theorems.

\begin{theorem}\label{red:parallel:aspectratio}
    Suppose there is a parallel algorithm that computes a $(1-\eps)$-approximate maximum weight matching on an $n$-node $m$-edge graph with aspect ratio $W$ with $B(n,m,W,\eps)$ work and $D(n,m,W,\eps)$ depth. Then there exists a parallel algorithm that computes a $(1-\eps)$-approximate maximum weight matching in $O(B(n,m,\eps^{-{5}},\Theta(\eps))\cdot \log W+n\log n)$ work and $O(D(n,m,\eps^{-{5}},\Theta(\eps))+ \log W+\log^2 n)$ depth.
\end{theorem}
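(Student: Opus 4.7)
The plan is to port the aspect-ratio reduction from \cref{thm:partial reduction} to the parallel work-depth model, following the same three-stage template used for streaming (\cref{red:streaming:aspectratio}) and MPC (\cref{red:MPC:aspectratio}): bucket edges into padded weight classes, invoke $\A$ on each in parallel, then aggregate. Concretely, I would partition $[1,W]$ into a $\Theta(\eps^{-3})$-wide weight partition and form $k = O(\log_{1/\eps} W)$ padded classes, each of aspect ratio $\Theta(\eps^{-5})$, in $O(m + n\log n)$ work and $O(\log n)$ depth (compute $\lfloor\log_{1/\eps} w(e)\rfloor$ per edge in parallel, then radix-sort). Running $\A$ on each class in parallel yields matchings $M_1,\ldots,M_k$ with total work $O(B(n,m,\eps^{-5},\Theta(\eps))\cdot \log W)$ and depth $O(D(n,m,\eps^{-5},\Theta(\eps)))$. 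By \cref{lemma:matching composition lemma}, $\mu_w(M_1\cup\cdots\cup M_k) \ge (1-O(\eps))\mu_w(G)$, so it suffices to compute a $(1-O(\eps))$-approximate MWM on this union.

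The aggregation follows the two-step strategy of \cref{alg:reduction:partial}. Split indices into odd and even; each set of padded classes is $\Theta(1/\eps)$-spread, so \cref{lemma:local greedy:approximation} guarantees that a locally greedy census matching retains a $(1-O(\eps))$-fraction of the total weight. In parallel this is trivial to compute: each vertex has at most $k = O(\log W)$ incident edges in $\bigcup_i M_i$, so for each vertex we take a parallel reduction to find its highest-weight-class incident edge, and include $uv$ in $M_{\text{odd}}$ (resp.\ $M_{\text{even}}$) iff it is the highest for both endpoints. This step costs $O(n\log W + m)$ work and $O(\log\log W)$ depth. The result is two matchings whose union is a degree-2 graph $H$ with $\mu_w(H) \ge (1-O(\eps))\mu_w(G)$.

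The main obstacle is the final stage: parallelizing a $(1-\eps)$-approximate MWM on the degree-2 graph $H$ (the parallel analogue of \cref{alg:degree-two}). Compute the connected components of $H$ via a standard CRCW parallel connectivity algorithm in $O(n+m)$ work and $O(\log n)$ depth, then apply parallel list ranking to each path/cycle in $O(\log n)$ depth. Using the list-ranking order, we cut each component into contiguous pieces of length $\Theta(1/\eps)$ by deleting, within every window of length $\Theta(1/\eps)$, the minimum-weight edge (identifiable by a segmented parallel prefix-min); the same argument as in \cref{lemma:degree two:approximation and runtime} shows this costs only an $O(\eps)$-fraction of the weight. On each resulting piece of length $O(1/\eps)$ we run the sequential path DP of \cref{lemma:dynamic path-cycle maintainer} (independently across pieces) in $O(1/\eps)$ work and $O(\log(1/\eps))$ depth. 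The bottleneck is the connectivity/list-ranking step, contributing $O(\log^2 n)$ depth on standard CRCW PRAM primitives, while the total work for this stage is $O(n+m)$.

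Summing across the three stages gives work
\[O(B(n,m,\eps^{-5},\Theta(\eps))\cdot \log W) + O(n\log W + m) + O(n+m) = O(B(n,m,\eps^{-5},\Theta(\eps))\cdot \log W + n\log n),\]
using $W \le \poly(n)$, and depth $O(D(n,m,\eps^{-5},\Theta(\eps)) + \log W + \log^2 n)$, where the additive $\log W$ term absorbs the bucketing and the $O(\log\log W)$ census-matching step. The final approximation ratio is $(1-O(\eps))$ by composing the guarantees of \cref{lemma:matching composition lemma,lemma:local greedy:approximation} and the degree-2 routine exactly as in the proof of \cref{lemma:reduction:partial:approximation}; rescaling $\eps$ by an appropriate constant yields a clean $(1-\eps)$-approximation matching the stated bounds.
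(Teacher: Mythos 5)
Your proposal is correct and follows the same high-level structure as the paper's proof: bucket into padded $\Theta(\eps^{-5})$-aspect-ratio weight classes, run $\A$ on each in parallel, aggregate odd/even via census matching, and finish with a degree-2 matching step. The one genuine difference is the final degree-2 stage: the paper computes an \emph{exact} MWM on $M_{\text{odd}}\cup M_{\text{even}}$ via a randomized ``random-mate'' path-contraction (\cref{claim:parallel:degtwomwm}, $O(n\log n)$ work, $O(\log^2 n)$ depth with high probability), whereas you port the $\eps$-cutting idea from \cref{alg:degree-two} to the parallel setting: list-rank each path/cycle, delete a low-weight edge in each length-$\Theta(1/\eps)$ window, and run the path DP on the resulting short pieces. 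Your route is deterministic (given deterministic connectivity and list-ranking primitives) and only produces a $(1-O(\eps))$-approximation on the degree-2 graph, but that is all that is needed and it fits the same work/depth budget; the paper's route is slightly simpler to state but randomized. Both are valid; neither dominates the other within the stated bounds.
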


Since the lack of a parallel implementation of the reduction in~\cite{BernsteinDL21}, we currently cannot reduce the weighted matching problem directly to an unweighted one. Thus we use~\cref{red:parallel:aspectratio} to improve the following weighted parallel algorithm by reducing the weight ranges.

\begin{lemma}[\cite{LiuKK23}]
    There exists a shared-memory parallel algorithm that computes a $(1-\eps)$-approximate maximum weight matching with $O(m \log^3 (W)\eps^{-4})$ work and $O(\log^3(W) \log^2(n)\eps^{-4})$ depth. Using \cite{GuptaP13}, this translates into a parallel algorithm which computes a $(1-\eps)$-approximate maximum weight matching with $O(m\log (W)\log^3(1/\eps)\eps^{-7})$ work and $O(\log (W) \log^2(n)\log^3(1/\eps)\eps^{-7})$ depth. 
\end{lemma}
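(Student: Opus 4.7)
The first bound is a direct citation of the parallel $(1-\eps)$-approximate MWM algorithm of \cite{LiuKK23}, so no additional work is needed beyond invoking their theorem. For the second bound, the plan is to apply the Gupta--Peng weight reduction \cite{GuptaP13} (as reviewed in \cref{sec:overview:gp13}) to shrink the aspect ratio from $W$ to $W' = \eps^{-\Theta(1/\eps)}$, and then run the algorithm from the first bound on each resulting bounded-aspect-ratio subproblem in parallel.

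Concretely, I would partition edges by $\lfloor\log_{1/\eps} w(e)\rfloor$ into $O(\log_{1/\eps} W)$ groups, and for each shift $j \in \{0, \dots, \lceil\eps^{-1}\rceil - 1\}$ delete every $\lceil\eps^{-1}\rceil$-th group to form $G^{(j)}$; by the averaging argument of \cite{GuptaP13}, at least one $G^{(j)}$ satisfies $\mu_w(G^{(j)}) \geq (1-\eps)\mu_w(G)$, and the surviving groups coalesce into $\Theta(1/\eps)$-spread weight classes of aspect ratio $W' = \eps^{-\Theta(1/\eps)}$. I then invoke the first-bound algorithm independently and in parallel on every weight class of every $G^{(j)}$; plugging in $\log^3(W') = O(\eps^{-3}\log^3(1/\eps))$ yields per-class work $O(m_i \log^3(1/\eps)\eps^{-7})$ and per-class depth $O(\log^3(1/\eps)\log^2(n)\eps^{-7})$. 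The per-class matchings within each $G^{(j)}$ are greedily combined in descending weight order (the $\Theta(1/\eps)$ spread ensures only an $O(\eps)$ fraction of the weight is lost, by the standard Gupta--Peng greedy argument), and the heaviest among the $\lceil\eps^{-1}\rceil$ candidate matchings across shifts is returned.

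Summing work over the $O(\eps^{-1})$ shifts and all classes, and absorbing the $O(\log W)$ multiplicative overhead from the Gupta--Peng scheme, gives the stated $O(m \log W \log^3(1/\eps)\eps^{-7})$ work; taking the maximum over shifts in parallel preserves $O(\log W \log^2(n)\log^3(1/\eps)\eps^{-7})$ depth. The main technical obstacle is realizing the Gupta--Peng sequential greedy aggregation step in small parallel depth. I plan to iterate over the $O(\log_{1/\eps} W)$ weight classes in descending order of weight (contributing $O(\log W)$ sequential depth), and within each class apply a parallel filter that drops each edge adjacent to an already-matched vertex via a standard parallel prefix-sum/pack routine in $O(\log n)$ depth. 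The resulting $O(\log W \cdot \log n)$ aggregation depth is dominated by the per-class depth of $O(\log^3(1/\eps)\log^2(n)\eps^{-7})$, so the combination step does not blow up the overall depth.
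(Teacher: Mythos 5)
This lemma is stated in the paper as a black-box citation (the first bound is LiuKK23's theorem, the second is obtained by "using \cite{GuptaP13}"), and your proposal is exactly the intended instantiation: run the static Gupta--Peng shift-and-merge decomposition, invoke the cited parallel algorithm on each $\eps^{-\Theta(1/\eps)}$-aspect-ratio class in parallel, and aggregate greedily in descending order with a parallel filter. The only point worth tightening is the work accounting: to land on $O(m\log W\log^3(1/\eps)\eps^{-7})$ you should bound the total as (number of classes across all shifts, which is $O(\log W/\log(1/\eps))$) times (at most $m$ edges per class) times the per-edge cost $O(\eps^{-7}\log^3(1/\eps))$, rather than simultaneously multiplying by the $O(\eps^{-1})$ shifts and a $\log W$ overhead --- but the stated bound does follow, so this is bookkeeping, not a gap.
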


A consequence of our reduction is the following improvement, in both total work and depth. 

\begin{corollary}
There exists a shared-memory parallel algorithm that computes a $(1-\eps)$-approximate maximum weight matching on an $n$-node $m$-edge graph with aspect ratio $W$ with $O(m \log (W)\eps^{-4})$ work and $O(\log^2 (n) \log (W)\eps^{-4})$ depth.     
\end{corollary}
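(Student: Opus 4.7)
The plan is a direct black-box application of the aspect-ratio reduction in \Cref{red:parallel:aspectratio} to the parallel maximum weight matching algorithm of \cite{LiuKK23} stated just above. First, I would invoke \Cref{red:parallel:aspectratio} with $\A$ being the LKK23 algorithm. The reduction lets us assume the input has aspect ratio only $W' = \Theta(\eps^{-5})$, on which LKK23 runs in $B(n,m,\eps^{-5},\Theta(\eps)) = O(m \log^3(\eps^{-5}) \eps^{-4}) = O(m \eps^{-4})$ work and $D(n,m,\eps^{-5},\Theta(\eps)) = O(\log^3(\eps^{-5}) \log^2(n) \eps^{-4}) = O(\log^2(n) \eps^{-4})$ depth (treating $\log^3(1/\eps)$ as subsumed into $\eps^{-4}$ up to constants, or keeping it as a $\poly\log(1/\eps)$ factor).

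Substituting into \Cref{red:parallel:aspectratio}, the total work becomes
\[
O\bigl(B(n,m,\eps^{-5},\Theta(\eps)) \cdot \log W + n \log n\bigr) = O\bigl(m \log(W) \eps^{-4} + n \log n\bigr),
\]
and the total depth becomes
\[
O\bigl(D(n,m,\eps^{-5},\Theta(\eps)) + \log W + \log^2 n\bigr) = O\bigl(\log^2(n)\log(W) \eps^{-4}\bigr),
\]
where the $\log W$ and $\log^2 n$ additive depth terms are absorbed into the dominant $\log^2(n)\log(W)\eps^{-4}$ term. The $n \log n$ work term is likewise absorbed into $m\log(W)\eps^{-4}$ in the regime where $m \geq n$ (isolated vertices can be ignored).

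The approximation guarantee follows from \Cref{red:parallel:aspectratio}: starting with a $(1-\eps)$-approximate MWM algorithm on graphs of aspect ratio $\Theta(\eps^{-5})$ produces a $(1-O(\eps))$-approximate MWM algorithm overall, which we rescale by a constant factor in $\eps$ to obtain the stated $(1-\eps)$ guarantee. Since the proof is a direct composition of two already-established results, there is no real obstacle: the only thing to verify is that the overheads absorb cleanly, which they do. The key conceptual content is entirely encoded in \Cref{red:parallel:aspectratio}, whose parallel implementation of our framework (the locally greedy census matching aggregation and the degree-two MWM maintenance) carries the real technical weight.
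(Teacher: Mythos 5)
Your proof is exactly the argument the paper intends (the corollary is stated without an explicit proof and is clearly the composition of the LKK23 lemma with \Cref{red:parallel:aspectratio}), so the approach matches. One small bookkeeping remark, which you already flagged: plugging $W' = \Theta(\eps^{-5})$ into LKK23 gives $B = O(m\log^3(1/\eps)\eps^{-4})$ and $D = O(\log^3(1/\eps)\log^2 n\,\eps^{-4})$, so the reduction yields work $O(m\log(W)\log^3(1/\eps)\eps^{-4})$ and depth $O(\log^3(1/\eps)\log^2 n\,\eps^{-4} + \log W + \log^2 n)$; the corollary's stated bounds suppress the residual $\log^3(1/\eps)$ factor (and in the depth bound replace it by $\log W$, which is not literally what the reduction produces), but this looseness is in the paper's statement itself, not introduced by your argument.
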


We now show how to implement our reduction in the parallel model. The most challenging aspect of this implementation is to compute a maximum weight matching on degree two graphs. Such a graph is a collection of paths and cycles. First an observation is in order. Consider two paths $P_1=(v_0,\cdots, v_{|P_1|})$ or $P_1=(e_1,\cdots, e_{|P_1|})$ and $P_2=(u_0,\cdots, u_{|P_2|})$ or $P_2=(e'_{1},\cdots, e_{|P_2|})$.
As in the dynamic program described in \Cref{lemma:dynamic path-cycle maintainer}, we maintain for $P_1$: $f(e_1,x,e_{|P_1|},y)$ for $x,y\in \{0,1\}$. Here, $f(e_1,0,e_{|P_1|},0)$is for example the value of the maximum weight matching on $P_1$ in which $e_1$ and $e_2$ are unmatched. Similarly, for $P_2$, we maintain $f(e'_1,x,e'_{|P_2|},y)$ for $x,y\in \{0,1\}$. Suppose, $P=P_1\bigoplus e\bigoplus P_2$, where $e=(v_{|P_1|},v_0)$, then, we can get the corresponding information for $P$ as follows for all $x,y\in \{0,1\}$.
\begin{align*}
  f(e_1,x,e'_{|P_2|},y)=\max_{\substack{z\in \{0,1\}\\ 0\leqslant s,t\leqslant 1-z}}\left\{ f(e_1,x,e_{|P_1|},s)+w(e)\cdot z+f(e'_1,t,e'_{|P_2|},y)\right\}
\end{align*}
Thus, using $P_1$ and $P_2$, we can get the information for $P$ using a constant amount of work. We now describe our algorithm. Similarly, using depth $1$, and work $|P|$, we can find the corresponding maximum weight matchings for $P$, given the maximum weight matchings for $P_1$ and $P_2$. Analogously, we can also give such a dynamic program for a cycle obtained by concatenating two paths.

\begin{claim}\label{claim:parallel:degtwomwm}
    There exists a parallel algorithm for computing a maximum weight matching on a degree two graph in $O(n\cdot \log n)$ work and $O(\log^2 n)$ depth.
\end{claim}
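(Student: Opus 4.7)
The plan is to reduce the problem to a parallel prefix-sum style computation on each connected component, leveraging the associative combine rule sketched in the paragraph preceding the claim. Since every vertex in $G$ has degree at most two, each connected component is either a path or a cycle. As a preprocessing step, I would identify these components and fix an ordering along each path/cycle using standard parallel list-ranking (or Euler tour) primitives, which run in $O(n \log n)$ work and $O(\log^2 n)$ depth in the shared-memory work-depth model. All components can be handled in parallel thereafter.

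For a single path $P$ with edge sequence $e_1, e_2, \dots, e_{|P|}$, I would associate to every contiguous sub-path $P' \subseteq P$ the $2 \times 2$ table $f(e_{\text{first}}, x, e_{\text{last}}, y)$ for $x, y \in \{0,1\}$ described in the excerpt, recording the best matching weight on $P'$ conditioned on whether its first and last edges are used. The crucial observation, given explicitly just before the claim, is that the table for the concatenation $P_1 \oplus e \oplus P_2$ can be computed from the tables for $P_1$ and $P_2$ in $O(1)$ work. In other words, the ``table-merge'' operation is associative. I would therefore build a balanced binary tree over $P$, computing each internal node's table from its two children in parallel; this is a textbook parallel reduction that runs in $O(|P|)$ work and $O(\log |P|)$ depth, and the root's table yields the MWM of $P$.

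For cycles, I would pick an arbitrary edge $e^\star = uv$, delete it to obtain a path $P'$, and run the path algorithm above on $P'$. I then combine with the two cases for $e^\star$: the final answer is the maximum of (a) the value with $e^\star$ excluded, taken directly from the root table of $P'$, and (b) $w(e^\star)$ plus the entry of the root table of $P'$ forbidding the edges incident to $u$ and $v$ at its endpoints. This still costs $O(1)$ additional work per cycle. Summing across components, the matching itself can be reconstructed by a top-down traversal of each component's binary tree, recording at each node the choice (of $x, y$ values) attaining its maximum; this reconstruction adds only $O(n)$ work and $O(\log n)$ depth.

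The main obstacle is not the DP itself, which is a clean parallel reduction, but the preprocessing step: we need each component's vertex/edge order so that the balanced binary tree is well-defined. This is exactly what list ranking provides, and it is the bottleneck that determines both the $O(n \log n)$ total work and the $O(\log^2 n)$ depth stated in the claim; the DP reduction contributes only $O(n)$ work and $O(\log n)$ depth on top.
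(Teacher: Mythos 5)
Your proposal is correct and reaches the stated bounds, but it follows a genuinely different route from the paper. The paper's proof starts from singleton paths and repeatedly merges adjacent paths by having endpoint vertices toss random coins, merging a path whose endpoint shows heads into an adjacent path whose endpoint shows tails; this is a random-mate style hooking process, and the analysis argues (somewhat informally) that with high probability $O(\log n)$ global stages suffice, each of which may chain up to $O(\log n)$ sequential concatenations, giving $O(\log^2 n)$ depth and $O(n \log n)$ work. You instead linearize each component up front via list ranking (and cycle-breaking), then build a balanced binary reduction tree over the resulting edge sequence and apply the same associative $2 \times 2$ table combine at each internal node. Both proofs rest on the same algebraic ingredient—the constant-work associative merge of the endpoint-conditioned DP tables—but they organize the parallelism differently. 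Your approach has two advantages: it is deterministic (Wyllie-style pointer jumping needs no randomness, whereas the paper's merge schedule is inherently randomized and the claim does not caveat this), and the depth analysis is the textbook $O(\log n)$ for the reduction plus the list-ranking depth, avoiding the paper's more delicate argument about chains of concatenations within a stage. The one detail you glide past is symmetry breaking on cycles: ``pick an arbitrary edge'' is not immediately a parallel primitive, since a cycle has no distinguished endpoint from which to start pointer jumping. This is fixable in the same work/depth budget (e.g., label components by pointer jumping and break each cycle at its minimum-ID edge, or break at all local-minimum vertices and stitch the resulting sub-paths back together), but it deserves a sentence rather than the phrase ``pick an arbitrary edge,'' since it is precisely the step that does not reduce to a single list-ranking call.
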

\begin{proof}
The algorithm proceeds by randomly concatenating paths. At any stage, the algorithm will maintain a collection of paths $\mathcal{P}$. Initially, $\mathcal{P}=\{P(u,u)\mid u\in V\}$. These are just empty paths corresponding to every vertex $u\in V$, and with the endpoints of the paths being $u$. As the algorithm proceeds, $\mathcal{P}$ is updated as follows. Let $\mathcal{V}$ be the collection of all endpoints of a path. Initially, $\mathcal{V}=V$. For all $u\in \mathcal{V}$, toss a coin. We can do this in parallel. Consider any edge $e=uv$ such that $u,v\in \mathcal{V}$. If the results of coin tosses of $u$ and $v$ are opposite, then we combine the paths $P_u=((u_0,u_1)=e_1,\cdots, e_{|P_u|}=(u_{|P_u|-1},u))$ and $P_v=(e'_{1}=(v,v_1),\cdots, e'_{|P_v|}=(v_{|P_v|-1},v_{|P_v|}))$ as follows:
\begin{enumerate}
    \item In the collection of paths, remove $P[u_0,u]$ and $P[v,v_{|P_v|}]$ and add $P[u_0,v_{|P_v|}]$, which is the concatenation of $P_u\oplus e\oplus P_v$. 
    \item We also update $f$ as follows, for all $x,y\in \{0,1\}$,
\begin{align*}
    f(e_1,x,e'_{|P_v|},y)=\max_{\substack{z\in \{0,1\}\\ 0\leqslant s,t\leqslant 1-z}}\left\{ f(e_1,x,e_{|P_u|},s)+w(e)\cdot z+f(e'_1,t,e'_{|P_v|},y)\right\}
\end{align*}
   \item Additionally, in $O(|P|)$ time, we can also do a search version of the above dynamic program to maintain the four candidate matchings which realize $f(e_1,x,e'_{|P_v|},y)$ for $x,y\in \{0,1\}$
\end{enumerate}
Now, want to argue about the depth and work. First, with high probability, we have $\log n$ stages. Additionally, within each stage, with high probability, we will have to contract $O(\log n)$ paths. Thus, total depth is $O(\log^2 n)$. The total work done in each stage is proportional to the total lengths of the paths in $\mathcal{P}$. Thus, the total work is $O(n\cdot \log n)$. 
\end{proof}

\begin{proof}[Proof of \Cref{red:parallel:aspectratio}]
Let $\Tilde{E_i}$ be as defined in \Cref{alg:reduction:partial}. Let $\mathcal{A}$ be the parallel algorithm specified in the premise of the theorem. We consider $G_i=(V,\Tilde{E_i})$ for $i\in [L]$, and run $\mathcal{A}_i$ on $G_i$ to compute $M_i$, which is the $(1-\eps)$-approximate maximum weight matching in $G_i$. Then, for all odd $i\in [L]$ we find a greedy census matching $M_{odd}$. We do the same for all even $i\in [L]$, to get $M_{even}$. Since the aspect ratio in $G_i$ is $\eps^{-5}$, the first step takes work $B(n,m,\eps^{-5},\eps)\cdot \log W$ work and depth $D(n,m,\eps^{-5},\eps)$. The second step can be implemented in total work $B(n,m,\eps^{-5},\eps)\cdot \log W$. The depth for the second step is $\log W$, since we are only greedily combining $\log W$ matchings. 
Finally, we want to compute a maximum matching on the graph $M_{even}\cup M_{odd}$. Thus, we can find a matching in this graph by applying \Cref{claim:parallel:degtwomwm}. Thus, we are able to compute a compute a $(1-\eps)$-approximate maximum weight matching in $O(B(n,m,\eps^{-5},\eps)\cdot \log W+n\log n)$ work and $O(D(n,m,\eps^{-5},\eps)+\log W +\log^2 n)$ depth. 
\end{proof}

\section{Open Problems}
\label{sec:intro-open-problems}
Our reductions go a long way toward showing weighted and unweighted matching have the same complexity in a wide variety of models. By reducing the multiplicative overhead to $\poly(1/\eps)$, we are able to achieve this equivalence even for small approximation parameter $\eps$. There are, however, a few limitations that we need to overcome.
\begin{enumerate}
\item A limitation of \emph{all} existing reductions from weighted to unweighted matching in dynamic graphs is that they incur a large approximation error in non-bipartite graphs: in particular, both our \cref{thm:main-unweighted-informal} and the reduction of \cite{BernsteinDL21} reduce the approximation guarantee by $2/3-\eps$. Achieving a general reduction for non-bipartite graphs that only loses a $(1-\eps)$-factor is probably the main open problem in the area, and would be very interesting even with an update-time overhead that is exponential in $\eps$. A similar open problem is to achieve such a reduction for other models, including a reduction for streaming and MPC that does not increase the number of passes/rounds. Note that both our \cref{thm:main-small-weights-informal} and the reduction of \cite{GuptaP13} already apply to non-bipartite graphs, so we can safely assume that weights are small integers. The only remaining challenge is thus that the framework of \cite{BernsteinDL21} uses an earlier tool called \emph{graph unfolding} (first given by \cite{KaoLST01}) to reduce from small weights to unit weights, but this tool relies on the vertex-cover dual and seems limited to bipartite graphs.  

\item A second limitation is specific to our paper: as discussed in the introduction, our reduction only works for $(1-\eps)$-approximate matching and not for general $\alpha$-approximate matching. Removing this restriction would show that in bipartite graphs at least, unweighted and weighted matching have almost equivalent complexities in a wide variety of computational models.

\end{enumerate}

\section*{Acknowledgements}
We thank the anonymous reviewers for their helpful feedback and discussions.

\bibliography{reference}

\appendix
\section{Analysis of \texorpdfstring{\cite{BernsteinDL21}}{[BDL21]}}\label{sec:appendix:BDL}
We first give the definitions and notations used in the statement of the algorithm and proof.

\begin{definition}[\cite{KaoLST01}]
Let $G$ be a graph with integer edge weights in $[W]$. The unfolded graph $\phi(G)$ is an unweighted graph defined as follows: For each vertex $u\in G$, there are $W$ copies of $u$, $\{u^1,u^2,\dots,u^W\}$, in $\phi(G)$. Corresponding to each edge $uv$ in $G$ there are $w_{uv}$ edges $\left\{u^{i} v^{w_{uv}-i+1}\right\}_{i\in [w_{uv}]}$ in $\phi(G)$.
\end{definition}

A simple consequence of the above definition is the following observation.

\begin{observation}\label{obs:unfolding:graphsize}
Let $G$ be any weighted bipartite graph, and suppose $W$ is the ratio between the maximum and minimum edge weights, then $|V(\phi(G))|=W\cdot n$ and $|E(\phi(G))|=W\cdot m$. 
\end{observation}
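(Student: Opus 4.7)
The plan is to verify both equalities directly from the construction of $\phi(G)$ given in the preceding definition; this is essentially a counting exercise with no real obstacle, so the proposal is short.

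First, I would handle the vertex count. By definition, $\phi(G)$ contains $W$ copies $u^1, \ldots, u^W$ for each $u \in V(G)$, and introduces no other vertices. Since $|V(G)| = n$, summing over all original vertices gives $|V(\phi(G))| = n \cdot W$.

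Next, I would handle the edge count. By definition, each edge $uv \in E(G)$ contributes exactly $w_{uv}$ edges $\{u^i v^{w_{uv}-i+1}\}_{i \in [w_{uv}]}$ in $\phi(G)$, and distinct original edges contribute disjoint sets of new edges (they involve different pairs of vertex copies). Therefore
\[|E(\phi(G))| = \sum_{uv \in E(G)} w_{uv}.\]
Here I would invoke the normalization that the minimum edge weight is $1$ and the maximum is at most $W$ (this is consistent with the paper's assumption $w(e) \in [1, W]$ and the meaning of ``ratio''). Hence each summand is at most $W$, yielding $|E(\phi(G))| \le W \cdot m$. I would note that the observation should really be read as an upper bound rather than an exact equality, since equality only holds when every edge has weight $W$; this is the only subtle point, and it is mild since in the reduction the $O(\cdot)$ notation absorbs the distinction.

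The only ``obstacle'' is a presentational one—clarifying that the claimed equality is in fact a $\leq W \cdot m$ bound tight in the worst case—and pointing out that this suffices everywhere $\phi(G)$ is fed into downstream algorithms, whose complexity bounds depend monotonically on the number of vertices and edges.
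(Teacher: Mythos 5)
Your proof is correct and matches the (unstated) argument the paper is implicitly relying on: the vertex count is an exact equality by construction, and the edge count is really the bound $|E(\phi(G))| = \sum_{uv \in E} w_{uv} \le W\cdot m$ rather than an equality, as you rightly point out. The paper gives this as an observation without proof, and your reading of the edge-count claim as a worst-case upper bound is the correct interpretation and the only one under which the downstream complexity bounds (e.g., in \cref{red:WeightedtoUnweighted}) make sense.
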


\begin{fact}[\cite{KaoLST01}]
Let $G$ be a weighted bipartite graph, and suppose $M$ is the maximum weight matching of $G$ and let $M_{\phi}$ be the MCM of $\phi(G)$. Then, $w(M)=|M_{\phi}|$.
\end{fact}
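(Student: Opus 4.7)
The plan is to prove the two inequalities $|M_\phi| \ge w(M)$ and $|M_\phi| \le w(M)$ separately, using the fact that both $G$ and $\phi(G)$ are bipartite so that K\"onig's theorem (together with LP-duality for weighted bipartite matching) applies in both.

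For the first direction, I would build an explicit matching in $\phi(G)$ of size $w(M)$ from $M$. For each edge $uv \in M$, I include all $w_{uv}$ copy-edges $\{u^i v^{w_{uv}-i+1} : i \in [w_{uv}]\}$ into the candidate matching. These copy-edges are vertex-disjoint among themselves because the $u$-superscripts $1,\dots,w_{uv}$ (and similarly the $v$-superscripts) are all distinct. Across different edges of $M$, the copy-edges remain disjoint because $M$ is a matching in $G$, so no vertex $u$ has its copies reused. The resulting matching in $\phi(G)$ has size $\sum_{uv \in M} w_{uv} = w(M)$, giving $|M_\phi| \ge w(M)$.

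For the reverse direction, I would lift a minimum weighted vertex cover of $G$ to a vertex cover of $\phi(G)$. Since $G$ is bipartite, LP-duality/K\"onig guarantees an integer weighted vertex cover $y \in \mathbb{Z}_{\ge 0}^{V(G)}$ with $y_u + y_v \ge w_{uv}$ for every edge and total value $\sum_v y_v = w(M)$. Define $C_\phi := \{u^i : u \in V(G),\, 1 \le i \le y_u\}$, of size $\sum_v y_v = w(M)$. The key check is that $C_\phi$ covers every edge of $\phi(G)$: a copy-edge $u^i v^{w_{uv}-i+1}$ is missed only when both $i > y_u$ and $w_{uv} - i + 1 > y_v$, i.e., when $y_u < i < w_{uv} + 1 - y_v$. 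This range is empty exactly when $y_u + y_v \ge w_{uv}$, which is the vertex cover constraint. Applying K\"onig to the bipartite graph $\phi(G)$ then yields $|M_\phi| \le |C_\phi| = w(M)$, closing the loop.

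The only real conceptual step is this double use of K\"onig: once as weighted LP-duality on $G$ to obtain a conveniently structured integral $y$, and once as the classical MCM-equals-MVC identity on the unfolded graph $\phi(G)$. The bookkeeping in verifying that $C_\phi$ covers every copy-edge is the one place where the specific pairing rule $i \mapsto w_{uv} - i + 1$ matters; everything else is routine counting.
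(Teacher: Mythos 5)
Your proof is correct, and it is essentially the standard duality argument behind the unfolding technique (the paper states this fact with a citation to Kao--Lam--Sung--Ting and does not reprove it). Both directions check out: the $w_{uv}$ copy-edges of each matched $uv$ are mutually disjoint because both superscript sequences $1,\dots,w_{uv}$ and $w_{uv},\dots,1$ are injective, and the lifted cover $C_\phi = \{u^i : i \le y_u\}$ covers $u^i v^{w_{uv}-i+1}$ precisely because no integer $i$ lies in the empty range $(y_u,\, w_{uv}+1-y_v)$ once $y_u+y_v\ge w_{uv}$, so K\"onig on the (bipartite) unfolded graph finishes it.
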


\begin{definition}
    Let $G$ be a weighted graph, and let $H\subseteq \phi(G)$. The refolded graph $\mathcal{R}(H)$ has vertex set $V(G)$, and edges $E(\mathcal{R}(H))=\left\{uv \in G\mid u^i v^j\in H\text{ for }i+j+1=w(uv)\right\}$. 
\end{definition}

\begin{fact}[\cite{BernsteinDL21}]\label{fact:BDL:approximation ratio}
    Let $G$ be a weighted graph with weight function $w$ and let $M$ be an $\alpha$-approximate matching of $\phi(G)$.
    If $G$ is bipartite, then $\mu_w(\mathcal{R}(M)) \ge \alpha \mu_w(G).$
    If $G$ is not bipartite, then $\mu_w(\mathcal{R}(M)) \ge \frac{2}{3} \alpha \mu_w(G).$
\end{fact}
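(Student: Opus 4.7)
The plan is to view $M$ as a matching in the smaller unfolded graph $\phi(\mathcal{R}(M))$ and then apply the unfolding/refolding correspondence to this smaller graph to control $\mu_w(\mathcal{R}(M))$. The key initial observation is that $M \subseteq E(\phi(\mathcal{R}(M)))$: every edge $u^i v^j \in M$ has $uv \in \mathcal{R}(M)$ by the definition of refolding, and the indexing condition linking $(i,j)$ and $w(uv)$ is inherited, so $u^i v^j$ is still an edge of $\phi(\mathcal{R}(M))$. Since $M$ was a matching in $\phi(G) \supseteq \phi(\mathcal{R}(M))$, it remains vertex-disjoint and hence a matching in $\phi(\mathcal{R}(M))$, giving
\[|M| \;\leq\; \mu(\phi(\mathcal{R}(M))).\]

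For the bipartite case, $\mathcal{R}(M) \subseteq G$ is itself bipartite, so applying the preceding fact (from \cite{KaoLST01}) to the weighted bipartite graph $\mathcal{R}(M)$ yields $\mu(\phi(\mathcal{R}(M))) = \mu_w(\mathcal{R}(M))$. Combined with $|M| \geq \alpha \,\mu(\phi(G)) = \alpha \,\mu_w(G)$ (by the preceding fact applied to $G$), we conclude $\mu_w(\mathcal{R}(M)) \geq |M| \geq \alpha \,\mu_w(G)$.

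For the non-bipartite case, the key additional ingredient I would establish is the bound $\mu(\phi(H)) \leq \tfrac{3}{2}\mu_w(H)$ for every weighted graph $H$ -- essentially the classical $\tfrac{3}{2}$ integrality gap of the matching polytope, transported through the unfolding operation. Applied to $H = \mathcal{R}(M)$, this yields $\mu_w(\mathcal{R}(M)) \geq \tfrac{2}{3}\mu(\phi(\mathcal{R}(M))) \geq \tfrac{2}{3}|M|$. The ``easy'' direction $\mu(\phi(G)) \geq \mu_w(G)$ -- obtained by expanding an optimal MWM of $G$ into its $w(uv)$ vertex-disjoint copies in $\phi(G)$ -- still gives $|M| \geq \alpha\,\mu(\phi(G)) \geq \alpha\,\mu_w(G)$, and combining yields $\mu_w(\mathcal{R}(M)) \geq \tfrac{2}{3}\alpha\,\mu_w(G)$.

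The hard part is proving $\mu(\phi(H)) \leq \tfrac{3}{2}\mu_w(H)$ for non-bipartite $H$. The naive attempt of constructing a fractional matching of $H$ via $x_{uv} = m_{uv}/w(uv)$ (where $m_{uv}$ counts $M^*$-edges refolding to $uv$ for a maximum matching $M^*$ of $\phi(H)$) has the right total $\sum_e w(e)\,x_e = |M^*|$, but it can violate the degree constraint $\sum_{v \sim u} x_{uv} \leq 1$. The resolution is an LP-duality argument: from any feasible (half-integer suffices) dual $y$ of the matching LP of $H$, build a fractional vertex cover $z$ of $\phi(H)$ by setting $z_{u^i} \approx \mathbf{1}[i \leq y_u]$; the layered structure of $\phi$ combined with $y_u + y_v \geq w(uv) = i+j-1$ on any $u^iv^j \in E(\phi(H))$ ensures $z_{u^i} + z_{v^j} \geq 1$. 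Then $|M^*| \leq $ (fractional VC of $\phi(H)$) $\leq \sum z = \sum_v y_v = \mu_{\mathrm{LP}}(H) \leq \tfrac{3}{2}\mu_w(H)$ by the integrality gap, completing the argument.
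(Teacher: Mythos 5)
The paper does not actually prove \cref{fact:BDL:approximation ratio}; it imports it from \cite{BernsteinDL21}, so there is no in-paper proof to compare against. Judged on its own, your argument is correct and is essentially a clean reconstruction of the \cite{KaoLST01}/\cite{BernsteinDL21} machinery. The observation that $M$ survives as a matching of $\phi(\mathcal{R}(M))$, giving $|M|\le\mu(\phi(\mathcal{R}(M)))$, is the right pivot; the bipartite case then follows immediately from the Kao et al.\ identity applied to $\mathcal{R}(M)$, and the non-bipartite case reduces to showing $\mu(\phi(H))\le\mu_{\mathrm{LP}}(H)\le\tfrac32\mu_w(H)$, where the last inequality is the standard half-integral decomposition into integral edges and odd cycles (a $(2k+1)$-cycle admits a matching of weight $\ge\tfrac{k}{2k+1}w(C)\ge\tfrac13 w(C)$). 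Your ``easy direction'' $\mu(\phi(G))\ge\mu_w(G)$ for general $G$ is also fine. In effect you are proving that $\mathcal{R}(M)$ supports a fractional matching of value at least $|M|$ via LP duality rather than by constructing it primally; you correctly identified that the naive primal assignment $x_{uv}=m_{uv}/w(uv)$ can violate degree constraints, so the detour through the dual is warranted.

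The one step that does not work as literally written is the transferred vertex cover $z_{u^i}=\mathbf{1}[i\le y_u]$. Take $w(uv)=1$ and an optimal half-integral dual with $y_u=y_v=\tfrac12$: the unique copy $u^1v^1$ has $z_{u^1}=z_{v^1}=0$ and is uncovered (the general failure mode is $i=y_u+\tfrac12$, $j=y_v+\tfrac12$ with $y_u+y_v=w(uv)$). The fix is to use the genuinely fractional cover $z_{u^i}=\min(1,\max(0,\,y_u-i+1))$. Then $\sum_i z_{u^i}=y_u$, and for any edge $u^iv^j\in\phi(H)$ with $i+j=w(uv)+1\le y_u+y_v+1$, either some $z$ equals $1$ or $z_{u^i}+z_{v^j}\ge (y_u-i+1)+(y_v-j+1)\ge 1$, so $z$ is a feasible fractional vertex cover of $\phi(H)$ of total value $\sum_v y_v=\mu_{\mathrm{LP}}(H)$, which is exactly what your chain of inequalities needs (and it works for an arbitrary feasible dual, so half-integrality is not even required). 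With this substitution the proof is complete.
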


Now we modify Algorithm 1 of~\cite{BernsteinDL21} to achieve a better amortized update time.

\begin{algorithm2e}[!ht]
  \caption{Bipartite Reduction} \label{alg:BDL21}
  
  \SetEndCharOfAlgoLine{}

  \SetKwInput{KwData}{Input}
  \SetKwInput{KwResult}{Output}
  \SetKwInOut{State}{global}
  \SetKwProg{KwProc}{function}{}{}
  \SetKwFunction{Initialize}{Initialize}
  \SetKwFunction{Update}{Update}
  \SetKwFunction{Rebuild}{Rebuild}

  \KwData{A dynamic algorithm for $(1-\eps)$-approximate MCM in bipartite graphs $\mathcal A_u$}
  
  \KwProc{\Initialize{}} {
    Initialize $\mathcal A_u$ with the unfolded graph $\phi(G)$.\;
    Denote $M_u$ as the matching maintained by $\mathcal A_u$.\;
    $M\gets\Rebuild{}$.\;
    \textbf{output} $M$.\;
  }
  \KwProc{\Update{$(uv)$}} {
    Update $(u^i,v^{w_{uv}-i})$ in $\phi(G)$ for $i\in [w_{uv}]$ accordingly.\;
    Use $\mathcal A_u$ to maintain a matching $M_u$ of $\phi(G)$.\;
    $c\gets c+1$.\;
    \eIf{$c< \eps\cdot W^*/W$}{
      $M\gets M\setminus uv$.\;
    }{
      $M\gets\Rebuild{}$.\;
    }
    \textbf{output} $M$.\;
  }
  \KwProc{\Rebuild{}}{
    $M\gets$ a $(1-\eps)$-approximate MWM on the refolded graph $\mathcal R(M_u)$.\;
    $c\gets 0$, $W^*\gets w(M)$.\;
    \textbf{output} M.\;
  }
\end{algorithm2e}

\AnalysisBDL*
\begin{proof}
    We first prove that \cref{alg:BDL21} maintains a $(1-O(\eps))$-approximate MWM on a bipartite $G$. Since $M_u$ is a $(1-\eps)$-approximate matching, thus by~\cref{fact:BDL:approximation ratio}, $\mu_w(\mathcal R(M_u))\geq (1-\eps)\mu_w(G)$. After each $\Rebuild$, $M$ is a $(1-\eps)$-approximate MWM on $\mathcal R(M_u)$, thus $W^*=w(M)\geq (1-\eps)\cdot \mu_w(R(M_u))\geq (1-2\eps)\cdot \mu_w(G)$. Between $\Rebuild$, there are at most $\eps\cdot W^*/W$ edge updates, thus $w(M)\geq (1-\eps) W^*$ and $\mu_w(G)\leq W^*/(1-2\eps)+\eps\cdot W^*\leq (1+4\eps)W^*$, and $w(M)\geq (1-5\eps) \mu_w(G)$. On a general graph, according to~\cref{fact:BDL:approximation ratio}, $\mu_w(\mathcal R(M_u))\geq \frac{2}{3}(1-\eps)\mu_w(G)$ and similarly we can prove that $w(M)\geq (\frac{2}{3}-O(\eps))\mu_w(G)$.

    Now we analyze the running time of \cref{alg:BDL21} for both bipartite and non-bipartite graphs. By~\cref{thm:DP14}, the initialization can be implemented in time 
    \[O(\I(nW, mW, \eps)+m\log(\eps^{-1})\eps^{-1}).\]
    The running time of $\mathcal A_u$ is $W\cdot \U(nW,mW,\Theta(\eps))$.
    Since 
    \[|\mathcal R(M_u)|\leq |M_u|\leq \mu(\phi(G))=\mu_w(G)\leq O(1)\cdot W^*,\]
    each \Rebuild takes total time
    $O(|\mathcal R(M_u)|\log(\eps^{-1})\eps^{-1})=O(W^*\log (\eps^{-1})\eps^{-1})$ by~\cref{thm:DP14}. Thus the amortized cost of \Rebuild and the amortized recourse of the algorithm is
    \[O(W^*\log(\eps^{-1})\eps^{-1})/(\eps\cdot W^*/W)=O(W\log (\eps^{-1})\eps^{-2}).\]
\end{proof}

\section{Counterexample to \texorpdfstring{\cref{conjecture:weight partition}}{Question 3.1}}\label{sec:appendix:counterexample}

To answer \cref{conjecture:weight partition} in the negative, we prove the following claim:

\begin{claim}\label{claim:counterexample of conjecture}
    There is a graph $G$ such that for \emph{any} weight partition of $G$, $[\ell_1,r_1),[\ell_2,r_2),\dots,[\ell_k,r_k)\subseteq \R$, the following holds: if all possible choices of MWM $M_i$ of  $G_{[\ell_i,r_i)}$ satisfy
    \[\mu_w(M_1\cup M_2\cup \dots\cup M_k)\geq (1-\delta)\cdot \mu_w(G),\]
    then there is a weight class $[\ell_i,r_i)$ such that $r_i\geq \ell_i\cdot \exp(\Omega(\delta^{-1}))$.
\end{claim}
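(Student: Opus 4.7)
The plan is to construct a graph $G$ built from many disjoint copies of the four-vertex gadget in \cref{fig:no overlapping}, scaled to a geometric sequence of weight levels. Concretely, for parameters $\gamma \in (1,1.5]$ and large $N$, the $i$-th copy ($i=1,\dots,N$) is a path $a_i$-$b_i$-$c_i$-$d_i$ on fresh vertices with edge weights $\gamma^i, \gamma^i, 1.5\,\gamma^i$; then $\mu_w(G) = \sum_{i=1}^N 2.5\,\gamma^i$. As in the paragraph preceding the claim, whenever a partition ``separates'' the $i$-th copy---i.e., some class boundary lies in the ``live'' interval $(\gamma^i, 1.5\,\gamma^i]$---the adversary can pick the MWM on the class containing $\gamma^i$ to include $\{b_i c_i\}$ rather than $\{a_i b_i\}$, so that the union contains only $\{b_i c_i, c_i d_i\}$ on gadget $i$ (conflicting at $c_i$) and matches weight $1.5\,\gamma^i$ instead of $2.5\,\gamma^i$, costing $\gamma^i$ in loss.

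The argument would then proceed in three steps. First, since $\gamma \le 1.5$, the live intervals $(\gamma^i, 1.5\,\gamma^i]$ jointly cover the weight range $(\gamma, 1.5\,\gamma^N]$ without gaps, so every class boundary of any partition separates at least one gadget. Second, any partition whose classes all satisfy $r_j/\ell_j<\rho$ must have at least $\Omega(N \log\gamma / \log\rho)$ classes, and hence at least that many separated copies. Third, one sums the adversarial losses $\sum_{i\in S}\gamma^i$ over the separated set $S$ and compares against $\mu_w(G)$, aiming to show that the loss ratio exceeds $\delta$ unless $\log\rho = \Omega(1/\delta)$, i.e., $\rho \ge \exp(\Omega(1/\delta))$.

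The main obstacle is this last step. The designer will try to shield the high-weight gadgets within one wide class near the top of the weight range and leave only low-weight separations, so a naive accounting yields only a linear lower bound $\rho \ge \Omega(1/\delta)$. Obtaining the exponential dependence most likely requires amplifying the construction---e.g., placing many copies of the gadget at each scale so that no single wide class can shield a constant fraction of $\mu_w(G)$, or interleaving scales so that every wide class still crosses many live intervals. With such amplification, one can argue that any partition supporting a $(1-\delta)$-approximation for \emph{all} adversarial MWM choices must contain a class of width $\exp(\Omega(1/\delta))$, as claimed.
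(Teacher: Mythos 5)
Your proposal identifies the right gadget (the weight-$1,1,1.5$ path from the figure, scaled geometrically) and the right adversarial mechanism (at a class boundary inside a gadget's ``live'' interval, pick $\{b_ic_i\}$ rather than $\{a_ib_i\}$ so the union loses $\gamma^i$). You also correctly diagnose that the single-copy-per-level construction fails: the top levels dominate $\mu_w(G)$, one wide top class shields most of the weight, and you only get $\rho \ge \Omega(1/\delta)$. That diagnosis, and your suggestion to ``place many copies of the gadget at each scale so that no single wide class can shield a constant fraction of $\mu_w(G)$,'' are exactly the right idea. But you leave the amplification unexecuted, which is where the exponential bound actually comes from, so as written the proposal does not prove the claim.

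What closes the gap (and is precisely what the paper does) is a concrete choice of multiplicities: with $\gamma=1.5$, take $1.5^{N-i}$ disjoint copies of the level-$i$ gadget for $i=0,\dots,N$. Then every level contributes exactly $1.5^{N-i}\cdot 1.5^i\cdot 2.5 = 1.5^N\cdot 2.5$ to $\mu_w(G)$, so $\mu_w(G)=1.5^N\cdot 2.5\cdot N$ and no level dominates. Any weight partition into $k$ classes has $k-1$ internal boundaries; for each boundary take the largest $j$ with $1.5^j < r_i$ and break all level-$j$ gadgets. Each broken level-$j$ gadget loses $1.5^j\cdot 2.5 - 1.5^{j+1} = 1.5^j$, so the level loses $1.5^{N-j}\cdot 1.5^j = 1.5^N$ in total, a fixed $1/(2.5N)$ fraction of $\mu_w(G)$ \emph{per boundary}. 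The $(1-\delta)$ requirement forces $1.5^N(k-1)\le \delta\cdot 1.5^N\cdot 2.5 N$, i.e., $k\le 2.5\delta N+1$. By pigeonhole some class then spans $\ge N/k$ of the $N$ geometric levels and hence has width at least $1.5^{N/(2.5\delta N+1)} = \exp(\Omega(1/\delta))$. Without specifying the multiplicities to equalize the per-level contributions, the crucial step ``every boundary costs a fixed fraction of $\mu_w(G)$'' does not hold, so you should regard this not as an optional optimization but as the core of the proof.
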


To prove~\cref{claim:counterexample of conjecture}, we will use the following gadget to explicitly build the graph $G$.

\begin{definition}[Level-$i$ Gadget]
A level-$i$ Gadget is a path with three edges $a_i,b_i,c_i$ such that the edge weights of $a_i$ and $b_i$ are $1.5^i$, and the edge weight of $c_i$ is $1.5^{i+1}$.
\end{definition}

\paragraph{Proof Intuition:} The MWM on a level-$i$ gadget is $1.5^i + 1.5^{i+1} = 1.5^{i}\cdot 2.5$ by choosing $a_i$ and $c_i$. But now let us consider what happens if the gadget is ``broken", meaning that it is partitioned into two different weight classes. More concretely, say that $a_i,b_i \in [\ell_j, r_j)$, while $c_i \in [\ell_{j+1},r_{j+1})$. Then, one valid MWM of weight class $[\ell_j, r_j)$ is $M_j = \{b_i\}$, and clearly we have $M_{j+1} = \{c_i\}$. As a result, $\mu_w(M_j \cup M_{j+1}) = w(c_i) = 1.5^{i+1} \leq \frac{3}{5}\cdot \mu_W(\{a_i,b_i,c_i\})$. In other words, the loss incurred by a broken gadget is a constant fraction of the weight of the gadget. Intuitively, we will have gadgets on different levels, and any partition into $k$ weight classes will break $k-1$ of the gadgets.
 Since we can only afford a total loss of only $\delta \mu_w(G)$, the average weight class $[\ell_i, r_i]$ must contain at least $\Omega(\delta^{-1})$ non-broken gadgets to make up for the loss of the broken ones. Since each gadget is $1.5$-wide, this implies that the average weight class must be $1.5^{\Omega(\delta^{-1})}$-wide. We now formally prove~\cref{claim:counterexample of conjecture}.

\begin{proof}
    We first build the graph $G$ using our gadgets. For $i=0,1,\dots,N$, where $N=\lfloor \log_{1.5} W\rfloor$, $G$ contains $1.5^{N-i}$ level-$i$ gadgets. Therefore $\mu_w(G)=1.5^N\cdot 2.5\cdot N$ since the total MWM of each level is $1.5^{N-i}\cdot 1.5^i\cdot 2.5=1.5^{N}\cdot 2.5$. Now consider any weight partition $[\ell_1,r_1),\dots[\ell_k,r_k)$. We assume that for any weight class $[\ell_i,r_i)$, there is some integer $j$ such that $\ell_i\leq 1.5^j<r_i$. This is w.l.o.g.\ since otherwise that weight class does not contain any edge and we can merge it with one of its neighboring weight classes.

    $k-1$ levels of broken gadgets can be found in this graph. For $i=0,1,2,\dots,k-1$, consider the largest $j$ such that $1.5^j<r_i$ and level-$j$ gadgets will be broken. On a broken level-$j$ gadget, the MWM will be $1.5^{j+1}$ instead of $1.5^j\cdot 2.5$. Since there are $1.5^{N-j}$ level-$j$ gadgets, the weight loss of all level-$j$ gadgets is $1.5^N$. Therefore the total weight loss is $1.5^N\cdot (k-1)$.
    
    Suppose the MWM on the union has weight at least $(1-\delta)\cdot \mu_w(G)=(1-\delta)\cdot 1.5^N\cdot2.5\cdot N.$ Then we have $1.5^N\cdot (k-1)\leq \delta\cdot 1.5^N\cdot 2.5\cdot N$ meaning $k\leq 2.5\cdot \delta\cdot N+1$. Therefore the would be a weight class $[\ell_i,r_i)$ with $r_i\geq \ell_i\cdot 1.5^{\frac{N}{2.5\cdot \delta\cdot N+1}}=\ell_i\cdot \exp(\Omega(\delta^{-1}))$.
\end{proof}

\section{Reduction of \texorpdfstring{$\alpha$-Approximation}{a-Approximation} Requires Exponential Width}
\label{appendix:lower-bound}

\begin{claim}
    For any constant $\frac{1}{2}<\alpha<1$, there is a graph $G$ such that for \emph{any} set of weight classes (not necessarily a weight partition), $[\ell_1,r_1),[\ell_2,r_2),\dots,[\ell_k,r_k)\subseteq \R$, the following holds: if all possible choices of $\alpha$-approximate MWM $M_i$ of  $G_{[\ell_i,r_i)}$ satisfy
    \[\mu_w(M_1\cup M_2\cup \dots\cup M_k)\geq (\alpha-\delta)\cdot \mu_w(G),\]
    then there is a weight class $[\ell_i,r_i)$ such that $r_i\geq \ell_i\cdot \exp(\Omega(\delta^{-1}))$.
\end{claim}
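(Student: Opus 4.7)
The plan is to adapt the construction of~\Cref{claim:counterexample of conjecture} to the $\alpha$-approximation regime by tuning the gadget parameters. I would fix a base $b = b(\alpha) > 1$ (with $b$ close to $\alpha/(1-\alpha)$) and take $G$ to be a disjoint union of level-$i$ gadgets for $i = 0, 1, \dots, N$, where $N = \Theta(\delta^{-1})$ and each level-$i$ gadget is the three-edge path $a_i - b_i - c_i$ with weights $b^i, b^i, b^{i+1}$, replicated $b^{N-i}$ times so that $\mu_w(G) = \Theta(N \cdot b^N (1 + b))$ and each level contributes roughly equally to $\mu_w(G)$. The purpose of placing $b$ near $\alpha/(1-\alpha)$ is so that the heavy edge $c_i$ alone is \emph{barely} an $\alpha$-approximation of the gadget's MWM, which is what allows the adversary simultaneous access to a baseline $(1-\alpha)$-fraction loss from the $\alpha$-approximation choice and an additional breakage-amplified loss.

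Given any set of weight classes $\{[\ell_j, r_j)\}_j$, I would classify each gadget as \emph{covered} (entirely within some class) or \emph{broken} (edges split across classes). For broken gadgets I apply the strategy of~\Cref{claim:counterexample of conjecture}: pick $\{b_i\}$ in the class containing the two lighter edges and $\{c_i\}$ in the class containing the heavy edge, forcing the union's matching on this gadget to have weight $b^{i+1}$ rather than $(1+b)b^i$, a loss of $b^i$ per broken gadget. For covered gadgets, the adversary exploits the $\alpha$-approximation slack within each $M_j$ to exclude selected MWM-edges from the union, extracting a further $(1-\alpha)$-fraction of those gadgets' weight in loss. Summing over all levels, the total adversarial loss is at least
\[
  \Omega\!\left((1-\alpha) + \frac{U}{N}\right) \cdot \mu_w(G),
\]
where $U$ is the number of broken levels.

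Next I would lower-bound $U$ in terms of the maximum class width $\rho$. Since any class of width at most $\rho$ can contain both $b^i$ and $b^{i+1}$ for at most $\lfloor \log_b \rho \rfloor$ consecutive values of $i$, each class covers at most $O(\log_b \rho)$ levels. A fractional covering argument, plus the observation that the adversary's $\alpha$-approximation budget in each class can be ``spent'' on excluding MWM-edges of a small set of gadgets, lets one conclude that $U/N = \Omega(1/\log_b \rho)$ irrespective of how the classes overlap. Setting $\rho < \exp(c/\delta)$ for a small constant $c = c(\alpha,b) > 0$ yields $U/N = \Omega(\delta/c)$, so the total loss exceeds $(1-\alpha+\delta)\mu_w(G)$, contradicting the hypothesis that every $\alpha$-approximate choice of $M_i$'s achieves $\mu_w(M_1\cup\cdots\cup M_k) \geq (\alpha-\delta)\mu_w(G)$.

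The main obstacle is the last step: handling arbitrary (possibly overlapping) weight classes in bounding $U/N$ from below. In the partition case of~\Cref{claim:counterexample of conjecture}, each level boundary forces exactly one broken gadget level and the count is immediate, but with overlapping classes a single level could in principle be covered by many narrow classes, so the naive ``breakage count vs.\ number of classes'' argument fails. The resolution I envision is a charging argument that distributes each covered gadget's MWM across the classes containing it and exploits each class's $(1-\alpha) \cdot \mu_w(G_{[\ell_j,r_j)})$ slack to omit MWM-edges, showing that the combined exclusions from overlapping classes still account for an $\Omega(1/\log_b \rho)$ fraction of $\mu_w(G)$. Verifying this charging argument, and in particular that the per-class budgets suffice when every narrow class covers only $O(\log_b \rho)$ levels, is the technical heart of the proof.
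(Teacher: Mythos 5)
Your gadget construction is essentially the paper's (the paper uses base $\beta$ with $\beta/(\beta+1)=\alpha$, i.e.\ $\beta = \alpha/(1-\alpha)$, and replicates level-$i$ gadgets $\beta^{N-i}$ times with $N=\Theta(\delta^{-1})$), and your observation that $c_i$ alone is \emph{exactly} $\alpha$-approximate for its gadget is the right starting point. However, the key quantitative step you rely on --- lower-bounding the number of ``broken'' levels $U$ by $\Omega(N/\log_b\rho)$ when all classes have width at most $\rho$ --- is false for overlapping weight classes, and this is not a fixable technicality. Concretely, taking classes $[\beta^i-\eta,\beta^{i+1}+\eta)$ for every $i$ covers every level-$i$ gadget entirely inside some class while keeping every class width bounded by $\beta^{1+o(1)}$, so $U=0$ regardless of how small $\rho$ is. Hence the adversarial loss you can credit to breakage vanishes, and your bound degenerates to a $(1-\alpha)$-fraction loss, which only shows $\mu_w(\bigcup M_i)\le\alpha\,\mu_w(G)$ and never reaches $(\alpha-\delta)\mu_w(G)$. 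The ``fractional covering / charging argument'' you defer to is exactly where the approach breaks; there is no way to extract an extra $\Omega(\delta)\mu_w(G)$ of loss from a breakage count that can be zero.

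The paper avoids the covered/broken dichotomy entirely. It builds a single \emph{sparsifier} $S\subseteq G$ independent of the weight classes: $S$ drops every $a_i$ edge (for $i<N-1$), keeps all $b_i,c_i$, and at the top level $N-1$ keeps all $b_{N-1}$ but only an $\alpha$-fraction of the $c_{N-1}$ edges. It then verifies by a short case analysis on $[\ell,r)$ that \emph{any} weight class not containing the whole weight range admits an $\alpha$-approximate MWM supported inside $S$ --- the subtle point being that when the class truncates the lowest levels, its own MWM shrinks by exactly enough to keep the $S$-restricted matching $\alpha$-approximate. Finally one computes $\mu_w(S)=\bigl(\alpha-\Theta(1/N)\bigr)\mu_w(G)<(\alpha-\delta)\mu_w(G)$. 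The extra $\Theta(1/N)$ loss beyond the baseline $(1-\alpha)$ is concentrated at level $N-1$ via the $\alpha$-fraction truncation, not distributed across ``broken'' levels. This is a genuinely different and cleaner mechanism, and I would recommend abandoning the breakage-count route and moving to a class-oblivious sparsifier argument of this type.
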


We use a similar gadget for $\alpha$-approximation as before.
\begin{definition}[Level-$i$ Gadget for $\alpha$-Approximation]
A level-$i$ Gadget for $\alpha$-approximation is a path with three edges $a_i,b_i,c_i$ such that the edge weights of $a_i$ and $b_i$ are $\beta^i$, and the edge weight of $c_i$ is $\beta^{i+1}$, where $\frac{\beta}{\beta+1}=\alpha$.
\end{definition}

\begin{proof}
We construct a graph $G$ that contains $\beta^{N-i}$ number of level-$i$ gadgets for each $0\leq i\leq N-1$, where $N=\lfloor\frac{\alpha-\alpha^2-(1-\alpha)^2}{\delta}-1\rfloor$. Thus $\mu_w(G)=N\cdot \beta^N\cdot (1+\beta)$. We will construct a sparsifier $S\subseteq G$ such that for any weight class $[\ell_i,r_i)$ that doesn't contain the entire graph, there is an $\alpha$-approximate matching $M_i$ of $G_{[\ell_i,r_i)}$ in $S$. That means if no weight classes in $[\ell_1,r_1),[\ell_2,r_2),\cdots,[\ell_k,r_k)\subseteq \R$ contain the entire graph, there is a choice of matchings such that $M_1\cup M_2\cup\cdots\cup M_k\subseteq S$. On the other hand, the construction of $S$ will ensure that $\mu_w(S)< (\alpha-\delta)\cdot\mu_w(G)$ suggesting that at least one weight class contains the entire graph and thus has $\exp(\Omega(\delta^{-1}))$ width.

Now we explicitly construct $S$. For all $0\leq i\leq N-2$, $S$ contains all $b_i$ and $c_i$ edges in level-$i$ gadgets. Also, $S$ contains all $b_{N-1}$ edges and $\alpha$ fraction of $c_{N-1}$ edges in level-$(N-1)$ gadgets. Consider any weight class $[l,r)$.
\begin{enumerate}
    \item Suppose $r<\beta^{N}$. Let $j=\lfloor\log_{\beta} r\rfloor$. By picking all $c_{i}$ edges in any corresponding level $i<j$ intersecting with the weight class and all $b_j$ edges in level $j$, there is an $\alpha$-approximate MWM.
    \item Suppose $r\geq \beta^{N}$. If $l>\beta^{N-1}$, then $S$ clearly contains an $\alpha$-approximation. Otherwise, if $l>1$, by picking all $c_i$ edges in any level $i<N-1$, there is an $\alpha$-approximate MWM.
\end{enumerate}
So far, we have shown that if none of the weight classes contains $G$, $S$ can consistently output an $\alpha$-approximate MWM. However,
\[\mu_w(S)=(N-1)\cdot \beta^{N+1}+\alpha\cdot \beta^{N+1}+(1-\alpha)\cdot \beta^N,\]
and the largest matching in $S$ has approximation ratio
\[\frac{(N-1)\cdot \beta^{N+1}+\alpha\cdot \beta^{N+1}+(1-\alpha)\cdot \beta^N}{N\cdot \beta^N\cdot (1+\beta)}=\alpha-\frac{1}{N}\left(\alpha-\alpha^2-(1-\alpha)^2\right)<\alpha-\delta.\]
\end{proof}

\end{document}